\newtheorem{lemma}{Lemma}
\newtheorem*{lemma*}{Lemma}
\newtheorem{proposition}{Proposition}
\newtheorem{theorem}[proposition]{Theorem}
\newtheorem{corollary}[proposition]{Corollary}
\newtheorem*{claim*}{Claim}
\theoremstyle{definition}
\newtheorem*{basicassumption}{Basic Assumption}
\newtheorem{example}{Example}
\newtheorem{definition}{Definition}
\theoremstyle{remark}
\newtheorem{remark}{Remark}
\newcommand{\N}{\mathbb{N}}
\newcommand{\R}{\mathbb{R}}
\newcommand{\1}{\mathbf{1}}
\newcommand{\calM}{\mathcal{M}}
\newcommand{\calS}{\mathcal{S}}
\newcommand{\calU}{\mathcal{U}}
\newcommand{\argmax}{\mathop{\rm arg~max}\limits}
\DeclareMathOperator{\E}{E}
\DeclareMathOperator{\Var}{Var}
\DeclareMathOperator{\Cov}{Cov}
\DeclareMathOperator{\Aut}{Aut}
\DeclareMathOperator{\diag}{diag}
\DeclareMathOperator{\Diag}{Diag}
\begin{document}

\title{\bf LQG Information Design\thanks{We are grateful to seminar and conference participants at CPW-CTWE at Waseda University, HK Junior Micro Theory Workshop, International Conference on Game Theory at Stony Brook, ITAM, JEA Autumn Meeting, Kobe University, Kyoto University, National University of Singapore, SING, the University of Tokyo, and VIEE at Concordia University for their valuable comments and feedback. This work is supported by Grant-in-Aid for Scientific Research Grant Numbers 15K03348 and 18H05217.}}

\author{Masaki Miyashita\footnote{The University of Hong Kong, \tt{masaki11@hku.hk}} \and Takashi Ui\footnote{Kanagawa University and Hitotsubashi Institute for Advanced Study, \tt{oui@econ.hit-u.ac.jp}}}

\date{August 2025}
\maketitle

\begin{abstract}
This paper addresses information design in a workhorse model of network games, where agents have linear best responses, the information designer maximizes a quadratic objective, and the payoff-relevant state follows a multivariate Gaussian distribution.
We formulate the problem as a semidefinite program and establish strong duality to characterize the optimal information structure.
A necessary and sufficient condition for optimality is given by a simple linear relationship between the induced equilibrium strategy profile and the state.
Leveraging this characterization, we show that the state is fully revealed in an aggregative form for welfare maximization, while individual agents may remain only partially informed.
When agent roles are interchangeable, the optimal information structure inherits the same degree of symmetry, which facilitates computation.
In such cases, we show that the optimal amount of information revealed to each agent is closely linked to the network's chromatic number.

\medskip
\noindent\textit{JEL classification}: C72, D82.

\medskip
\noindent\textit{Keywords}: linear-quadratic-Gaussian, information design, network games, semidefinite programming, strong duality, graph coloring.
\end{abstract}

\thispagestyle{empty}



\newpage
\setcounter{page}{1}
\section{Introduction}

Information design is the problem of choosing an information structure in a game to induce equilibrium behavior that maximizes the designer's objective.
\cite{bergemannmorris2013} propose a two-step approach to this problem.
The first step identifies the set of joint distributions over actions and states that are implementable by some information structure.
The second step selects the distribution within this set that maximizes the designer's objective.
Their key insight lies in the first step: the set of implementable distributions coincides with the set of \emph{Bayes correlated equilibria} (BCE), which prescribe incentive-compatible action recommendations that each agent is willing to follow.

We advance this approach by developing the second, optimization step.
To this end, we focus on environments in which both the agents' payoffs and the designer's objective are quadratic in actions and states, and the state vector is normally distributed.
We refer to this class of problems as \emph{LQG information design}, where LQG stands for linear best responses, a quadratic objective, and Gaussian uncertainty.\footnote{The acronym ``LQG'' originates from control theory, where LQG control concerns linear systems driven by Gaussian noise, with the objective of determining an optimal output feedback law that minimizes a quadratic cost.}

Our analysis pursues two objectives.
First, we provide a unified framework of information design that accommodates a workhorse model of network games \citep{ballesteretal2006, bramoulleetal2014}.
We apply our results to analyze optimal information provision in networks, which serves as the leading example in this paper.

Second, we offer structural insights into information design with multiple agents.
Our analysis highlights two salient properties of information structures.
The first, \emph{noise-freeness}, refers to the case where agents' equilibrium actions are deterministic conditional on the state realization.
This property appears in \citet{bhm2015} as characterizing the information structure that maximizes aggregate volatility in symmetric LQG games.
The second property, \emph{state-identifiability}, is the structural dual of noise-freeness; it holds when the state realization is uniquely identified from the profile of equilibrium actions.
Economically, this property describes a situation where the state is fully revealed in an aggregative form, while each individual agent may remain only partially informed.

We provide a complete characterization of the optimal BCE.
We first show that any information structure is outcome-equivalent to some Gaussian information structure, in which the state and action vectors are jointly normally distributed.
This implies that, without loss of generality, the designer can restrict attention to Gaussian information structures, even though non-Gaussian alternatives are a priori feasible.
Our main theorem then shows that a Gaussian information structure is optimal if and only if, in the corresponding BCE, the strategy profile $\bm{\sigma}=(\sigma_1,\ldots,\sigma_n)$ and the state vector $\bm{\theta}=(\theta_1,\ldots,\theta_m)$ satisfy the following linear constraint:
\begin{equation}
A_\Lambda \qty(\bm{\sigma} - \E\qty[\bm{\sigma}]) = B_\Lambda \qty(\bm{\theta} - \E\qty[\bm{\theta}]),	
\label{key main eq}
\end{equation}
where $A_\Lambda$ is an $n\times n$ matrix and $B_\Lambda$ is an $n\times m$ matrix. 
These matrices are determined explicitly by the primitives of the game and by the Lagrange multipliers $\Lambda$ associated with the agents' incentive constraints.

The optimality condition \eqref{key main eq} provides key insights into the structural properties discussed earlier.
Specifically, if $A_\Lambda$ is invertible, the equilibrium strategy profile $\bm{\sigma}$ can be expressed as a function of the state $\bm{\theta}$, implying that the optimal information structure is noise-free.
Conversely, if $B_\Lambda$ has full column rank so that it admits a left inverse, $\bm{\theta}$ can be recovered as a function of $\bm{\sigma}$, and the information structure is state-identifiable.

The matrices $A_\Lambda$ and $B_\Lambda$ are determined by the Lagrange multipliers $\Lambda$, which themselves endogenously arise in the dual formulation of the designer’s problem.
This motivates us to identify primitive conditions under which these matrices have full rank.
To this end, we consider a benevolent designer who maximizes social welfare.
In this case, $B_\Lambda$ turns out to have full rank, so the optimal information structure must be state-identifiable.
The intuition is that when the designer’s objective aligns with the agents’ payoffs, providing more information enables their actions to better match the realized state, thereby increasing welfare.
However, even for welfare maximization, fully informing each individual agent may be suboptimal since providing the same signal can cause coordination frictions.

By contrast, whether the optimal information structure is noise-free depends on the dimensionality of the state.
When the state is high-dimensional, one can construct linearly independent statistics that are only partially informative in isolation but jointly identify the state, in which case the optimal information structure is both state-identifiable and noise-free.
When agents instead respond to a one-dimensional state, we find that the optimal state-identifiable information structure is noisy, with each agent receiving a private signal distorted by idiosyncratic noise.

Mathematically, we show that LQG information design is formulated as a convex optimization problem known as \emph{semidefinite programming} (SDP).\footnote{SDP is a class of convex optimization problems that generalizes linear programming by replacing nonnegative vector variables with semidefinite matrix variables. See, e.g., \citet{vandenbergheboyd1996} for further details.}
This formulation builds on two key properties of BCE established by \citet{bergemannmorris2013}.
First, the mean equilibrium actions are invariant across all BCE.
This property allows the expected value of any quadratic objective function to be expressed as a linear function of the action-state covariance matrix.
Second, a given positive semidefinite matrix represents the action-state covariance of some BCE if and only if it satisfies a system of linear constraints derived from agents’ incentive compatibility.
These properties imply that the designer’s problem reduces to maximizing a linear function of a positive semidefinite matrix subject to linear constraints characterizing the set of BCE, mirroring the defining feature of SDP.
We show that strong duality holds in our problem, which in turn yields the characterization of optimality via \eqref{key main eq}.

We illustrate our findings through an application to network games.
In Section~\ref{sec_ex}, we introduce a variant of the network intervention problem studied in \citet{galeottietal2020} and show how it can be formulated as SDP.
Leveraging the computational advantage of SDP, we conduct numerical simulations that reveal several traits of optimal information structures, including state-identifiability and noise-freeness.
We also find that when agents play interchangeable roles in the underlying network, the optimal information structure inherits this symmetry.
These observations motivate our general analysis, whose implications are formally established in Section~\ref{sec_network}, where we return to this application.

The application also illustrates how strategic interaction shapes the optimal provision of information for multiple agents.
When the game exhibits strategic complements, the benevolent designer fully reveals the state to each individual agent.
By contrast, with fairly strong strategic substitutes, the optimal information structure keeps agents partially informed while fully revealing the state in an aggregative form.
Intuitively, this arises because while more information helps align actions with the realized state, it also increases correlation across agents’ actions, which lowers welfare under strategic substitutes.
In that case, we further show that the optimal accuracy of private information is closely linked to the network’s \emph{chromatic number}, defined as the minimum number of colors needed to color agents so that connected agents receive different colors.

\medskip

{\it Related literature.}
LQG models, originated in \cite{radner1962}, have been widely used to study the role of endogenous information, including information sharing \citep{vives1984,clarke1983,galor1985}, information acquisition \citep{lietal1987,vives1988,hellwigveldkamp2009,myattwallace2012,colomboetal2014}, public information disclosure \citep{morrisshin2002,angeletospavan2007}, and organizational and political communication \citep{dessein2006,calvoarmengol2015,dewanmyatt2008}.
Our work contributes to this tradition by developing an LQG model for information design and Bayesian persuasion.\footnote{See \citet{bergemannmorris2019} and \citet{kamenica2019} for surveys on information design and Bayesian persuasion. Our study builds on an outcome-based approach to information design, as in \citet{bergemannmorris2013, bergemannmorris2016a, bergemannmorris2016b, bergemannmorris2019}, \citet{bhm2015, bhm2017}, and \citet{taneva2019}.
\citet{mathevetetal2020} develop a different belief-based approach that extends the concavification method of \citet{kamenicagentzkow2011} to a multiple-agent setting.}

We apply the duality principle to information design.
A closely related paper is \citet{smolinyamashita2023}, who study information design in a general setting with concave utility functions.
They formulate the problem as an infinite-dimensional linear program over probability measures on actions and states and establish weak duality.
In contrast, we focus on a more specialized setting with quadratic payoffs, which allows us to reduce the problem to SDP over the action-state covariance matrix.
This reduction enables us to establish strong duality, which yields a sharper characterization of optimality.

This paper also relates to the literature on network interventions, an active area of research across multiple disciplines.\footnote{See, for example, \cite{valente2012} for a general introduction to the subject.}
In economics, the seminal work by \cite{ballesteretal2006} formulate the intervention problem as a network game and quantify the structural importance of each agent by comparing equilibrium with and without the agent.
\cite{galeottietal2020} consider interventions that alter targeted agents’ base action levels.
Within our model, such interventions are captured as exogenous changes in payoff structures.
In contrast, we focus on altering information structures to influence equilibrium behavior, thereby providing complementary insights for the designer seeking to shape desired economic outcomes.\footnote{This form of informational intervention has gathered attention in recent empirical literature. For example, \citet{banerjeeetal2013} examine the effect of information transmission on microfinance participation using demographic and social network data.}

\medskip

{\it Organization.}
Section~\ref{sec_ex} offers numerical simulations in the leading example.
Section~\ref{sec_model} introduces the model.
Section~\ref{sec_main} presents the main results.
Section~\ref{sec_nfsi} analyzes state-identifiability and noise-freeness in detail.
Section~\ref{sec_personal} provides a special case in which either property holds.
Section~\ref{sec_network} revisits the leading example and applies the main results.
Section~\ref{sec_conc} concludes.


\section{Leading example}
\label{sec_ex}

We consider a variant of the network games studied in \cite{galeottietal2020}.
Let $[n] \equiv \{1, \ldots, n\}$ be a finite set of agents, and let $G \equiv [g_{ij}]_{n \times n}$ be an adjacency matrix with $g_{ii} = 0$ for all $i \in [n]$.
The matrix $G$, referred to as a \emph{network}, captures the strategic connections among agents.
In this section, we assume these connections are unweighted and undirected, meaning $g_{ij} = g_{ji} \in \{0,1\}$ for all $i \neq j$, although our general analysis does not rely on these assumptions.
The \emph{degree} of agent $i$ is the number of her neighbors, defined as $d_i(G) \equiv \sum_{j \neq i} g_{ij}$.
When all agents have the same degree, we write it simply as $d(G)$.

\begin{figure}
\begin{tabular}{ccc}

\begin{minipage}{.33\textwidth}
\centering
\begin{tikzpicture}[node distance=1in]
\node [circle, draw, thick] (1) {1};
\node [circle, draw, thick, right of = 1] (2) {2};
\node [circle, draw, thick, below of = 2] (3) {3};
\node [circle, draw, thick, left of = 3] (4) {4};
\draw[<->, >=stealth, thick] (1) -- (2);
\draw[<->, >=stealth, thick] (1) -- (3);
\draw[<->, >=stealth, thick] (1) -- (4);
\draw[<->, >=stealth, thick] (2) -- (3);
\draw[<->, >=stealth, thick] (2) -- (4);
\draw[<->, >=stealth, thick] (3) -- (4);
\end{tikzpicture}
\end{minipage}

\begin{minipage}{.33\textwidth}
\centering
\begin{tikzpicture}[node distance=1in]
\node [circle, draw, thick] (1) {1};
\node [circle, draw, thick, right of = 1] (2) {2};
\node [circle, draw, thick, below of = 2] (3) {3};
\node [circle, draw, thick, left of = 3] (4) {4};
\draw[<->, >=stealth, thick] (1) -- (2);
\draw[<->, >=stealth, thick] (1) -- (4);
\draw[<->, >=stealth, thick] (2) -- (3);
\draw[<->, >=stealth, thick] (3) -- (4);
\end{tikzpicture}
\end{minipage}

\begin{minipage}{.33\textwidth}
\centering
\begin{tikzpicture}[node distance=0.5in]
\node [circle, draw, thick] (1) {1};
\node [circle, draw, thick, above of = 1, yshift = 0.15in] (2) {2};
\node [circle, draw, thick, below right of = 1, xshift = 0.2in] (3) {3};
\node [circle, draw, thick, below left of = 1, xshift = -0.2in] (4) {4};
\draw[<->, >=stealth, thick] (1) -- (2);
\draw[<->, >=stealth, thick] (1) -- (3);
\draw[<->, >=stealth, thick] (1) -- (4);
\end{tikzpicture}
\end{minipage}

\end{tabular}
\caption{In the complete network (left), all agents are connected with one another.
In the cyclical network (middle), agents are arranged on a circle and receive strategic influences from adjacent agents.
In the star-shaped network (right), agent 1 is the central agent connected with all peripheral agents.
}
\label{fig_sym_graph}
\end{figure}

Each agent $i$ simultaneously chooses an investment level $a_i \in \R$ at cost $a_i^2/2$.
The private marginal return on investment depends on an unknown personal state $\theta_i$ and the localized aggregate of other agents' investment levels $\bm{a}_{-i} \equiv (a_j)_{j\neq i}$ as follows:
\begin{equation*}
\phi_i(\bm{a}_{-i},\theta_i) \equiv \theta_i + \beta \sum_{j \neq i} g_{ij} a_j.
\end{equation*}
Here, $\beta \in \R$ represents the degree of strategic externality.
The case of $\beta < 0$ captures \emph{strategic substitutes}, where agents have incentives to be differentiated from their neighbors \citep{vives1999, goyal2001}.
In contrast, $\beta > 0$ reflects \emph{strategic complements} \citep{ballesteretal2006}, where each agent's marginal return is enhanced when her neighbors choose higher investment levels.
As common in the literature, assume that $|\beta|$ is less than the inverse of the spectral radius of $G$.

For an action profile $\bm{a} \equiv (a_i, \bm{a}_{-i})$, agent $i$'s payoff is given by
\begin{equation*}
u_i(\bm{a},\theta_i) \equiv \phi_i(\bm{a}_{-i},\theta_i) \cdot a_i - \frac{a_i^2}{2}.
\end{equation*}
This payoff structure bears similarity to the one in \cite{galeottietal2020}, but a crucial difference is that each agent's base-level marginal return $\theta_i$ is a random variable.
Assume that $\bm{\theta} = (\theta_1,\ldots,\theta_n)$ follows a symmetric multivariate normal distribution with $\E\qty[\bm{\theta}] = \bm{0}$, $\Var\qty[\theta_i] = 1$, and $\Cov\qty[\theta_i, \theta_j] = \rho$.
Here, $\rho \in [0,1]$ represents the correlation coefficient between the payoff-relevant states of different agents.
The case with $\rho = 1$ is referred to as the \emph{common-value case}; in this case, $\theta_i$ and $\theta_j$ are perfectly correlated with each other.
The other cases with $\rho < 1$ are referred to as \emph{private-value cases}.

Each agent chooses her investment level without knowing the realization of $\bm{\theta}$.
Instead, an \emph{(information) designer} provides private signals $\eta_i$ to each agent $i$ according to a pre-announced joint distribution $\pi$ over $(\bm{\eta}, \bm{\theta}) \equiv (\eta_1, \ldots, \eta_n, \theta_1, \ldots, \theta_n)$, referred to as an \emph{information structure}.
Given $\pi$, agent $i$'s strategy $\sigma_i$ maps her signal $\eta_i$ to an action $\sigma_i(\eta_i) \in \R$.
A strategy profile $\bm{\sigma}^\pi \equiv (\sigma_1^\pi,\ldots,\sigma_n^\pi)$ forms a \emph{Bayes–Nash equilibrium} (BNE) under $\pi$ if each $\sigma_i^\pi$ maximizes agent $i$'s expected payoff conditional on $\eta_i$, given opponents’ strategies $\bm{\sigma}^\pi_{-i} \equiv (\sigma_j^\pi)_{j \neq i}$.

In this section, we assume that the designer seeks to maximize \emph{utilitarian welfare}, i.e., the sum of all agents’ payoffs.
Notably, since both the agents’ payoffs and the designer’s objective are quadratic functions, the problem can be expressed in terms of the action-state covariance matrix induced by $\pi$.
Let $x_{ij} = \Cov[\sigma^\pi_i, \sigma^\pi_j]$ denote the covariance between agents $i$ and $j$’s actions.
Also, let $y_{ij} = \Cov[\sigma^\pi_i, \theta_j]$ denote the covariance between agent $i$’s action and agent $j$'s payoff state.
The designer can influence agents' incentives by sending private signals that may be correlated with each other and with the state.

Similarly to \cite{uiyoshizawa2015}, we can show that the designer’s expected objective in BNE is proportional to the total variance of agents' equilibrium strategies:
\begin{equation} \label{obj_ex}
\sum_{i \in [n]} x_{ii}.
\end{equation}
Moreover, as shown by \citet{bergemannmorris2013, bergemannmorris2016a}, we can, without loss of generality, identify each agent’s signal as a recommendation for the action she should take.
Such action recommendations are feasible only if each agent has an incentive to follow it.
This incentive constraint can be characterized by the following \emph{obedience condition}:\footnote{This extends the obedience condition of \citet{bergemannmorris2013} to asymmetric settings.}
\begin{equation} \label{const_ex1}
x_{ii} = y_{ii} + \beta \sum_{j \neq i} g_{ij} x_{ij}, \quad \forall i \in [n]. \end{equation}
Additionally, since $X =[x_{ij}]_{n \times n}$ and $Y = [y_{ij}]_{n \times n}$ are submatrices of the action-state covariance matrix, they must satisfy the following positive semidefiniteness constraint:
\begin{equation} \label{const_ex2}
\mqty[X & Y \\ Y^\top & \Var\qty[\bm{\theta}]] \text{ is positive semidefinite}.
\end{equation}
In our Lemma~\ref{lem_moment}, we formally show that conditions \eqref{const_ex1} and \eqref{const_ex2} are necessary and sufficient for any candidate action-state covariance matrix to be inducible in BNE under some (Gaussian) information structure.

Thus, the designer's problem reduces to maximizing the linear objective \eqref{obj_ex} subject to the linear constraints \eqref{const_ex1} and the positive semidefiniteness constraint \eqref{const_ex2}.
In fact, these are the defining mathematical features of SDP, which admits several optimization tools that are readily available.
Below, we present numerical results obtained from simulations using CVXPY, a Python-based modeling language for convex optimization.

First, we find that when a network exhibits symmetry between agents, the optimal information structure preserves the same degree of symmetry.
Specifically, in the complete network (Figure~\ref{fig_sym_graph}, left), all agents are structurally identical.
In the cyclical network (middle), although agents have different sets of neighbors, there is an intuitive sense in which they remain symmetric: for any pair of distinct agents $(i,j)$, there exists a permutation that maps $i$ to $j$ while leaving the network structure unchanged.\footnote{We formalize this notion of symmetry in Section~\ref{sec_sym} by introducing some group-theoretic definitions.}
In the star-shaped network (right), all peripheral agents share this sort of symmetry, whereas the central agent does not.
In each case, the numerically obtained optimal information structure yields an action-state covariance matrix that remains invariant under relabeling of symmetric agents' indices.

\begin{figure}[t]
\begin{center}
\includegraphics[width=\linewidth]{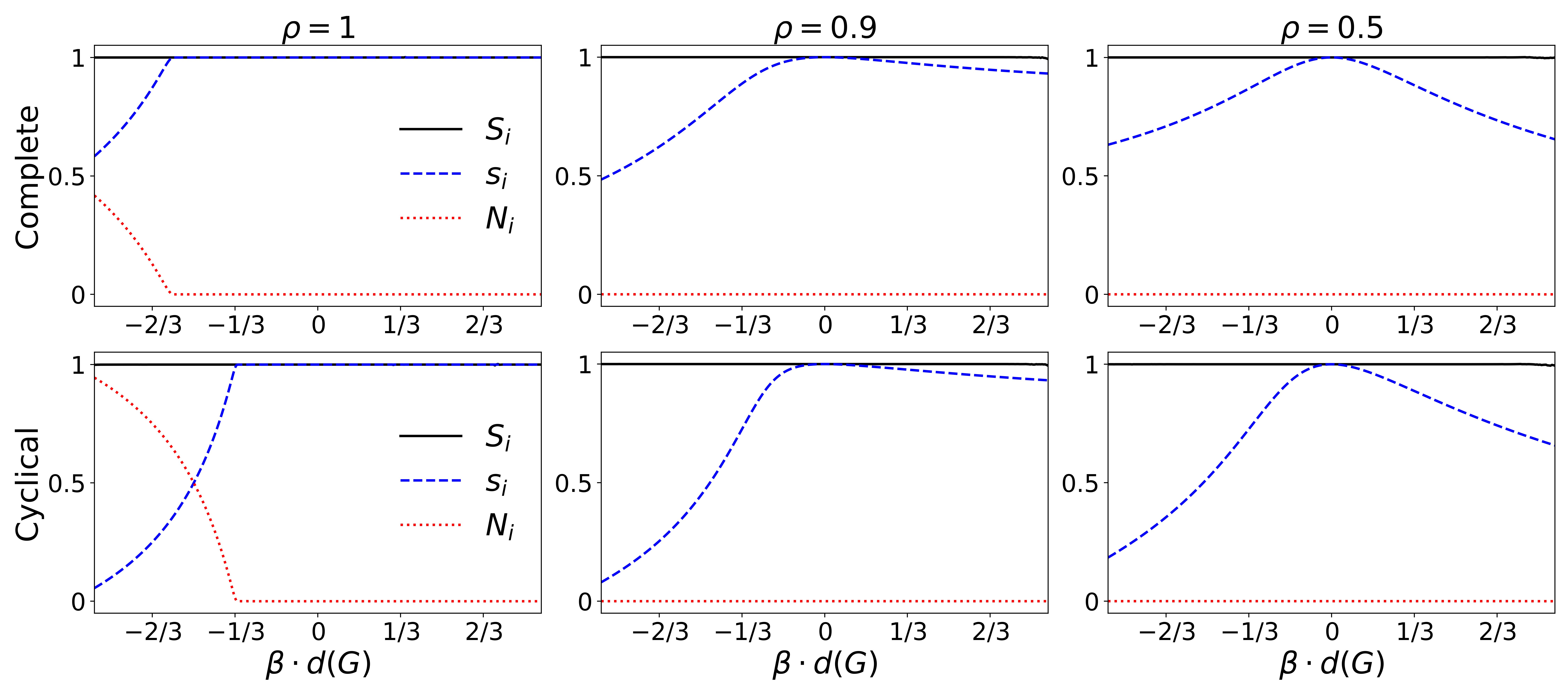}
\caption{With $n=4$, the values of $S_i$, $s_i$, and $N_i$ are plotted against $\beta d(G)$ for different $\rho$ under the complete and cyclical graphs.
These plots do not depend on the choice of $i$.}
\label{fig_comp_cycle}
\end{center}
\end{figure}

Second, any optimal information structure fully reveals the entire state to the agents in aggregate, while each individual agent may remain only partially informed.
We formalize this idea using variance reduction as a measure of signal informativeness.
Let
\begin{equation*}
s_i \equiv \frac{\Var \qty[\theta_i] - \Var[\theta_i \mid \sigma^\pi_i]}{\Var\qty[\theta_i]}
\end{equation*}
denote the proportion of variance in agent $i$'s personal state $\theta_i$ eliminated by conditioning on agent $i$'s action recommendation.
Moreover, let
\begin{equation*}
S_i \equiv \frac{\Var \qty[\theta_i] - \Var[\theta_i \mid \sigma^\pi_1,\ldots,\sigma^\pi_n]}{\Var \qty[\theta_i]}
\end{equation*}
denote the proportion of variance in $\theta_i$ eliminated by conditioning on the entire profile of all agents' action recommendations.

In Figure~\ref{fig_comp_cycle}, we display $s_i$ and $S_i$ as functions of $\beta$ for various values of $\rho$ under both complete and cyclical networks.
We observe that $S_i = 1$ across all parameter values, indicating that $\theta_i$ is always fully revealed in aggregate.
However, each individual agent may remain only partially informed since $s_i < 1$ when $\beta$ is negative and below some cutoff in the common-value case.
Moreover, $s_i < 1$ whenever $\beta \neq 0$ in all private-value cases.
This finding highlights the importance of privatized information, as such aggregative full disclosure requires providing personalized signals to different agents rather than revealing public information.

Third, whether the optimal information structure induces non-fundamental volatility depends on the distribution of personal states.
To quantify this result, we compute
\begin{equation*}
N_i \equiv \frac{\Var \qty[\sigma^\pi_i \mid \theta_1,\ldots,\theta_n]}{\Var \qty[\sigma^\pi_i]},
\end{equation*}
which measures the proportion of variance in agent $i$’s equilibrium strategy that can be explained by the state.
Note that $N_i$ ranges from $0$ to $1$, with lower values indicating less non-fundamental volatility in $\sigma^\pi_i$ unrelated to $\bm{\theta}$.

As shown in Figure~\ref{fig_comp_cycle}, in the common-value case, $N_i > 0$ holds whenever the common state is not fully disclosed to agent $i$.
In contrast, we find $N_i=0$ in all private-value cases, indicating that the equilibrium strategy entails no non-fundamental volatility.
This contrast suggests that the designer’s approach to achieving aggregative full disclosure depends on whether agents' payoff-relevant states are common or private.
When the state is common, each agent's signal incorporates idiosyncratic noise that cancels out in the aggregate but introduces non-fundamental variation at the individual level.
By contrast, in the private-value case, each agent’s action recommendation is deterministic in the state without extraneous randomness; instead, the optimal information structure coordinates agents’ actions by shaping signals that embed not only each agent’s own state but also those of others.

\begin{figure}[t]
\begin{center}
\includegraphics[width=\linewidth]{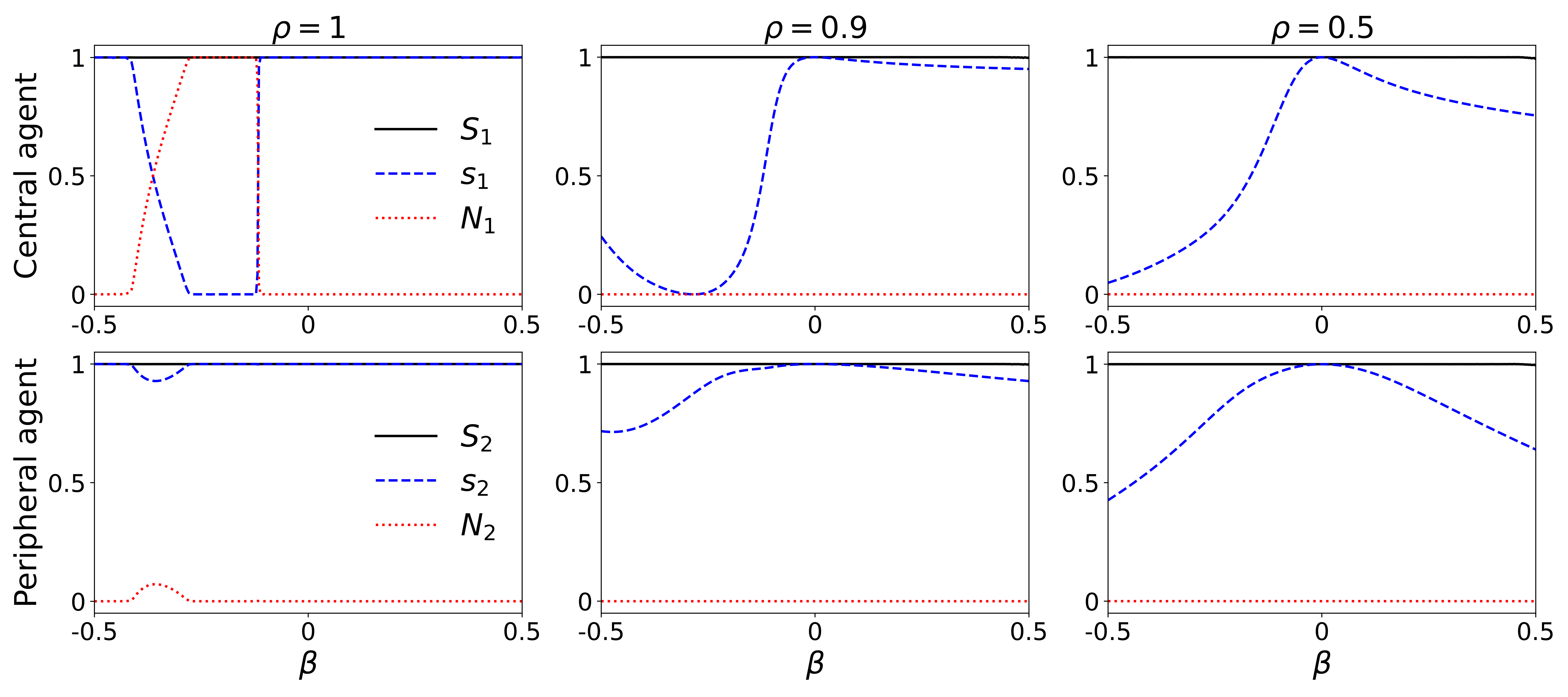}
\caption{For the undirected star-shaped network with $n=4$ (Figure~\ref{fig_sym_graph}, right), the values of $S_i$, $s_i$, and $N_i$ are plotted against $\beta$ for the central agent ($i=1$) in the first row, and for a representative peripheral agent ($i=2$) in the second row.}
\label{fig_star}
\end{center}
\end{figure}

These findings extend to networks that lack full symmetry.
Figure~\ref{fig_star} reports numerical results for the undirected star-shaped network, where the first row corresponds to the central agent and the second to a representative peripheral agent.
Even though the network does not treat all agents symmetrically, we find $S_i = 1$ for all agents and all parameter values, confirming the optimality of aggregative full disclosure.
Also, $N_i = 0$ in all private-value cases, confirming the absence of non-fundamental volatility.

Lastly, in the common-value case, we discuss the interplay between the structure of strategic interactions and the optimal degree of each agent's signal informativeness.
In both complete and cyclical networks, the left panels of Figure~\ref{fig_comp_cycle} show that the optimality of full disclosure depends monotonically on the total externality each agent receives (or equivalently, gives): there exists a threshold such that full disclosure is optimal whenever $\beta d(G)$ exceeds it.
Since this threshold is negative, full disclosure is always optimal under strategic complements.
However, under strategic substitutes, the strength of externality must remain small enough for full disclosure to be optimal.

This threshold varies across networks, and the condition for full disclosure to be optimal is weaker in complete networks than in cyclical networks.
Moreover, even when full disclosure is suboptimal, agents receive more informative signals in complete networks than in cyclical networks.
This raises a natural question: what property of the network determines how much information the optimal information structure reveals to agents?
A key determinant turns out to be the ``chromatic number'' from graph coloring problems.
Among nontrivial networks, complete and cyclical networks have the largest and smallest chromatic numbers, which leads to stark differences in the informativeness of optimal signals.

Interestingly, however, when agent roles are not fully interchangeable, the optimality of full disclosure is no longer characterized by a monotonic parameter region.
In particular, for star-shaped networks, the left panel of Figure~\ref{fig_star} shows that full disclosure is optimal when $\beta$ is either large or small enough, but not when $\beta$ is at an intermediate level.


\section{Model}
\label{sec_model}

The following matrix notations are used. 
For a positive integer $n \in \N$, let $[n] \equiv \{1,\ldots,n\}$.
Given $n,m \in \N$, we denote by $\calM^{n,m}$ the set of $n \times m$ matrices, $\calS^n$ the set of $n \times n$ symmetric matrices, $\calS^n_+$ the set of all positive semidefinite matrices, and $\calS^n_{\rm diag}$ the set of $n \times n$ diagonal matrices.
Denote by $A^\top$ the transpose of $A$.
We write $A \succeq B$ (resp.\ $A \succ B$) when $A-B$ is positive semidefinite (resp.\ positive definite).
For two matrices of the same size, $A = [a_{ij}]_{n \times m}$ and $B = [b_{ij}]_{n \times m}$, their Frobenius inner product is defined by $A \bullet B = \sum_{i=1}^n \sum_{j=1}^m a_{ij} b_{ij}$.
For a square matrix $A = [a_{ij}]_{n \times n}$, let $\diag (A) = (a_{11},\ldots,a_{nn})$ denote the vector of the diagonal entries of $A$, and let $\tr(A) = \sum_{i=1}^n a_{ii}$ denote the trace.
Note that $A \bullet B = \tr(A^\top B)$ holds.
For a vector $\bm{x} \in \R^n$, let $\Diag(\bm{x})$ be the diagonal matrix with diagonal entries $\bm{x} = (x_1,\ldots,x_n)$.
Unless otherwise stated, all vectors are treated as column vectors.


\subsection{Setup}

We generalize the model from the leading example in Section~\ref{sec_ex} so that each agent's utility depends on a possibly weighted aggregate of their neighbors' actions and a weighted aggregate of multi-dimensional states.

Each agent $i \in [n]$ chooses an action $a_i \in \mathbb{R}$.
Their payoffs depend quadratically on the action profile $\bm{a} = (a_1,\ldots,a_n)$ and an $m$-dimensional payoff-relevant state $\bm{\theta} = (\theta_1,\ldots,\theta_m)$, where $m \geq 2$, as specified by:
\begin{equation} \label{payoff}
u_i(\bm{a},\bm{\theta}) \equiv
\underbrace{\qty( \sum_{k \in [m]} r_{ik} \theta_{k} - \sum_{j \in [n] \setminus \{i\}} q_{ij} a_j)}_{\text{private marginal return}} \ \cdot \ a_i
\ - \  \underbrace{\qty(\frac{q_{ii}}{2}) \cdot a_i^2}_{\text{private cost}}
\ + \ \underbrace{\bar{u}_i(\bm{a}_{-i}, \bm{\theta})}_{\substack{\text{pure} \\ \text{externality}}}.
\end{equation}
Here, $q_{ij}$ and $r_{ik}$ are constants, and $\bar{u}_i$ is an arbitrary function that depends only on the opponents' actions $\bm{a}_{-i}$ and the state $\bm{\theta}$.
Let $Q \equiv [q_{ij}]_{n\times n}$ and $R \equiv [r_{ik}]_{n\times m}$ be the matrices collecting the coefficients in \eqref{payoff}.
The matrix $Q$ encodes the strategic interactions among agents, while $R$ describes how the fundamental state influences individual payoffs.
The function $\bar{u}_i$ captures additional externalities that are independent of $a_i$.
Throughout, we impose the following regularity assumption.

\begin{basicassumption} \label{asm_basic}
The matrix $Q+Q^\top$ is positive definite.
The state $\bm{\theta}$ follows a multivariate Gaussian distribution with mean $\bar{\bm{\theta}} \equiv \E \qty[\bm{\theta}] \in \R^m$ and variance $Z \equiv \Var \qty[\bm{\theta}] \in \calS^m_+$, where $\Var[\theta_k] > 0$ for all $k \in [m]$.
\end{basicassumption}

The positive definiteness of $Q+Q^\top$ is satisfied in many applications and guarantees the unique existence of an equilibrium across all information structures.
That $Z \neq O$ is a minimal assumption to make our problem non-trivial.
In the payoff function \eqref{payoff}, $\bar{u}_i$ has no influence on agent's decisions so that their strategic concerns can be summarized by the matrices $Q$ and $R$ alone.
Also, since $\bm{\theta}$ is normally distributed, its distribution can be summarized by $(\bar{\bm{\theta}}, Z)$, while $\bar{\bm{\theta}}$ will be of no importance in our analysis.
In this regard, the basic game structure can be summarized by $(Q,R,Z)$.

The designer is assumed to hold an objective function $v$ that is also quadratic in $(\bm{a},\bm{\theta})$.
This arises naturally, for instance, when the designer seeks to maximize social welfare, defined as $v = \sum_{i=1}^n u_i$.
Given that each $\bar{u}_i$ is quadratic, the social welfare function remains quadratic.
More generally, we consider the following parametric form for the designer's objective:
\begin{equation} \label{objective}
v(\bm{a},\bm{\theta}) \equiv \mqty[\bm{a} \\ \bm{\theta}]^\top \mathbf{V} \mqty[\bm{a} \\ \bm{\theta}] + \bar{v}(\bm{a},\bm{\theta}), \quad {\rm where} \quad \mathbf{V} \equiv \mqty[V & W/2 \\ W^\top/2 & O] \in \calS^{n+m}.
\end{equation}
Here, $\mathbf{V}$ is a block matrix representing the quadratic coefficients in the designer's objective, and $\bar{v}$ is any linear function of $(\bm{a},\bm{\theta})$.
It is without loss of generality to assume that $\mathbf{V}$ is symmetric, as $\mathbf{V}$ can be replaced by $(\mathbf{V}+\mathbf{V}^\top)/2$ without changing $v$.
In addition, the bottom-right block can be set to zero since we can normalize any constant terms if necessary.

The designer sends a private signal $\eta_i$ to each agent $i$ according to a pre-announced joint probability distribution over $(\bm{\eta},\bm{\theta}) = (\eta_1,\ldots,\eta_n,\theta_1,\ldots,\theta_m)$.
This distribution $\pi$ is called an \emph{information structure}.
In particular, when $(\bm{\eta},\bm{\theta})$ is jointly normally distributed, $\pi$ is called a \emph{Gaussian information structure}.
Let $\Pi$ denote the set of all information structures, and let $\Pi^{\rm g} \subseteq \Pi$ be its subset consisting of all Gaussian ones.
The designer can commit to an information structure following the standard timeline in the literature:
In the first stage, the designer selects an information structure $\pi \in \Pi$ and informs all agents of $\pi$.
In the second stage, the realization of $(\bm{\eta}, \bm{\theta})$ is drawn according to $\pi$, and each agent $i$ chooses her optimal action to maximize the conditional expected payoff given $\eta_i$ under $\pi$.


\subsection{Bayes Nash equilibrium}
\label{sec_BNE}

To analyze the optimal information structure, we first characterize the equilibrium configuration in the second-stage game.
Given an information structure $\pi$, agent $i$'s strategy is a measurable function $\sigma_i$ of $\eta_i$, satisfying $\E |\sigma_i(\eta_i)|^2 < \infty$.
A strategy profile $\bm{\sigma} = (\sigma_1,\ldots,\sigma_n)$ forms a \emph{Bayes Nash equilibrium} (BNE) under $\pi$ if each agent's strategy maximizes her conditional expected payoff, given the strategies of others.
Formally, $\bm{\sigma}$ is a BNE if, almost surely (a.s.),
\begin{equation*}
\sigma_i(\eta_i)\in \argmax_{a_i \in \R}
\E^\pi \qty[u_i \qty((a_i, \bm{\sigma}_{-i}), \bm{\theta}) \mid \eta_i], \quad \forall i \in [n],
\end{equation*}
where $\bm{\sigma}_{-i} = (\sigma_{j}(\eta_j))_{j\neq i}$, and $\E^\pi[\cdot \mid \eta_i]$ denotes the conditional expectation given $\eta_i$ under $\pi$.
We simply write $\E\qty[\cdot]$ instead of $\E^\pi\qty[\cdot]$ as long as there is no risk of confusion.

It is known that a unique BNE exists for any information structure under Basic Assumption; see Proposition~4 of \cite{ui2016}.
Thus, we denote by $\bm{\sigma}^\pi$ the unique BNE corresponding to each information structure $\pi$.

\begin{lemma}
\label{lem_BNE}
If $Q+Q^\top$ is positive definite, then a unique BNE $\bm{\sigma}^\pi$ exists for each $\pi \in \Pi$.
In particular, if $\pi \in \Pi^{\rm g}$, then $(\bm{\sigma}^\pi, \bm{\theta})$ is jointly normally distributed.
\end{lemma}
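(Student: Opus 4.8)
The plan is to reduce the equilibrium condition to a linear operator equation on a Hilbert space and invoke coercivity, following the strategy behind Proposition~4 of \citet{ui2016}. Since the private cost $-q_{ii}/2$ makes $u_i$ strictly concave in $a_i$ (note that $Q+Q^\top \succ O$ forces $q_{ii}>0$), agent $i$'s best response is characterized by its first-order condition, which reads
\begin{equation*}
q_{ii}\,\sigma_i(\eta_i) + \sum_{j \neq i} q_{ij}\,\E[\sigma_j \mid \eta_i] = \E\Big[\textstyle\sum_{k} r_{ik}\theta_k \,\Big|\, \eta_i\Big]
\end{equation*}
almost surely. I would work in the Hilbert space $\calH \equiv \bigoplus_{i} L^2(\eta_i)$, where $L^2(\eta_i)$ collects the square-integrable $\eta_i$-measurable functions, equipped with $\langle \bm{\sigma}, \bm{\tau}\rangle \equiv \sum_i \E[\sigma_i \tau_i]$; the relevant second moments are finite by the Gaussianity of $\bm{\theta}$ in Basic Assumption. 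Writing $P_i$ for the conditional-expectation (orthogonal projection) operator onto $L^2(\eta_i)$ and $\xi_i \equiv \sum_k r_{ik}\theta_k$, the equilibrium system becomes the single operator equation $\calT\bm{\sigma} = \bm{b}$, where $(\calT \bm{\sigma})_i \equiv \sum_j q_{ij}P_i\sigma_j$ and $b_i \equiv P_i\xi_i$.

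The crux is to show that $\calT$ is invertible. I would exploit the identity $\E[\sigma_i\,P_i\sigma_j] = \E[\sigma_i\sigma_j]$, valid because $\sigma_i$ is $\eta_i$-measurable, to compute the associated quadratic form
\begin{equation*}
\langle \bm{\sigma}, \calT\bm{\sigma}\rangle = \sum_{i,j} q_{ij}\,\E[\sigma_i\sigma_j] = \E\big[\bm{\sigma}^\top Q\, \bm{\sigma}\big] = \tfrac{1}{2}\,\E\big[\bm{\sigma}^\top (Q + Q^\top)\,\bm{\sigma}\big] \ \geq\ \tfrac{\lambda}{2}\,\|\bm{\sigma}\|^2,
\end{equation*}
where $\lambda>0$ is the smallest eigenvalue of $Q+Q^\top$. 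Together with boundedness of $\calT$ (immediate from finiteness of the $q_{ij}$), this coercivity lets the Lax--Milgram theorem deliver a unique $\bm{\sigma}^\pi \in \calH$ solving $\calT\bm{\sigma} = \bm{b}$, which is the unique BNE. Alternatively, this first assertion may simply be quoted from Proposition~4 of \citet{ui2016}.

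For the second assertion, fix $\pi \in \Pi^{\rm g}$ and let $\calH^{\mathrm{aff}} \subseteq \calH$ be the closed subspace of profiles whose components are affine in the respective signals. Since $(\bm{\eta},\bm{\theta})$ is jointly Gaussian, conditioning is affine: $P_i\xi_i$ is affine in $\eta_i$, and $P_i\sigma_j$ is affine in $\eta_i$ whenever $\sigma_j$ is affine in $\eta_j$. Hence $\bm{b} \in \calH^{\mathrm{aff}}$ and $\calT$ maps $\calH^{\mathrm{aff}}$ into itself. As coercivity persists on the subspace, Lax--Milgram applied within $\calH^{\mathrm{aff}}$ yields a solution $\bm{\sigma}^\ast$ there; since $\calT\bm{\sigma}^\ast$ and $\bm{b}$ both lie in $\calH^{\mathrm{aff}}$, this $\bm{\sigma}^\ast$ satisfies the variational identity against every $\bm{\tau}\in\calH$ (test functions orthogonal to $\calH^{\mathrm{aff}}$ contribute nothing), so by the uniqueness just established it coincides with $\bm{\sigma}^\pi$. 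Thus each $\sigma_i^\pi$ is affine in $\eta_i$, making $\bm{\sigma}^\pi$ an affine image of $(\bm{\eta},\bm{\theta})$; consequently $(\bm{\sigma}^\pi,\bm{\theta})$ is jointly normal.

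The main obstacle is the coercivity step: one must carefully justify the projection identity $\E[\sigma_i\,P_i\sigma_j] = \E[\sigma_i\sigma_j]$ together with the attendant measurability and integrability bookkeeping, so that the pointwise bound $\bm{\sigma}^\top(Q+Q^\top)\bm{\sigma} \geq \lambda\,\|\bm{\sigma}\|^2$ transfers to the $\calH$-norm after taking expectations. The Gaussian part is then comparatively routine, its only delicate point being the invariance argument that the \emph{global} solution, not merely some affine solution, lands in $\calH^{\mathrm{aff}}$---which uniqueness resolves.
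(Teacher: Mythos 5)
Your argument is correct, and it is essentially the approach the paper relies on: the paper itself offers no written proof of this lemma, simply citing Proposition~4 of \citet{ui2016} for existence and uniqueness, and your Lax--Milgram/coercivity derivation (including the tower-property identity $\E[\sigma_i P_i\sigma_j]=\E[\sigma_i\sigma_j]$ and the invariance of the affine subspace under $\calT$ for the Gaussian part) is a faithful, self-contained reconstruction of that argument. The only step worth making explicit is that $\calH^{\mathrm{aff}}$ is closed (it is finite-dimensional in each coordinate since $\eta_i$ is a finite-dimensional Gaussian vector), after which your uniqueness argument correctly forces the global solution to be affine and hence $(\bm{\sigma}^\pi,\bm{\theta})$ to be jointly normal.
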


Since $u_i$ is quadratic and concave, $\bm{\sigma}^\pi$ forms a BNE if and only if the following best-response condition holds for all $i \in [n]$:
\begin{equation}
\sum_{j \in [n]} q_{ij}\E[\sigma_j^\pi \mid \eta_i]= \sum_{k \in [m]} r_{ik}\E[\theta_k \mid \eta_i], \label{FOC}
\end{equation}
where $\sigma_j = \sigma_j(\eta_j)$.
This condition imposes linear constraints on the conditional expectations of the state and equilibrium actions.

Moreover, applying the law of iterated expectations, we obtain the corresponding restrictions on unconditional expectations and covariances.
Specifically, our Lemma~\ref{lem_moment} below shows that the unique equilibrium $\bm{\sigma}^\pi$ under any (possibly, non-Gaussian) information structure $\pi$ must satisfy the following moment restrictions:
\begin{gather}
\sum_{j \in [n]} q_{ij}\E[\sigma_j^\pi]= \sum_{k \in [m]} r_{ik}\E[\theta_k], \quad \forall i \in [n], \label{moment1} \\
\sum_{j \in [n]}  q_{ij}\Cov[\sigma_i^\pi, \sigma_j^\pi]=\sum_{k \in [m]} r_{ik} \Cov[\sigma_i^\pi, \theta_{k}], \quad \forall i \in [n].
\label{moment2}
\end{gather}

In matrix form, the first equation \eqref{moment1} can be rewritten as $Q \E \qty[\bm{\sigma}^\pi] = R \bar{\bm{\theta}}$.
Thus, since $Q$ is invertible under Basic Assumption, the expected action profile is uniquely determined as
\[
\bar{\bm{a}} \equiv \E \qty[{\bm{\sigma}}^\pi] = Q^{-1}R \bar{\bm{\theta}}.
\]
This shows that $\bar{\bm{a}}$ remains constant across all information structures, implying that the designer cannot manipulate expected action levels through information design.

Turning to the second equation \eqref{moment2}, it becomes $\diag(Q\Var \qty[\bm{\sigma}^\pi]) = \diag \qty(R \Cov \qty[\bm{\theta}, \bm{\sigma}^\pi])$ in matrix form, which places restrictions on the action-state covariance matrix, though in a less restrictive manner than the first equation.
Specifically, let $\mathbf{M}^\pi$ denote the action-state covariance matrix in the unique equilibrium under an information structure $\pi$, defined as
\[
\mathbf{M}^\pi \equiv \mqty[\Var \qty[\bm{\sigma}^\pi] & \Cov\qty[\bm{\sigma}^\pi, \bm{\theta}] \\ \Cov\qty[\bm{\theta}, \bm{\sigma}^\pi] & Z].
\]
Since $\mathbf{M}^\pi$ is symmetric and its lower-right block is given, it contains $(n(n+2m+1))/2$ free variables, while \eqref{moment2} imposes only $n$ linear constraints on these variables.

The following lemma establishes that these restrictions are not only necessary but also sufficient for an $(n+m)$-by-$(n+m)$ positive semidefinite matrix to be induced as the action-state covariance matrix under some Gaussian information structure.

\begin{lemma} \label{lem_moment}
Given any information structure $\pi$, let $X=\Var[\bm{\sigma}^\pi]$ and $Y=\Cov[\bm{\sigma}^\pi,\bm{\theta}]$.
Then,
\begin{equation} \label{feasibility}
\diag(XQ^\top )=\diag(YR^\top) \quad \text{and} \quad
\mathbf{M}_{X,Y} \equiv \mqty[X & Y \\ Y^\top & Z] \succeq O.
\end{equation}
Conversely, for any $X \in \calS^n$ and $Y \in \calM^{n \times m}$, if \eqref{feasibility} is satisfied, then there exists a Gaussian information structure $\pi$ such that $\mathbf{M}^{\pi} = \mathbf{M}_{X,Y}$. 
\end{lemma}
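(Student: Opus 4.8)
The plan is to treat the two implications separately, obtaining necessity quickly from the best-response characterization \eqref{FOC} and the converse from the Bergemann--Morris device of identifying each agent's signal with her own action recommendation. Throughout I would exploit that, by Lemma~\ref{lem_BNE}, the BNE exists, is unique, and is jointly Gaussian with $\bm{\theta}$ whenever $\pi \in \Pi^{\rm g}$.

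For necessity, positive semidefiniteness of $\mathbf{M}_{X,Y}$ is immediate, since it is literally the covariance matrix of the random vector $(\bm{\sigma}^\pi, \bm{\theta})$ and its lower-right block equals $Z = \Var[\bm{\theta}]$ by construction. To obtain the diagonal identity I would start from \eqref{FOC}. Taking unconditional expectations and using the tower property yields the mean restriction \eqref{moment1}. Then, multiplying \eqref{FOC} through by the $\eta_i$-measurable variable $\sigma_i^\pi$ and taking expectations --- again invoking iterated expectations so that $\E[\sigma_i^\pi \E[\sigma_j^\pi \mid \eta_i]] = \E[\sigma_i^\pi \sigma_j^\pi]$ --- produces a relation among second moments; subtracting $\E[\sigma_i^\pi]$ times \eqref{moment1} recenters these into covariances and gives \eqref{moment2}. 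Writing \eqref{moment2} in matrix form, with $x_{ij} = X_{ij}$ and $y_{ik} = Y_{ik}$, recovers $\diag(XQ^\top) = \diag(YR^\top)$ exactly.

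For sufficiency, I would construct the information structure directly from $\mathbf{M}_{X,Y}$. Let $(\bm{\eta}, \bm{\theta})$ be jointly Gaussian with covariance matrix $\mathbf{M}_{X,Y}$ and mean $(\bar{\bm{a}}, \bar{\bm{\theta}})$, where $\bar{\bm{a}} \equiv Q^{-1} R \bar{\bm{\theta}}$; this is well defined because $\mathbf{M}_{X,Y} \succeq O$, and its $\bm{\theta}$-marginal is the prior $N(\bar{\bm{\theta}}, Z)$, so it is a legitimate $\pi \in \Pi^{\rm g}$. I then claim the ``obedient'' profile $\sigma_i(\eta_i) = \eta_i$ is the BNE under $\pi$. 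Since each $u_i$ is strictly concave in $a_i$, it suffices to verify \eqref{FOC} for this profile. As $(\bm{\eta}, \bm{\theta})$ is jointly Gaussian, $\E[\eta_j \mid \eta_i]$ and $\E[\theta_k \mid \eta_i]$ are affine in $\eta_i$, with slopes $x_{ij}/x_{ii}$ and $y_{ik}/x_{ii}$. Matching the constant terms in \eqref{FOC} is precisely $Q\bar{\bm{a}} = R\bar{\bm{\theta}}$, which holds by the choice of $\bar{\bm{a}}$; matching the coefficients of $(\eta_i - \E[\eta_i])$ reduces, after clearing $x_{ii}$, to $\sum_j q_{ij}x_{ij} = \sum_k r_{ik}y_{ik}$, which is exactly the diagonal hypothesis in \eqref{feasibility}. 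Hence \eqref{FOC} holds, and by uniqueness this profile is $\bm{\sigma}^\pi$, giving $\mathbf{M}^\pi = \mathbf{M}_{X,Y}$.

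The step I expect to require the most care is verifying obedience in the degenerate case $x_{ii} = 0$, where the Gaussian conditioning argument would otherwise divide by $\Var[\eta_i]$. Here I would use positive semidefiniteness of $\mathbf{M}_{X,Y}$: a vanishing diagonal entry of a PSD matrix forces the entire corresponding row and column to vanish, so $x_{ij} = 0$ for all $j$ and $y_{ik} = 0$ for all $k$. Then $\eta_i$ is a.s.\ constant, both conditional expectations in \eqref{FOC} collapse to their unconditional values, the condition reduces to the mean restriction \eqref{moment1}, and the diagonal identity holds trivially as $0=0$. Assembling the generic and degenerate cases completes the converse. The conceptual crux throughout is that the scalar identity \eqref{moment2} is exactly the obedience constraint making ``follow the recommendation'' a best response under Gaussian signals.
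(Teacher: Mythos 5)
Your proposal is correct and follows essentially the same route as the paper: necessity via iterated expectations applied to the best-response condition \eqref{FOC} (recentering second moments into covariances), and sufficiency by building a Gaussian signal vector with covariance $\mathbf{M}_{X,Y}$, setting $\sigma_i(\eta_i)=\eta_i$, and checking obedience through the affine form of Gaussian conditional expectations, with the degenerate case $\Var[\eta_i]=0$ handled separately exactly as the paper does. The only cosmetic difference is that you match affine coefficients directly where the paper multiplies the covariance identity by $(\eta_i-\bar{a}_i)/\Var[\eta_i]$ and invokes Lemma~\ref{lem_cond_normal}; these are the same computation.
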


\begin{definition}
A pair of matrices $(X,Y) \in \calS^n \times \calM^{n,m}$ is \emph{(primal) feasible} if it satisfies \eqref{feasibility}.
\end{definition}

Lemma \ref{lem_moment} shows that while the condition \eqref{feasibility} is necessarily satisfied under \emph{any} information structure, it is sufficient for any candidate matrix to be induced under some \emph{Gaussian} information structure.
As a result, we have
\[
\qty{\mathbf{M}^\pi: \pi\in \Pi}
= \qty{\mathbf{M}_{X,Y}: \text{$(X,Y)$ is feasible}}
= \qty{\mathbf{M}^\pi: \pi\in \Pi^{\rm g}}.
\label{covariance key eq}
\]
In other words, the set of covariance matrices over all BNEs induced by arbitrary information structures coincides with that over all Gaussian ones.
Together with Lemma~\ref{lem_obj} below, this observation ensures the optimality of Gaussian information structures when the designer's objective is quadratic in the state and actions.


\subsection{LQG information design}
\label{sec_lqgd}

Now, returning to the first stage, the designer aims to maximize her expected objective by anticipating that agents play the unique BNE under the chosen information structure.
\emph{LQG information design} is the problem of finding an information structure that attains this maximum:
\begin{equation} \label{problem 1}
v^{\rm p} \equiv \max_{\pi \in \Pi} \E\qty[v(\bm{\sigma}^\pi, \bm{\theta})].
\end{equation}
We refer to $\E\qty[v(\bm{\sigma}^\pi, \bm{\theta})]$ as the value of an information structure $\pi$.
An information structure $\pi$ is \emph{optimal} if its value achieves $v^{\rm p}$.


\begin{lemma} \label{lem_obj}
For any information structure $\pi \in \Pi$, we have
\[
\E\qty[v(\bm{\sigma}^\pi, \bm{\theta})] = \mathbf{V} \bullet \mathbf{M}^\pi +
\underbrace{\qty(\bar{v}(\bar{\bm{a}}, \bar{\bm{\theta}}) + \mqty[\bar{\bm{a}} \\ \bar{\bm{\theta}}]^\top \mathbf{V} \mqty[\bar{\bm{a}} \\ \bar{\bm{\theta}}])}_{(*)}.
\]
\end{lemma}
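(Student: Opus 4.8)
The plan is to reduce the claim to the elementary fact that the expectation of a quadratic form splits into a covariance contribution and a mean contribution, combined with the linearity of expectation for the affine part $\bar{v}$. Throughout I would write $\bm{z} \equiv \mqty[\bm{\sigma}^\pi \\ \bm{\theta}]$ for the stacked column vector in $\R^{n+m}$, so that $v(\bm{\sigma}^\pi, \bm{\theta}) = \bm{z}^\top \mathbf{V} \bm{z} + \bar{v}(\bm{\sigma}^\pi, \bm{\theta})$, while $\mathbf{M}^\pi = \Cov[\bm{z}]$ and $\E[\bm{z}] = \mqty[\bar{\bm{a}} \\ \bar{\bm{\theta}}]$. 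The finiteness of all moments is guaranteed by the square-integrability imposed on admissible strategies together with the Gaussian law of $\bm{\theta}$, so every expectation below is well-defined and the manipulations are legitimate; this is also consistent with the already-recorded fact that $\E[\bm{\sigma}^\pi] = \bar{\bm{a}} = Q^{-1} R \bar{\bm{\theta}}$ is well-defined under Basic Assumption.

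For the quadratic part, I would first rewrite the scalar as a trace, $\bm{z}^\top \mathbf{V} \bm{z} = \tr(\mathbf{V} \bm{z}\bm{z}^\top)$, and take expectations to obtain $\E[\bm{z}^\top \mathbf{V} \bm{z}] = \tr(\mathbf{V}\,\E[\bm{z}\bm{z}^\top]) = \mathbf{V} \bullet \E[\bm{z}\bm{z}^\top]$, using the symmetry of $\mathbf{V}$ and the identity $A \bullet B = \tr(A^\top B)$ recorded in the preliminaries. The one substantive algebraic step is the decomposition $\E[\bm{z}\bm{z}^\top] = \Cov[\bm{z}] + \E[\bm{z}]\E[\bm{z}]^\top = \mathbf{M}^\pi + \E[\bm{z}]\E[\bm{z}]^\top$. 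Substituting and distributing the Frobenius product gives $\mathbf{V} \bullet \mathbf{M}^\pi + \mathbf{V} \bullet \big(\E[\bm{z}]\E[\bm{z}]^\top\big)$, and the second summand collapses to $\E[\bm{z}]^\top \mathbf{V}\, \E[\bm{z}]$ by the same trace identity. Writing $\E[\bm{z}] = \mqty[\bar{\bm{a}} \\ \bar{\bm{\theta}}]$ reproduces exactly the quadratic term appearing inside $(*)$.

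For the affine part, linearity of expectation yields $\E[\bar{v}(\bm{\sigma}^\pi, \bm{\theta})] = \bar{v}(\E[\bm{\sigma}^\pi], \E[\bm{\theta}]) = \bar{v}(\bar{\bm{a}}, \bar{\bm{\theta}})$, which supplies the remaining summand of $(*)$. Adding the two contributions completes the identity. I do not anticipate any genuine obstacle here, since the statement is essentially a bookkeeping identity; the only points requiring care are the existence of second moments (to treat $\mathbf{M}^\pi$ as a finite covariance matrix and to exchange expectation with the trace) and the correct identification of $\Cov\big[\mqty[\bm{\sigma}^\pi \\ \bm{\theta}]\big]$ with the block matrix $\mathbf{M}^\pi$ defined earlier. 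Both are immediate from the model's standing assumptions and from Lemma~\ref{lem_BNE}.
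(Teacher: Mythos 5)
Your proof is correct and follows essentially the same route as the paper: rewrite the quadratic form as a trace, exchange expectation with the trace, and identify the covariance block matrix with $\mathbf{M}^\pi$. The only cosmetic difference is that the paper centers the vector at $\qty(\bar{\bm{a}},\bar{\bm{\theta}})$ before taking expectations, whereas you decompose $\E[\bm{z}\bm{z}^\top]$ into $\Cov[\bm{z}]+\E[\bm{z}]\E[\bm{z}]^\top$ afterwards; the two are equivalent bookkeeping.
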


Lemma~\ref{lem_obj} shows that the value of any information structure $\pi$ can be expressed as a linear function of the action-state covariance matrix it induces.
Since the constant term $(*)$ does not affect LQG information design, we normalize $\bar{v}$ so that $(*) = 0$ henceforth.
Consequently, the value of $\pi$ is given by $\mathbf{V} \bullet \mathbf{M}^\pi$.
At the same time, Lemma~\ref{lem_moment} fully characterizes the set of all feasible $\mathbf{M}^\pi$.
Combining these results, we can reformulate the problem \eqref{problem 1} as the following optimization problem:
\begin{equation} \label{primal_sdp}
v^{\rm p} = \max_{X \in \calS^n,\, Y \in \calM^{n,m}}  \mathbf{V} \bullet \mathbf{M}_{X,Y} \quad \text{subject to} \quad
\diag \qty(XQ^\top)=\diag \qty(YR^\top), \quad
\mathbf{M}_{X,Y} \succeq O. \tag{P}
\end{equation}

This problem is referred to as the \emph{primal problem}, as we will construct its dual in the next section.
The choice variables in this formulation are two matrices, $X$ and $Y$, which represent the variance of actions and the covariance between actions and states, respectively.
Crucially, the designer’s objective depends on the choice of information structures only through the induced action-state covariance matrix $\mathbf{M}_{X,Y}$ determined by $(X,Y)$.
Thus, we identify two information structures $\pi$ and $\pi'$ as equivalent if they induce the same covariance matrix, i.e., $\mathbf{M}^\pi = \mathbf{M}^{\pi'}$.
Moreover, we say that an optimal information structure $\pi$ is \emph{unique} if this equality holds for all optimal information structures.

The solution to \eqref{primal_sdp} depends on the agents' payoff parameters $(Q,R)$, the designer's payoff parameters $(V,W)$, and the state variance matrix $Z$.
Collecting these, we refer to the profile $(Q, R, V, W, Z)$ as the \emph{environment} of LQG information design.

\begin{example} \label{ex_no}
\emph{No disclosure}, denoted by $\underline{\pi}$, is an information structure in which agents receive no information, meaning each $\eta_i$ is constant.
Under $\underline{\pi}$, agents' equilibrium strategies remain constant at $\bar{\bm{a}}$, so the value of $\underline{\pi}$ is $0$.
\end{example}

\begin{example} \label{ex_full}
\emph{Full disclosure}, denoted by $\overline{\pi}$, provides every agent with the exact realization of the state, $\eta_i = \bm{\theta}$ for all $i \in [n]$.
Since $\bm{\theta}$ becomes common knowledge under $\overline{\pi}$, agents follow the action profile $Q^{-1}R\bm{\theta}$ for each state realization.
The value of $\overline{\pi}$ is then given as
\[
\overline{C} \bullet Z, \quad {\rm where} \quad
\overline{C} \equiv (Q^{-1}R)^\top VQ^{-1}R + \frac{(Q^{-1}R)^\top W+W^\top Q^{-1}R}{2}.
\]
\end{example}

The values of $\underline{\pi}$ and $\overline{\pi}$ can both be expressed in the linear form $C \bullet Z$ with respect to the state variance $Z$, where $C = O$ for $\underline{\pi}$ and $C = \overline{C}$ for $\overline{\pi}$.
In fact, as will be implied by strong duality, this is a general property of optimal information structures.


\section{Main results}
\label{sec_main}

Lemma~\ref{lem_moment} provides the key insight that the set of action-state covariance matrices attainable under all information structures coincides with that under all Gaussian ones.
Combined with Lemma~\ref{lem_obj}, this implies the designer can focus exclusively on Gaussian information structures without loss of optimality.

\begin{proposition} \label{prop_gauss}
There exists an optimal information structure that is Gaussian.
\end{proposition}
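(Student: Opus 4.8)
The plan is to reduce the existence claim to the solvability of the semidefinite program \eqref{primal_sdp} and then invoke the converse direction of Lemma~\ref{lem_moment}. By Lemma~\ref{lem_obj}, after the normalization that sets $(*)=0$, the value of every information structure $\pi$ equals $\mathbf{V}\bullet\mathbf{M}^\pi$, so it depends on $\pi$ only through the induced covariance matrix $\mathbf{M}^\pi$. Hence maximizing the value over $\Pi$ is the same as maximizing the linear functional $\mathbf{M}\mapsto\mathbf{V}\bullet\mathbf{M}$ over the attainable set $\{\mathbf{M}^\pi:\pi\in\Pi\}$, which by Lemma~\ref{lem_moment} coincides with $\{\mathbf{M}_{X,Y}:(X,Y)\text{ feasible}\}$ and also with the set attainable by Gaussian structures. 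Thus it suffices to show that \eqref{primal_sdp} admits a maximizer $(X^\ast,Y^\ast)$: the converse part of Lemma~\ref{lem_moment} then furnishes a Gaussian $\pi^\ast$ with $\mathbf{M}^{\pi^\ast}=\mathbf{M}_{X^\ast,Y^\ast}$, and Lemma~\ref{lem_obj} guarantees that this $\pi^\ast$ attains $v^{\rm p}$, which is exactly the assertion.

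The substance of the proof is therefore the claim that \eqref{primal_sdp} attains its supremum, which I would establish by showing that the feasible set $\calF=\{(X,Y):\text{feasible}\}$ is nonempty and compact and then appealing to continuity of the linear objective. Nonemptiness is immediate from no disclosure (Example~\ref{ex_no}), which corresponds to $(X,Y)=(O,O)$. Since $\calF$ is cut out by the linear equation $\diag(XQ^\top)=\diag(YR^\top)$ together with the closed condition $\mathbf{M}_{X,Y}\succeq O$, it is closed, so the real work is to bound it.

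For boundedness I would exploit the interaction between the linear constraint and positive semidefiniteness. Summing the $n$ scalar equations in $\diag(XQ^\top)=\diag(YR^\top)$ and using $\tr(XQ^\top)=X\bullet Q$ and $\tr(YR^\top)=Y\bullet R$, together with the symmetry of $X$ (which gives $X\bullet Q=X\bullet Q^\top$), yields $\tfrac12\,X\bullet(Q+Q^\top)=Y\bullet R$. Writing $S=Q+Q^\top\succ O$, positivity of $X$ gives $X\bullet S\ge\lambda_{\min}(S)\,\tr(X)$, while the $2\times2$ principal minors of $\mathbf{M}_{X,Y}\succeq O$ give $Y_{ik}^2\le X_{ii}Z_{kk}$ and hence $\|Y\|_F^2\le\tr(X)\,\tr(Z)$. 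Combining these, $\lambda_{\min}(S)\,\tr(X)\le 2\|R\|_F\sqrt{\tr(X)\,\tr(Z)}$, which bounds $\tr(X)$ and, through $X\succeq O$, bounds $X$ in any norm; the minor inequality then bounds $Y$. Thus $\calF$ is compact, the maximum is attained at some $(X^\ast,Y^\ast)$, and the reduction above delivers an optimal Gaussian information structure.

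The main obstacle is exactly this compactness step. Positive semidefiniteness of $\mathbf{M}_{X,Y}$ alone does not bound $X$ from above, so the needed coercivity has to come from coupling the obedience-type linear constraint with the curvature hypothesis $Q+Q^\top\succ O$ of Basic Assumption; the scalar identity $\tfrac12\,X\bullet(Q+Q^\top)=Y\bullet R$ is the device that converts the constraint into the a~priori bound on $\tr(X)$. Everything else is a direct assembly of Lemmas~\ref{lem_obj} and~\ref{lem_moment}.
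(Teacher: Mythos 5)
Your proof is correct, but it secures the attainment step by a different device than the paper. The reduction itself --- value depends on $\pi$ only through $\mathbf{M}^\pi$ (Lemma~\ref{lem_obj}), the attainable covariance sets for $\Pi$ and $\Pi^{\rm g}$ coincide with the feasible set of \eqref{primal_sdp} (Lemma~\ref{lem_moment}), so any maximizer $(X^\ast,Y^\ast)$ yields an optimal Gaussian structure --- is exactly the paper's argument, which treats Proposition~\ref{prop_gauss} as an immediate corollary of those two lemmas. Where you diverge is in establishing that \eqref{primal_sdp} actually attains its supremum: the paper gets this from SDP duality (Proposition~\ref{prop_dual} via Lemma~\ref{lem_dual1}, i.e., feasibility of both the primal and the dual guarantees that both admit solutions, with dual feasibility verified by exhibiting $\Lambda=aI$, $\Gamma=abI$ for large $a,b$), whereas you prove compactness of the primal feasible set directly. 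Your chain --- summing the obedience constraints to get $\tfrac12\,X\bullet(Q+Q^\top)=Y\bullet R$, the bound $X\bullet(Q+Q^\top)\ge\lambda_{\min}(Q+Q^\top)\tr(X)$, the $2\times2$ minor inequality $Y_{ik}^2\le X_{ii}Z_{kk}$, and Cauchy--Schwarz --- is valid and yields $\tr(X)\le 4\|R\|_F^2\tr(Z)/\lambda_{\min}(Q+Q^\top)^2$, which together with closedness and nonemptiness (no disclosure) gives attainment by Weierstrass. The trade-off: your route is self-contained and makes transparent that the coercivity comes from coupling obedience with $Q+Q^\top\succ O$ (and it delivers compactness of the feasible set as a by-product), while the paper's route is essentially free because the dual construction is needed anyway for the strong-duality and complementary-slackness machinery of Theorem~\ref{thm_main}. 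Both are complete proofs of the statement.
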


This result is novel in the literature, although similar conclusions have been anticipated in more specific contexts by \cite{bergemannmorris2013}, \cite{bhm2015}, \cite{tamura2018}, and \cite{smolinyamashita2023}.
In light of Proposition~\ref{prop_gauss}, we can henceforth focus on the class of Gaussian information structures.  
Note that a non-Gaussian information structure can also be optimal if it induces the same action-state covariance matrix as an optimal Gaussian information structure, although these are regarded as equivalent in our analysis.

Furthermore, Lemmas~\ref{lem_moment} and~\ref{lem_obj} allow us to reformulate LQG information design as \eqref{primal_sdp}.
This formulation aligns with the structure of SDP, which is characterized by the maximization of a linear function of a positive semidefinite matrix subject to linear constraints.
Beyond its numerical tractability, as demonstrated in Section~\ref{sec_ex}, SDP offers several analytical advantages as well.
In particular, its associated dual problem can be systematically constructed and enjoys well-established duality results.

The dual problem of \eqref{primal_sdp} is formulated as follows.
We introduce Lagrange multipliers $\lambda_1,\ldots,\lambda_n$ corresponding to each agent’s incentive constraint and denote $\Lambda = \Diag\qty(\lambda_1,\ldots,\lambda_n)$.
Using these, we define the following augmented incentive matrices:
\[
A_\Lambda \equiv \frac{\Lambda Q + Q^\top \Lambda}{2} - V
\quad {\rm and} \quad B_{\Lambda} \equiv \frac{\Lambda R + W}{2},
\]
The dual problem is then given as the following minimization problem, where the optimization variables are the diagonal matrix $\Lambda$ and a symmetric $m$-by-$m$ matrix $\Gamma$:
\begin{equation} \label{dual_sdp}
v^{\rm d} \equiv \min_{\Lambda \in \calS^n_{\rm diag},\, \Gamma\in \calS^{m}} Z \bullet \Gamma \quad {\rm subject \ to} \quad \mqty[A_\Lambda & -B_\Lambda \\ - B_\Lambda^\top & \Gamma] \succeq O. \tag{D}
\end{equation}
We say that a pair of matrices $(\Lambda,\Gamma)$ is \emph{dual feasible} when the constraint in \eqref{dual_sdp} is satisfied.
The next result shows that the dual optimal value $v^{\rm d}$ is equal to the primal optimal value $v^{\rm p}$, i.e., strong duality holds.

\begin{proposition} \label{prop_dual}
Both the primal problem \eqref{primal_sdp} and the dual problem \eqref{dual_sdp} admit solutions.
For any primal feasible $(X,Y)$ and any dual feasible $(\Lambda, \Gamma)$, it holds that $Z \bullet \Gamma \ge \mathbf{V} \bullet \mathbf{M}_{X,Y}$.
Moreover, strong duality holds, i.e., $v^{\rm p} = v^{\rm d}$.
\end{proposition}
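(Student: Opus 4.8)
The plan is to establish weak duality by a single Frobenius-inner-product identity, and then obtain existence together with strong duality from Slater's condition for semidefinite programs. For weak duality, take any primal feasible $(X,Y)$ and any dual feasible $(\Lambda,\Gamma)$, and write $\mathbf{N} \equiv \mqty[A_\Lambda & -B_\Lambda \\ -B_\Lambda^\top & \Gamma]$ for the dual slack matrix. Since $\mathbf{N} \succeq O$ and $\mathbf{M}_{X,Y} \succeq O$, their Frobenius inner product is nonnegative, $\mathbf{N} \bullet \mathbf{M}_{X,Y} \ge 0$, and the whole point is that this inequality unpacks into exactly the duality gap. Expanding the block product and using $B_\Lambda \bullet Y = B_\Lambda^\top \bullet Y^\top$ gives $\mathbf{N}\bullet\mathbf{M}_{X,Y} = A_\Lambda \bullet X - 2 B_\Lambda \bullet Y + Z \bullet \Gamma$. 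Because $\Lambda$ is diagonal, one has $\tfrac{\Lambda Q + Q^\top\Lambda}{2}\bullet X = \sum_i \lambda_i \diag(XQ^\top)_i$ and $(\Lambda R)\bullet Y = \sum_i \lambda_i\diag(YR^\top)_i$, so after substituting the definitions of $A_\Lambda$ and $B_\Lambda$ the multiplier terms collect into $\sum_i \lambda_i[\diag(XQ^\top)_i - \diag(YR^\top)_i]$, which vanishes by primal feasibility \eqref{feasibility}. What remains is $\mathbf{N}\bullet\mathbf{M}_{X,Y} = Z\bullet\Gamma - (V\bullet X + W\bullet Y) = Z\bullet\Gamma - \mathbf{V}\bullet\mathbf{M}_{X,Y}$, using $\mathbf{V}\bullet\mathbf{M}_{X,Y} = V\bullet X + W\bullet Y$ read off from the block form of $\mathbf{V}$. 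Hence $Z\bullet\Gamma \ge \mathbf{V}\bullet\mathbf{M}_{X,Y}$; this is precisely why $A_\Lambda$ and $B_\Lambda$ were defined as they were.

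Next I would record that the primal is feasible: the no-disclosure point $(X,Y) = (O,O)$ of Example~\ref{ex_no} satisfies \eqref{feasibility}, since $\mathbf{M}_{O,O} = \mqty[O & O \\ O & Z] \succeq O$. The decisive step is to exhibit a strictly feasible point of the dual. Take $\Lambda = tI$ with $t>0$ large; since $Q+Q^\top \succ O$ by Basic Assumption, $A_{tI} = \tfrac{t}{2}(Q+Q^\top) - V \succ O$ for $t$ large enough, and choosing $\Gamma = B_{tI}^\top A_{tI}^{-1} B_{tI} + \varepsilon I$ with $\varepsilon>0$ makes $\mathbf{N} \succ O$ by the Schur complement criterion. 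Thus the dual is strictly feasible, and by weak duality together with primal feasibility its optimal value is finite. The standard conic strong duality theorem (Slater's condition on the dual side) then yields $v^{\rm p} = v^{\rm d}$ together with attainment of the primal optimum.

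It remains to obtain attainment of the dual optimum, and this is where I expect the only genuine difficulty. The symmetric Slater argument would require a strictly feasible primal point, i.e.\ $\mathbf{M}_{X,Y} \succ O$; but Basic Assumption permits $Z$ to be singular — the common-value case $\rho=1$ of Section~\ref{sec_ex} is exactly such a situation — and then $\mathbf{M}_{X,Y}$ can never be positive definite. To handle this I would reduce to the effective dimension of the state: the orthogonal complement of $\ran Z$ consists of directions along which $\bm{\theta}$ is almost surely constant, so projecting $\bm{\theta}$ onto $\ran Z$ produces an outcome-equivalent environment with full-rank state covariance $\tilde Z \succ O$, in which the primal is strictly feasible and hence the reduced dual optimum is attained. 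Lifting a reduced dual optimizer back to \eqref{dual_sdp} — filling in the components of $\Gamma$ orthogonal to $\ran Z$ large enough to preserve semidefiniteness, which leaves $Z\bullet\Gamma$ unchanged since $Z\bullet\Gamma$ depends on $\Gamma$ only through its compression to $\ran Z$ — then delivers a dual optimizer for the original problem.

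The main obstacle, therefore, is verifying that this projection preserves both the feasible set and the optimal value, equivalently replacing strict primal feasibility by a relative-interior form of Slater's condition adapted to the face of $\calS^{n+m}_+$ determined by $\ran Z$; the transformation of the coefficient data $(R,W)$ and of $B_\Lambda$ under the projection must be tracked carefully so that the reduced dual matches \eqref{dual_sdp}. Once this correspondence is in place, dual attainment for the reduced full-rank problem transfers back, and everything else is routine bookkeeping built on the weak-duality identity above.
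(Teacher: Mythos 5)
Your weak-duality computation is correct and, up to presentation, equivalent to the paper's: the paper rewrites \eqref{primal_sdp} in canonical SDP form using constraint matrices $\Phi_i$ and $\Psi_{kk'}$, reads off the dual mechanically, and then invokes a packaged duality lemma (Lemma~\ref{lem_dual1}), whereas you verify $Z\bullet\Gamma-\mathbf{V}\bullet\mathbf{M}_{X,Y}=\mathbf{N}\bullet\mathbf{M}_{X,Y}\ge 0$ by hand. Your identity checks out (the multiplier terms do collapse onto $\sum_i\lambda_i[\diag(XQ^\top)_i-\diag(YR^\top)_i]$, which primal feasibility kills), and it has the virtue of showing exactly why $A_\Lambda$ and $B_\Lambda$ are defined as they are; it is also the identity that underlies the complementary-slackness condition used later in Theorem~\ref{thm_main}. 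Your strict dual feasibility construction ($\Lambda=tI$ with $t$ large, exploiting $Q+Q^\top\succ O$, then padding $\Gamma$ via the Schur complement) is the same device as the paper's choice $\Lambda=aI$, $\Gamma=abI$, and the deduction of $v^{\rm p}=v^{\rm d}$ together with primal attainment from dual Slater is identical.

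The one genuine divergence is dual attainment. The paper disposes of it by citing Lemma~\ref{lem_dual1}, which asserts that feasibility of both problems already implies both are solvable; you instead observe, correctly, that the textbook Slater argument only yields attainment on the side opposite to where strict feasibility holds, and that the primal can never be strictly feasible when $Z$ is singular (e.g.\ the common-value case). Your proposed facial reduction onto $\ran Z$ is the standard remedy and I believe it goes through, but as written it is a program, not a proof. Two pieces are missing: (a) strict feasibility of the reduced primal is not automatic --- note that $(X,Y)=(\delta I,O)$ violates $\diag(XQ^\top)=\diag(YR^\top)$ since $q_{ii}>0$, so you must actually construct a feasible pair with $\mathbf{M}_{X,Y}\succ O$, e.g.\ from a Gaussian information structure with independent signal noise; and (b) the correspondence between the reduced dual (with data $RP$, $WP$ for a basis $P$ of $\ran Z$) and \eqref{dual_sdp}, including the lifting of $\Gamma$, has to be tracked explicitly. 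So your argument is complete for weak duality, the zero gap, and primal attainment, but leaves dual attainment at the level of a plausible sketch; the paper, for its part, handles that same step by appealing to a lemma stated in a stronger form than the standard result, so your extra caution here identifies a point the paper itself glosses over.
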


Since $Z \succeq O$, the objective of \eqref{dual_sdp} is nondecreasing in $\Gamma$ with respect to the matrix order $\succeq$.
The dual feasibility constraint, in turn, imposes a lower bound on feasible $\Gamma$.
Specifically, by the standard property of positive semidefinite block matrices, dual feasibility implies
\[
\Gamma \succeq B_\Lambda^\top A_\Lambda^+ B_\Lambda \equiv C_\Lambda,
\]
where $A_\Lambda^+$ denotes the Moore--Penrose pseudoinverse of $A_\Lambda$.\footnote{See 
Proposition 10.2.5 of \cite{bernstein2018}, which we also record as Lemma~\ref{lem_block}. The Moore--Penrose inverse of $M\in \calM^{m,n}$ is the unique matrix $M^+\in \calM^{n,m}$ such that $MM^+M=M$, $M^+MM^+=M^+$, $(MM^+)^\top=MM^+$, and $(M^+M)^\top=M^+M$.}
This observation allows us to simplify \eqref{dual_sdp} as follows:
\begin{equation} \label{dual_sdp2} 
v^{\rm d} = \min_{\Lambda \in \calS^n_{\rm diag}} \quad Z \bullet C_\Lambda \quad {\rm subject \ to} \quad
\mathbf{M}_{\Lambda} \equiv \mqty[A_\Lambda & -B_\Lambda \\ - B_\Lambda^\top & C_\Lambda] \succeq O. \tag{D'}
\end{equation}

\begin{definition}
A matrix $\Lambda \in \calS^n_{\rm diag}$ is \emph{dual feasible} if it satisfies the constraint in \eqref{dual_sdp2}.
In particular, it is \emph{dual optimal} if $v^{\rm d} = Z \bullet C_\Lambda$.
\end{definition}

We are now ready to present our main theorem.
It characterizes the optimality of a given information structure through a system of matrix equations involving the induced action-state covariance and Lagrangian multipliers.
This characterization holds for any information structure, including those that may be non-Gaussian, while Proposition~\ref{prop_gauss} guarantees that there always exists an optimal information structure that is Gaussian.
By focusing on such a Gaussian optimal information structure, the optimality condition further simplifies, and the theorem establishes a precise linear relationship between the induced equilibrium strategy profile and the state.

\begin{theorem} \label{thm_main}
For any information structure $\pi$, it is optimal if and only if there exists a dual feasible diagonal matrix $\Lambda$ such that
\begin{equation} \label{cs} \tag{CS}
A_\Lambda \Var \qty[\bm{\sigma}^\pi] = B_\Lambda \Cov \qty[\bm{\theta}, \bm{\sigma}^\pi]
\quad \text{and} \quad
A_\Lambda \Cov \qty[\bm{\sigma}^\pi, \bm{\theta}] = B_\Lambda Z.
\end{equation}
Moreover, when $\pi$ is Gaussian, it is optimal if and only if there exists a dual feasible diagonal matrix $\Lambda$ such that
\begin{equation} \label{opt}
A_\Lambda \qty(\bm{\sigma}^\pi-\bar{\bm{a}}) = B_\Lambda \qty(\bm{\theta} -\bar{\bm{\theta}}) \quad {\rm a.s.} \tag{OPT}
\end{equation}
\end{theorem}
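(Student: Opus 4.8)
The plan is to read the two displayed conditions as the complementary slackness conditions of the primal--dual pair \eqref{primal_sdp}--\eqref{dual_sdp2}, with \eqref{opt} being the ``random variable'' restatement of \eqref{cs}. Throughout write $X = \Var[\bm{\sigma}^\pi]$ and $Y = \Cov[\bm{\sigma}^\pi, \bm{\theta}]$, so that $\mathbf{M}^\pi = \mathbf{M}_{X,Y}$ is primal feasible by Lemma~\ref{lem_moment}. The first step is to record the exact duality-gap identity: for every primal feasible $(X,Y)$ and every $\Lambda \in \calS^n_{\rm diag}$,
\[
Z \bullet C_\Lambda - \mathbf{V}\bullet \mathbf{M}_{X,Y} = \mathbf{M}_\Lambda \bullet \mathbf{M}_{X,Y}.
\]
I would prove this purely algebraically: expanding the right-hand side blockwise gives $A_\Lambda \bullet X - 2 B_\Lambda \bullet Y + C_\Lambda \bullet Z$, and substituting $A_\Lambda = (\Lambda Q + Q^\top\Lambda)/2 - V$ and $B_\Lambda = (\Lambda R + W)/2$ collects all the $\Lambda$-dependent terms into $\sum_i \lambda_i\bigl[(XQ^\top)_{ii} - (YR^\top)_{ii}\bigr]$. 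This sum vanishes precisely because the obedience constraint $\diag(XQ^\top) = \diag(YR^\top)$ holds, leaving $-\mathbf{V}\bullet\mathbf{M}_{X,Y} + C_\Lambda \bullet Z$, which is the identity. Note this step needs only primal feasibility, not $\mathbf{M}_\Lambda \succeq O$.

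Next I would exploit positive semidefiniteness. When $\pi$ is optimal, strong duality (Proposition~\ref{prop_dual}) gives $\mathbf{V}\bullet\mathbf{M}^\pi = v^{\rm p} = v^{\rm d} = Z\bullet C_{\Lambda^*}$ for a dual optimal $\Lambda^*$, so the gap identity forces $\mathbf{M}_{\Lambda^*}\bullet\mathbf{M}^\pi = 0$. Since both $\mathbf{M}_{\Lambda^*}\succeq O$ and $\mathbf{M}^\pi\succeq O$, I invoke the standard fact that two positive semidefinite matrices with vanishing Frobenius inner product satisfy $\mathbf{M}_{\Lambda^*}\mathbf{M}^\pi = O$. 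Reading off the top two block rows of this product gives $A_{\Lambda^*} X = B_{\Lambda^*} Y^\top$ and $A_{\Lambda^*} Y = B_{\Lambda^*} Z$, which is exactly \eqref{cs}; this settles the ``only if'' direction.

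For the ``if'' direction I start from a dual feasible $\Lambda$ satisfying \eqref{cs} and must recover $\mathbf{M}_\Lambda \bullet \mathbf{M}^\pi = 0$. The delicate point---and the step I expect to be the main obstacle---is that \eqref{cs} pins down only the top two blocks of $\mathbf{M}_\Lambda \mathbf{M}^\pi$, so I must show the remaining two blocks vanish as well. This is where the pseudoinverse structure of \eqref{dual_sdp2} is essential: by Lemma~\ref{lem_block}, $\mathbf{M}_\Lambda\succeq O$ yields $A_\Lambda\succeq O$ together with the range condition $\ran(B_\Lambda)\subseteq\ran(A_\Lambda)$, i.e. $A_\Lambda A_\Lambda^+ B_\Lambda = B_\Lambda$. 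Writing $P = A_\Lambda A_\Lambda^+ = A_\Lambda^+ A_\Lambda$ for the orthogonal projector onto $\ran(A_\Lambda)$, so that $B_\Lambda^\top P = B_\Lambda^\top$, I substitute \eqref{cs} into $C_\Lambda = B_\Lambda^\top A_\Lambda^+ B_\Lambda$ to obtain $C_\Lambda Y^\top = B_\Lambda^\top A_\Lambda^+(B_\Lambda Y^\top) = B_\Lambda^\top A_\Lambda^+ A_\Lambda X = B_\Lambda^\top X$ and, symmetrically, $C_\Lambda Z = B_\Lambda^\top Y$; these are the bottom two blocks. Hence $\mathbf{M}_\Lambda\mathbf{M}^\pi = O$, the gap is zero, and the inequalities $\mathbf{V}\bullet\mathbf{M}^\pi \le v^{\rm p} = v^{\rm d}\le Z\bullet C_\Lambda$ collapse to $\mathbf{V}\bullet\mathbf{M}^\pi = v^{\rm p}$, proving optimality.

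Finally, for the Gaussian case I would show that \eqref{cs} is the second-moment shadow of the almost-sure identity \eqref{opt}. Setting $D = A_\Lambda(\bm{\sigma}^\pi - \bar{\bm{a}}) - B_\Lambda(\bm{\theta} - \bar{\bm{\theta}})$, a mean-zero square-integrable vector, a direct computation gives
\[
\Var[D] = A_\Lambda X A_\Lambda - A_\Lambda Y B_\Lambda^\top - B_\Lambda Y^\top A_\Lambda + B_\Lambda Z B_\Lambda^\top,
\]
and substituting \eqref{cs} cancels the four terms in pairs, so \eqref{cs} gives $\Var[D] = O$, hence $D = \0$ a.s., which is \eqref{opt}; conversely, taking the covariance of \eqref{opt} against $\bm{\sigma}^\pi$ and against $\bm{\theta}$ returns the two equations of \eqref{cs}. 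Because $\Var[D] = O$ is equivalent to $D = \0$ a.s. for any mean-zero $L^2$ vector, the equivalence $\eqref{cs}\Leftrightarrow\eqref{opt}$ requires no distributional assumption; the statement is phrased for Gaussian $\pi$ since Proposition~\ref{prop_gauss} lets the designer restrict to that class, where \eqref{opt} additionally expresses $\bm{\sigma}^\pi$ as an explicit affine function of $\bm{\theta}$. Chaining this with the first part completes the proof.
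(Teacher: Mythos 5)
Your proposal is correct, and the first half of it is essentially the paper's argument: the paper likewise reduces optimality to $\mathbf{M}_{\Lambda}\bullet\mathbf{M}_{X,Y}=0$ (via Proposition~\ref{prop_dual} and Lemma~\ref{lem_dual1}, where you instead make the duality-gap identity explicit by expanding $\sum_i\lambda_i[(XQ^\top)_{ii}-(YR^\top)_{ii}]=0$ --- a fine, slightly more self-contained route), uses the same fact that two positive semidefinite matrices with zero Frobenius inner product have zero product, and disposes of the bottom two blocks of $\mathbf{M}_\Lambda\mathbf{M}_{X,Y}$ by exactly your argument, i.e.\ $C_\Lambda=B_\Lambda^\top A_\Lambda^+B_\Lambda$ together with the range condition $A_\Lambda A_\Lambda^+B_\Lambda=B_\Lambda$ from Lemma~\ref{lem_block}.

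Where you genuinely diverge is the direction \eqref{cs} $\Rightarrow$ \eqref{opt}. The paper first proves an auxiliary equivalence (its Lemma~\ref{lem_cs}, relating \eqref{cs} to $A_\Lambda(X-YZ^+Y^\top)=O$), then applies the Gaussian conditional-moment formulas of Lemma~\ref{lem_cond_normal} to show $\Var[A_\Lambda(\bm{\sigma}^\pi-\bar{\bm{a}})\mid\bm{\theta}]=O$, and finally needs a separate step to replace $B_\Lambda ZZ^+$ by $B_\Lambda$ via $\Var[(I-ZZ^+)\bm{\theta}]=O$. Your computation of $\Var[D]$ for $D=A_\Lambda(\bm{\sigma}^\pi-\bar{\bm{a}})-B_\Lambda(\bm{\theta}-\bar{\bm{\theta}})$, with the four terms cancelling in pairs directly from \eqref{cs}, reaches the same conclusion in one step, bypasses the pseudoinverse bookkeeping entirely, and --- as you correctly observe --- uses no Gaussianity, so it actually establishes the equivalence of \eqref{cs} and \eqref{opt} for arbitrary square-integrable $\pi$, strictly more than the theorem asserts. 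The only thing the paper's longer route buys is Lemma~\ref{lem_cs} itself, which it reuses later (e.g.\ in Corollary~\ref{cor_nfsi2}); your proof of the theorem does not need it. I see no gap.
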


\begin{definition}
A dual feasible matrix $\Lambda$ \emph{certifies} an optimal information structure $\pi$ if \eqref{cs} holds for $\pi$ and $\Lambda$.
Also, $\Lambda$ certifies a primal feasible pair $(X,Y)$ if it satisfies the equations $A_\Lambda X = B_\Lambda Y^\top$ and $A_\Lambda Y = B_\Lambda Z$.
\end{definition}

Theorem~\ref{thm_main} shows that a given information structure $\pi$ is optimal if and only if there exists an appropriate choice of Lagrangian multipliers that certify $\pi$ through condition \eqref{cs}.
This condition arises from the \emph{complementary slackness condition} in SDP, which asserts that the ranges of the primal matrix $\mathbf{M}_{X,Y}$ and the dual matrix $\mathbf{M}_{\Lambda}$ are orthogonal.
As explained in \cite{vandenbergheboyd1996}, this condition can be seen as generalizing the familiar complementary slackness condition in LP.

Specifically, just as in LP, the condition \eqref{cs} is obtained as a restatement of strong duality.
Consequently, any dual optimal $\Lambda$ that achieves $v^{\rm d}$ satisfies complementary slackness with any primal optimal $(X,Y)$.
Such a $\Lambda$ therefore serves as a ``universal'' certificate, meaning it can certify the optimality of any information structure.
This universality is a useful property to analyze the uniqueness of an optimal information structure.

\begin{corollary} \label{cor_certify}
The following are equivalent.
\begin{enumerate}[\rm i).]
\item $\Lambda$ is dual optimal.
\item $\Lambda$ certifies some optimal information structure.
\item $\Lambda$ certifies every optimal information structure.
\end{enumerate}
\end{corollary}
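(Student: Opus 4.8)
The plan is to prove the cycle of implications (i) $\Rightarrow$ (iii) $\Rightarrow$ (ii) $\Rightarrow$ (i), using as the central engine the weak-duality identity underlying Proposition~\ref{prop_dual}. Concretely, for any primal feasible $(X,Y)$ and any dual feasible $\Lambda$, I would establish the gap formula
\[
Z \bullet C_\Lambda - \mathbf{V} \bullet \mathbf{M}_{X,Y} = \mathbf{M}_\Lambda \bullet \mathbf{M}_{X,Y} \ge 0,
\]
where the inequality is just the nonnegativity of the Frobenius inner product of two positive semidefinite matrices ($\mathbf{M}_{X,Y} \succeq O$ by primal feasibility, $\mathbf{M}_\Lambda \succeq O$ by dual feasibility). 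This identity reduces the whole corollary to one clean equivalence: the duality gap between $(X,Y)$ and $\Lambda$ vanishes if and only if $\Lambda$ certifies $(X,Y)$.

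First I would verify this equivalence. Expanding $\mathbf{M}_\Lambda \bullet \mathbf{M}_{X,Y} = A_\Lambda \bullet X - 2 B_\Lambda \bullet Y + Z \bullet C_\Lambda$ and substituting $A_\Lambda = (\Lambda Q + Q^\top \Lambda)/2 - V$ and $B_\Lambda = (\Lambda R + W)/2$, the $\Lambda$-weighted terms contract via $\tr(\Lambda X Q^\top) = \sum_i \lambda_i (XQ^\top)_{ii} = \sum_i \lambda_i (YR^\top)_{ii}$, where the middle equality is exactly the primal feasibility constraint $\diag(XQ^\top)=\diag(YR^\top)$ (usable because $\Lambda$ is diagonal). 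These cancellations leave $A_\Lambda \bullet X - 2B_\Lambda \bullet Y = -(\mathbf{V}\bullet\mathbf{M}_{X,Y})$, which is the gap formula. For the equivalence itself, I would invoke that for positive semidefinite $P,N$ one has $P\bullet N = 0$ iff $PN = O$. Applied to $\mathbf{M}_\Lambda$ and $\mathbf{M}_{X,Y}$, a zero gap forces $\mathbf{M}_\Lambda \mathbf{M}_{X,Y} = O$; its top block row is $(A_\Lambda X - B_\Lambda Y^\top,\, A_\Lambda Y - B_\Lambda Z)$, so its vanishing is precisely the certification condition. The converse (certification $\Rightarrow$ zero gap) is a direct substitution, using the range condition $A_\Lambda A_\Lambda^+ B_\Lambda = B_\Lambda$ from dual feasibility (Lemma~\ref{lem_block}) and the symmetric-matrix identity $A_\Lambda^+ A_\Lambda = A_\Lambda A_\Lambda^+$ to dispatch the $C_\Lambda Z$ term.

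With the equivalence in hand, the three implications are short. For (i) $\Rightarrow$ (iii): if $\Lambda$ is dual optimal and $\pi$ is any optimal information structure with induced pair $(X,Y)$, then strong duality (Proposition~\ref{prop_dual}) gives $Z\bullet C_\Lambda = v^{\rm d} = v^{\rm p} = \mathbf{V}\bullet\mathbf{M}^\pi$, so the gap vanishes and $\Lambda$ certifies $\pi$; as $\pi$ was arbitrary, $\Lambda$ certifies every optimal information structure. The step (iii) $\Rightarrow$ (ii) needs only that an optimal information structure exists, which holds by Proposition~\ref{prop_gauss}. Finally, for (ii) $\Rightarrow$ (i): if $\Lambda$ certifies some optimal $\pi$, the gap vanishes, so $Z\bullet C_\Lambda = \mathbf{V}\bullet\mathbf{M}^\pi = v^{\rm p} = v^{\rm d}$, whence $\Lambda$ attains the dual optimum.

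I expect the only genuinely delicate point to be the bookkeeping in the gap formula: making the $\Lambda$-weighted terms cancel through $\diag(XQ^\top)=\diag(YR^\top)$ and controlling $C_\Lambda = B_\Lambda^\top A_\Lambda^+ B_\Lambda$ via the pseudoinverse range condition. Everything else is a formal consequence of strong duality and the positive-semidefinite trace identity, and since most of the gap computation already appears in the proofs of Proposition~\ref{prop_dual} and Theorem~\ref{thm_main}, it can largely be quoted rather than redone.
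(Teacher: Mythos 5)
Your proposal is correct and follows essentially the same route as the paper: the paper's own proof consists of quoting, from the proof of Theorem~\ref{thm_main}, exactly the equivalence you establish (that the duality gap $Z \bullet C_\Lambda - \mathbf{V}\bullet\mathbf{M}_{X,Y} = \mathbf{M}_\Lambda \bullet \mathbf{M}_{X,Y}$ vanishes if and only if $\Lambda$ certifies $(X,Y)$), and then deducing the three-way equivalence from strong duality. You simply spell out the cancellation via $\diag(XQ^\top)=\diag(YR^\top)$ and the pseudoinverse range condition that the paper leaves implicit.
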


While condition~\eqref{cs} applies to any optimal information structure, including those that are not Gaussian, the characterization becomes sharper when we focus on Gaussian ones.
Specifically, Theorem~\ref{thm_main} shows that at the optimum, the equilibrium strategy profile is linearly related to the state via condition~\eqref{opt}, where the matrices $A_\Lambda$ and $B_\Lambda$ from the dual problem serve as the relevant coefficients in the linear relationship.


\section{Noise-freeness and state-identifiability}
\label{sec_nfsi}

The condition \eqref{opt} provides key insights into the structure of the optimal BCE by imposing bidirectional restrictions on the realizations of $\bm{\theta}$ and $\bm{\sigma}^{\pi}$: if the state $\bm{\theta}$ is realized, the equilibrium strategy profile $\bm{\sigma}^\pi$ must satisfy \eqref{opt}; conversely, by observing $\bm{\sigma}^{\pi}$, the realization of $\bm{\theta}$ can be inferred from \eqref{opt}. 
In particular, when $A_\Lambda$ is invertible, $\bm{\sigma}^\pi$ can be expressed as a function of $\bm{\theta}$, meaning the realization of $\bm{\sigma}^{\pi}$ is uniquely identified from that of $\bm{\theta}$. 
Conversely, when $B_\Lambda$ is (left) invertible, the reverse holds. To explore this, we recall the following properties of information structures, which are observed in the numerical solutions discussed in Section~\ref{sec_ex}.

\begin{definition} An information structure $\pi$ is \emph{noise-free} if $\Var \qty[\bm{\sigma}^\pi \mid \bm{\theta}] = O$ a.s. By contrast, $\pi$ is \emph{state-identifiable} if $\Var \qty[\bm{\theta} \mid \bm{\sigma}^\pi] = O$ a.s. \end{definition}

\begin{corollary} \label{cor_nfsi1}
The following statements hold:
\begin{enumerate}[\rm i).]
\item \label{nf1}
If there exists a solution to \eqref{dual_sdp} such that $A_\Lambda \succ O$, then any Gaussian optimal information structure is noise-free.
Moreover, the optimal information structure is unique.
\item \label{si1}
Suppose that $m \le n$.
If there exists a solution to \eqref{dual_sdp} such that $\rank(B_\Lambda) = m$, then any Gaussian optimal information structure is state-identifiable.
\end{enumerate}
\end{corollary}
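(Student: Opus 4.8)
The plan is to read off both conclusions from the almost-sure optimality relation \eqref{opt} of Theorem~\ref{thm_main}, after first arranging that the single multiplier supplied by the hypothesis drives \emph{every} optimal information structure at once. A solution to \eqref{dual_sdp} furnishes a dual-optimal diagonal matrix $\Lambda$ with the stated rank property, and by the equivalence in Corollary~\ref{cor_certify} this $\Lambda$ certifies every optimal information structure, so that the covariance-level identities \eqref{cs} hold for this fixed $\Lambda$ and each optimal $\pi$. When $\pi$ is Gaussian, $(\bm{\sigma}^\pi,\bm{\theta})$ is jointly normal by Lemma~\ref{lem_BNE}, and \eqref{cs} then upgrades to the pathwise identity \eqref{opt} with the \emph{same} $\Lambda$: the zero-mean Gaussian vector $A_\Lambda(\bm{\sigma}^\pi-\bar{\bm{a}})-B_\Lambda(\bm{\theta}-\bar{\bm{\theta}})$ has covariance $O$ exactly because of the two equations in \eqref{cs}, hence it vanishes almost surely. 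This is the Gaussian half of Theorem~\ref{thm_main}, which I invoke to put \eqref{opt} in force with the hypothesized $\Lambda$.

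For part~\ref{nf1}, note that $A_\Lambda=\tfrac{1}{2}(\Lambda Q+Q^\top\Lambda)-V$ is symmetric, so $A_\Lambda\succ O$ makes it invertible; then \eqref{opt} rearranges to $\bm{\sigma}^\pi=\bar{\bm{a}}+A_\Lambda^{-1}B_\Lambda(\bm{\theta}-\bar{\bm{\theta}})$ almost surely. Thus $\bm{\sigma}^\pi$ is a deterministic affine function of $\bm{\theta}$, giving $\Var[\bm{\sigma}^\pi\mid\bm{\theta}]=O$, which is noise-freeness. For uniqueness I argue at the covariance level, where Gaussianity is not needed: for any optimal $\pi$, the identities $A_\Lambda X=B_\Lambda Y^\top$ and $A_\Lambda Y=B_\Lambda Z$ of \eqref{cs} (with $X=\Var[\bm{\sigma}^\pi]$ and $Y=\Cov[\bm{\sigma}^\pi,\bm{\theta}]$), together with the invertibility of $A_\Lambda$ and the symmetry of $Z$ and $A_\Lambda$, force $Y=A_\Lambda^{-1}B_\Lambda Z$ and $X=A_\Lambda^{-1}B_\Lambda Z B_\Lambda^\top A_\Lambda^{-1}$. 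These depend only on fixed data, so $\mathbf{M}^\pi=\mathbf{M}_{X,Y}$ is the same across all optimal $\pi$, which is the asserted uniqueness.

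For part~\ref{si1}, the hypotheses $m\le n$ and $\rank(B_\Lambda)=m$ say that the $n\times m$ matrix $B_\Lambda$ has full column rank, and hence admits the left inverse $B_\Lambda^+=(B_\Lambda^\top B_\Lambda)^{-1}B_\Lambda^\top$ with $B_\Lambda^+B_\Lambda=I_m$. Left-multiplying \eqref{opt} by $B_\Lambda^+$ gives $\bm{\theta}-\bar{\bm{\theta}}=B_\Lambda^+A_\Lambda(\bm{\sigma}^\pi-\bar{\bm{a}})$ almost surely, exhibiting $\bm{\theta}$ as a deterministic function of $\bm{\sigma}^\pi$; therefore $\Var[\bm{\theta}\mid\bm{\sigma}^\pi]=O$, i.e.\ state-identifiability.

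The step I expect to demand the most care is the bridge in the first paragraph: guaranteeing that the \emph{specific} dual-optimal $\Lambda$ from the hypothesis, rather than some a priori different multiplier produced by the existential quantifier in Theorem~\ref{thm_main}, is the one validating \eqref{opt} for every Gaussian optimal $\pi$. This is resolved by pairing the universality of dual-optimal certificates (Corollary~\ref{cor_certify}) with the Gaussian equivalence of \eqref{cs} and \eqref{opt} under a common $\Lambda$; once that equivalence is secured, parts~\ref{nf1} and~\ref{si1} reduce to the elementary linear algebra above.
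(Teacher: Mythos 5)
Your proposal is correct and follows essentially the same route as the paper: the paper likewise derives noise-freeness and state-identifiability directly from \eqref{opt} by inverting $A_\Lambda$ (resp.\ left-inverting $B_\Lambda$), and proves uniqueness in part i) exactly as you do, by invoking the universality of a dual-optimal certificate from Corollary~\ref{cor_certify} and cancelling $A_\Lambda$ in the two covariance equations of \eqref{cs}. Your direct computation showing that $A_\Lambda(\bm{\sigma}^\pi-\bar{\bm{a}})-B_\Lambda(\bm{\theta}-\bar{\bm{\theta}})$ has zero covariance is a slightly more self-contained way of securing \eqref{opt} with the \emph{same} $\Lambda$, but it is only a minor variation on the paper's argument.
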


Corollary~\ref{cor_nfsi1} provides sufficient conditions under which any optimal information structure satisfies noise-freeness or state-identifiability.
Specifically, when the dual problem admits a solution such that $A_\Lambda \succ O$, we can take the inverse of $A_\Lambda$ in \eqref{opt} to express $\bm{\sigma}^\pi$ as a function of $\bm{\theta}$.
As a consequence, the equilibrium strategy profile exhibits no extraneous volatility beyond the fundamental randomness of $\bm{\theta}$, thereby ensuring noise-freeness.
Moreover, in this case, the optimal information structure must be unique.
This result leverages the universality of certifiers established in Corollary~\ref{cor_certify}.

On the other hand, when $\rank(B_\Lambda) = m$ so that $B_\Lambda$ is left invertible, we can express $\bm{\theta}$ as a function of $\bm{\sigma}^\pi$ via \eqref{opt}. 
This implies that the realized state is uniquely determined by the equilibrium strategy profile, establishing state-identifiability.
In such situations, the state is effectively disclosed to the entire population of agents, while individual agents may remain only partially informed.
This reflects the designer's maximal use of information about the state to guide agents toward the designer-optimal outcome. 

We next turn to conditions under which a given noise-free or state-identifiable information structure is optimal.
The following result shows that, in such cases, one of the matrix equations in \eqref{cs} becomes redundant and can be omitted.

\begin{corollary} \label{cor_nfsi2}
Given any $\pi \in \Pi^{\rm g}$, let $X = \Var \qty[\bm{\sigma}^\pi]$ and $Y = \Cov \qty[\bm{\sigma}^\pi, \bm{\theta}]$.
\begin{enumerate}[\rm i).]
\item When $\pi$ is noise-free, it is optimal if and only if there exists a dual feasible matrix $\Lambda$ such that $A_\Lambda Y = B_\Lambda Z$.
\item When $\pi$ is state-identifiable, it is optimal if and only if there exists a dual feasible matrix $\Lambda$ such that  $A_\Lambda X = B_\Lambda Y^\top$.
\end{enumerate}
\end{corollary}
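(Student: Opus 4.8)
The plan is to read both parts off Theorem~\ref{thm_main}, whose optimality criterion for a Gaussian $\pi$ with data $X=\Var[\bm{\sigma}^\pi]$ and $Y=\Cov[\bm{\sigma}^\pi,\bm{\theta}]$ requires a dual feasible $\Lambda$ satisfying \emph{both} certifying equations $A_\Lambda X = B_\Lambda Y^\top$ and $A_\Lambda Y = B_\Lambda Z$. In each part the forward implication is then immediate: if $\pi$ is optimal, Theorem~\ref{thm_main} supplies a dual feasible $\Lambda$ that in particular satisfies the single displayed equation. The whole content therefore lies in the reverse implication, where I must show that the structural hypothesis (noise-freeness in i), state-identifiability in ii) forces the omitted certifying equation to hold automatically, so that the full system of Theorem~\ref{thm_main} is recovered.

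The key reduction is to convert the conditional-variance hypotheses into deterministic affine relations. Since $\pi\in\Pi^{\rm g}$, Lemma~\ref{lem_BNE} makes $(\bm{\sigma}^\pi,\bm{\theta})$ jointly Gaussian, and for jointly Gaussian vectors a vanishing conditional variance is equivalent to one vector being an affine function of the other almost surely. Thus noise-freeness, $\Var[\bm{\sigma}^\pi\mid\bm{\theta}]=O$, yields some $K\in\calM^{n,m}$ with $\bm{\sigma}^\pi-\bar{\bm{a}}=K(\bm{\theta}-\bar{\bm{\theta}})$ a.s., whence $Y=KZ$ and $X=KZK^\top$; in particular $YK^\top=X$ and $ZK^\top=Y^\top$. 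Symmetrically, state-identifiability, $\Var[\bm{\theta}\mid\bm{\sigma}^\pi]=O$, yields some $L\in\calM^{m,n}$ with $\bm{\theta}-\bar{\bm{\theta}}=L(\bm{\sigma}^\pi-\bar{\bm{a}})$ a.s., whence $Y^\top=LX$ and $Z=LXL^\top$; in particular $XL^\top=Y$ and $Y^\top L^\top=Z$.

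With these representations in hand the two reverse implications are one-line matrix manipulations. For part i), starting from the assumed $A_\Lambda Y=B_\Lambda Z$ I right-multiply by $K^\top$ and substitute $YK^\top=X$ and $ZK^\top=Y^\top$ to get $A_\Lambda X=B_\Lambda Y^\top$; both certifying equations now hold for this dual feasible $\Lambda$, so Theorem~\ref{thm_main} gives optimality. For part ii), starting from the assumed $A_\Lambda X=B_\Lambda Y^\top$ I right-multiply by $L^\top$ and substitute $XL^\top=Y$ and $Y^\top L^\top=Z$ to get $A_\Lambda Y=B_\Lambda Z$, again completing the certifying system and invoking Theorem~\ref{thm_main}.

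I expect the only delicate point to be the passage from the conditional-variance definitions to the affine representations, since this is exactly where Gaussianity (via Lemma~\ref{lem_BNE}) is indispensable and where care is needed when $Z$ is singular; working with the affine form $\bm{\sigma}^\pi-\bar{\bm{a}}=K(\bm{\theta}-\bar{\bm{\theta}})$ rather than an explicit pseudoinverse expression such as $K=YZ^+$ keeps the algebra clean and sidesteps any range conditions. Everything afterward is routine substitution, and I would emphasize that the argument is genuinely asymmetric: noise-freeness collapses the first certifying equation onto the second but not conversely, while state-identifiability does the reverse, which is precisely why each part may drop a \emph{different} equation.
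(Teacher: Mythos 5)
Your proof is correct and follows essentially the same route as the paper: the paper translates noise-freeness and state-identifiability into the matrix identities $X=YZ^{+}Y^{\top}$ and $Z=Y^{\top}X^{+}Y$ via Lemma~\ref{lem_cond_normal} and then invokes the already-established equivalences (CS1)$\iff$(CS2)$\iff$(CS3) of Lemma~\ref{lem_cs}, whereas you re-derive exactly that implication by right-multiplying the retained certifying equation by $K^{\top}$ (resp.\ $L^{\top}$) from the a.s.\ affine representation. The only substantive step, passing from vanishing conditional variance to the affine relation, is handled correctly and is where Gaussianity enters, just as in the paper.
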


No disclosure $\underline{\pi}$ and full disclosure $\overline{\pi}$ from Examples~\ref{ex_no}~and~\ref{ex_full} correspond to special cases of noise-free information structures, with $\overline{\pi}$ additionally satisfying state-identifiability.
The next corollary characterizes the conditions under which each of these is optimal.

\begin{corollary} \label{cor_no_full}
The following statements hold:
\begin{enumerate}[\rm i).]
\item No disclosure $\underline{\pi}$ is optimal if and only if there exists a dual feasible matrix $\Lambda$ such that
\begin{equation} \label{opt_no} \tag{\theequation} \refstepcounter{equation}
B_\Lambda Z B_\Lambda^\top = O.
\end{equation}
\item Full disclosure $\overline{\pi}$ is optimal if and only if there exists a dual feasible matrix $\Lambda$ such that
\begin{equation} \label{opt_full}
\qty(A_\Lambda Q^{-1}R - B_\Lambda) Z \qty(A_\Lambda Q^{-1}R - B_\Lambda)^\top = O.
\end{equation}
\end{enumerate}
In particular, if $Z \succ O$, then \eqref{opt_no} and \eqref{opt_full} simplify to $B_\Lambda=O$ and $A_\Lambda Q^{-1}R = B_\Lambda$, respectively.
\end{corollary}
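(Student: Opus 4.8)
The plan is to apply the optimality characterizations already established to the two explicit information structures, reduce each to a single matrix equation, and then rewrite that equation as the stated symmetric quadratic form. First I would record the induced action--state covariances. Under no disclosure $\underline{\pi}$ the equilibrium profile is the constant $\bar{\bm{a}}$, so $\Var[\bm{\sigma}^{\underline{\pi}}]=O$ and $\Cov[\bm{\sigma}^{\underline{\pi}},\bm{\theta}]=O$. Under full disclosure $\overline{\pi}$ the profile is $\bm{\sigma}^{\overline{\pi}}=Q^{-1}R\bm{\theta}$ by Example~\ref{ex_full}, whence $\Cov[\bm{\sigma}^{\overline{\pi}},\bm{\theta}]=Q^{-1}RZ$ and $\Var[\bm{\sigma}^{\overline{\pi}}]=Q^{-1}RZ(Q^{-1}R)^\top$. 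Both structures are noise-free, so Corollary~\ref{cor_nfsi2}(i) applies: each is optimal if and only if there is a dual feasible $\Lambda$ with $A_\Lambda Y=B_\Lambda Z$, where $Y=\Cov[\bm{\sigma}^\pi,\bm{\theta}]$. Substituting $Y=O$ gives $B_\Lambda Z=O$ for $\underline{\pi}$, and substituting $Y=Q^{-1}RZ$ gives $(A_\Lambda Q^{-1}R-B_\Lambda)Z=O$ for $\overline{\pi}$.

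Next I convert these reduced equations into the symmetric forms \eqref{opt_no} and \eqref{opt_full}. The single technical ingredient is the elementary fact that for $Z\succeq O$ and any conformable matrix $D$, one has $DZ=O$ if and only if $DZD^\top=O$. Writing $Z=LL^\top$ (for instance via the spectral decomposition), the forward implication is immediate, while $DZD^\top=(DL)(DL)^\top=O$ forces $DL=O$ and hence $DZ=DLL^\top=O$, using that $MM^\top=O$ implies $M=O$. Applying this with $D=B_\Lambda$ yields the equivalence of $B_\Lambda Z=O$ with \eqref{opt_no}, and with $D=A_\Lambda Q^{-1}R-B_\Lambda$ yields the equivalence of $(A_\Lambda Q^{-1}R-B_\Lambda)Z=O$ with \eqref{opt_full}. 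This step, though short, is where the asymmetry of the reduced equations is traded for the symmetric certificates in the statement, and it is the only place positive semidefiniteness of $Z$ is used. I expect this to be the one substantive point of the argument rather than a genuine obstacle, since everything else is direct substitution into Corollary~\ref{cor_nfsi2}(i).

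Finally, for the simplification under $Z\succ O$, invertibility of $Z$ makes $DZ=O$ equivalent to $D=O$ (equivalently, with $Z=LL^\top$ and $L$ invertible, $DZD^\top=O$ forces $DL=O$ and hence $D=O$); taking $D=B_\Lambda$ and $D=A_\Lambda Q^{-1}R-B_\Lambda$ gives $B_\Lambda=O$ and $A_\Lambda Q^{-1}R=B_\Lambda$, respectively. As an alternative route that avoids invoking noise-freeness, one could instead apply the \eqref{cs} characterization of Theorem~\ref{thm_main} directly to each structure; for $\overline{\pi}$ the first equation of \eqref{cs} reduces to $(A_\Lambda Q^{-1}R-B_\Lambda)Z(Q^{-1}R)^\top=O$, which is implied by the second equation $(A_\Lambda Q^{-1}R-B_\Lambda)Z=O$, so the two routes yield the same certificates.
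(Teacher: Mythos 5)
Your proposal is correct, but it takes a slightly different route from the paper. The paper applies the almost-sure characterization \eqref{opt} from Theorem~\ref{thm_main} directly to the two structures: for $\underline{\pi}$ the left-hand side vanishes, so optimality reduces to $B_\Lambda(\bm{\theta}-\bar{\bm{\theta}})=\bm{0}$ a.s., and taking the variance of this mean-zero Gaussian vector immediately produces the symmetric certificate $B_\Lambda Z B_\Lambda^\top=O$ (and analogously for $\overline{\pi}$). You instead invoke the noise-freeness reduction of Corollary~\ref{cor_nfsi2}(i), which yields the one-sided moment equations $B_\Lambda Z=O$ and $(A_\Lambda Q^{-1}R-B_\Lambda)Z=O$, and then you need the extra (elementary but essential) equivalence $DZ=O \iff DZD^\top=O$ for $Z\succeq O$ via the factorization $Z=LL^\top$ to recover the stated symmetric forms. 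Both arguments are sound and rest on the same linear algebra; the paper's probabilistic phrasing lands on the symmetric form in one step, while your version isolates exactly where positive semidefiniteness of $Z$ is used and makes the $Z\succ O$ simplification transparent. Your closing remark that the \eqref{cs} route gives the same certificates (the first equation being implied by the second for these noise-free structures) is also correct and consistent with Lemma~\ref{lem_cs}.
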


The optimality of these information structures hinges on the dimension of the payoff state.
Specifically, when $\bm{\theta}$ is multi-dimensional so that $Z \succ O$ and $m \geq 2$, both $\underline{\pi}$ and $\overline{\pi}$ are generically suboptimal.
This is because each of the conditions $B_\Lambda = O$ and $A_\Lambda Q^{-1}R = B_\Lambda$ consists of $nm$ linear equations, while $\Lambda$ has only $n$ free variables.
Thus, a solution does not exist generically.

In contrast, when the state is one-dimensional, the following result shows that $\underline{\pi}$ can be optimal for a generic range of environments.
More interestingly, it reveals a ``bang-bang'' property of optimal information structures: whenever no disclosure is suboptimal, any optimal information structure must be state-identifiable, ensuring that the state is fully disclosed to agents in aggregate.

\begin{corollary} \label{cor_opt_m1}
With $m=1$, denote by $r_i$ and $w_i$ the $i$-th entries of $n$-dimensional vectors $R$ and $W$, respectively.
Assuming $r_i \neq 0$ for all $i \in [n]$, let $\Delta = \Diag(w_{1}/r_{1},\ldots, w_{n}/r_{n})$ and $\tilde{V} = V+(\Delta Q+Q^\top \Delta)/2$.
\begin{enumerate}[\rm i).]
\item No disclosure $\underline{\pi}$ is optimal if and only if $O \succeq \tilde{V}$, and uniquely optimal if $O \succ \tilde{V}$.
\item Otherwise, if $O \not\succeq \tilde{V}$, any optimal information structure is state-identifiable.
\end{enumerate}
\end{corollary}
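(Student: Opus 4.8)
The plan is to reduce both parts to a single observation about one distinguished multiplier. Since $m=1$ and $Z\succ O$ (because $\Var[\theta_1]>0$), Corollary~\ref{cor_no_full}(i) tells me that $\underline{\pi}$ is optimal exactly when some dual feasible $\Lambda$ satisfies $B_\Lambda=O$. With $m=1$, $B_\Lambda=(\Lambda R+W)/2$ is a column vector whose $i$-th entry is $(\lambda_i r_i+w_i)/2$, so $B_\Lambda=O$ forces $\lambda_i=-w_i/r_i$ for every $i$; the hypothesis $r_i\neq0$ is precisely what makes this system uniquely solvable, giving the single candidate $\Lambda=-\Delta$. A direct substitution yields $A_{-\Delta}=-\left(V+(\Delta Q+Q^\top\Delta)/2\right)=-\tilde V$, so everything hinges on this candidate and the sign of $\tilde V$.

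For part (i), I would note that when $B_\Lambda=O$ the range condition in the block constraint is vacuous and $C_\Lambda=O$, so dual feasibility of $\Lambda=-\Delta$ collapses to $A_{-\Delta}=-\tilde V\succeq O$, i.e.\ $O\succeq\tilde V$; this is the claimed equivalence. For uniqueness under $O\succ\tilde V$, I observe that then $A_{-\Delta}=-\tilde V\succ O$, and since $\Lambda=-\Delta$ certifies the optimal structure $\underline\pi$, Corollary~\ref{cor_certify} makes it dual optimal, so that Corollary~\ref{cor_nfsi1}(i) applies and forces the optimal information structure to be unique.

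For part (ii), suppose $O\not\succeq\tilde V$. Then $\Lambda=-\Delta$ violates $A_{-\Delta}\succeq O$ and is therefore not dual feasible; being the only multiplier with $B_\Lambda=O$, its exclusion shows that every dual feasible $\Lambda$, and in particular a dual optimal $\Lambda^\ast$ (which exists by Proposition~\ref{prop_dual}), satisfies $B_{\Lambda^\ast}\neq O$, i.e.\ $B_{\Lambda^\ast}$ has full column rank $m=1$. For a Gaussian optimum this already delivers state-identifiability through Corollary~\ref{cor_nfsi1}(ii). The hard part is that the claim demands state-identifiability for \emph{every} optimal $\pi$, including non-Gaussian ones, while the certificate \eqref{cs} only pins down second moments.

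To bridge this, I would fix such a dual optimal $\Lambda^\ast$, which by Corollary~\ref{cor_certify} certifies every optimal $\pi$, and examine the mean-zero residual
\[
\xi \equiv A_{\Lambda^\ast}\qty(\bm{\sigma}^\pi-\bar{\bm a}) - B_{\Lambda^\ast}\qty(\bm\theta-\bar{\bm\theta}).
\]
Expanding $\Cov[\xi]$ with $X=\Var[\bm\sigma^\pi]$ and $Y=\Cov[\bm\sigma^\pi,\bm\theta]$ and substituting the two equalities $A_{\Lambda^\ast}X=B_{\Lambda^\ast}Y^\top$ and $A_{\Lambda^\ast}Y=B_{\Lambda^\ast}Z$ from \eqref{cs}, every one of the four terms collapses to $B_{\Lambda^\ast}ZB_{\Lambda^\ast}^\top$ and they cancel in pairs, so $\Cov[\xi]=O$ and hence $\xi=0$ almost surely, with no Gaussianity used. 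Left-multiplying the resulting identity $A_{\Lambda^\ast}(\bm\sigma^\pi-\bar{\bm a})=B_{\Lambda^\ast}(\bm\theta-\bar{\bm\theta})$ by $B_{\Lambda^\ast}^\top$ and dividing by $B_{\Lambda^\ast}^\top B_{\Lambda^\ast}>0$ expresses $\bm\theta$ as a deterministic linear function of $\bm\sigma^\pi$, so $\Var[\bm\theta\mid\bm\sigma^\pi]=O$ and $\pi$ is state-identifiable. I expect this residual-variance computation to be the crux, since it is exactly what upgrades the covariance-level certificate into the almost-sure identification required for general, possibly non-Gaussian, optima.
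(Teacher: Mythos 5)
Your proof is correct. Part (i) follows the paper's own argument essentially verbatim: $B_\Lambda=O$ forces $\Lambda=-\Delta$ because $r_i\neq 0$, $A_{-\Delta}=-\tilde V$, and Corollary~\ref{cor_no_full} plus Corollary~\ref{cor_nfsi1}(i) finish the claim. Part (ii) reaches the same conclusion by a slightly different route. Where you rule out $B_\Lambda=O$ by observing that the unique candidate $-\Delta$ is dual infeasible, the paper instead notes that $\underline\pi$ being suboptimal forces $v^{\rm p}>0$, and then invokes Lemma~\ref{lem_value} ($v^{\rm p}=B_\Lambda\bullet\Cov[\bm{\sigma}^\pi,\bm{\theta}]$) to conclude $B_\Lambda\neq O$ for any certifying $\Lambda$; both arguments are valid and yours is arguably more self-contained. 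Your residual computation $\Var\bigl[A_{\Lambda^\ast}(\bm{\sigma}^\pi-\bar{\bm a})-B_{\Lambda^\ast}(\bm{\theta}-\bar{\bm\theta})\bigr]=O$, derived purely from \eqref{cs} without Gaussianity, is a genuine refinement: the paper simply cites Corollary~\ref{cor_nfsi1}(ii), whose statement covers only \emph{Gaussian} optimal information structures, whereas the corollary being proved asserts state-identifiability for \emph{any} optimal information structure. Your argument shows that the a.s.\ linear relation \eqref{opt} in fact holds for every optimal $\pi$ (not just Gaussian ones), so left-multiplication by $B_{\Lambda^\ast}^\top$ delivers $\Var[\theta\mid\bm{\sigma}^\pi]=0$ a.s.\ in full generality. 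This closes a small step that the paper's proof leaves implicit, at the cost of one extra covariance expansion.
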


To understand the difference between the one-dimensional and multi-dimensional cases, observe that the condition $B_\Lambda = O$ holds if and only if $w_{ik} = - \lambda_i r_{ik}$ for all $i \in [n]$ and $k \in [m]$.
That is, each row of $W$ must be proportional to the corresponding row of $R$, which indicates that the way the designer evaluates action-state correlations must align with how each agent values each dimension of the state.
When $m = 1$, this requirement is satisfied by setting $\lambda_i = - w_i/r_i$.
Constructing $\Lambda$ this way, $\underline{\pi}$ is optimal if and only if $A_\Lambda \succeq O$, which holds for a nontrivial set of payoff parameters.
In contrast, when $m \geq 2$, this proportionality condition is generally violated, rendering $\underline{\pi}$ suboptimal.

Formally, we say that an environment satisfies the \emph{proportional decision criterion} (PDC) if there exists a diagonal matrix $\Delta = \Diag (\delta_1,\ldots,\delta_n)$ such that $W = \Delta R$.
In other words, PDC holds when each row $(w_{i1}, \ldots, w_{im})$ of $W$ is obtained by scaling the corresponding row $(r_{i1}, \ldots, r_{im})$ of $R$ by a row-specific constant $\delta_i$.
When $m\ge 2$, this proportionality condition appears knife-edge in an algebraic sense.

We note, however, that PDC is satisfied in several economically relevant contexts.
For instance, when the designer’s objective is state-independent (i.e., $W = O$), PDC trivially holds with $\delta_i = 0$ for all $i \in [n]$.
Generalizing this case, the next section focuses on an environment in which each agent's payoff-relevant uncertainty is summarized by her ``personal'' state.
This setup is built on the assumption that the designer's objective depends on the state only through these personal states, thereby ensuring PDC.


\section{Personal-state environment}
\label{sec_personal}

We now focus on an environment in which the state vector is redefined in terms of what we call agents' personal states.
Throughout, we normalize $\bar{u}_i \equiv 0$ so that each agent's payoff function is given by:
\begin{equation} \label{payoff_personal}
u_i(\bm{a}, \bm{\theta}) = \qty(\hat{\theta}_i - \sum_{j \neq i} q_{ij} a_j) \cdot a_i - \frac{q_{ii} a_i^2}{2}, \quad {\rm where} \quad \hat{\theta}_i \equiv \sum_{k \in [m]} r_{ik} \theta_k.
\end{equation}
Here, $\hat{\theta}_i$ is referred to as agent $i$'s \emph{personal state}, which summarizes the exogenous payoff-relevant random variables for agent $i$.
This corresponds to a generalized version of the payoff structure considered in the network game example from Section~\ref{sec_ex}.

In the remainder of the main text, we treat the vector of personal states $\hat{\bm{\theta}} = (\hat{\theta}_1, \ldots, \hat{\theta}_n)$, rather than the original payoff states $\bm{\theta} = (\theta_1, \ldots, \theta_m)$, as the primitive of the model, and interpret the matrix $Z$ as representing $\Var[\hat{\bm{\theta}}]$.
Note that $Z$ may not be invertible, even if the original variance matrix $\Var[\bm{\theta}]$ is.
For instance, we have $\rank(Z) = 1$ if the vector $(r_{i1}, \ldots, r_{im})$ remains the same across all $i$ so that they respond identically to $\bm{\theta}$.
It is worth emphasizing that using personal states is \emph{not} without loss of generality in studying LQG information design---though it naturally arises in several contexts---since this reduction implicitly assumes that the designer’s objective depends on $\bm{\theta}$ only through $\hat{\bm{\theta}}$.

In this section, we further assume that the designer’s objective does not depend on cross terms between agent $i$'s action $a_i$ and $j$'s personal state $\hat{\theta}_j$.
This assumption is reasonable in several contexts, as each personal state $\hat{\theta}_i$  only affects agent $i$'s payoff.
For example, it holds when the designer is oriented to welfare maximization, as will be discussed later.
To sum up, the class of environments considered in this section corresponds to the following parametric assumption in our general model.

\begin{definition}
An environment is called a \emph{personal-state environment} if
\[
m=n, \quad R=I, \quad {\rm and} \quad W=\Diag (w_1,\ldots,w_n).
\]
\end{definition}

Evidently, PDC holds in any personal-state environment.
The following lemma shows that, under PDC, the designer’s objective can be transformed into an equivalent state-independent form such that $W=O$ without changing the value of any information structure.

\begin{lemma} \label{lem_personal_W0}
Consider a general environment that satisfies PDC with respect to $\Delta \in \calS^n_{\rm diag}$.
Then, for any primal feasible pair $(X,Y)$, we have
\[
V \bullet X + W \bullet Y = \tilde{V} \bullet X, \quad {\rm where} \quad \tilde{V} = V+\frac{\Delta Q+Q^\top \Delta}{2}.
\]
\end{lemma}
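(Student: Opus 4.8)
The plan is to reduce the claimed identity to the single equation $W \bullet Y = \frac{\Delta Q + Q^\top \Delta}{2} \bullet X$, since subtracting $V \bullet X$ from both sides of the target leaves exactly this. I would then establish that equation by expanding both Frobenius inner products in coordinates and bridging them through the primal feasibility constraint from Lemma~\ref{lem_moment}.

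First, I would use PDC to substitute $W = \Delta R$ with $\Delta = \Diag(\delta_1,\ldots,\delta_n)$, so that $W_{ik} = \delta_i R_{ik}$ and hence
\[
W \bullet Y = \sum_{i} \delta_i \sum_{k} R_{ik} Y_{ik} = \sum_i \delta_i \, (YR^\top)_{ii}.
\]
At this point the role of feasibility becomes apparent: the constraint $\diag(XQ^\top) = \diag(YR^\top)$ gives $(YR^\top)_{ii} = (XQ^\top)_{ii} = \sum_j Q_{ij} X_{ij}$ for every $i$, and therefore $W \bullet Y = \sum_i \delta_i \sum_j Q_{ij} X_{ij}$.

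Next, I would expand the target right-hand side. Writing $(\Delta Q)_{ij} = \delta_i Q_{ij}$ and $(Q^\top \Delta)_{ij} = \delta_j Q_{ji}$, the symmetrized coefficient matrix has entries $\tfrac{1}{2}(\delta_i Q_{ij} + \delta_j Q_{ji})$. Contracting against $X$ and invoking the symmetry $X_{ij} = X_{ji}$ (which holds since $X \in \calS^n$), I would relabel indices in the second summand to show that the two summands coincide, yielding $\frac{\Delta Q + Q^\top \Delta}{2} \bullet X = \sum_i \delta_i \sum_j Q_{ij} X_{ij}$. This matches the expression for $W \bullet Y$ derived above, completing the identity and hence the lemma.

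The only genuinely delicate point is the symmetrization step: $\Delta Q$ on its own is not symmetric, so the naive candidate $\Delta Q \bullet X$ would not be the correct object. The symmetric combination $\tfrac{1}{2}(\Delta Q + Q^\top \Delta)$ is precisely what the Frobenius pairing against a symmetric $X$ detects, and verifying that this symmetric part reproduces $\sum_i \delta_i \sum_j Q_{ij} X_{ij}$ is where the symmetry of $X$ does its work. Everything else is routine bookkeeping with the feasibility constraint, so I do not anticipate any further obstacle.
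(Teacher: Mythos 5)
Your proof is correct and follows essentially the same route as the paper's: substitute $W=\Delta R$ via PDC, use the obedience constraint $\diag(XQ^\top)=\diag(YR^\top)$ to convert $W\bullet Y$ into $\sum_i \delta_i \sum_j q_{ij}x_{ij}$, and then symmetrize against the symmetric matrix $X$ to recognize this as $\frac{\Delta Q+Q^\top\Delta}{2}\bullet X$. No gaps.
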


In view of this lemma, the designer’s original objective $(V, W)$ can be equivalently transformed into a state-independent form $(\tilde{V}, O)$, where $\tilde{V} = V + (WQ+Q^\top W)/2$.
Under this transformed objective, the matrices $A_\Lambda$ and $B_\Lambda$ are given as
\begin{equation} \label{AB_personal_state}
A_\Lambda = \frac{\Lambda Q + Q^\top \Lambda}{2} - \tilde{V} \quad {\rm and} \quad B_\Lambda = \frac{\Lambda}{2}.
\end{equation}
Note that $B_\Lambda$ takes a particularly simple form, with its rank determined solely by the number of nonzero Lagrange multipliers.
This observation enables us to derive simple sufficient conditions for state-identifiability and noise-freeness in terms of model primitives.

\begin{corollary} \label{cor_personal1}
In a personal-state environment, let $\tilde{V} = V+(WQ+Q^\top W)/2$.
\begin{enumerate}[\rm i).]
\item \label{cor_personal_part1}
No disclosure $\underline{\pi}$ is optimal if and only if $O \succeq \tilde{V}$, and uniquely optimal if $O \succ \tilde{V}$.
\item \label{cor_personal_part2}
Otherwise, if $O \not\succeq \tilde{V}$, any optimal information structure identifies at least one agent's personal state, i.e., $\Var [\hat{\theta}_i \mid \sigma^\pi_1,\ldots,\sigma^\pi_n] = 0$ holds for some $i \in [n]$.
In particular, when $\tilde{v}_{ii} > 0$ for all $i \in [n]$, any optimal information structure is state-identifiable.
\end{enumerate}
\end{corollary}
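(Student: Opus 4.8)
The plan is to reduce the whole problem to a state-independent objective via Lemma~\ref{lem_personal_W0}, so that throughout I work with $\tilde{V} = V + (WQ + Q^\top W)/2$ and the simplified incentive matrices $B_\Lambda = \Lambda/2$ and $A_\Lambda = (\Lambda Q + Q^\top \Lambda)/2 - \tilde{V}$ from \eqref{AB_personal_state}. I would record two facts used repeatedly: $q_{ii} > 0$, since it is a diagonal entry of the positive definite matrix $Q + Q^\top$; and $Z = \Var[\hat{\bm{\theta}}]$ has strictly positive diagonal, since $R = I$ forces $\hat{\theta}_i = \theta_i$ with $\Var[\theta_i] > 0$.

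For part \eqref{cor_personal_part1}, the direction $O \succeq \tilde{V} \Rightarrow$ optimality of $\underline{\pi}$ is immediate: taking $\Lambda = O$ gives $A_\Lambda = -\tilde{V} \succeq O$, so $\Lambda$ is dual feasible, and $B_\Lambda Z B_\Lambda^\top = O$, so Corollary~\ref{cor_no_full}\,(i) certifies $\underline{\pi}$. Conversely, optimality of $\underline{\pi}$ yields a dual feasible $\Lambda$ with $\tfrac14 \Lambda Z \Lambda = O$; factoring $Z = LL^\top$ gives $\Lambda L = O$, hence $\Lambda Z = O$, and reading off the diagonal entries $\lambda_i Z_{ii} = 0$ with $Z_{ii} > 0$ forces $\Lambda = O$, whence dual feasibility gives $-\tilde{V} = A_\Lambda \succeq O$. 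For uniqueness under $O \succ \tilde{V}$, I note that $\Lambda = O$ is dual optimal (its value $0$ equals $v^{\rm p} = 0$) and satisfies $A_\Lambda = -\tilde{V} \succ O$, so Corollary~\ref{cor_nfsi1}\,(i) applies directly.

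The crux is part \eqref{cor_personal_part2}. First, $O \not\succeq \tilde{V}$ rules out $\Lambda = O$ as dual feasible, so every dual optimal $\Lambda$ is nonzero. The central computation is to show that any such $\Lambda$ certifying an optimal pair $(X,Y)$ satisfies $\Lambda \Sigma = O$, where $\Sigma \equiv Z - Y^\top X^+ Y \succeq O$ is the linear residual covariance. I would obtain this from the two equations in \eqref{cs}: these give $\Lambda Z = 2 A_\Lambda Y$ and $\Lambda Y^\top = 2 A_\Lambda X$, so substituting into $\Lambda \Sigma = \Lambda Z - \Lambda Y^\top X^+ Y$ and using $X X^+ Y = Y$ (valid since $\mathbf{M}_{X,Y} \succeq O$ forces $\ran Y \subseteq \ran X$) collapses the expression to $2 A_\Lambda Y - 2 A_\Lambda Y = O$. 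Because $\Lambda$ is diagonal, $\Lambda \Sigma = O$ means that every index $i$ with $\lambda_i \neq 0$ has $\Sigma_{ii} = 0$; since $\Lambda \neq O$, at least one personal state satisfies $\Sigma_{i_0 i_0} = 0$.

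It then remains to pass from the linear residual to the genuine conditional variance, which simultaneously handles non-Gaussian $\pi$: by the law of total variance the best nonlinear predictor dominates the best linear one, so $\E[\Var[\hat{\theta}_i \mid \bm{\sigma}^\pi]] \le \Sigma_{ii}$, and $\Sigma_{i_0 i_0} = 0$ together with nonnegativity gives $\Var[\hat{\theta}_{i_0} \mid \bm{\sigma}^\pi] = 0$ a.s. For the ``in particular'' claim I read the diagonal of $A_\Lambda \succeq O$: since $[A_\Lambda]_{ii} = \lambda_i q_{ii} - \tilde{v}_{ii} \ge 0$, the hypothesis $\tilde{v}_{ii} > 0$ with $q_{ii} > 0$ forces $\lambda_i > 0$ for every $i$, so $\Lambda$ is invertible and $\Lambda \Sigma = O$ yields $\Sigma = O$, i.e.\ state-identifiability. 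The main obstacle I anticipate is the clean derivation of the identity $\Lambda \Sigma = O$ and the care needed to justify $\ran Y \subseteq \ran X$ and the total-variance domination; the remaining steps are bookkeeping with the dual feasibility inequalities.
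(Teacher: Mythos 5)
Your proof is correct and follows essentially the same route as the paper: part (i) via Corollary~\ref{cor_no_full} with $B_\Lambda Z B_\Lambda^\top = \tfrac14\Lambda Z\Lambda$ forcing $\Lambda = O$, and part (ii) by noting $\Lambda = O$ is infeasible when $O \not\succeq \tilde V$ and reading the diagonal $\lambda_i q_{ii} - \tilde v_{ii} \ge 0$ of $A_\Lambda$. The only variation is in how you extract identification of $\hat\theta_i$ in part (ii): you derive the covariance-level identity $\Lambda\,(Z - Y^\top X^+ Y) = O$ (which is exactly condition (CS2) of Lemma~\ref{lem_cs} specialized to $B_\Lambda = \Lambda/2$) and then invoke the law of total variance, whereas the paper reads the conclusion off the a.s.\ linear relation \eqref{opt}; your version is, if anything, slightly more careful about non-Gaussian optimal information structures, since \eqref{opt} is stated in Theorem~\ref{thm_main} only for Gaussian $\pi$.
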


Corollary~\ref{cor_personal1} extends the key insights from Corollary~\ref{cor_opt_m1} in the one-dimensional case to more general personal-state environments.
Part~\eqref{cor_personal_part1} shows that no disclosure is optimal when the designer's augmented matrix $\tilde{V}$ is negative semidefinite.
By contrast, part~\eqref{cor_personal_part2} establishes that if $\tilde{V}$ is not negative semidefinite, then at least one agent's personal state must be fully revealed under any optimal information structure.
Note that when the fundamental state is one-dimensional, any pair of personal states $\hat{\theta}_i$ and $\hat{\theta}_j$ are perfectly correlated.
Hence, fully revealing one agent’s personal state is equivalent to fully revealing the entire state vector, and in this sense, the two corollaries yield identical implications when $m=1$.

Beyond the one-dimensional case, Corollary~\ref{cor_personal1} further shows that $\tilde{v}_{ii} > 0$ for all agents is sufficient for any optimal information structure to be state-identifiable.
Under this condition, the next result adds that if $Z$ is nonsingular so that each agent's personal state contains an idiosyncratic component orthogonal to others' personal states, the optimal information structure is not only state-identifiable but also unique and noise-free.
In stating it, let $D_M = \Diag (m_{11}, \ldots, m_{nn})$ denote the diagonal matrix formed from the diagonal entries of a given square matrix $M = [m_{ij}]_{n \times n}$.

\begin{corollary} \label{cor_personal2}
In a personal-state environment, let $\tilde{V} = V+(WQ+Q^\top W)/2$.
Suppose that $Z \succ O$ and $\tilde{v}_{ii} > 0$ for all $i \in [n]$.
Then, there exists a unique optimal information structure that is state-identifiable and noise-free.
Moreover, full disclosure $\overline{\pi}$ is optimal if and only if $\tilde{V} \succeq O$ and $D_{\tilde{V}}^{-1} \tilde{V} = D_Q^{-1} Q$.
\end{corollary}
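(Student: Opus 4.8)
The plan is to treat the two assertions separately, in each case reducing to the dual characterizations already available through the personal-state formulas \eqref{AB_personal_state}, where $A_\Lambda = (\Lambda Q + Q^\top \Lambda)/2 - \tilde{V}$ and $B_\Lambda = \Lambda/2$. I will use throughout that Basic Assumption makes $Q$ invertible and $q_{ii}>0$ for every $i$ (since $Q+Q^\top \succ O$ forces $2q_{ii}>0$), and that after the state-independent reduction of Lemma~\ref{lem_personal_W0} the value of every information structure, and hence the notion of optimality, is unchanged.

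For the first assertion, existence of an optimal Gaussian information structure is immediate from Proposition~\ref{prop_dual} together with Proposition~\ref{prop_gauss}, and since $\tilde{v}_{ii}>0$ for all $i$, Corollary~\ref{cor_personal1}\eqref{cor_personal_part2} already guarantees that \emph{every} optimal information structure is state-identifiable; what remains is noise-freeness and uniqueness. My strategy is to exhibit a dual optimal $\Lambda$ with $A_\Lambda \succ O$ and then invoke Corollary~\ref{cor_nfsi1}\eqref{nf1}, which delivers both at once. First I would fix a primal optimal pair $(X,Y)$: state-identifiability gives $Z = Y^\top X^+ Y$, and because $Z \succ O$ with $m=n$, the rank of $Y^\top X^+ Y$ cannot exceed that of $Y$, so $Y$ must be invertible. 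Taking any dual optimal $\Lambda$, Corollary~\ref{cor_certify} ensures it certifies $(X,Y)$, so $A_\Lambda Y = B_\Lambda Z = \tfrac12 \Lambda Z$, and inverting $Y$ yields $A_\Lambda = \tfrac12 \Lambda Z Y^{-1}$.

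The crucial step is to rule out a vanishing multiplier. If $\lambda_k = 0$, the $k$-th row of $A_\Lambda = \tfrac12 \Lambda Z Y^{-1}$ vanishes, so $(A_\Lambda)_{kk}=0$; but reading the diagonal of $A_\Lambda = (\Lambda Q + Q^\top \Lambda)/2 - \tilde{V}$ directly gives $(A_\Lambda)_{kk} = \lambda_k q_{kk} - \tilde{v}_{kk} = -\tilde{v}_{kk} < 0$, a contradiction. Hence $\Lambda$ is invertible, $A_\Lambda = \tfrac12 \Lambda Z Y^{-1}$ is a product of nonsingular matrices, and being positive semidefinite by dual feasibility it is in fact positive definite. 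Corollary~\ref{cor_nfsi1}\eqref{nf1} then makes the optimal information structure unique and noise-free, which combined with state-identifiability proves the first assertion.

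For the second assertion I would apply the full-disclosure criterion of Corollary~\ref{cor_no_full}: since $Z \succ O$, full disclosure is optimal exactly when some dual feasible $\Lambda$ satisfies $A_\Lambda Q^{-1} R = B_\Lambda$, which (using $R=I$, $B_\Lambda=\Lambda/2$) reads $A_\Lambda = \tfrac12 \Lambda Q$. Substituting the formula for $A_\Lambda$ collapses to $\tilde{V} = \tfrac12 Q^\top \Lambda$, and symmetry of $\tilde{V}$ then forces $\tilde{V} = \tfrac12 \Lambda Q = A_\Lambda$. Reading the diagonal gives $\tilde{v}_{ii} = \tfrac12 \lambda_i q_{ii}$, so the multipliers are pinned down as $\Lambda = 2 D_{\tilde{V}} D_Q^{-1}$; feeding this back into $\tilde{V} = \tfrac12 \Lambda Q$ yields $D_{\tilde{V}}^{-1}\tilde{V} = D_Q^{-1} Q$, while dual feasibility $A_\Lambda \succeq O$ together with $A_\Lambda = \tilde{V}$ gives $\tilde{V}\succeq O$; this is necessity. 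For sufficiency I would set $\Lambda = 2 D_{\tilde{V}} D_Q^{-1}$ and check $A_\Lambda = \tilde{V}$ and $A_\Lambda Q^{-1}=B_\Lambda$, the only delicate point being the range condition $\ran(B_\Lambda)=\ran(\Lambda)\subseteq \ran(A_\Lambda)=\ran(\tilde{V})$ of dual feasibility. Here I would note that $\tilde{V} = D_{\tilde{V}} D_Q^{-1} Q$ is a product of invertible matrices (using $\tilde{v}_{ii}>0$, $q_{ii}>0$, and $Q$ nonsingular), so $\tilde{V}$ is nonsingular; thus $\tilde{V}\succeq O$ upgrades to $\tilde{V}\succ O$, the range condition holds automatically, and full disclosure is optimal.

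The main obstacle I anticipate is the careful bookkeeping of invertibility: the entire argument rests on $Y$, $Z$, $Q$, and $\Lambda$ being nonsingular, and in particular on the slightly hidden observation that the proportionality condition silently upgrades $\tilde{V}\succeq O$ to $\tilde{V}\succ O$ — without which the dual range condition, and hence feasibility, could fail. A secondary point requiring vigilance is to keep consistent that everything is stated after the reduction of Lemma~\ref{lem_personal_W0}, so that $A_\Lambda$ and $B_\Lambda$ are precisely those in \eqref{AB_personal_state}.
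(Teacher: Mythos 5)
Your proposal is correct and follows essentially the same route as the paper's proof: both arguments hinge on the observation that $\lambda_i=0$ would force the $(i,i)$ entry of $A_\Lambda$ to equal $-\tilde v_{ii}<0$, contradicting dual feasibility, so that $A_\Lambda\succ O$ and Corollary~\ref{cor_nfsi1} applies; and both pin down $\Lambda=2D_{\tilde V}D_Q^{-1}$ from the diagonal of $\tilde V=\Lambda Q/2$ for the full-disclosure characterization. The only cosmetic differences are that the paper reaches $\rank(A_\Lambda)=n$ via the rank chain $\rank(A_\Lambda)\ge\rank(A_\Lambda Y)=\rank(B_\Lambda Z)=\rank(B_\Lambda)$ rather than inverting $Y$, and verifies dual feasibility via Lemma~\ref{lem_Zfull} rather than checking the range condition directly.
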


In Corollary~\ref{cor_personal2}, we also obtains the optimality condition for full disclosure $\overline{\pi}$ when $Z \succ O$.
Specifically, while the optimality of $\underline{\pi}$ is characterized by $O \succeq \tilde{V}$, the optimality of $\overline{\pi}$ requires not only the contrasting condition $\tilde{V} \succeq O$ but also an additional linear condition $D_{\tilde{V}}^{-1} \tilde{V} = D_Q^{-1} Q$, which is equivalently written as:
\begin{equation} \label{welfare_full_opt_private}
\frac{\tilde{v}_{ij}}{\tilde{v}_{ii}} = \frac{q_{ij}}{q_{ii}}, \quad \forall i \neq j.
\end{equation}
This condition requires the designer's matrix $\tilde{V}$ to be well aligned with the agents’ payoff structure $Q$: after normalizing $\tilde{V}$ and $Q$ by dividing each row by its corresponding diagonal entry, the two matrices must become identical.
As illustrated in the next example, this alignment eliminates strategic externalities among agents in the context of welfare maximization, highlighting that it is a rather rare case for $\overline{\pi}$ to be optimal when $Z$ is nonsingular.

\begin{example} \label{ex_welfare}
The personal-state environment naturally arises when the information designer is oriented toward welfare maximization.
Specifically, assume that the designer's objective is given as the following weighted aggregate of agents' payoffs:
\begin{equation} \label{eq_obj_welfare}
v(\bm{a},\hat{\bm{\theta}}) = 
\sum_{i \in [n]} v_i \cdot \qty(\qty(\hat{\theta}_i - \textstyle \sum_{j \neq i} q_{ij} a_j) \cdot a_i - \frac{q_{ii} a_i^2}{2}),
\end{equation}
where $v_i > 0$ for all $i \in [n]$.
In particular, the designer aims to maximize \emph{utilitarian welfare} if $v_i = 1$ for all $i \in [n]$.

Under this specification, the designer's value of any information structure $\pi$ can be expressed as an affine function of the equilibrium action volatility:\footnote{See Lemma~\ref{lem_obj_welfare} in Appendix~\ref{app_omit} for a formal calculation.}
\[
\E \qty[v(\bm{\sigma}^\pi, \hat{\bm{\theta}})] = \frac{1}{2} \sum_{i \in [n]} v_i q_{ii} \cdot \qty(\Var \qty[\sigma^\pi_i] + \bar{a}_i^2).
\]
This implies that welfare maximization maps to the personal-state environment with $V = \Diag (v_1 q_{11}, \ldots, v_n q_{nn})$ and $W = O$, up to some positive affine transformation of $v$.

Now, since all diagonal entries of $V$ are positive by assumption, Corollary~\ref{cor_personal1} ensures that any optimal information structure is state-identifiable.
Moreover, if $Z \succ O$, Corollary~\ref{cor_personal2} implies that the optimal information structure is noise-free and unique.
It also shows that the optimality of full disclosure $\overline{\pi}$ requires that condition~\eqref{welfare_full_opt_private} holds.
In the context of welfare maximization, since $V$ is diagonal and $W=O$, this condition holds only when $q_{ij} = 0$ for all $i \neq j$.
In other words, the optimality of $\overline{\pi}$ is a knife-edge case that arises only when there are no strategic externalities.
This highlights the importance of providing personalized information in the presence of strategic externalities and idiosyncratic payoff states.
\end{example}


\section{Applications to network games}
\label{sec_network}

In this section, we apply our general results to network intervention problems by revisiting the setting from Section~\ref{sec_ex}.

\begin{definition}
An environment is called a \emph{network environment} if
\[
m=n, \quad Q=I-\beta G, \quad {\rm and} \quad R=I,
\]
where $G = [g_{ij}]_{n \times n}$ is an adjacency matrix with $g_{ii}=0$ for all $i \in [n]$, and $\beta \in \R$ is a parameter such that $Q+Q^\top \succ O$.
If, in addition, $W=\Diag (w_1,\ldots,w_n)$, it is called a \emph{personal-state network environment}.
\end{definition}

\begin{remark}
In a network environment, the condition $Q + Q^\top \succ O$ holds if and only if $\beta$ satisfies
\begin{equation} \label{q_network}
\mu_{\rm min}\qty(\frac{G+G^\top}{2})^{-1} < \beta < \mu_{\rm max}\qty(\frac{G+G^\top}{2})^{-1},
\end{equation}
where $\mu_{\rm max}(\cdot)$ and $\mu_{\rm min}(\cdot)$ denote the largest and smallest eigenvalues of a (symmetric) matrix, respectively.
Note that since all entries of $G$ are nonnegative, the Perron--Frobenius theorem implies $\mu_{\rm max} (\textstyle \frac{G + G^\top}{2} ) \ge 0$, with strict inequality unless $G=O$.
Moreover, this implies that $\mu_{\rm min} (\textstyle \frac{G + G^\top}{2} ) \le 0$ since $\tr(G) = 0$.
\end{remark}

\subsection{Symmetry in networks}
\label{sec_sym}

A network $G$ describes the physical links between agents. 
Analogously, we can interpret the covariance matrix $\mathbf{M}^\pi$ as representing their informational links under an information structure $\pi$.
In this section, we show that when the network exhibits certain symmetries, the optimal information structure inherits the same degree of symmetry, provided that the designer's objective aligns with the symmetry in $G$.
Although this may seem intuitive, the result is not trivial since even in symmetric games, the designer could, in principle, benefit from providing discriminatory signals to different agents.

To formalize the notion of symmetry, we define the \emph{automorphism group} of $G$, denoted $\Aut(G)$, as the set of permutations $\tau: [n] \to [n]$ such that $g_{ij} = g_{\tau(i)\tau(j)}$ for all $i,j \in [n]$.
Then, the \emph{$G$-orbit} of agent $i$  is defined as
\begin{equation*}
O_i(G) \equiv \qty{j \in [n]: \tau(i) = j \text{ for some } \tau \in \Aut(G)}.
\end{equation*}
In words, $\Aut(G)$ consists of relabelings of agents that leave $G$ unchanged, and $O_i(G)$ collects agents who can be reached from $i$ via such relabelings.
If $j \in O_i(G)$, we can interpret this as agents $i$ and $j$ playing interchangeable roles in the network $G$.
It is easy to see that orbits partition $[n]$ into equivalent classes.

We say that $\Aut(G)$ is \emph{transitive} if $O_i(G) = [n]$ for all (or, equivalently, some) $i$, in which case all agents’ roles are interchangeable.
By slight abuse of terminology, we also call $G$ \emph{transitive} when $\Aut(G)$ is transitive.
In general, the coarseness of orbits captures the degree of symmetry that $G$ entails.
For example, the automorphism group of a complete network consists of all permutations, implying full symmetry among agents.
Aside from the complete network, the cyclical network in Section~\ref{sec_ex} is also transitive, whereas the star-shaped network is not since $[n]$ is partitioned into two orbits, $\{1\}$ and $\{2,\ldots,n\}$.

For any permutation $\tau:[n] \to [n]$, we define the associated permutation matrix as
\[
P_\tau \equiv \mqty[\mathbf{e}_{\tau(1)} & \cdots &  \mathbf{e}_{\tau(n)}],
\]
where the $i$-th column $\mathbf{e}_{\tau(i)}$ is given as the $\tau(i)$-th unit vector.
For a square matrix $M$, let $\tau(M) \equiv P_{\tau} M P_{\tau}^{\top}$ be the matrix obtained by permuting the rows and columns of $M$ according to $\tau$. 
We say that $M$ is \emph{$\tau$-invariant} if $M = \tau(M)$.
Also, $M$ is \emph{$G$-invariant} if it is $\tau$-invariant for all $\tau \in \Aut(G)$.
For example, $M$ is invariant for any permutation when its diagonal entries are all identical, and its off-diagonal entries are all identical as well.

A key property of transitive networks is that each agent has the same number of incoming and outgoing links.
Formally, for each $i$, define the \emph{in-degree} and \emph{out-degree} as $d_i^{\rm in}(G) \equiv \sum_{j \neq i} g_{ji}$ and $d_i^{\rm out}(G) \equiv \sum_{j \neq i} g_{ij}$.
We say that $G$ is \emph{regular} if there exists an integer $d(G)$ such that $d_i^{\rm in}(G) = d_i^{\rm out}(G) = d(G)$ for all $i \in [n]$, in which case $d(G)$ is simply called the \emph{degree} of $G$.
More generally, the next lemma shows that if matrix $M$ is invariant with respect to some transitive network $G$, then all of its rows sum to the same constant and all of its columns sum to that same constant.

\begin{lemma} \label{lem_regular}
If $G$ is a transitive network and $M = [m_{ij}]_{n \times n}$ is a $G$-invariant matrix, then all row‐sums $\sum_{j \in [n]} m_{ij}$ and all column-sums $\sum_{j \in [n]} m_{ji}$ are equal to a common constant.
In particular, any transitive network is regular.
\end{lemma}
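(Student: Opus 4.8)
The plan is to track how $G$-invariance acts on the single vector of row-sums, with the all-ones vector $\1 \in \R^n$ as the organizing device. The one fact that makes everything work is that every permutation matrix fixes $\1$: since $P_\tau = [\mathbf{e}_{\tau(1)} \cdots \mathbf{e}_{\tau(n)}]$ merely reorders the standard basis, we have $P_\tau \1 = \sum_i \mathbf{e}_{\tau(i)} = \1$, and likewise $P_\tau^\top \1 = \1$. So I would first encode the row-sums as $\mathbf{s} \equiv M\1$, whose $i$-th coordinate is $\sum_{j} m_{ij}$. For any $\tau \in \Aut(G)$, invariance $M = P_\tau M P_\tau^\top$ gives $\mathbf{s} = M\1 = P_\tau M P_\tau^\top \1 = P_\tau M \1 = P_\tau \mathbf{s}$, where the last step uses $P_\tau^\top \1 = \1$. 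Reading this off in coordinates shows $s_i = s_{\tau(i)}$ for every $i$ and every $\tau \in \Aut(G)$; that is, row-sums are constant along each $G$-orbit.

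Transitivity then finishes the first half: since $O_i(G) = [n]$, for any pair $i,j$ there is some $\tau \in \Aut(G)$ with $\tau(i) = j$, whence $s_i = s_{\tau(i)} = s_j$, so all row-sums share a common value $c$. Next I would run the identical argument on $M^\top$. Transposing $M = P_\tau M P_\tau^\top$ yields $M^\top = P_\tau M^\top P_\tau^\top$, so $M^\top$ is $G$-invariant too; applying the previous step to it makes the column-sum vector $\mathbf{t} \equiv M^\top \1$ constant along orbits and hence constant on $[n]$, say $t_j = c'$. To reconcile the two constants I would compare total sums: $\sum_i s_i = \sum_{i,j} m_{ij} = \sum_j t_j$ gives $nc = nc'$, so $c = c'$, and all row-sums and column-sums equal one common constant.

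For the final claim I would specialize to $M = G$. It is immediate from the definition of $\Aut(G)$ that $G$ is itself $G$-invariant: the defining identity $g_{ij} = g_{\tau(i)\tau(j)}$ is precisely the matrix equation $G = P_\tau G P_\tau^\top$ for $\tau \in \Aut(G)$. Applying the first part, the row-sums $\sum_j g_{ij} = d_i^{\rm out}(G)$ and column-sums $\sum_j g_{ji} = d_i^{\rm in}(G)$ all coincide with a single constant, which is exactly the statement that $G$ is regular with degree $d(G)$ equal to that constant.

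The whole argument is essentially index bookkeeping, and the only place I would be careful is translating the vector identity $P_\tau \mathbf{s} = \mathbf{s}$ into the correct coordinate statement $s_i = s_{\tau(i)}$ rather than an inverted relabeling, since the convention $P_\tau = [\mathbf{e}_{\tau(1)} \cdots \mathbf{e}_{\tau(n)}]$ acts on coordinates through $\tau^{-1}$. I expect this to be the sole potential pitfall; once it is pinned down, the result follows entirely from the two elementary observations that $P_\tau$ fixes $\1$ and that $G$-invariance is preserved under transposition.
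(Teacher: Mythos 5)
Your proposal is correct and follows essentially the same route as the paper: show row-sums are constant along $\Aut(G)$-orbits using the invariance identity, invoke transitivity, equate the common row-sum and column-sum constants via the total sum $\sum_{i,j} m_{ij}$, and specialize to $M = G$ for regularity. Your packaging of the orbit step as $P_\tau \mathbf{s} = \mathbf{s}$ with $P_\tau \1 = \1$ is a clean vectorized version of the paper's coordinate computation, and your caution about the $\tau$ versus $\tau^{-1}$ indexing is well placed but harmless since $\Aut(G)$ is closed under inversion.
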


Since any network $G$ itself is $G$-invariant, the agents' payoff matrices $Q = I-\beta G$ and $R = I$ are $G$-invariant.
In the next definition, we further postulate that the designer's objective and the distribution of personal states are aligned with the symmetry present in $G$.
In addition, we define the notion of $G$-invariance for an information structure based on the $G$-invariance of the action-state covariance matrix induced by it.

\begin{definition}
A network environment is \emph{$G$-invariant} if $V$, $W$, and $Z$ are $G$-invariant matrices.
An information structure $\pi$ is \emph{$G$-invariant} if $\Var[\bm{\sigma}^\pi]$ and $\Cov[\bm{\sigma}^\pi, \bm{\theta}]$ are $G$-invariant matrices.
\end{definition}

\begin{proposition} \label{prop_sym}
In a $G$-invariant network environment, there exists an optimal information structure $\pi$ that is $G$-invariant.
Moreover, $\pi$ is certified by $\Lambda = \Diag (\lambda_1,\ldots,\lambda_n)$ such that $\lambda_i = \lambda_j$ whenever $j \in O_i(G)$.
In particular, $\lambda_1 = \cdots =  \lambda_n$ if $G$ is transitive.
\end{proposition}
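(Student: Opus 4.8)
The plan is to exploit the symmetry of the environment by a Reynolds (group–averaging) argument applied to both the primal problem \eqref{primal_sdp} and the dual problem \eqref{dual_sdp}. The starting observation is that a matrix $M$ is $G$-invariant precisely when it commutes with every permutation matrix $P_\tau$, $\tau\in\Aut(G)$: indeed $M=P_\tau M P_\tau^\top$ is equivalent to $MP_\tau=P_\tau M$ since $P_\tau$ is orthogonal. Hence in a $G$-invariant network environment all the data matrices $Q=I-\beta G$, $R=I$, $V$, $W$, and $Z$ commute with each $P_\tau$. Writing $\tau(\cdot)=P_\tau(\cdot)P_\tau^\top$ and $\mathbf{P}_\tau\equiv\mqty[P_\tau & O \\ O & P_\tau]$ for the block permutation acting on $\mathbf{M}_{X,Y}$, the whole SDP is then equivariant under the finite group $\Aut(G)$, and averaging a solution over the group will produce an invariant one.

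First I would treat the primal side. By Proposition~\ref{prop_dual} there is an optimal pair $(X^*,Y^*)$. For each $\tau\in\Aut(G)$ the pair $(\tau(X^*),\tau(Y^*))$ is again feasible: the obedience constraint is preserved because $P_\tau$ commutes with $Q$ and $R$, so $\diag(\tau(X^*)Q^\top)$ and $\diag(\tau(Y^*)R^\top)$ are the same permutation of $\diag(X^* Q^\top)$ and $\diag(Y^* R^\top)$; and $\mathbf{M}_{\tau(X^*),\tau(Y^*)}=\mathbf{P}_\tau\mathbf{M}_{X^*,Y^*}\mathbf{P}_\tau^\top\succeq O$ since congruence by the orthogonal matrix $\mathbf{P}_\tau$ preserves positive semidefiniteness (and $P_\tau Z P_\tau^\top=Z$ keeps the lower-right block equal to $Z$). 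Its value is unchanged because $\mathbf{V}\bullet\mathbf{P}_\tau\mathbf{M}\mathbf{P}_\tau^\top=(\mathbf{P}_\tau^\top\mathbf{V}\mathbf{P}_\tau)\bullet\mathbf{M}=\mathbf{V}\bullet\mathbf{M}$, using that $V$ and $W$ commute with $P_\tau$. Since the feasible set is convex and the objective is linear, the uniform average $(\bar X,\bar Y)\equiv|\Aut(G)|^{-1}\sum_{\tau}(\tau(X^*),\tau(Y^*))$ is feasible and optimal, and it is $G$-invariant by the usual reindexing $\rho(\bar X)=|\Aut(G)|^{-1}\sum_\tau(\rho\tau)(X^*)=\bar X$ (likewise for $\bar Y$). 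Lemma~\ref{lem_moment} then furnishes a Gaussian information structure $\pi$ with $\Var[\bm{\sigma}^\pi]=\bar X$ and $\Cov[\bm{\sigma}^\pi,\bm{\theta}]=\bar Y$; this $\pi$ is optimal and, by definition, $G$-invariant.

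The dual side is analogous but rests on the equivariance of the augmented matrices, which is the crux of the argument. Using the commutation relations $P_\tau Q=QP_\tau$, $P_\tau R=RP_\tau$, $P_\tau V=VP_\tau$, and $P_\tau W=WP_\tau$, a direct substitution gives $A_{\tau(\Lambda)}=\tau(A_\Lambda)$ and $B_{\tau(\Lambda)}=\tau(B_\Lambda)$ for every diagonal $\Lambda$, where $\tau(\Lambda)=P_\tau\Lambda P_\tau^\top$ is again diagonal. Consequently, if $(\Lambda^*,\Gamma^*)$ is dual optimal (it exists by Proposition~\ref{prop_dual}), then $(\tau(\Lambda^*),\tau(\Gamma^*))$ is dual feasible—its constraint matrix is the $\mathbf{P}_\tau$-congruence of the original—with the same objective $Z\bullet\tau(\Gamma^*)=Z\bullet\Gamma^*$ because $Z$ is $G$-invariant. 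Averaging over $\Aut(G)$ produces a dual optimal pair whose first component $\bar\Lambda$ is diagonal and $G$-invariant. For a diagonal matrix, $G$-invariance means exactly that its entries are constant on $\Aut(G)$-orbits, i.e.\ $\lambda_i=\lambda_j$ whenever $j\in O_i(G)$; when $G$ is transitive the single orbit is $[n]$, forcing $\lambda_1=\cdots=\lambda_n$. Finally, since $\bar\Lambda$ is dual optimal, Corollary~\ref{cor_certify} guarantees that it certifies \emph{every} optimal information structure, in particular the $G$-invariant $\pi$ constructed above, which completes the proof.

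The main obstacle I anticipate is the bookkeeping behind the equivariance identities $A_{\tau(\Lambda)}=\tau(A_\Lambda)$ and $B_{\tau(\Lambda)}=\tau(B_\Lambda)$: although each step is an application of the commutation $MP_\tau=P_\tau M$ for $G$-invariant $M$, one must track carefully how $P_\tau$ passes through the products $\Lambda Q$, $Q^\top\Lambda$, and $\Lambda R$, and confirm that diagonality of $\Lambda$ is preserved under $\tau$. Everything else—convexity of the feasible sets, orthogonality of $\mathbf{P}_\tau$, and invariance of the uniform average under left translation—is routine.
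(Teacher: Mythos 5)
Your proposal is correct and follows essentially the same route as the paper: both symmetrize a primal solution and a dual solution by averaging over $\Aut(G)$, using that congruence by the orthogonal block permutation preserves feasibility and the objective value, and then read off orbit-constancy of the diagonal multipliers. The only cosmetic difference is that you phrase $G$-invariance as commutation with each $P_\tau$ and invoke Corollary~\ref{cor_certify} at the end, whereas the paper verifies directly that the averaged $\bar{\Lambda}$ certifies the averaged primal pair; both are valid.
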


Proposition~\ref{prop_sym} shows that when the network environment is invariant with respect to $G$, there exists a $G$-invariant optimal information structure.
Moreover, the certifying Lagrangian multipliers can be chosen to form a $G$-invariant diagonal matrix.
In particular, when the network is transitive, the multipliers can be uniform across all agents.
This result is useful for simplifying the computation of the optimal information structure.

Now, consider a personal-state network environment that is invariant with respect to some transitive network $G$.
In this case, we must have $W = wI$ for some scalar $w \in \R$.
Thus, the environment exhibits PDC, as each row of $W$ is obtained by scaling the corresponding row of $R = I$ by $w$.
We then obtain the following bang-bang result, which is analogous to Corollary~\ref{cor_opt_m1}.

\begin{corollary} \label{cor_sym_bangbang}
In a $G$-invariant personal-state network environment, suppose that $G$ is transitive so that $W=wI$ for some $w \in \R$.
Let $\tilde{V} = V+w/2 \cdot (Q+Q^\top)$.
\begin{enumerate}[\rm i).]
\item \label{cor_sym_bang1}
No disclosure $\underline{\pi}$ is optimal if and only if $O \succeq \tilde{V}$, and uniquely optimal if $O \succ \tilde{V}$.
\item \label{cor_sym_bang2}
Otherwise, if $O \not\succeq \tilde{V}$, any optimal information structure is state-identifiable.
If, in addition, $Z \succ O$, then the optimal information structure is unique and noise-free.
\end{enumerate}
\end{corollary}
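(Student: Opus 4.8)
The plan is to obtain this result by specializing the general personal-state machinery (Corollaries~\ref{cor_personal1} and~\ref{cor_personal2}) to the symmetric transitive setting, using Proposition~\ref{prop_sym} to reduce the search for certifying multipliers to the single scalar case. First I would observe that a $G$-invariant personal-state network environment with $G$ transitive is in particular a personal-state environment, so $R=I$, $m=n$, and $W=\Diag(w_1,\ldots,w_n)$; the hypothesis that the environment is $G$-invariant forces $W$ to be $G$-invariant, and since $G$ is transitive all agents lie in a single orbit, so $w_1=\cdots=w_n=w$, giving $W=wI$. With $R=I$ and $W=wI$ the augmented matrix from Lemma~\ref{lem_personal_W0} becomes $\tilde V = V + (WQ + Q^\top W)/2 = V + (w/2)(Q+Q^\top)$, matching the statement. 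Because PDC holds (each row of $W=wI$ scales the corresponding row of $R=I$ by $w$), Lemma~\ref{lem_personal_W0} lets me replace the objective by the state-independent form $(\tilde V, O)$, so that $A_\Lambda = (\Lambda Q + Q^\top\Lambda)/2 - \tilde V$ and $B_\Lambda = \Lambda/2$ as in \eqref{AB_personal_state}.

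Next I would invoke Corollary~\ref{cor_personal1} directly. Part~\eqref{cor_sym_bang1} is exactly part~\eqref{cor_personal_part1} of that corollary: $\underline\pi$ is optimal iff $O\succeq\tilde V$, and uniquely optimal when $O\succ\tilde V$. For part~\eqref{cor_sym_bang2}, the key point is that the diagonal entries of $\tilde V$ are all equal under the symmetry assumption: since $V$ and $Q+Q^\top$ are both $G$-invariant and $G$ is transitive, $\tilde V$ is $G$-invariant, so by Lemma~\ref{lem_regular} (or more simply by transitivity of orbits) its diagonal entries coincide, $\tilde v_{11}=\cdots=\tilde v_{nn}=:\tilde v$. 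If $O\not\succeq\tilde V$, then $\tilde V$ has a positive eigenvalue, which forces $\tr(\tilde V)=n\tilde v>0$, hence $\tilde v_{ii}=\tilde v>0$ for every $i$. The ``in particular'' clause of Corollary~\ref{cor_personal1}\eqref{cor_personal_part2} then yields state-identifiability of any optimal information structure, and when additionally $Z\succ O$, Corollary~\ref{cor_personal2} gives uniqueness and noise-freeness.

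The one place that needs care, and the step I expect to be the main obstacle, is verifying that $O\not\succeq\tilde V$ indeed forces all diagonal entries to be strictly positive rather than merely nonnegative; this is where the symmetry hypothesis does real work, since in a general personal-state environment Corollary~\ref{cor_personal1}\eqref{cor_personal_part2} only guarantees that \emph{some} personal state is revealed unless one separately knows $\tilde v_{ii}>0$ for all $i$. Here the transitivity of $G$ equalizes the diagonal of the $G$-invariant matrix $\tilde V$, so a single positive eigenvalue (equivalently, $\tilde V$ failing to be negative semidefinite) propagates to a strictly positive common diagonal value via the trace argument above, upgrading ``at least one agent'' to ``every agent'' and hence to full state-identifiability. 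Once this equalization is established, the remainder is a direct citation of the personal-state corollaries, and I would note that Proposition~\ref{prop_sym} additionally guarantees the certifying $\Lambda$ can be taken of the form $\lambda I$, which is consistent with and reinforces the scalar structure used throughout.
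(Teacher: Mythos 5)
Your reduction to the personal-state machinery works for part~(i): a $G$-invariant personal-state network environment is a personal-state environment with $W=wI$, so $\tilde V = V+(w/2)(Q+Q^\top)$ and Corollary~\ref{cor_personal1}\eqref{cor_personal_part1} gives the no-disclosure characterization verbatim. The problem is in part~(ii), at exactly the step you flagged as the main obstacle. Your claim that ``$O\not\succeq\tilde V$ has a positive eigenvalue, which forces $\tr(\tilde V)=n\tilde v>0$'' is false: a symmetric matrix with a positive eigenvalue need not have positive trace. Concretely, in the complete network with $W=O$ and $V=cK$ for $c>0$ (where $K=\1\1^\top-I$), the matrix $\tilde V=cK$ is $G$-invariant with common diagonal entry $\tilde v=0$ and a positive eigenvalue $(n-1)c$, so $O\not\succeq\tilde V$ yet $\tilde v_{ii}=0$ for every $i$. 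In that case the hypothesis ``$\tilde v_{ii}>0$ for all $i$'' required by Corollary~\ref{cor_personal1}\eqref{cor_personal_part2} and by Corollary~\ref{cor_personal2} fails, and those results only deliver that \emph{some} personal state is identified --- not state-identifiability, uniqueness, or noise-freeness. So your route does not prove part~(ii).

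The paper closes this gap by a different mechanism: the symmetry acts through the multipliers rather than through the diagonal of $\tilde V$. By Proposition~\ref{prop_sym} the certifying matrix can be taken scalar, $\Lambda=\lambda I$, so $B_\Lambda=\lambda I/2$. If $O\not\succeq\tilde V$, then $\underline{\pi}$ is suboptimal by part~(i), hence $v^{\rm p}>0$, and Lemma~\ref{lem_value} (which gives $v^{\rm p}=B_\Lambda\bullet\Cov[\bm{\sigma}^\pi,\bm{\theta}]$) forces $B_\Lambda\neq O$, i.e.\ $\lambda\neq 0$; then $B_\Lambda$ has full rank $n$ and Corollary~\ref{cor_nfsi1} yields state-identifiability. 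When in addition $Z\succ O$, the chain $\rank(A_\Lambda)\ge\rank(A_\Lambda Y)=\rank(B_\Lambda Z)=n$ makes $A_\Lambda$ invertible, and Corollary~\ref{cor_nfsi1} then gives uniqueness and noise-freeness. If you want to retain your structure, replace the trace argument with this ``$\lambda\neq 0$ because $v^{\rm p}>0$'' argument; part~(i) of your write-up can stand as is.
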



\subsection{Welfare maximization in symmetric networks}
\label{sec_welfare_sym}

In this section, we explore utilitarian welfare maximization in network environments.
To this end, we recall the symmetric distribution of agents' personal states from Section~\ref{sec_ex}, where $Z = [z_{ij}]_{n \times n}$ is given as $z_{ii} = 1$ and $z_{ij} = \rho$ for $i \neq j$.
The case $\rho = 1$ corresponds to the common-value case, while $\rho < 1$ corresponds to the private-value case.

As discussed in Section~\ref{ex_welfare}, utilitarian welfare maximization maps to an environment with $V = I$ and $W = O$.
Corollary~\ref{cor_sym_bangbang} then implies that the optimal information structure must always be state-identifiable.
In the private-value case, it is furthermore unique and noise-free, while full disclosure $\overline{\pi}$ is suboptimal except in the knife-edge case where $\beta = 0$.
Intuitively, when agents are concerned with different payoff-relevant random variables, full disclosure allows them to infer others’ personal states and adjust their actions excessively, leading to misalignment with their own personal states.

By contrast, as the following proposition shows, this generic suboptimality of $\overline{\pi}$ does not arise in the common-value case, where all agents are concerned about the common state $\theta \equiv \theta_1 = \cdots = \theta_n$.
The proposition also characterizes the optimal degree of each agent's signal informativeness when $\overline{\pi}$ is suboptimal.


\begin{proposition} \label{prop_welfare_regular}
Consider any $G$-invariant personal-state network environment where $G$ is transitive.
Full disclosure $\overline{\pi}$ is optimal if and only if
\begin{equation} \label{regular_full_condition}
\beta \ge - \frac{1}{d(G) - 2\mu_{\rm min}(\frac{G+G^\top}{2})}.
\end{equation}
Otherwise, if \eqref{regular_full_condition} is violated, there exists an optimal information structure $\pi$ that is Gaussian, $G$-invariant, and state-identifiable yet not noise-free.
Moreover, under such $\pi$, each agent’s action recommendation reduces the common state variance as follows:
\begin{equation} \label{regular_partial}
s_i \equiv \frac{\Var \qty[\theta] - \Var \qty[\theta \mid \sigma^\pi_i]}{\Var \qty[\theta]} = \frac{1/|\beta| + \mu_{\rm min}(\frac{G+G^\top}{2})}{d(G) - \mu_{\rm min}(\frac{G+G^\top}{2})}.
\end{equation}
\end{proposition}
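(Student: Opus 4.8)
The plan is to exploit the machinery developed in Corollaries~\ref{cor_personal1},~\ref{cor_personal2}, and~\ref{cor_sym_bangbang}, specializing to the common-value case where $Z$ is singular of rank one (since $\rho=1$ forces $\hat\theta_1=\cdots=\hat\theta_n=\theta$). Because the environment is $G$-invariant with $G$ transitive, Proposition~\ref{prop_sym} lets me restrict attention to a scalar multiplier $\Lambda=\lambda I$, so that $A_\Lambda = \lambda(Q+Q^\top)/2 - \tilde V$ and $B_\Lambda = (\lambda/2)I$. With welfare maximization giving $V=I$, $W=O$, hence $\tilde V = I$, these matrices become fully explicit functions of the single scalar $\lambda$ and the symmetric matrix $(G+G^\top)/2$. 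The entire dual problem \eqref{dual_sdp2} thereby collapses to a one-parameter optimization, which is the key simplification that makes the eigenvalue threshold in \eqref{regular_full_condition} computable.

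\textbf{Carrying out the characterization of full disclosure.}
First I would diagonalize everything in the common eigenbasis of $(G+G^\top)/2$. Since $G$ is transitive hence regular, the all-ones vector $\1$ is an eigenvector with eigenvalue $d(G)$, and the orthogonal complement carries the remaining eigenvalues, the smallest being $\mu_{\min}\equiv\mu_{\min}((G+G^\top)/2)$. In this basis $A_\Lambda$ and $C_\Lambda = B_\Lambda^\top A_\Lambda^+ B_\Lambda$ become diagonal, and since $Z$ has rank one with range $\Span(\1)$, the dual objective $Z\bullet C_\Lambda$ depends only on the $\1$-component. I would then apply Corollary~\ref{cor_no_full}(ii): full disclosure $\overline\pi$ is optimal iff there is a dual feasible $\Lambda$ with $(A_\Lambda Q^{-1}R - B_\Lambda)Z(\cdots)^\top = O$. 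Because $Z$ is rank-one supported on $\1$, this reduces to a single scalar equation in $\lambda$ restricted to the $\1$-eigenspace, while dual feasibility $\mathbf M_\Lambda\succeq O$ imposes eigenvalue inequalities on every eigenspace. Checking when a feasible $\lambda$ solving the scalar equation exists yields precisely the threshold~\eqref{regular_full_condition}; the binding constraint comes from the $\mu_{\min}$-eigenspace, which is why that eigenvalue appears.

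\textbf{Computing the informativeness $s_i$ when full disclosure fails.}
When \eqref{regular_full_condition} is violated, I would invoke Corollary~\ref{cor_sym_bangbang}(ii) to conclude state-identifiability, and Proposition~\ref{prop_sym} to obtain a $G$-invariant optimal $\pi$. State-identifiability means $\theta$ is recovered from the full action profile, consistent with $S_i=1$; that it is not noise-free in the common-value case follows because $Z$ is singular, so $A_\Lambda$ cannot be positive definite at the optimum (the $\1$-eigenspace equation forces a zero eigenvalue of $A_\Lambda$ there). To compute $s_i$, I would use the optimality condition \eqref{opt}, $A_\Lambda(\bm\sigma^\pi-\bar{\bm a}) = B_\Lambda(\bm\theta-\bar{\bm\theta}) = (\lambda^*/2)(\theta-\bar\theta)\1$, together with $G$-invariance of $X=\Var[\bm\sigma^\pi]$ and $Y=\Cov[\bm\sigma^\pi,\theta\1]$. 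The feasibility constraint \eqref{feasibility} and the certifying equations $A_\Lambda X = B_\Lambda Y^\top$, $A_\Lambda Y = B_\Lambda Z$ pin down the relevant entries of $Y$ and hence $\Var[\theta\mid\sigma_i^\pi]$ via the standard Gaussian conditioning formula $\Var[\theta\mid\sigma_i^\pi]=\Var[\theta]-\Cov[\theta,\sigma_i^\pi]^2/\Var[\sigma_i^\pi]$. Solving these in the eigenbasis, with $\lambda^*$ determined by the binding $\mu_{\min}$-constraint, should deliver the closed form in~\eqref{regular_partial}.

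\textbf{Anticipated main obstacle.}
The hard part will be handling the pseudoinverse $A_\Lambda^+$ cleanly at the optimum, where $A_\Lambda$ is singular precisely on the $\1$-eigenspace that also supports $Z$. The interaction between the rank-one $Z$ and the kernel of $A_\Lambda$ requires care: I must verify that $C_\Lambda = B_\Lambda^\top A_\Lambda^+ B_\Lambda$ remains well-defined and that the dual objective stays finite, which amounts to checking a range condition ($\ran B_\Lambda \subseteq \ran A_\Lambda$ on the relevant subspace) exactly at the threshold. Establishing that the binding eigenvalue constraint is the $\mu_{\min}$-one, rather than some interior eigenvalue, and translating the resulting scalar optimality condition into the stated formula without sign errors in $\beta$ (recall $|\beta|$ versus $\beta$, and $\mu_{\min}\le 0$) is where the genuine computation lives.
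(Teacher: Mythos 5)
Your high-level plan---restrict to $\Lambda=\lambda I$ by symmetry, diagonalize in the eigenbasis of $(G+G^\top)/2$, pin down $\overline{\lambda}$ from the $\1$-direction and check feasibility on the $\mu_{\rm min}$-eigenspace---matches the paper's strategy, but there are two concrete gaps.

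First, your choice $B_\Lambda=(\lambda/2)I$ (the personal-state formulation with $R=I$) breaks the duality machinery exactly in the regime where you need it. By Lemma~\ref{lem_block}, dual feasibility requires the range condition $B_\Lambda = A_\Lambda A_\Lambda^+ B_\Lambda$; since the diagonal entries of $A_\Lambda$ equal $\lambda-1$, feasibility forces $\lambda\ge 1$, so $B_\Lambda=(\lambda/2)I$ is invertible and the range condition forces $A_\Lambda$ to be invertible, hence $A_\Lambda\succ O$. Thus no dual feasible $\Lambda$ with singular $A_\Lambda$ exists in your formulation: at the boundary value $\hat{\lambda}=1/(1-\beta\mu_{\rm min}(\frac{G+G^\top}{2}))$ that you want, $\ker A_\Lambda$ is the $\mu_{\rm min}$-eigenspace while $\ran B_\Lambda=\R^n$, so the range condition fails; the dual infimum is approached but not attained (the primal is not strictly feasible since $Z=\1\1^\top$ is singular, so attainment is not automatic). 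Any certificate you could exhibit would have $A_\Lambda\succ O$ and, via \eqref{opt}, would certify a noise-free structure---contradicting what you are trying to prove. The paper avoids this by rewriting the common-value game with $R=\1\1^\top/n$ and a scalar state, so that $B_\Lambda=\lambda\1\1^\top/(2n)$ has rank one with range $\Span(\1)$; the range condition then only needs $\1\in\ran(A_\Lambda)$, which holds at $\hat{\lambda}$ because $\1$ is the $d(G)$-eigenvector, orthogonal to $\ker A_\Lambda$. This reformulation is not cosmetic; it is what makes a certifying multiplier exist at all.

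Second, your argument that the optimum is not noise-free does not go through. The zero eigenvalue of $A_\Lambda$ at $\hat{\lambda}$ sits on the $\mu_{\rm min}$-eigenspace, not the $\1$-eigenspace (whose $A_\Lambda$-eigenvalue $\hat{\lambda}(1-\beta d(G))-1$ is strictly positive), so ``$Z$ singular forces a zero eigenvalue there'' is incorrect; and in any case Corollary~\ref{cor_nfsi1}(i) only gives ``$A_\Lambda\succ O$ implies noise-free,'' not the converse, so singularity of $A_\Lambda$ would not yield non-noise-freeness anyway. The paper needs a separate step (Lemma~\ref{lem_not_nf}), a Cauchy--Schwarz argument showing that any Gaussian, $G$-invariant, state-identifiable, noise-free structure must be full disclosure; since full disclosure is suboptimal when \eqref{regular_full_condition} fails, the optimal $G$-invariant structure cannot be noise-free, and only then does Corollary~\ref{cor_nfsi1} force $A_\Lambda$ to be singular and pin down $\hat{\lambda}$, hence the formula for $s_i$. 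Without this lemma (or a direct verification that the dual objective is increasing on the feasible region, so the boundary $\lambda$ is optimal), both the value of $\lambda^*$ and the claim ``not noise-free'' are unjustified. The remaining computation of $s_i$ from the certifying equations together with the obedience constraint is the right plan and matches the paper's Lemma~\ref{lem_regular_xy}.
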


Assuming that the underlying network is transitive, Proposition~\ref{prop_welfare_regular} shows that $\overline{\pi}$ is optimal as long as $\beta$ remains above a threshold specified in \eqref{regular_full_condition}.
In particular, since the cutoff value is always negative, $\overline{\pi}$ is optimal whenever the game features strategic complements ($\beta \ge 0$).
By contrast, under strategic substitutes ($\beta < 0$), $\overline{\pi}$ is optimal only if the strength of the negative externality is sufficiently small.
Specifically, when $\beta$ falls below the threshold, partial disclosure becomes optimal, where the informativeness of each agent’s signal is given explicitly in \eqref{regular_partial} in terms of the spectrum of $G$.

Intuitively, when $\beta < 0$, the designer faces a ``congestion'' trade-off: agents are concerned with the common payoff state, yet their welfare decreases when their actions are aligned too much due to strategic substitutes.
To balance the tension between the maximal use of information and mitigating congestion, the designer employs a state-identifiable information structure such that the true state is revealed in aggregate while each individual agent is kept only partially informed.

The cutoff value in \eqref{regular_full_condition} increases with both the degree $d(G)$ and the magnitude of the smallest eigenvalue $\mu_{\rm min} (\frac{G + G^\top}{2})$.
To interpret this dependence, we rewrite \eqref{regular_full_condition} in terms of $\beta d(G)$ as follows:
\begin{equation} \label{regular_full_condition2}
\beta d(G) \ge -\frac{h(G)-1}{h(G)+1}, \quad {\rm where} \quad h(G) \equiv 1-\frac{d(G)}{\mu_{\rm min} (\frac{G+G^\top}{2})}.
\end{equation}
The quantity $\beta d(G)$ is interpreted as a measure of total externality that agent $i$ receives from others, as seen in the coefficient on the linear term of $a_i$ in $i$'s payoff:
\[
\phi_i(\bm{a}_{-i}, \theta_i)
= \theta_i + 
\beta d(G)
\times
\underbrace{\frac{1}{d(G)}\sum_{j \neq i}g_{ij}a_j}_{\substack{\text{average action} \\ \text{of $i$'s neighbors}}}.
\]
In \eqref{regular_full_condition2}, the cutoff for $\beta d(G)$ is given as a decreasing function of $h(G)$, known as \emph{Hoffman bound}---a key concept in spectral graph theory that is closely related to the chromatic number of the graph.\footnote{The \emph{chromatic number} of an undirected graph is the minimum number of colors needed to color all nodes of the graph so that no two adjacent nodes share the same color. Moreover, the \emph{Hoffman bound} is known to serve as a lower bound on the chromatic number; see Section 16.9 of \cite{spielman2012}. For instance, the complete graph of size $n$ has chromatic number $n$, while any bipartite graph has chromatic number $2$, which, in both cases, coincides with the Hoffman bound. While this spectral interpretation of $h(G)$ applies specifically to undirected networks, Proposition~\ref{prop_welfare_regular} remains valid for both directed and undirected networks.}
Specifically, for nontrivial undirected graphs of size $n$, the Hoffman bound $h(G)$ ranges between $2$ and $n$, with these extremes respectively attained by bipartite and complete graphs, exemplified as below.

\begin{example}
For the complete network $K$ of size $n$, we have $d(K) = n-1$ and $\mu_{\rm min}(K) = -1$, so $h(K) = n$.
Thus, $\overline{\pi}$ is optimal if and only if $(n-1)\beta$ remains above $-(n-1)/(n+1)$, which monotonically approaches $-1$ as $n$ grows large.
On the other hand, when $\overline{\pi}$ is suboptimal, each individual's signal reduces the state variance by $(1/|\beta|-1)/n$. 
\end{example}

\begin{example}
An (undirected) network $G$ is \emph{bipartite} if $[n]$ can be partitioned into two subsets such that $g_{ij} = 0$ whenever $i$ and $j$ belong to the same subset.
It is a well-known fact in spectral graph theory that $d(G) = -\mu_{\rm min}(G)$ holds for any bipartite graph; see Section 16.6 of \cite{spielman2012}.
This implies $h(G) = 2$, and thus $\overline{\pi}$ is optimal if and only if $\beta d(G) > -1/3$.
\end{example}

\begin{example}
When $n$ is even, an undirected cyclical network (Figure~\ref{fig_sym_graph}, middle) is bipartite, since $[n]$ can be partitioned into odd- and even-indexed agents, with no edges within each group.
In general, the eigenvalues of the cyclical network of size $n$ are given by $2 \cos(2\pi j/n)$ for $j \in \{0, \ldots, n-1\}$.
When $n$ is even, the minimum eigenvalue is $-2$, attained when $j = n/2$.
When $n$ is odd, the minimum eigenvalue is $2 \cos(\pi(n \pm 1)/n)$, which is strictly greater than $-2$ but tends to $-2$ as $n$ becomes large.
This explains why, in Figure~\ref{fig_cutoff}, the cutoff value for cyclical networks with odd $n$ deviates from $-1/3$ but converges to it as $n$ increases.
Further, when $\overline{\pi}$ is suboptimal, the informativeness of each agent's signal is (nearly) equal to $(1/|\beta| - 2)/4$.
\end{example}

\begin{figure}[t]
\begin{center}
\includegraphics[width=12cm]{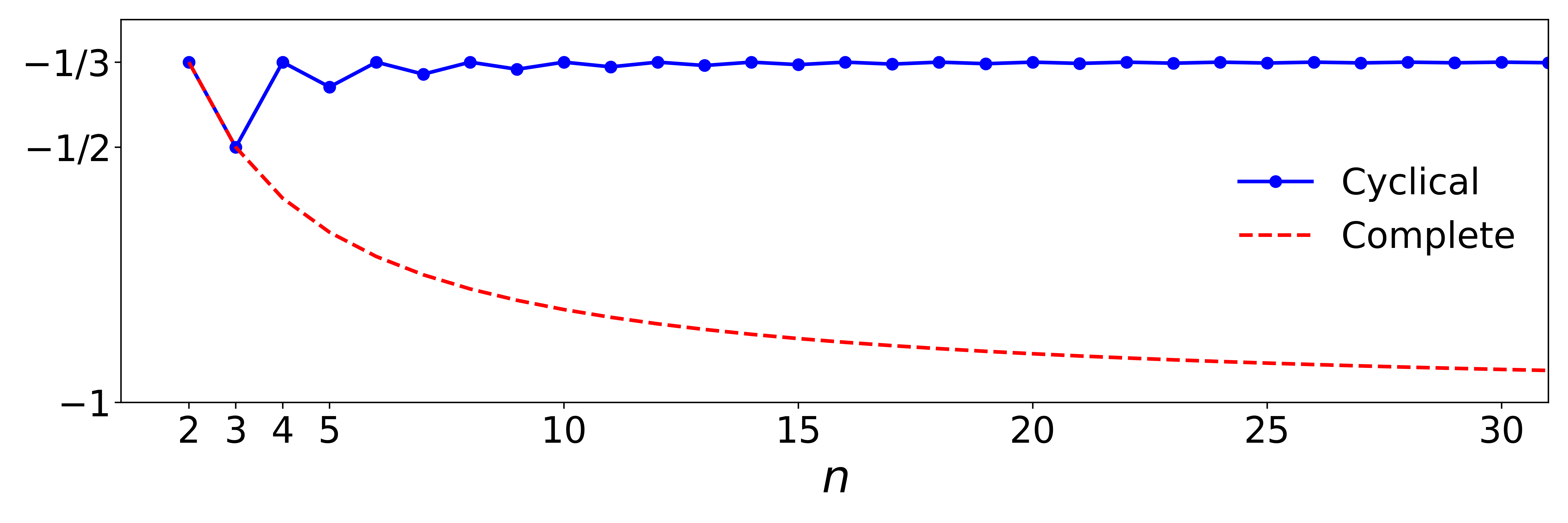}
\caption{Cutoff values that $\beta d(G)$ must exceed for $\overline{\pi}$ to be optimal, plotted against $n$.}
\label{fig_cutoff}
\end{center}
\end{figure}

Proposition~\ref{prop_welfare_regular} aligns with our numerical results both qualitatively and quantitatively.
For transitive networks, the optimality of $\overline{\pi}$ is monotonically supported by the strength of strategic interaction: there exists a threshold such that $\overline{\pi}$ is optimal if and only if $\beta d(G)$ exceeds it.
This threshold is given in \eqref{regular_full_condition2} and varies across networks and the number of agents.
In Figure~\ref{fig_cutoff}, we plot the cutoff values for complete and cyclical networks as a function of $n$.
For complete networks, the threshold decreases monotonically towards $-1$ as $n$ increases, indicating that full disclosure becomes optimal over a wider range of parameters as the population grows.
In contrast, for cyclical networks, it remains almost constant at $-1/3$.
Thus, the condition for $\overline{\pi}$ to be optimal is weaker in complete networks than in cyclical networks.\footnote{In terms of the cutoff value of $\beta$, the complete network has cutoff $-1/(n+1)$, while the cyclical graph has cutoff approximately $-1/6$, with equality when $n$ is even. Thus, without normalizing $d(G)$, the cutoff value of $\beta$ is lower for cyclical networks when $n \ge 6$.}

Moreover, even when $\overline{\pi}$ is suboptimal, the optimal information structure reveals more to each agent in complete networks than in cyclical networks.
To see this, we rewrite the signal informativeness $s_i$ from \eqref{regular_partial} in terms of the Hoffman bound $h(G)$ as follows:
\[
s_i = \frac{1}{|\beta| d(G)} - \frac{1+|\beta| d(G)}{|\beta| d(G) \cdot h(G)}.
\]
This expression increases with $h(G)$, holding $\beta d(G)$ fixed, whence the optimal signal is more informative in complete networks than in cyclical networks.
These two cases represent polar extremes, as $h(G)$ reaches its maximum value $n$ in complete networks and its minimum value $2$ in cycles.
Accordingly, they result in the most and least informative signal structures, respectively, among transitive networks that feature interchangeable agent roles.

When agent roles are not interchangeable, however, the optimality of $\overline{\pi}$ is no longer characterized by a monotonic parameter region.
Indeed, as shown in Figure~\ref{fig_star}, for undirected star-shaped networks, $\overline{\pi}$ is optimal when $\beta$ is  sufficiently small or sufficiently large, but not at intermediate values.

We anticipate that this non-monotonic pattern arises because even when all agents receive the same information, their equilibrium strategies can be negatively correlated if payoffs are asymmetric.
Specifically, in the undirected star-shaped network with general $n \ge 2$,  a straightforward calculation shows that the BNE under $\overline{\pi}$ is given by:
\[
\sigma_1 = \frac{1+(n-1)\beta}{1-(n-1)\beta^2} \cdot \theta \quad {\rm and} \quad \sigma_i =  \frac{1+\beta}{1-(n-1)\beta^2} \cdot \theta, \quad \forall i \ge 2.
\]
Observe that while all agents' strategies are linear in $\theta$, the signs of the coefficients can differ: when $-1 < \beta < -1/(n-1)$, the central agent's strategy $\sigma_1$ is negatively correlated with $\theta$, while every peripheral agent's strategy $\sigma_i$ remains positively correlated.

Intuitively, although all agents prefer to align their actions with the state, the central agent---connected to all others---faces stronger incentives to differentiate from others due to substitutability.
From the designer’s perspective, this contrarian behavior by the central agent can mitigate the welfare loss from congestion, supporting the optimality of $\overline{\pi}$.
However, at intermediate values of $\beta$, the central and peripheral agents' actions become positively correlated, eliminating this effect and rendering $\overline{\pi}$ suboptimal.


\section{Discussion}
\label{sec_conc}

This paper lays the foundation for understanding optimal informational interventions in LQG environments by developing a technical methodology grounded in relevant mathematical tools.
Our main contribution is the characterization of optimal information structures through the condition \eqref{opt}, which can be used to uncover general properties---such as state-identifiability and noise-freeness---and to derive explicit solutions in symmetric environments.
To conclude, we discuss some extensions of our analysis, some of which are addressed in the online appendix of this paper and our complementary work \cite{miyashita_ui_large}.

\medskip

{\it Beyond welfare maximization.}
The optimality condition \eqref{opt} takes the form of a high-dimensional system of equations, which is generally difficult to solve analytically.
Assuming that the designer maximizes utilitarian welfare simplifies the problem considerably, as it implies a particularly tractable objective, i.e.,  $V = I$ and $W = O$.

Alternatively, assuming agent homogeneity in the underlying network game allows us to go beyond welfare maximization.
Online Appendix~\ref{app_comp} provides closed-form solutions for general objectives in the case of complete networks.
Also, in \cite{miyashita_ui_large}, we characterize general optimal information structures in a continuum-population model with homogeneous agents, as in \cite{angeletospavan2007} and \cite{bergemannmorris2013}.\footnote{In that setting, we prove that \emph{targeted disclosure} is always optimal: the designer fully reveals the state realization to agents in a targeted group, while withholding all information from the rest.
Crucially, implementing this information revelation does not rely on the normality of the state.
It is worth noting that identifying the optimal size of the targeted group requires a continuum population, as the designer's payoff depends continuously on the fraction of informed agents. This type of fine-tuned targeting would not be feasible in a finite-agent setting.}

\medskip

{\it Beyond Gaussian state.}
This paper builds on three fundamental assumptions, i.e., agents have linear best responses, the designer has a quadratic objective, and the state is Gaussian.
Among the three, the last assumption is arguably the least critical for our analysis because, even without it, the value of the SDP problem \eqref{primal_sdp} still provides an upper bound on the designer’s expected payoff.
This is because the first part of Lemma~\ref{lem_moment} remains valid in the non-Gaussian case; that is, any action-state covariance matrix inducible by information \emph{necessarily} satisfy the same obedience and positive semidefiniteness conditions, as long as agents' have linear best responses.
Moreover, as long as the designer's payoff is quadratic, Lemma~\ref{lem_obj} implies that her objective depends only on the covariance matrix---hence, the achievable expected value must be bounded by the value of \eqref{primal_sdp}.

It remains an open question whether this upper bound is tight in the present finite-population setting.
However, with a continuum of homogeneous agents, the bound is tightly achieved by using the optimal information structure of   \cite{miyashita_ui_large}.

\medskip

{\it Public information design.}
In many real-world contexts, it is reasonable to restrict attention to public information structures, where the designer must disclose the same information to all agents.
A central bank, for instance, may disseminate macroeconomic forecasts uniformly to firms and consumers, constrained not only by practical limitations but also by institutional commitments to transparency \citep{morrisshin2002, angeletospavan2007, uiyoshizawa2015}.
Online Appendix~\ref{app_pub} analyzes LQG information design in this case.
Focusing on public signals gains analytical tractability, as it removes agents' strategic uncertainty about others' actions.
Incidentally, our problem reduces to a Bayesian persuasion problem with a single representative agent, which is analogous to the problem studied by \cite{tamura2018}.
Leveraging his solution method, we derive a closed-form expression for the optimal public information structure.
It is worth noting that the optimal public signal is generally suboptimal relative to what can be achieved with optimal private (non-public) signals.
This indicates that the designer typically benefits from manipulating agents' beliefs about strategic uncertainty.\footnote{\cite{arielibabichenko2019} and \cite{hoshino2022} also highlight the importance of leveraging strategic uncertainty in games with finite actions and binary states.}


\appendix

\renewcommand{\thesection}{A}
\renewcommand{\theequation}{A\arabic{equation}}
\renewcommand{\thelemma}{A\arabic{lemma}}
\renewcommand{\thecorollary}{A\arabic{corollary}}
\renewcommand{\theproposition}{A\arabic{proposition}}

\renewcommand{\theHlemma}{A\arabic{lemma}}
\renewcommand{\theHproposition}{A\arabic{proposition}}
\renewcommand{\theHequation}{A\arabic{equation}}

\setcounter{section}{0}
\setcounter{equation}{0}
\setcounter{lemma}{0}
\setcounter{proposition}{0}

\section{Proofs of main results}
\label{app_main}

This Appendix~\ref{app_main} contains the proofs of our main results, as well as auxiliary lemmas, appearing in Sections~\ref{sec_model}~and~\ref{sec_main}.
Proofs for the remaining sections are provided in Online Appendix~\ref{app_omit}.

We begin by recalling some mathematical results concerning SDP.
For a more comprehensive discussion, see \cite{vandenbergheboyd1996}.
Given $C,A_1,\ldots,A_m \in \calS^n$ and $\bm{b} \in \R^m$, a canonical SDP problem takes the following form:
\begin{equation} \label{app_p}
v^{\rm p} = \max_{X \in \calS^n} C \bullet X \quad \text{subject to} \quad A_i \bullet X = b_i \quad \forall i \in [m], \quad X \succeq O.
\end{equation}
The dual of this problem is given by:
\begin{equation} \label{app_d}
v^{\rm d} = \min_{\bm{y} \in \R^{m}} \bm{b}^\top \bm{y} \quad \text{subject to} \quad \sum_{i\in[m]} y_i A_i \succeq C.
\end{equation}
Each problem is said to be \emph{feasible} if it has at least one choice satisfying all the constraints.
It is \emph{strictly feasible} if there exists a feasible choice that satisfies the matrix inequality constraint strictly.
The following lemma summarizes standard results on weak and strong duality between the primal and dual problems.

\begin{lemma}\label{lem_dual1}
If both primal and dual problems are feasible, then both admit solutions and $- \infty < v^{\rm p} \le v^{\rm d} < \infty$ holds.
In particular, if a primal-dual feasible pair $(X,\bm{y})$ satisfies
\begin{equation} \label{app_kkt}
X \bullet \qty(\sum_{i\in[m]} y_i A_i - C) = 0,
\end{equation}
then $X$ attains $v^{\rm p}$ and $y$ attains $v^{\rm d}$.
If, in addition, either problem is strictly feasible, then $v^{\rm p} = v^{\rm d}$ holds, and any primal-dual feasible pair $(X,\bm{y})$ satisfies \eqref{app_kkt} if and only if $X$ attains $v^{\rm p}$ and $\bm{y}$ attains $v^{\rm d}$.
\end{lemma}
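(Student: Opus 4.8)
The plan is to isolate the elementary algebra (weak duality and the complementary-slackness characterization) from the genuinely analytic content (the equality $v^{\rm p}=v^{\rm d}$ and attainment of the optima), and to reduce the latter to a separation argument together with a compactness argument powered by strict feasibility.

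The one fact underlying weak duality is that $P \bullet X \ge 0$ whenever $P,X \succeq O$, which follows by writing $P = \sum_k \bm{p}_k \bm{p}_k^\top$ so that $P\bullet X = \sum_k \bm{p}_k^\top X \bm{p}_k \ge 0$. For any primal feasible $X$ and dual feasible $\bm{y}$, put $S = \sum_{i\in[m]} y_i A_i - C \succeq O$; using $A_i \bullet X = b_i$ gives
\[
\bm{b}^\top \bm{y} - C\bullet X = \sum_{i\in[m]} y_i \qty(A_i \bullet X) - C\bullet X = S \bullet X \ge 0 .
\]
Hence $v^{\rm p}\le v^{\rm d}$, and fixing any single feasible pair bounds both values between the finite numbers $C\bullet X$ and $\bm{b}^\top\bm{y}$, giving $-\infty<v^{\rm p}\le v^{\rm d}<\infty$. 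The same identity settles the ``in particular'' clause with no regularity hypothesis: if \eqref{app_kkt} holds, i.e.\ $S\bullet X=0$, then $C\bullet X=\bm{b}^\top\bm{y}$, and combining with $C\bullet X\le v^{\rm p}\le v^{\rm d}\le \bm{b}^\top\bm{y}$ forces equality throughout, so $X$ attains $v^{\rm p}$ and $\bm{y}$ attains $v^{\rm d}$. This computation also delivers one direction of the final ``if and only if.''

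For the strong-duality statement I would assume strict feasibility and proceed in two steps. First, to obtain $v^{\rm p}=v^{\rm d}$, I would invoke the conic separating-hyperplane argument of \cite{vandenbergheboyd1996}: separate the boundary point $(\bm{0},v^{\rm p})$ from the convex set $\{(A_1\bullet X-b_1,\ldots,A_m\bullet X-b_m,\,t):X\succeq O,\ t\le C\bullet X\}$, which is extended downward in the objective coordinate, and read off a candidate dual vector $\bm{y}$ from the separating functional; strict feasibility (Slater) is precisely what guarantees that the functional places nonzero weight on the objective coordinate, so that after normalization $\bm{y}$ is genuinely dual feasible and attains value $\le v^{\rm p}$, forcing $v^{\rm d}\le v^{\rm p}$ and hence equality. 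Second, for attainment I would use compactness: if the dual is strictly feasible, pick $\bm{y}^0$ with $S^0=\sum_i y^0_i A_i-C\succ O$, so along any primal maximizing sequence $X^{(k)}$ one has $S^0\bullet X^{(k)}=\bm{b}^\top\bm{y}^0-C\bullet X^{(k)}\to \bm{b}^\top\bm{y}^0-v^{\rm p}$, and $\mu_{\rm min}(S^0)\,\tr(X^{(k)})\le S^0\bullet X^{(k)}$ bounds $\tr(X^{(k)})$; since $\{X\succeq O:\tr(X)\le M\}$ is compact and the affine constraints are closed, a convergent subsequence produces a primal optimum. The symmetric trace bound on the dual slack, under primal strict feasibility, yields a dual optimum. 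Finally, with $v^{\rm p}=v^{\rm d}$ in hand, the gap identity from the first step gives the full ``if and only if'': a feasible pair satisfies \eqref{app_kkt} exactly when $C\bullet X=\bm{b}^\top\bm{y}$, which given $v^{\rm p}=v^{\rm d}$ is equivalent to both being optimal.

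The main obstacle is the equality $v^{\rm p}=v^{\rm d}$ under Slater, whose separation argument requires controlling the closedness of the achievable set and excluding an improper (``vertical'') separating hyperplane---exactly the role of strict feasibility. A second, more delicate point concerns attainment under \emph{mere} feasibility as asserted in the first sentence: in general SDP, feasibility of both problems alone need not yield attainment, and the compactness argument above produces a primal (resp.\ dual) optimum only from strict feasibility of the complementary problem. I would therefore lean on the strict feasibility that holds in the applications (e.g.\ in Proposition~\ref{prop_dual}) to secure existence of solutions, treating the unconditional attainment claim as shorthand for this regular case.
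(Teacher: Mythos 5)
The paper does not actually prove this lemma: it is stated as a summary of standard SDP duality facts and implicitly deferred to \cite{vandenbergheboyd1996}, so there is no in-paper argument to compare against. Your proof follows exactly the route that reference takes. The weak-duality identity $\bm{b}^\top\bm{y}-C\bullet X=\bigl(\sum_i y_iA_i-C\bigr)\bullet X\ge 0$ and the resulting complementary-slackness characterization are correct and complete, and your Slater-type separation argument for $v^{\rm p}=v^{\rm d}$ plus the compactness argument for attainment (bounding $\tr(X^{(k)})$ via $\mu_{\min}(S^0)\tr(X^{(k)})\le S^0\bullet X^{(k)}$) is the standard proof. The one loose end is in the dual-attainment step: bounding the trace of the dual slack $S^{(k)}=\sum_i y_i^{(k)}A_i-C$ bounds $\sum_i y_i^{(k)}A_i$, but to extract a convergent subsequence of $\bm{y}^{(k)}$ itself you additionally need the $A_i$ to be linearly independent (or to pass to a complement of the kernel of $\bm{y}\mapsto\sum_i y_iA_i$); this is a minor point worth one sentence.

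Your final caveat identifies a genuine imprecision in the lemma as stated, not a gap in your proof. Mere feasibility of both problems does not imply attainment in SDP: with $A_1=\bigl[\begin{smallmatrix}0&1\\1&0\end{smallmatrix}\bigr]$, $b_1=2$, $C=\bigl[\begin{smallmatrix}-1&0\\0&0\end{smallmatrix}\bigr]$, the primal \eqref{app_p} is feasible with supremum $0$ but no maximizer (feasibility forces $x_{11}x_{22}\ge 1$, hence $x_{11}>0$), while the dual \eqref{app_d} is feasible and attained at $y=0$. So the first sentence is only true with the usual qualification that attainment of each optimum follows from strict feasibility of the \emph{other} problem. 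This is harmless for the paper's use of the lemma insofar as the proof of Proposition~\ref{prop_dual} establishes strict feasibility of the dual, which yields $v^{\rm p}=v^{\rm d}$ and primal attainment; but note that when $Z$ is singular (e.g.\ the common-value case) the primal \eqref{primal_sdp} is not strictly feasible, so dual attainment there is not covered by the standard theorem either and would require a separate argument. Your decision to treat the unconditional attainment claim as shorthand for the regular case is the right call, and flagging it explicitly improves on the paper's statement.
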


For ease of reference, we record some known facts about pseudoinverse matrices.
The next lemma characterizes the positive semidefiniteness of block matrices based on generalized Schur complements; see, e.g., Proposition 10.2.5 of \cite{bernstein2018}.

\begin{lemma} \label{lem_block}
For any block matrix $\mathbf{A} = \qty[\begin{smallmatrix} A&B\\ B^\top&C \end{smallmatrix}] \in \calS^{n+m}$, we have $\mathbf{A} \succeq O$ if and only if $A \succeq O$, $B = AA^+B$, and $C \succeq B^\top A^+ B$.
\end{lemma}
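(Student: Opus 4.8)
The plan is to diagonalize $\mathbf{A}$ by a single congruence transformation built from the Moore--Penrose inverse, so that the three stated conditions appear exactly as (i) positive semidefiniteness of the top-left block, (ii) the range compatibility that clears the off-diagonal blocks, and (iii) positive semidefiniteness of the resulting Schur complement. The algebraic backbone is that, since $A$ is symmetric, $A^+$ is symmetric too and $AA^+ = A^+A$ equals the orthogonal projection onto $\ran(A)$. Consequently the condition $B = AA^+B$ is equivalent to $\ran(B)\subseteq\ran(A)$, and, by transposing, to $B^\top A^+A = B^\top$ and to $\ker(A)\subseteq\ker(B^\top)$. I would record these equivalences first, as they are used in both directions.

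Next I would establish the factorization. Setting $L \equiv \mqty[I & 0 \\ -B^\top A^+ & I]$, which is invertible with $L^\top = \mqty[I & -A^+B \\ 0 & I]$ by symmetry of $A^+$, a direct block multiplication gives
\[
L \mathbf{A} L^\top = \mqty[A & 0 \\ 0 & C - B^\top A^+ B],
\]
\emph{provided} $B = AA^+B$ holds. Indeed, this identity is precisely what forces the $(2,1)$ block $B^\top - B^\top A^+A$ and then the $(1,2)$ block $B - AA^+B$ to vanish; the surviving bottom-right block is the generalized Schur complement $C - B^\top A^+B$. Because congruence by the invertible $L$ preserves the order $\succeq$, it follows immediately that, under the range condition, $\mathbf{A}\succeq O$ if and only if $A\succeq O$ and $C - B^\top A^+B\succeq O$. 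This disposes of the ``if'' direction outright: assuming all three hypotheses, the factorization is legitimate and produces a block-diagonal matrix with both blocks positive semidefinite, hence $\mathbf{A}\succeq O$.

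For the ``only if'' direction I would start from $\mathbf{A}\succeq O$. That $A\succeq O$ is immediate by restricting the quadratic form to vectors supported on the first $n$ coordinates. To recover the range condition I would take an arbitrary $v\in\ker(A)$ and evaluate $\mathbf{A}$ on $\mqty[t v \\ w]$: since $A\succeq O$ yields $v^\top A v = 0$, the form collapses to $2t\,v^\top B w + w^\top C w$, and nonnegativity for all $t\in\R$ forces the linear coefficient $v^\top B w$ to vanish for every $w$, i.e.\ $\ker(A)\subseteq\ker(B^\top)$, which is exactly $B = AA^+B$. With the range condition now in hand, the factorization applies and $L\mathbf{A}L^\top\succeq O$ forces its $(2,2)$ block $C - B^\top A^+B$ to be positive semidefinite, completing the equivalence.

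I expect the only delicate point to be the bookkeeping inside the ``provided $B = AA^+B$'' clause: one must check both that the two off-diagonal blocks of $L\mathbf{A}L^\top$ genuinely vanish and that the effective Schur complement is the clean expression $C - B^\top A^+B$ rather than something carrying an uncancelled factor of $A^+A$. This is exactly where the symmetry identities $A^+=(A^+)^\top$ and $AA^+=A^+A$ do the essential work, and it is the step I would write out most carefully; the remainder is routine block arithmetic together with the standard invariance of $\succeq$ under congruence.
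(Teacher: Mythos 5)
Your proof is correct. Note, however, that the paper does not actually prove this statement: it records Lemma~\ref{lem_block} as a known fact and cites Proposition 10.2.5 of \cite{bernstein2018} for it, so there is no in-paper argument to compare against. Your argument is the standard self-contained derivation via the generalized Schur complement: the congruence by $L = \qty[\begin{smallmatrix} I & 0 \\ -B^\top A^+ & I\end{smallmatrix}]$ is computed correctly (the off-diagonal blocks of $L\mathbf{A}L^\top$ are $B - AA^+B$ and its transpose, and the $(2,2)$ block reduces to $C - B^\top A^+B$ exactly when $B^\top A^+A = B^\top$), the equivalence $B = AA^+B \iff \ker(A)\subseteq\ker(B^\top)$ is right because $AA^+$ is the orthogonal projector onto $\ran(A)$ for symmetric $A$, and the only-if direction correctly extracts the range condition from the affine-in-$t$ form $2t\,v^\top Bw + w^\top Cw \ge 0$ on $\ker(A)$. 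Nothing is missing; this is a complete proof of the cited fact.
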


The next result provides a well-known conditional distribution formula for multivariate normal variables.
The version stated below, drawn from \cite{rao1973}, does not require the non-singularity of the variance matrix.

\begin{lemma} \label{lem_cond_normal}
Let $\bm{\xi} = (\bm{\xi}_1, \bm{\xi}_2)$ be a multivariate normal random vector with
\[
\E \mqty[\bm{\xi}_1 \\ \bm{\xi}_2] = \mqty[\bm{\mu}_1 \\ \bm{\mu}_2], \quad \Var \mqty[\bm{\xi}_1 \\ \bm{\xi}_2] = \mqty[\Sigma_{11} & \Sigma_{12} \\ \Sigma_{21} & \Sigma_{22}].
\]
Then,
\[
\E\qty[\bm{\xi}_1 \mid \bm{\xi}_2] = \mu_1 + \Sigma_{12}\Sigma_{22}^+\qty(\bm{\xi}_2-\bm{\mu}_2), \quad
\Var\qty[\bm{\xi}_1 \mid \bm{\xi}_2] = \Sigma_{11} - \Sigma_{12}\Sigma_{22}^+ \Sigma_{21}.
\]
\end{lemma}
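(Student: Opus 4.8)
The plan is to prove this by the classical Gaussian decomposition into a ``predicted'' part and an orthogonal residual, with the one subtlety being that $\Sigma_{22}$ may be singular, so that the pseudoinverse must be handled through the range inclusion furnished by positive semidefiniteness (Lemma~\ref{lem_block}) rather than by an ordinary inverse.

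First I would define the residual
\[
\bm{\zeta} \equiv \bm{\xi}_1 - \bm{\mu}_1 - \Sigma_{12}\Sigma_{22}^+\qty(\bm{\xi}_2 - \bm{\mu}_2).
\]
Since $\bm{\zeta}$ is an affine image of the jointly normal vector $\bm{\xi} = (\bm{\xi}_1,\bm{\xi}_2)$, the pair $(\bm{\zeta}, \bm{\xi}_2)$ is jointly normal with $\E[\bm{\zeta}] = \0$. The entire argument then reduces to two covariance computations together with the standard fact that, for jointly Gaussian vectors, uncorrelatedness implies independence.

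The key step, which I expect to be the main obstacle, is establishing $\Cov(\bm{\zeta}, \bm{\xi}_2) = O$. Expanding and using $\Sigma_{21} = \Sigma_{12}^\top$ gives $\Cov(\bm{\zeta}, \bm{\xi}_2) = \Sigma_{12} - \Sigma_{12}\Sigma_{22}^+\Sigma_{22}$, so I must verify the identity $\Sigma_{12}\Sigma_{22}^+\Sigma_{22} = \Sigma_{12}$. This is exactly where positive semidefiniteness enters. Applying Lemma~\ref{lem_block} to the permuted (still positive semidefinite) block matrix $\qty[\begin{smallmatrix}\Sigma_{22} & \Sigma_{21}\\ \Sigma_{12} & \Sigma_{11}\end{smallmatrix}]$, with $A = \Sigma_{22}$ and $B = \Sigma_{21}$, yields $\Sigma_{21} = \Sigma_{22}\Sigma_{22}^+\Sigma_{21}$. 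Transposing this and using $\Sigma_{22}^\top = \Sigma_{22}$ and $(\Sigma_{22}^+)^\top = \Sigma_{22}^+$ (the pseudoinverse of a symmetric matrix is symmetric) delivers $\Sigma_{12}\Sigma_{22}^+\Sigma_{22} = \Sigma_{12}$, so $\Cov(\bm{\zeta}, \bm{\xi}_2) = O$ and hence $\bm{\zeta}$ and $\bm{\xi}_2$ are independent.

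With independence in hand the two formulas follow directly. Conditioning on $\bm{\xi}_2$ and using $\E[\bm{\zeta}\mid\bm{\xi}_2] = \E[\bm{\zeta}] = \0$ gives the conditional mean $\bm{\mu}_1 + \Sigma_{12}\Sigma_{22}^+(\bm{\xi}_2 - \bm{\mu}_2)$, while $\Var[\bm{\xi}_1\mid\bm{\xi}_2] = \Var[\bm{\zeta}\mid\bm{\xi}_2] = \Var[\bm{\zeta}]$. A direct expansion of $\Var[\bm{\zeta}]$ produces cross terms $-2\,\Sigma_{12}\Sigma_{22}^+\Sigma_{21}$ and a quadratic term $\Sigma_{12}\Sigma_{22}^+\Sigma_{22}\Sigma_{22}^+\Sigma_{21}$, the latter collapsing to $\Sigma_{12}\Sigma_{22}^+\Sigma_{21}$ via $\Sigma_{22}^+\Sigma_{22}\Sigma_{22}^+ = \Sigma_{22}^+$, so that $\Var[\bm{\zeta}] = \Sigma_{11} - \Sigma_{12}\Sigma_{22}^+\Sigma_{21}$. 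Beyond the range-inclusion identity, these are routine bookkeeping calculations, and I anticipate no further difficulty.
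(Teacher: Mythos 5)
Your proof is correct. The paper itself offers no proof of this lemma---it is imported as a known result from Rao (1973)---so there is no in-paper argument to compare against. Your residual decomposition $\bm{\zeta} = \bm{\xi}_1 - \bm{\mu}_1 - \Sigma_{12}\Sigma_{22}^+(\bm{\xi}_2-\bm{\mu}_2)$ is the standard route, and you correctly identify and close the only nontrivial gap in the singular case: the range inclusion $\Sigma_{12}\Sigma_{22}^+\Sigma_{22} = \Sigma_{12}$, which you obtain by applying Lemma~\ref{lem_block} to the permuted covariance block matrix and transposing. The remaining steps (uncorrelated jointly Gaussian vectors are independent; the variance expansion collapsing via $\Sigma_{22}^+\Sigma_{22}\Sigma_{22}^+ = \Sigma_{22}^+$) are all sound.
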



\subsection*{Proof of Lemma \ref{lem_moment}}

Fix any information structure $\pi$.
Taking unconditional expectations of both sides of \eqref{FOC} and applying the law of iterated expectations yields \eqref{moment1}.
Subtracting \eqref{moment1} from \eqref{FOC}, we obtain
\[
\sum_{j \in [n]} q_{ij} \bigl(\E[\sigma_j^\pi \mid \eta_i] - \bar{a}_j \bigr)
= \sum_{k \in [m]} r_{ik} \bigl(\E[\theta_k \mid \eta_i] - \bar{\theta}_k \bigr).
\]
Multiplying both sides of this equation by $(\sigma_i-\E[\sigma_i])$ and then taking unconditional expectations, we obtain \eqref{moment2}, again by the law of iterated expectations.\footnote{To see this more explicitly, note that $\sigma^\pi_i$ is $\eta_i$-measurable. Hence, by the law of iterated expectations,
\[
\E\bigl[\bigl(\sigma^\pi_i - \bar{a}_i \bigr) \bigl(\E[\sigma_j^\pi \mid \eta_i] - \bar{a}_j \bigr) \bigr] = \E \bigl[ \E\bigl[ \bigl(\sigma^\pi_i - \bar{a}_i \bigr) \bigl(\sigma_j^\pi - \bar{a}_j \bigr) \mid \eta_i \bigr] \bigr] = \E\bigl[\bigl(\sigma^\pi_i - \bar{a}_i \bigr) \bigl(\sigma_j^\pi - \bar{a}_j \bigr) \bigr]
= \Cov \bigl[\sigma^\pi_i, \sigma^\pi_j \bigr].
\]
Similarly, we can see that $\E[(\sigma^\pi_i - \bar{a}_i \bigr) (\E[\theta_k \mid \eta_i] - \bar{\theta}_k)] = \Cov [\sigma^\pi_i, \theta_k]$.}

Conversely, fix any feasible pair $(X,Y)$.
Consider a Gaussian information structure $\pi$ such that each agent $i$ receives a one-dimensional Gaussian signal $\eta_i$ satisfying $\E\qty[\bm{\eta}]=\bar{\bm{a}}$, $\Var\qty[\bm{\eta}] = X$, and $\Cov\qty[\bm{\eta}, \bm{\theta}] = Y$.
Such an information structure exists since $\mathbf{M}_{X,Y} \succeq O$.
By $\diag(XQ^\top) = \diag(YR^\top)$, the following holds for all $i \in [n]$:
\[
\sum_{j\in[n]}q_{ij}\Cov[\eta_i, \eta_j]=\sum_{k\in[m]} r_{ik}\Cov[\eta_i, \theta_{k}].
\]
Assume for now that $\Var\qty[\eta_i] > 0$.
Multiplying both sides by $(\eta_i-\bar{a}_i)/\Var[\eta_i]$, we obtain
\[
\sum_{j\in[n]}  q_{ij} \qty(\frac{\Cov[\eta_i, \eta_j]}{\Var[\eta_i]} \cdot \qty(\eta_i-\bar{a}_i)) = \sum_{k\in[m]} r_{ik} \qty(\frac{\Cov[\eta_i, \theta_{k}]}{\Var[\eta_i]} \cdot \qty(\eta_i-\bar{a}_i)).
\]
By Lemma \ref{lem_cond_normal}, this is equivalent to
\[
\sum_{j\in[n]}  q_{ij} \qty(\E[\eta_j \mid \eta_i]-\bar{a}_j)=\sum_{k\in[m]} r_{ik}\qty(\E[\theta_{k} \mid \eta_i]-\bar{\theta}_k).
\]
By the construction of $\bar{\bm{a}}$, this simplifies to
\begin{equation}
\sum_{j \in [n]}  q_{ij}\E[\eta_j \mid \eta_i]=\sum_{k \in [m]} r_{ik}\E[\theta_{k} \mid \eta_i]. \label{FOC in proof}
\end{equation}
Note that \eqref{FOC in proof} also holds when $\Var\qty[\eta_i] = 0$, as in that case, we have $\E[\eta_j \mid \eta_i] = \bar{a}_j$ and $\E[\theta_{k} \mid \eta_i] = \bar{\theta}_k$.
Now, let $\bm{\sigma}$ be the strategy profile with $\sigma_i(\eta_i)=\eta_i$ for all $i \in [n]$.
Then \eqref{FOC in proof} implies that $\sigma$ forms a BNE under $\pi$.
Moreover, we have $\mathbf{M}^{\pi} = \mathbf{M}_{X,Y}$ by construction.
\hfill {\it Q.E.D.}


\subsection*{Proof of Lemma \ref{lem_obj}}

Fix any information structure $\pi$ and any realization of $\bm{\sigma}^\pi$ and $\bm{\theta}$.
Since the trace operator is invariant under cyclic permutations, we have
\[
\mqty[\bm{\sigma}^\pi- \bar{\bm{a}} \\ \bm{\theta} -\bar{\bm{\theta}}]^\top \mathbf{V} \mqty[\bm{\sigma}^\pi- \bar{\bm{a}} \\ \bm{\theta} -\bar{\bm{\theta}}]
= \tr(\mqty[\bm{\sigma}^\pi- \bar{\bm{a}} \\ \bm{\theta} -\bar{\bm{\theta}}]^\top \mathbf{V} \mqty[\bm{\sigma}^\pi- \bar{\bm{a}} \\ \bm{\theta} -\bar{\bm{\theta}}])
= \tr(\mathbf{V} \mqty[\bm{\sigma}^\pi- \bar{\bm{a}} \\ \bm{\theta} -\bar{\bm{\theta}}] \mqty[\bm{\sigma}^\pi- \bar{\bm{a}} \\ \bm{\theta} -\bar{\bm{\theta}}]^\top).
\]
By linearity of the trace operator, taking expectations yields
\[
\E \qty[\mqty[\bm{\sigma}^\pi- \bar{\bm{a}} \\ \bm{\theta} -\bar{\bm{\theta}}]^\top \mathbf{V} \mqty[\bm{\sigma}^\pi- \bar{\bm{a}} \\ \bm{\theta} -\bar{\bm{\theta}}]]
= \tr \qty(\mathbf{V} \mathbf{M}^\pi) = \mathbf{V} \bullet \mathbf{M}^\pi.
\]
Hence, $\E \qty[v(\bm{\sigma}^\pi, \bm{\theta})]$ is given as in the lemma by the specification of $v$ as in \eqref{objective}.
\hfill {\it Q.E.D.}


\subsection*{Proof of Proposition \ref{prop_dual}}

To apply Lemma~\ref{lem_dual1}, we first reformulate the problem \eqref{primal_sdp} in the canonical form \eqref{app_p}.
For $i,j \in [n+m]$, let $E_{ij}$ be the square matrix of size $n+m$ whose $(i,j)$-entry is $1$ and all others are $0$.
For each $i \in [n]$, define $\Phi_i \in \calS^{n+m}$ by
\[
\Phi_i = \qty(\sum_{j \in [n]} q_{ij}\frac{E_{ij} + E_{ji}}{2}) - \qty(\sum_{k\in[m]}r_{ik}\frac{E_{i,n+k} + E_{n+k,i}}{2}).
\]
Also, for each $k,k' \in [m]$, define $\Psi_{kk'} \in \calS^{n+m}$ by
\[
\Psi_{kk'} = \frac{E_{n+k,n+k'} + E_{n+k',n+k}}{2}.
\]

Now, consider any symmetric block matrix $\mathbf{M}$ of the form
\begin{equation*}
\mathbf{M} = \mqty[X & Y \\ Y^\top & Z], \quad \text{where} \quad X \in \calS^n, \quad Y \in \calM^{n,m}, \quad Z \in \calS^m.
\end{equation*}
Since $X$ is symmetric, we have $\Phi_i \bullet \mathbf{M} = 0$ if and only if $\sum_{j \in [n]} q_{ij} x_{ij} = \sum_{k\in[m]} r_{ik} y_{ik}$.
Also, since $Z$ is symmetric, $\Psi_{kk'} \bullet \mathbf{M} = \Cov[\theta_k, \theta_{k'}]$ if and only if $z_{kk'} = \Cov[\theta_k, \theta_k']$.
Hence, the problem \eqref{primal_sdp} can be rewritten as follows:
\begin{alignat*}{2}
v^{\rm p} = &\ \max_{\mathbf{M} \in \calS^{n+m}} \quad & &\mathbf{V} \bullet \mathbf{M} \\
&\text{subject to} \quad & &\Phi_i \bullet \mathbf{M} = 0, \quad \forall i \in [n], \\
& & &\Psi_{kk'} \bullet \mathbf{M} = \Cov\qty[\theta_k, \theta_{k'}], \quad \forall k,k' \in [m], \\
& & & \mathbf{M} \succeq O.
\end{alignat*}
This formulation is in the canonical SDP form.
Note that the problem is feasible since all constraints are satisfied by letting $X = O$, $Y = O$, and $Z = \Var[\bm\theta]$.

Let $\lambda_i$ denote the Lagrangian multiplier for each equality constraint $\Phi_i \bullet \mathbf{M} = 0$, and let $\gamma_{kk'}$ denote the one for $\Psi_{kk'} \bullet \mathbf{M} = \Cov[\theta_k,\theta_{k'}]$.
Define $\Lambda = \Diag (\lambda_1,\ldots,\lambda_n)$ and $\Gamma = [\gamma_{kk'}]_{m \times m}$.
Following \eqref{app_d}, the dual problem can be formulated as follows:
\begin{alignat*}{2}
v^{\rm d} = &\min_{\Lambda \in \calS^n_{\rm diag},\, \Gamma \in \calM^{m \times m}}  \quad & &\Var \qty[\bm \theta] \bullet \Gamma \\
&\quad \text{subject to} & &\sum_{i \in [n]} \lambda_i \Phi_i + \sum_{k,k' \in [m]} \gamma_{kk'} \Psi_{kk'} \succeq \mathbf{V}.
\end{alignat*}
From the definitions of $\Phi_i$ and $\Psi_{kk'}$, we obtain
\[
\sum_{i \in [n]} \lambda_i \Phi_i + \sum_{k,k' \in [m]} \gamma_{kk'} \Psi_{kk'}
= \mqty[ \qty(\Lambda Q + Q^\top \Lambda)/2 & - \Lambda R/2 \\ - R^\top \Lambda/2 & \qty(\Gamma+\Gamma^\top)/2].
\]
Since $v^{\rm d}$ depends on $\Gamma$ only through its symmetric part, it is without loss of optimality to restrict $\Gamma$ to be symmetric.
With this restriction, the above dual problem coincides with \eqref{dual_sdp}.

We show that the dual problem is strictly feasible.
Take $\Lambda = aI$ and $\Gamma = ab I$ for some positive numbers $a,b > 0$.
Then,
\[
\mqty[ \qty(\Lambda Q + Q^\top \Lambda)/2 & - \Lambda R/2 \\ - R^\top \Lambda/2 & \Gamma] = a \underbrace{\mqty[(Q+Q^\top)/2 & -R/2 \\ -R^\top/2 & bI]}_{= \ \mathbf{B}}.
\]
The matrix $\mathbf{B}$ is positive definite if and only if
\[
\frac{Q+Q^\top}{2} - \frac{RR^\top}{4b} \succ O.
\]
Since $Q+Q^\top \succ O$, this inequality holds for sufficiently large $b$.
Moreover, fixing such $b$, we have $a \mathbf{B} \succ \mathbf{V}$ for sufficiently large $a$.
Thus, by Lemma \ref{lem_dual1} it follows that $v^{\rm p} = v^{\rm d}$, and both problems are attainable.
\hfill {\it Q.E.D.}


\subsection*{Proof of Theorem~\ref{thm_main}}

First, given any information structure $\pi$, let $X = \Var [\bm{\sigma}^\pi]$ and $Y = \Cov [\bm{\sigma}^\pi, \bm{\theta}]$.
By Proposition~\ref{prop_dual} and Lemma~\ref{lem_dual1}, the primal feasible pair $(X,Y)$ attains $v^{\rm p}$---and thus $\pi$ is optimal---if and only if $\mathbf{M}_{\Lambda} \bullet \mathbf{M}_{X,Y} = 0$ holds for some dual feasible $\Lambda$.
Moreover, since both $\mathbf{M}_{X,Y}$ and $\mathbf{M}_{\Lambda}$ are positive semidefinite, the condition $\mathbf{M}_{\Lambda} \bullet \mathbf{M}_{X,Y} = 0$ is equivalent to $\mathbf{M}_{\Lambda} \mathbf{M}_{X,Y} = O$, which expands into the following system of matrix equations:
\begin{equation} \label{dual_cond1}
A_\Lambda X = B_\Lambda Y^\top, \quad
A_\Lambda Y = B_\Lambda Z, \quad
B_\Lambda^\top X = C_\Lambda Y^\top, \quad
B_\Lambda^\top Y = C_\Lambda Z.
\end{equation}
In this system, the first two equations are precisely \eqref{cs}.
Thus, it remains to show that the last two equations are redundant given the first two.

Consider any primal feasible $(X,Y)$ and dual feasible $\Lambda$ satisfying the first two equations in \eqref{dual_cond1}.
Since $C_\Lambda = B_\Lambda^\top A_\Lambda^+ B_\Lambda$ by definition, we obtain
\[
B_\Lambda^\top X - C_\Lambda Y^\top
= B_\Lambda^\top X - B_\Lambda^\top A_\Lambda^+ B_\Lambda Y^\top
= \qty(B_\Lambda^\top - B_\Lambda^\top A_\Lambda^+ A_\Lambda) X = O,
\]
where the second equality follows from $A_\Lambda X = B_\Lambda Y^\top$, and the third from $\mathbf{M}_\Lambda \succeq O$ together with Lemma~\ref{lem_block}.
Similarly,
\[
B_\Lambda^\top Y - C_\Lambda Z
= B_\Lambda^\top Y - B_\Lambda^\top A_\Lambda^+ B_\Lambda Z
= \qty(B_\Lambda^\top - B_\Lambda^\top A_\Lambda^+ A_\Lambda) Y = O,
\]
where the second equality follows from $A_\Lambda Y = B_\Lambda Z$, and the third again from $\mathbf{M}_\Lambda \succeq O$ and Lemma~\ref{lem_block}.
Thus, the last two equations in \eqref{dual_cond1} are automatically satisfied.

Now, consider any Gaussian information structure $\pi$.
Suppose that $\pi$ satisfies \eqref{opt} with some dual feasible diagonal matrix $\Lambda$.
Multiplying both sides of \eqref{opt} by $\qty(\bm{\sigma}^\pi -\bar{\bm{a}})^\top$ from the right and taking unconditional expectations, we obtain
\[
A_\Lambda \qty(\bm{\sigma}^\pi -\bar{\bm{a}}) \qty(\bm{\sigma}^\pi -\bar{\bm{a}})^\top = B_\Lambda \qty(\bm{\theta} -\bar{\bm{\theta}}) \qty(\bm{\sigma}^\pi -\bar{\bm{a}})^\top
\quad \Longrightarrow \quad
A_\Lambda \Var \qty[\bm{\sigma}^\pi] = B_\Lambda \Cov \qty[\bm{\theta}, \bm{\sigma}^\pi].
\]
Similarly, multiplying both sides of \eqref{opt} by $\qty(\bm{\theta} -\bar{\bm{\theta}})^\top$ from the right and taking unconditional expectations, we have
\[
A_\Lambda \qty(\bm{\sigma}^\pi -\bar{\bm{a}}) \qty(\bm{\theta} -\bar{\bm{\theta}})^\top = B_\Lambda \qty(\bm{\theta} -\bar{\bm{\theta}}) \qty(\bm{\theta} -\bar{\bm{\theta}})^\top
\quad \Longrightarrow \quad
A_\Lambda \Cov \qty[\bm{\sigma}^\pi, \bm{\theta}] = B_\Lambda Z.
\]
Hence, the first part of Theorem~\ref{thm_main} implies that $\pi$ is optimal.

To prove the converse, we show that \eqref{cs} can be equivalently stated as follows.

\begin{lemma} \label{lem_cs}
For any primal feasible pair $(X,Y)$ and dual feasible $\Lambda$, the following three conditions are equivalent:
\begin{enumerate}[\rm (CS1).]
\item \label{cs1}
$A_\Lambda X = B_\Lambda Y^\top$ and $A_\Lambda Y = B_\Lambda Z$.
\item \label{cs2}
$A_\Lambda X = B_\Lambda Y^\top$ and $B_\Lambda \qty(Z-Y^\top X^+Y) = O$.
\item \label{cs3}
$A_\Lambda \qty(X-YZ^+Y^\top) = O$ and $A_\Lambda Y = B_\Lambda Z$.
\end{enumerate}
\end{lemma}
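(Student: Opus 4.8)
The plan is to prove the two equivalences (CS1)$\iff$(CS2) and (CS1)$\iff$(CS3) separately, exploiting that each pair already shares one of the two matrix equations, so that only the remaining equation needs to be rewritten. The key observation is that this rewriting is driven entirely by \emph{primal} feasibility: the dual feasibility of $\Lambda$ and the explicit form of $C_\Lambda$ play no role here. First I would extract the relevant range identities. Since $(X,Y)$ is primal feasible, $\mathbf{M}_{X,Y}\succeq O$, so Lemma~\ref{lem_block} applied with top-left block $X$ yields $Y=XX^+Y$. Permuting the two diagonal blocks (a congruence by a permutation matrix, which preserves positive semidefiniteness) gives $\mqty[Z & Y^\top \\ Y & X]\succeq O$, and Lemma~\ref{lem_block} applied with top-left block $Z$ then yields $Y^\top=ZZ^+Y^\top$. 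These two identities are exactly what make the off-diagonal terms telescope.

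For (CS1)$\iff$(CS2), both conditions contain $A_\Lambda X=B_\Lambda Y^\top$, so it suffices to show that, under this shared equation, $A_\Lambda Y=B_\Lambda Z$ is equivalent to $B_\Lambda(Z-Y^\top X^+Y)=O$. I would substitute $B_\Lambda Y^\top=A_\Lambda X$ and then $XX^+Y=Y$ to obtain
\[
B_\Lambda\qty(Z-Y^\top X^+Y)=B_\Lambda Z-A_\Lambda XX^+Y=B_\Lambda Z-A_\Lambda Y,
\]
so the left-hand side vanishes precisely when $A_\Lambda Y=B_\Lambda Z$. Symmetrically, for (CS1)$\iff$(CS3) both conditions contain $A_\Lambda Y=B_\Lambda Z$, and substituting this together with $ZZ^+Y^\top=Y^\top$ gives
\[
A_\Lambda\qty(X-YZ^+Y^\top)=A_\Lambda X-B_\Lambda ZZ^+Y^\top=A_\Lambda X-B_\Lambda Y^\top,
\]
which vanishes precisely when $A_\Lambda X=B_\Lambda Y^\top$. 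Together these establish that all three conditions are equivalent.

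There is no genuine analytical obstacle; the only point requiring care is to recognize that the two pseudoinverse range identities $Y=XX^+Y$ and $Y^\top=ZZ^+Y^\top$ both come from primal feasibility via Lemma~\ref{lem_block}, and that each must be invoked in the correct orientation---the $X$-version for the Schur complement $Z-Y^\top X^+Y$ in (CS2), and the $Z$-version for the Schur complement $X-YZ^+Y^\top$ in (CS3). I would also verify explicitly that permuting the diagonal blocks of $\mathbf{M}_{X,Y}$ preserves positive semidefiniteness, so that Lemma~\ref{lem_block} is legitimately applicable in both orientations.
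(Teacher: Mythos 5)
Your proposal is correct and follows essentially the same route as the paper's proof: both equivalences are obtained by substituting the shared equation and invoking the range identities $Y=XX^+Y$ and $Y^\top=ZZ^+Y^\top$ from Lemma~\ref{lem_block} applied to $\mathbf{M}_{X,Y}\succeq O$. Your explicit remark that the second identity requires permuting the diagonal blocks (a PSD-preserving congruence) is a small point of care that the paper leaves implicit, but the argument is otherwise identical.
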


\begin{proof}
Suppose $A_\Lambda X = B_\Lambda Y^\top$.
Since $Y=XX^+Y$ by Lemma~\ref{lem_block}, we have
\[
A_\Lambda Y = A_\Lambda XX^+Y = B_\Lambda Y^\top X^+ Y,
\]
which implies $A_\Lambda Y = B_\Lambda Z$ if and only if $B_\Lambda Z = B_\Lambda Y^\top X^+ Y$.
This establishes the equivalence between (CS1) and (CS2).

Similarly, suppose $A_\Lambda Y = B_\Lambda Z$.
Since $Y^\top = ZZ^+ Y^\top$, we have
\[
B_\Lambda Y^\top = B_\Lambda ZZ^+Y^\top = A_\Lambda YZ^+ Y^\top,
\]
which implies $B_\Lambda Y^\top = A_\Lambda X$ if and only if $A_\Lambda X = A_\Lambda YZ^+Y^\top$.
This establishes the equivalence between (CS1) and (CS3).
\end{proof}

Now, suppose that $\pi$ is optimal.
By Lemma~\ref{lem_cs}, this implies that (CS3) is satisfied.
We study the conditional distribution of the linear statistic $\bm{\xi} = A_\Lambda (\bm{\sigma}^\pi - \bar{\bm{a}})$ given $\bm{\theta}$.
Since $\bm{\xi}$ and $\bm{\theta}$ are jointly normally distributed, Lemma~\ref{lem_cond_normal} implies
\[
\E \qty[\bm{\xi} \mid \bm{\theta}]
= A_\Lambda \E \qty[\bm{\sigma}^\pi - \bar{\bm{a}} \mid \bm{\theta}]
= A_\Lambda YZ^+ \qty(\bm{\theta} - \bar{\bm{\theta}})
= B_\Lambda ZZ^+ \qty(\bm{\theta} - \bar{\bm{\theta}}),
\]
where the last equality follows from the second equation in (CS3).
Moreover, the conditional variance is
\[
\Var \qty[\bm{\xi} \mid \bm{\theta}]
= A_\Lambda \Var \qty[\bm{\sigma}^\pi - \bar{\bm{a}} \mid \bm{\theta}] A_\Lambda
= A_\Lambda \qty(X-YZ^+Y^\top) A_\Lambda
= O,
\]
where the last equality follows from the first equation in (CS3).
Hence, $\bm{\xi} = \E \qty[\bm{\xi} \mid \bm{\theta}]$ a.s., or equivalently,
\[
A_\Lambda (\bm{\sigma}^\pi - \bar{\bm{a}}) = B_\Lambda ZZ^+ \qty(\bm{\theta} - \bar{\bm{\theta}}) \quad {\rm a.s.}
\]
Next, define $\bm{\zeta} = (I-ZZ^+) \bm{\theta}$ and compute its variance:
\begin{align*}
\Var \qty[\bm{\zeta}]
&= (I-ZZ^+) Z (I-ZZ^+) \\
&= Z - ZZ^+Z - ZZZ^+ + ZZ^+ZZZ^+ = Z - Z - ZZZ^+ + ZZZ^+ = O.
\end{align*}
This implies that $\bm{\theta} = ZZ^+ \bm{\theta}$ a.s., and therefore $A_\Lambda (\bm{\sigma}^\pi - \bar{\bm{a}}) = B_\Lambda \qty(\bm{\theta} - \bar{\bm{\theta}})$ a.s.
\hfill {\it Q.E.D.}


\subsection*{Proof of Corollary~\ref{cor_certify}}
As shown in the proof of Theorem~\ref{thm_main}, for any primal feasible $(X,Y)$ and dual feasible $\Lambda$, we have $\mathbf{V} \bullet \mathbf{M}_{X,Y} = Z \bullet C_\Lambda$ if and only if $A_\Lambda X = B_\Lambda Y^\top$ and $A_\Lambda Y = B_\Lambda Z$.
The corollary follows immediately from this equivalence.
\hfill {\it Q.E.D.}


\bibliography{reference}


\newpage

\begin{center}
\LARGE{Online Appendix to ``LQG Information Design''}

\bigskip

\large{Masaki Miyashita \qquad Takashi Ui}
\end{center}

\begin{abstract}
This online appendix contains additional proofs and results that do not appear in the main text of ``LQG Information Design.''  
Appendix~\ref{app_omit} provides proofs for results in Sections~\ref{sec_nfsi}, \ref{sec_personal}, and \ref{sec_network} of the main paper.  
Appendix~\ref{app_comp} derives general optimal information structures in complete networks.  
Appendix~\ref{app_pub} characterizes the optimal information structure within the class of public information structures.
\end{abstract}


\setcounter{page}{1}

\renewcommand{\thesection}{B}
\renewcommand{\theequation}{B\arabic{equation}}
\renewcommand{\thelemma}{B\arabic{lemma}}
\renewcommand{\thecorollary}{B\arabic{corollary}}
\renewcommand{\theproposition}{B\arabic{proposition}}

\renewcommand{\theHlemma}{B\arabic{lemma}}
\renewcommand{\theHproposition}{B\arabic{proposition}}
\renewcommand{\theHequation}{B\arabic{equation}}

\setcounter{section}{1}
\setcounter{equation}{0}
\setcounter{lemma}{0}
\setcounter{proposition}{0}

\section{Omitted proofs}
\label{app_omit}

We begin by presenting a few additional lemmas that will be used in the remaining proofs.
The next result shows that when $Z$ is nonsingular, dual feasibility simplifies to a lower-dimensional condition at the optimum.

\begin{lemma} \label{lem_Zfull}
Suppose that $Z \succ O$.
If a diagonal matrix $\Lambda$ satisfies \eqref{cs} with some primal feasible pair $(X,Y)$, then $\mathbf{M}_\Lambda \succeq O$ if and only if $A_\Lambda \succeq O$.
\end{lemma}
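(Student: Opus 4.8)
The plan is to reduce the claim to the Schur-complement characterization of positive semidefiniteness in Lemma~\ref{lem_block}, and then exploit the invertibility of $Z$ to control the range of $B_\Lambda$. Recall that $A_\Lambda$ is symmetric (both $\tfrac{\Lambda Q + Q^\top\Lambda}{2}$ and $V$ are), and that by definition $C_\Lambda = B_\Lambda^\top A_\Lambda^+ B_\Lambda$. Applying Lemma~\ref{lem_block} to $\mathbf{M}_\Lambda = \qty[\begin{smallmatrix} A_\Lambda & -B_\Lambda \\ -B_\Lambda^\top & C_\Lambda \end{smallmatrix}]$, the condition $\mathbf{M}_\Lambda \succeq O$ is equivalent to the conjunction of three facts: $A_\Lambda \succeq O$; the range condition $B_\Lambda = A_\Lambda A_\Lambda^+ B_\Lambda$; and the Schur inequality $C_\Lambda \succeq B_\Lambda^\top A_\Lambda^+ B_\Lambda$. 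The third holds automatically \emph{with equality}, since $C_\Lambda$ is defined to be exactly $B_\Lambda^\top A_\Lambda^+ B_\Lambda$. Hence $\mathbf{M}_\Lambda \succeq O$ if and only if both $A_\Lambda \succeq O$ and $B_\Lambda = A_\Lambda A_\Lambda^+ B_\Lambda$.

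With this reduction in hand, the ``only if'' direction is immediate: $\mathbf{M}_\Lambda \succeq O$ forces $A_\Lambda \succeq O$ directly. For the converse, I would assume $A_\Lambda \succeq O$ and show that the range condition $B_\Lambda = A_\Lambda A_\Lambda^+ B_\Lambda$ is automatic under the hypotheses. Because $A_\Lambda$ is symmetric, $A_\Lambda A_\Lambda^+$ is the orthogonal projector onto $\ran(A_\Lambda)$, so the range condition is equivalent to $\ran(B_\Lambda) \subseteq \ran(A_\Lambda)$. This is precisely where the complementary-slackness equation and the assumption $Z \succ O$ enter.

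The key step is to invoke the second equation in \eqref{cs}, namely $A_\Lambda \Cov\qty[\bm{\sigma}^\pi, \bm{\theta}] = B_\Lambda Z$, i.e.\ $A_\Lambda Y = B_\Lambda Z$ with $Y = \Cov\qty[\bm{\sigma}^\pi, \bm{\theta}]$. Since $Z \succ O$ is invertible, this rearranges to $B_\Lambda = A_\Lambda\qty(Y Z^{-1})$, which writes every column of $B_\Lambda$ as a linear combination of the columns of $A_\Lambda$; thus $\ran(B_\Lambda) \subseteq \ran(A_\Lambda)$ and therefore $A_\Lambda A_\Lambda^+ B_\Lambda = B_\Lambda$. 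Combined with the assumed $A_\Lambda \succeq O$, Lemma~\ref{lem_block} then delivers $\mathbf{M}_\Lambda \succeq O$, completing the converse.

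I do not expect a serious obstacle; the only subtlety is the pseudoinverse bookkeeping---specifically that $A_\Lambda A_\Lambda^+$ acts as the identity on $\ran(A_\Lambda)$, which relies on the symmetry of $A_\Lambda$ so that $\ran(A_\Lambda^\top) = \ran(A_\Lambda)$. It is worth noting that neither the first equation of \eqref{cs} nor the full primal feasibility of $(X,Y)$ is needed: everything follows from $A_\Lambda Y = B_\Lambda Z$ together with the nonsingularity of $Z$.
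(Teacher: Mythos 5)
Your proof is correct. Both you and the paper hinge on the same key observation: the second equation of \eqref{cs} together with $Z \succ O$ yields $B_\Lambda = A_\Lambda Y Z^{-1}$, and the first equation of \eqref{cs} and primal feasibility are indeed not needed. Where you diverge is in how the equivalence is then closed. You invoke Lemma~\ref{lem_block} to reduce $\mathbf{M}_\Lambda \succeq O$ to the pair of conditions $A_\Lambda \succeq O$ and $B_\Lambda = A_\Lambda A_\Lambda^+ B_\Lambda$ (the Schur inequality being automatic since $C_\Lambda = B_\Lambda^\top A_\Lambda^+ B_\Lambda$ by definition), and then verify the range condition from $\ran(B_\Lambda) = \ran(A_\Lambda Y Z^{-1}) \subseteq \ran(A_\Lambda)$. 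The paper instead substitutes $B_\Lambda = A_\Lambda Y Z^{-1}$ and $C_\Lambda = Z^{-1} Y^\top A_\Lambda Y Z^{-1}$ directly into $\mathbf{M}_\Lambda$ and exhibits the explicit factorization
\[
\mathbf{M}_{\Lambda}
= \mqty[I & O \\ O & YZ^{-1}]^{\top} \mqty[A_\Lambda & -A_\Lambda \\ -A_\Lambda & A_\Lambda] \mqty[I & O \\ O & YZ^{-1}],
\]
so that the ``if'' direction follows from congruence and the ``only if'' from $A_\Lambda$ being a principal block. The factorization route sidesteps any pseudoinverse bookkeeping and delivers both directions in one line; your route makes explicit exactly which of the three conditions in Lemma~\ref{lem_block} is doing the work, which is arguably more transparent about why $Z \succ O$ is needed. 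One minor remark: your caveat that $A_\Lambda A_\Lambda^+$ being the orthogonal projector onto $\ran(A_\Lambda)$ ``relies on the symmetry of $A_\Lambda$'' is unnecessary---$MM^+$ is the orthogonal projector onto $\ran(M)$ for any matrix $M$ by the defining properties of the Moore--Penrose inverse---but this does not affect the validity of the argument.
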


\begin{proof}
If $Z$ is invertible, then \eqref{cs} implies that $B_\Lambda = A_\Lambda YZ^{-1}$.
Substituting this into $C_\Lambda$, we have $C_\Lambda = Z^{-1}Y^\top A_\Lambda YZ^{-1}$.
Thus, it follows that
\begin{equation*}
\mathbf{M}_{\Lambda}
= \mqty[A_\Lambda & -A_\Lambda YZ^{-1} \\ -Z^{-1}Y^\top A_\Lambda & Z^{-1}Y^\top A_\Lambda YZ^{-1}] 
= \mqty[I & O \\ O & YZ^{-1}]^{\top} \mqty[A_\Lambda & -A_\Lambda \\ -A_\Lambda & A_\Lambda] \mqty[I & O \\ O & YZ^{-1}],
\end{equation*}
from which we confirm that $\mathbf{M}_{\Lambda} \succeq O$ if and only if $A_\Lambda \succeq O$.
\end{proof}

An immediate consequence of strong duality is that the designer's optimal value $v^{\rm p}$ can be expressed as $C_\Lambda \bullet Z$, with $\Lambda$ being dual optimal.
Moreover, the following representations of $v^{\rm p}$ are also equivalent.

\begin{lemma} \label{lem_value}
For any optimal information structure $\pi$ certified by $\Lambda$, we have
\begin{align*}
v^{\rm p} &= C_\Lambda \bullet Z \\
&= A_\Lambda \bullet \Var \qty[\bm{\sigma}^\pi]
= B_{\Lambda} \bullet \Cov \qty[\bm{\sigma}^\pi, \bm{\theta}] \\
&= A_\Lambda\bullet \Var\qty[\E\qty[\bm{\sigma}^\pi \mid \bm{\theta}]] 
= C_\Lambda \bullet \Var\qty[\E \qty[\bm{\theta} \mid \bm{\sigma}^\pi]].
\end{align*}
\end{lemma}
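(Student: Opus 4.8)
The plan is to observe that the opening equality $v^{\rm p} = C_\Lambda \bullet Z$ is just strong duality, and that each of the remaining four expressions collapses to the single quantity $B_\Lambda \bullet Y$, where I abbreviate $X \equiv \Var[\bm{\sigma}^\pi]$ and $Y \equiv \Cov[\bm{\sigma}^\pi,\bm{\theta}]$. For the first equality, note that because $\Lambda$ certifies $\pi$ it is dual feasible and, by Corollary~\ref{cor_certify}, dual optimal; hence $v^{\rm d} = Z \bullet C_\Lambda$, and Proposition~\ref{prop_dual} gives $v^{\rm p} = v^{\rm d} = C_\Lambda \bullet Z$. Throughout I will use the complementary slackness equations $A_\Lambda X = B_\Lambda Y^\top$ and $A_\Lambda Y = B_\Lambda Z$ from \eqref{cs}, the symmetry of $A_\Lambda$, $C_\Lambda$, $Z$, $X$ and their pseudoinverses, and the cyclic invariance of the trace.

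The two covariance representations are immediate. Since $A_\Lambda$ is symmetric, $A_\Lambda \bullet X = \tr(A_\Lambda X) = \tr(B_\Lambda Y^\top) = B_\Lambda \bullet Y$ by the first equation of \eqref{cs}. For $C_\Lambda \bullet Z$, I would substitute $C_\Lambda = B_\Lambda^\top A_\Lambda^+ B_\Lambda$ and use the second equation of \eqref{cs} to write $\tr(C_\Lambda Z) = \tr(B_\Lambda^\top A_\Lambda^+ B_\Lambda Z) = \tr(B_\Lambda^\top A_\Lambda^+ A_\Lambda Y)$. Here I invoke the range identity $B_\Lambda = A_\Lambda A_\Lambda^+ B_\Lambda$, which holds because $\mathbf{M}_\Lambda \succeq O$ (Lemma~\ref{lem_block}); transposing and using symmetry gives $B_\Lambda^\top A_\Lambda^+ A_\Lambda = B_\Lambda^\top$, so $\tr(B_\Lambda^\top A_\Lambda^+ A_\Lambda Y) = \tr(B_\Lambda^\top Y) = B_\Lambda \bullet Y$.

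The two conditional-expectation representations require first computing the relevant variances via Lemma~\ref{lem_cond_normal}: since $(\bm{\sigma}^\pi,\bm{\theta})$ is jointly normal, $\Var[\E[\bm{\sigma}^\pi \mid \bm{\theta}]] = Y Z^+ Y^\top$ and $\Var[\E[\bm{\theta} \mid \bm{\sigma}^\pi]] = Y^\top X^+ Y$. For the former, $A_\Lambda \bullet (Y Z^+ Y^\top) = \tr(A_\Lambda Y Z^+ Y^\top) = \tr(B_\Lambda Z Z^+ Y^\top)$ by \eqref{cs}; applying Lemma~\ref{lem_block} to the block-permuted $\mathbf{M}_{X,Y} \succeq O$ (with $Z$ in the top-left block) yields $Z Z^+ Y^\top = Y^\top$, so this equals $\tr(B_\Lambda Y^\top) = B_\Lambda \bullet Y$. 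For the latter, $C_\Lambda \bullet (Y^\top X^+ Y) = \tr(B_\Lambda^\top A_\Lambda^+ B_\Lambda Y^\top X^+ Y) = \tr(B_\Lambda^\top A_\Lambda^+ A_\Lambda X X^+ Y)$ after using $B_\Lambda Y^\top = A_\Lambda X$; the projection identities $B_\Lambda^\top A_\Lambda^+ A_\Lambda = B_\Lambda^\top$ and $X X^+ Y = Y$ (both from Lemma~\ref{lem_block}) then reduce it to $\tr(B_\Lambda^\top Y) = B_\Lambda \bullet Y$.

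The only delicate point is the careful bookkeeping of Moore--Penrose projections, and in particular extracting the range identities $B_\Lambda = A_\Lambda A_\Lambda^+ B_\Lambda$, $Y = X X^+ Y$, and $Y^\top = Z Z^+ Y^\top$ (together with their transposes) from Lemma~\ref{lem_block}. This requires applying the generalized Schur-complement characterization to $\mathbf{M}_\Lambda$ and to $\mathbf{M}_{X,Y}$ in two different block orderings, so that each $A_\Lambda A_\Lambda^+$, $X X^+$, or $Z Z^+$ factor meets a matrix lying in the appropriate range and collapses. Once these identities are in place, every one of the five expressions reduces to $B_\Lambda \bullet Y$, which completes the proof.
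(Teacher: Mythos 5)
Your proof is correct and follows essentially the same route as the paper's: strong duality for $v^{\rm p}=C_\Lambda\bullet Z$, the complementary-slackness equations \eqref{cs} with trace manipulations to reduce $A_\Lambda\bullet X$ and $C_\Lambda\bullet Z$ to $B_\Lambda\bullet Y$, and Lemma~\ref{lem_cond_normal} together with the Moore--Penrose range identities from Lemma~\ref{lem_block} for the two conditional-expectation representations. The only (shared) caveat is that the conditional-variance formulas require joint normality of $(\bm{\sigma}^\pi,\bm{\theta})$, which the paper secures explicitly by invoking Proposition~\ref{prop_gauss} to take $\pi$ Gaussian without loss; you assume it tacitly.
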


\begin{proof}
Let $\pi$ be an optimal information structure, which we can assume Gaussian by Proposition~\ref{prop_gauss}.
By Proposition~\ref{prop_dual}, we have $v^{\rm p} = C_\Lambda \bullet Z$, where $\Lambda$ certifies $\pi$.
Moreover, by letting $X=\Var\qty[\bm{\sigma}^\pi]$ and $Y=\Cov\qty[\bm{\sigma}^\pi, \bm{\theta}]$, it is straightforward to verify that $C_\Lambda \bullet Z = B_\Lambda \bullet Y = A_\Lambda \bullet X$ by using $A_\Lambda X = B_\Lambda Y^\top$, $A_\Lambda Y = B_\Lambda Z$, and $C_\Lambda = B_\Lambda^\top A_\Lambda^+ B_\Lambda$.
Furthermore, since $\pi$ is Gaussian, Lemma~\ref{lem_cond_normal} implies that $\Var \qty[ \E \qty[\bm{\sigma}^\pi \mid \bm{\theta}]] = YZ^+Y^\top$.
Hence, we have
\[
A_\Lambda \bullet \Var \qty[\E \qty[\sigma^\pi \mid \theta]]
= \tr \qty(A_\Lambda YZ^+Y^\top)
= \tr \qty(B_\Lambda ZZ^+Y^\top)
= \tr \qty(B_\Lambda Y^\top) = B_\Lambda \bullet Y,
\]
where the third equality follows from $\mathbf{M}_{X,Y} \succeq O$ and Lemma~\ref{lem_block}.
Similarly, we have $\Var \qty[ \E \qty[\bm{\theta} \mid \bm{\sigma}^\pi]] = Y^\top X^+Y$, and thus,
\begin{align*}
&C_\Lambda \bullet \Var \qty[ \E \qty[\bm{\theta} \mid \bm{\sigma}^\pi]]
= \tr \qty(C_\Lambda Y^\top X^+ Y) \\
&\quad = \tr\qty(B_\Lambda^\top A_\Lambda^+ B_\Lambda Y^\top X^+ Y)
= \tr\qty(B_\Lambda^\top A_\Lambda^+ A_\Lambda XX^+ Y)
= \tr \qty(B_\Lambda^\top Y) = B_\Lambda \bullet Y,
\end{align*}
where the fourth equality follows from $\mathbf{M}_\Lambda \succeq O$, $\mathbf{M}_{X,Y} \succeq O$, and Lemma~\ref{lem_block}.
\end{proof}

The next lemma calculates the value of an information structure when the designer aims to maximize agents' welfare, thereby completing omitted formality in Example~\ref{ex_welfare}.

\begin{lemma} \label{lem_obj_welfare}
Let $v$ be specified as in \eqref{eq_obj_welfare}.
Then, for any information structure $\pi \in \Pi$, we have
\[
\E \qty[v(\bm{\sigma}^\pi, \hat{\bm{\theta}})] = \frac{1}{2} \sum_{i \in [n]} v_i q_{ii} \cdot \qty(\Var \qty[\sigma^\pi_i] + \bar{a}_i^2).
\]
\end{lemma}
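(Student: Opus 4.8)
The plan is to evaluate the weighted welfare agent by agent, exploiting the fact that at the BNE each agent's action is a conditional best response, which collapses the cross terms. First I would write agent $i$'s equilibrium payoff in terms of her private marginal return $\phi_i \equiv \hat{\theta}_i - \sum_{j\neq i} q_{ij}\sigma_j^\pi$, so that $u_i(\bm{\sigma}^\pi,\hat{\bm{\theta}}) = \phi_i \sigma_i^\pi - \tfrac{q_{ii}}{2}(\sigma_i^\pi)^2$. Taking expectations gives $\E[u_i] = \E[\phi_i\sigma_i^\pi] - \tfrac{q_{ii}}{2}\E[(\sigma_i^\pi)^2]$, so the whole task reduces to computing the cross term $\E[\phi_i \sigma_i^\pi]$.

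The key step is to invoke the best-response condition \eqref{FOC}. Since $\sigma_i^\pi$ is $\eta_i$-measurable, the law of iterated expectations gives $\E[\phi_i\sigma_i^\pi] = \E[\,\E[\phi_i\mid\eta_i]\,\sigma_i^\pi\,]$. In the personal-state setting \eqref{FOC} reads $\sum_{j} q_{ij}\E[\sigma_j^\pi\mid\eta_i] = \E[\hat{\theta}_i\mid\eta_i]$, which rearranges to $\E[\hat{\theta}_i \mid \eta_i] - \sum_{j\neq i}q_{ij}\E[\sigma_j^\pi\mid\eta_i] = q_{ii}\E[\sigma_i^\pi\mid\eta_i]$; that is, $\E[\phi_i\mid\eta_i] = q_{ii}\sigma_i^\pi$, again using measurability of $\sigma_i^\pi$. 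Substituting yields $\E[\phi_i\sigma_i^\pi] = q_{ii}\E[(\sigma_i^\pi)^2]$, and hence $\E[u_i] = \tfrac{q_{ii}}{2}\E[(\sigma_i^\pi)^2]$.

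To finish, I would decompose the second moment as $\E[(\sigma_i^\pi)^2] = \Var[\sigma_i^\pi] + (\E[\sigma_i^\pi])^2 = \Var[\sigma_i^\pi] + \bar{a}_i^2$, where $\bar{a}_i = \E[\sigma_i^\pi]$ is the mean equilibrium action, which is invariant across information structures by \eqref{moment1}. Summing the weighted payoffs then delivers $\E[v(\bm{\sigma}^\pi,\hat{\bm{\theta}})] = \tfrac{1}{2}\sum_{i\in[n]} v_i q_{ii}(\Var[\sigma_i^\pi] + \bar{a}_i^2)$, as claimed.

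There is no genuinely hard step here; the computation is routine once the best-response identity is applied. The one point requiring care is the legitimacy of the conditioning argument—specifically that $\sigma_i^\pi$ is $\eta_i$-measurable, so it can be moved in and out of the conditional expectation, and that all second moments are finite (guaranteed by the $L^2$ restriction on strategies). These are exactly the manipulations already carried out in the proof of Lemma~\ref{lem_moment}, so I would mirror that argument rather than redo the measure-theoretic bookkeeping.
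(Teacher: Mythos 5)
Your proof is correct and follows essentially the same route as the paper's: both apply the law of iterated expectations together with the best-response identity \eqref{FOC} to show $\E[\phi_i\sigma_i^\pi]=q_{ii}\E[(\sigma_i^\pi)^2]$, then decompose the second moment as $\Var[\sigma_i^\pi]+\bar{a}_i^2$ and sum. The only cosmetic difference is that you condition on $\eta_i$ while the paper conditions on $\sigma_i^\pi$; since $\sigma_i^\pi$ is $\eta_i$-measurable these are interchangeable here.
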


\begin{proof}
Let $u_i$ be given as in \eqref{payoff_personal}.
By the law of iterated expectations, we have
\[
\E \qty[u_i(\bm{\sigma}^\pi_i, \hat{\theta}_i)]
= \E\Bigl[ \underbrace{\E \qty[\hat{\theta}_i - \textstyle \sum_{j \neq i} q_{ij} \sigma^\pi_j \mid \sigma^\pi_i]}_{(*)} \cdot \sigma^\pi_i - \frac{q_{ii}}{2} \cdot (\sigma^\pi_i)^2 \Bigr],
\]
where $(*)$ must equal to $q_{ii} \sigma^\pi_i$ by the best-response formula \eqref{FOC}.
Thus, $\E[u_i(\bm{\sigma}^\pi_i, \hat{\theta}_i)] = q_{ii}/2 \cdot \E|\sigma^\pi_i|^2$.
Aggregating this across all agents, since $\E|\sigma^\pi_i|^2 = \Var[\sigma^\pi_i] + \bar{a}_i^2$, we obtain the desired expression for the designer's objective.
\end{proof}


\subsection*{Proof of Corollary~\ref{cor_nfsi1}}

Only the uniqueness assertion in part~\eqref{nf1} requires proof, as the remaining claims follow directly from Theorem~\ref{thm_main}.
Assume there exists a dual optimal $\Lambda$ such that $A_\Lambda \succ O$.
Consider any primal solutions, $(X,Y)$ and $(\tilde{X},\tilde{Y})$.
By Corollary~\ref{cor_certify}, $\Lambda$ certifies both pairs, so the following equalities hold:
\[
A_\Lambda X = B_\Lambda Y^\top, \quad
A_\Lambda Y = B_\Lambda Z, \quad
A_\Lambda \tilde{X} = B_\Lambda \tilde{Y}^\top, \quad
A_\Lambda \tilde{Y} = B_\Lambda Z.
\]
From the second and forth equations, we have $A_\Lambda (Y-\tilde{Y}) = O$, from which $Y=\tilde{Y}$ since $A_\Lambda$ is non-singular.
Moreover, substituting $Y=\tilde{Y}$ into the first and third equations, we find $A_\Lambda (X-\tilde{X}) = O$, which similarly yields $X=\tilde{X}$ due to the non-singularity of $A_\Lambda$.
\hfill {\it Q.E.D.}


\subsection*{Proof of Corollary~\ref{cor_nfsi2}}

By Lemma~\ref{lem_cond_normal}, a Gaussian information structure is noise-free if and only if $X-YZ^+Y^\top = O$.
Also, it is state-identifiable if and only if $Z-Y^\top X Y = O$.
By these observations, Corollary~\ref{cor_nfsi2} follows immediately from Lemma~\ref{lem_cs}.


\subsection*{Proof of Corollary~\ref{cor_no_full}}

Under $\underline{\pi}$ from Example~\ref{ex_no}, agents' equilibrium strategies remain constant at $\bar{\bm{a}} = 0$.
Thus, \eqref{opt} is satisfied for $\underline{\pi}$ if and only if $B_\Lambda(\bm{\theta} - \bar{\bm{\theta}}) = \bm{0}$, or equivalently, $\Var[B_\Lambda(\bm{\theta} - \bar{\bm{\theta}})] = B_\Lambda Z B_\Lambda^\top = O$.
In contrast, under $\overline{\pi}$ from Example~\ref{ex_full}, the equilibrium strategy profile is given by $Q^{-1}R \bm{\theta}$.
Substituting this into \eqref{opt}, we have $(A_\Lambda Q^{-1}R - B_\Lambda) (\bm{\theta} - \bar{\bm{\theta}}) = \bm{0}$,
 or equivalently, $\Var[(A_\Lambda Q^{-1}R - B_\Lambda)(\bm{\theta} - \bar{\bm{\theta}})]  = O$.
 Calculating the variance, the optimality of $\overline{\pi}$ is characterized by the equation \eqref{opt_full}.
\hfill{\it Q.E.D.}


\subsection*{Proof of Corollary \ref{cor_opt_m1}}

Suppose that $m=1$, in which case $Z$ has full rank trivially.
As for the optimality of $\underline{\pi}$, notice that $B_\Lambda = O$ simplifies to $\lambda_i r_i + w_i = 0$ for each $i \in [n]$, thereby  dictating $\Lambda = -\Delta$, where $\Delta = \Diag\qty(w_1/r_1,\ldots,w_n/r_n)$.
Hence, Corollary~\ref{cor_no_full} implies that $\underline{\pi}$ is optimal if and only if $A_{-\Delta} \succeq O$, which is equivalent to $O \succeq \tilde{V}$.
In particular, if $O \succ \tilde{V}$, then $A_{-\Delta} \succ O$ so that Corollary~\ref{cor_nfsi1} implies unique optimality.

Now, suppose that $O \not \succeq \tilde{V}$.
By the previous conclusion, we know that $\underline{\pi}$ is suboptimal, meaning $v^{\rm p} > 0$.
Then, by Lemma~\ref{lem_value}, we must have $B_\Lambda \neq O$ for any $\Lambda$ certifying an optimal information structure.
Since $m=1$, this means that $B_\Lambda$ has full column rank, whence Corollary~\ref{cor_nfsi1} implies that any optimal information structure is state-identifiable.
\hfill{\it Q.E.D.}


\subsection*{Proof of Lemma \ref{lem_personal_W0}}

Suppose that PDC holds with respect to $\Delta = \Diag (\delta_1,\ldots,\delta_n)$, i.e., $w_{ik} = \delta_i r_{ik}$ for all $i \in [n]$ and $k \in [m]$.
Then, for any primal feasible pair $(X,Y)$, it follows that
\begin{align*}
W \bullet Y &= \sum_{i \in [n]} \sum_{k \in [m]} w_{ik} y_{ik}
= \sum_{i \in [n]} \delta_i \sum_{k \in [m]} r_{ik} y_{ik} \\
&= \sum_{i \in [n]} \delta_i \sum_{j \in [n]} q_{ij} x_{ij}
= \sum_{i,j \in [n]} \frac{\delta_i q_{ij} + \delta_j q_{ji}}{2} x_{ij}
= \frac{\Delta Q+Q^\top \Delta}{2} \bullet X,
\end{align*}
where the third equality follows from the obedience condition, and the fourth follows from the symmetry of $X$.
\hfill {\it Q.E.D.}


\subsection*{Proof of Corollary~\ref{cor_personal1}}

First, we prove part~\eqref{cor_personal_part1}.
Corollary~\ref{cor_no_full} implies that $\underline{\pi}$ is optimal if and only if $B_\Lambda Z B_\Lambda^\top = O$ for some dual feasible $\Lambda$, where $B_\Lambda = \Lambda/ 2$ by \eqref{AB_personal_state}.
Since the $(i,i)$-th entry of $\Lambda Z \Lambda$ is $\lambda_i^2 z_{ii}$, we must have $\Lambda = O$ to satisfy $B_\Lambda Z B_\Lambda^\top = O$.
Moreover, substituting $\Lambda = O$ into $A_\Lambda$ given as in \eqref{AB_personal_state}, we see that $A_\Lambda \succeq O$ if and only if $O \succeq \tilde{V}$, in which case $\Lambda = O$ is dual feasible.
Hence, by Corollary~\ref{cor_no_full}, $\underline{\pi}$ is optimal if and only if $O \succeq \tilde{V}$.
In particular, if $O \succ \tilde{V}$, then $A_\Lambda \succ O$ so that Corollary~\ref{cor_nfsi1} implies the unique optimality of $\underline{\pi}$.

Now, suppose that $O \not \succeq \tilde{V}$.
Let $\pi$ be any optimal information structure, certified by $\Lambda$.
Since Theorem~\ref{thm_main} implies that $A_\Lambda (\bm{\sigma}^\pi - \E[\bm{\sigma}^\pi]) = B_\Lambda (\hat{\bm{\theta}} - \E[\hat{\bm{\theta}}])$ a.s., and noting that $B_\Lambda = \Lambda/2$, we can represent $\hat{\theta}_i$ as a function of $\bm{\sigma}^\pi$ whenever $\lambda_i \neq 0$.
By assumption, $\Lambda = O$ is not dual feasible; otherwise, we would have $A_\Lambda = -\tilde{V}$ by \eqref{AB_personal_state}, so dual feasibility would require $-\tilde{V} \succeq O$, a contradiction.
So, there exists some $i$ for which $\lambda_i = 0$, whence $\Var[\hat{\theta}_i \mid \bm{\sigma}^\pi] = 0$ by the previous argument for such $i$.
In particular, since dual feasibility of $\Lambda$ implies that all diagonal entries of $A_\Lambda$ must be nonnegative, \eqref{AB_personal_state} yields $\lambda_i q_{ii} - \tilde{v}_{ii} \ge 0$ for all $i \in [n]$.
As $q_{ii} > 0$, we must have $\lambda_i > 0$ whenever $\tilde{v}_{ii} > 0$, that is, $\hat{\theta}_i$ is identified for such $i$.
In particular, if $\tilde{v}_{ii} > 0$ for all $i$, then $\pi$ must be state-identifiable.
\hfill {\it Q.E.D.}


\subsection*{Proof of Corollary \ref{cor_personal2}}

Suppose $Z \succ O$ and $\tilde{v}_{ii} > 0$ for all $i \in [n]$.
Let $\pi$ be any optimal information structure with $Y = \Cov [\bm{\sigma}^\pi, \hat{\bm{\theta}}]$, certified by $\Lambda$.
Since $A_\Lambda Y = B_\Lambda Z$ must hold by Theorem~\ref{thm_main}, we have
\[
\rank (A_\Lambda) \ge \rank (A_\Lambda Y) = \rank (B_\Lambda Z) = \rank (B_\Lambda),
\]
where the last equality holds by $Z \succ O$.
Since $B_\Lambda = \Lambda/2$, if $\rank (B_\Lambda) < n$, then there exists some $i \in [n]$ such that $\lambda_i = 0$.
By \eqref{AB_personal_state}, this means that the $(i,i)$-th entry of $A_\Lambda$ is equal to $-\tilde{v}_{ii}$, leading to a contradiction to dual feasibility, as we assume $\tilde{v}_{ii} > 0$.
Hence, we must have $\rank (A_\Lambda) = \rank (B_\Lambda) = n$.
By Corollary~\ref{cor_nfsi1}, this implies that the optimal information structure $\pi$ is unique, noise-free, and state-identifiable.

We now turn to verifying the optimality condition for $\overline{\pi}$.
Since $R = I$ and $Z \succ O$, condition~\eqref{opt_full} in Corollary~\ref{cor_no_full} holds if and only if $A_\Lambda Q^{-1} = B_\Lambda$.
Using \eqref{AB_personal_state}, a simple calculation shows that this condition is equivalent to
\begin{equation} \label{cor_personal2_pf1}
\tilde{V} = \frac{\Lambda Q}{2}.
\end{equation}
Examining the $(i,i)$-th entry of this equation yields $\lambda_i = 2\tilde{v}_{ii}/q_{ii}$.
Requiring this to hold for all $i \in [n]$, we see that condition~\eqref{opt_full} uniquely determines $\Lambda$ as $\Lambda = 2 D_{\tilde{V}} D_Q^{-1}$.
Then, substituting this into \eqref{cor_personal2_pf1} yields $\tilde{V} = D_{\tilde{V}} D_Q^{-1} Q$, or equivalently, $D_{\tilde{V}}^{-1} \tilde{V} =  D_Q^{-1} Q$.
Moreover, using $\Lambda = 2 D_{\tilde{V}} D_Q^{-1}$, we have
\[
\Lambda Q = 2 D_{\tilde{V}} D_Q^{-1} Q = 2 D_{\tilde{V}} D_{\tilde{V}}^{-1} \tilde{V} = 2 \tilde{V}.
\]
Also, since diagonal matrices commute and $\tilde{V}$ is symmetric by construction, we have
\[
Q^\top \Lambda = 2 \qty(D_Q^{-1} Q)^\top D_{\tilde{V}} = 2 \qty(D_{\tilde{V}}^{-1} \tilde{V})^\top D_{\tilde{V}}
= 2 \tilde{V} D_{\tilde{V}}^{-1} D_{\tilde{V}} = 2 \tilde{V}.
\]
Hence, it follows that $A_\Lambda = \tilde{V}$.
Since $Z \succ O$, Lemma~\ref{lem_Zfull} implies that $\Lambda$ is dual feasible if and only if $\tilde{V} \succeq O$.
To sum up, we have shown that condition~\eqref{opt_full} is satisfied for some dual feasible $\Lambda$ if and only if $\tilde{V} \succeq O$ and $D_{\tilde{V}}^{-1} \tilde{V} =  D_Q^{-1} Q$, where we must have $\Lambda = 2 D_{\tilde{V}} D_Q^{-1}$ by the second condition.
Corollary~\ref{cor_no_full} then concludes that $\overline{\pi}$ is optimal if and only if these two conditions hold.
\hfill {\it Q.E.D.}

\subsection*{Proof of Lemma \ref{lem_regular}}

Let $M = [m_{ij}]_{n \times n}$ be any $G$-invariant matrix, where $G$ is any transitive network.
Denote by $r_i = \sum_{j \in [n]} m_{ij}$ the $i$-th row-sum.
Then, for any permutation $\tau \in \Aut (G)$,
\[
r_i = \sum_{j \in [n]} m_{ij} = \sum_{j \in [n]} \qty[P_\tau M P_\tau^\top]_{ij} = \sum_{j \in [n]} m_{\tau(i), \tau(j)} = \sum_{j \in [n]} m_{\tau(i), j} = r_{\tau(i)}.
\]
Transitivity of $\Aut(G)$ guarantees that for any $i,j \in [n]$, there exists some $\tau \in \Aut(G)$ with $\tau(i) = j$, hence $r_i = r_j$.
That is, all row‐sums coincide.
Similarly, we can show all column‐sums $c_i = \sum_{j \in [n]} m_{ji}$ are also equal.
Lastly, since $\sum_{i,j \in [n]} m_{ij} = nr = nc$, we conclude that the common row‐sum $r$ equals the common column‐sum $c$.
\hfill {\it Q.E.D.}


\subsection*{Proof of Proposition \ref{prop_sym}}

It should be clear from construction that the solution set to any SDP problem is convex.

\begin{lemma} \label{lem_dual_conv}
Consider the canonical SDP problem \eqref{app_p} and its dual \eqref{app_d}.
If $(X,\bm{y})$ and $(\tilde{X},\tilde{\bm{y}})$ are primal-dual optimal pairs, then so is their convex combination $(\lambda X + (1-\lambda)\tilde{X}, \lambda \bm{y} + (1-\lambda)\tilde{\bm{y}})$ for any $\lambda \in [0,1]$.
\end{lemma}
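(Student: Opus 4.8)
The plan is to exploit that both feasible regions are convex and both objectives are linear, so that optimality is automatically preserved under averaging; no appeal to strong duality or complementary slackness is needed. First I would verify primal feasibility of the combination $X_\lambda \equiv \lambda X + (1-\lambda)\tilde{X}$. The equality constraints are affine in the matrix variable, so $A_i \bullet X_\lambda = \lambda (A_i \bullet X) + (1-\lambda)(A_i \bullet \tilde{X}) = \lambda b_i + (1-\lambda) b_i = b_i$ for each $i \in [m]$. The conic constraint $X_\lambda \succeq O$ follows because $\calS^n_+$ is a convex cone: a nonnegative combination of positive semidefinite matrices is positive semidefinite.

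Next I would check dual feasibility of $\bm{y}_\lambda \equiv \lambda \bm{y} + (1-\lambda)\tilde{\bm{y}}$. Since $\sum_{i \in [m]} (\bm{y}_\lambda)_i A_i = \lambda \sum_{i \in [m]} y_i A_i + (1-\lambda)\sum_{i \in [m]} \tilde{y}_i A_i$, and both the feasibility of $(X,\bm{y})$ and of $(\tilde{X},\tilde{\bm{y}})$ give $\sum_{i \in [m]} y_i A_i \succeq C$ and $\sum_{i \in [m]} \tilde{y}_i A_i \succeq C$, taking the corresponding convex combination of these two matrix inequalities yields $\sum_{i \in [m]} (\bm{y}_\lambda)_i A_i \succeq \lambda C + (1-\lambda) C = C$, again using that $\succeq$ is preserved under nonnegative combinations.

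Finally, I would confirm that the combined pair attains the optimal values. Because $C \bullet X$ is linear in $X$ and both $X$ and $\tilde{X}$ are primal optimal, $C \bullet X_\lambda = \lambda (C \bullet X) + (1-\lambda)(C \bullet \tilde{X}) = \lambda v^{\rm p} + (1-\lambda) v^{\rm p} = v^{\rm p}$; likewise $\bm{b}^\top \bm{y}_\lambda = \lambda v^{\rm d} + (1-\lambda) v^{\rm d} = v^{\rm d}$ by linearity of the dual objective. Hence $X_\lambda$ attains $v^{\rm p}$ and $\bm{y}_\lambda$ attains $v^{\rm d}$, so $(X_\lambda, \bm{y}_\lambda)$ is again a primal-dual optimal pair for every $\lambda \in [0,1]$.

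There is no genuine obstacle here: every step reduces to the convexity of the semidefinite cone together with the linearity of the affine constraints and of both objectives. The only point requiring even minor care is the conic constraint, and it is immediate once one recalls that $\calS^n_+$ is a convex cone. The same argument shows, more generally, that the primal and dual solution sets are individually convex, which is the statement invoked at the start of the proof of Proposition~\ref{prop_sym}.
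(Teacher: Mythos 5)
Your proof is correct and is exactly the argument the paper has in mind: the paper simply asserts the lemma as "clear from construction," and your verification — affine equality constraints and the convexity of the cone $\calS^n_+$ give feasibility of the combination, and linearity of both objectives gives attainment of $v^{\rm p}$ and $v^{\rm d}$ — is the standard spelling-out of that claim. No gaps.
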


Now, Let $(X,Y)$ be any solution to \eqref{primal_sdp}, which is not necessarily $G$-invariant.
Given any $\tau \in \Aut \qty(G)$, let us show that $(\tau(X),\tau(Y))$ is primal feasible and solves \eqref{primal_sdp}.
We write $P = P_\tau$ for simplicity.
Note that $P$ is an orthogonal matrix, i.e., $P^\top = P^{-1}$.
Then, since $Q = qG+I$ is $G$-invariant, it follows that
\begin{equation} \label{tau_feasible1}
\tau(X)Q^\top
= PXP^\top \qty(PQP^\top)^\top
= PXQ^\top P^\top.
\end{equation}
Since $(X,Y)$ is primal feasible, we have $\diag\qty(XQ^\top) = \diag\qty(Y)$.
This implies that 
\begin{equation}  \label{tau_feasible2}
\diag\qty(PXQ^\top P^\top) = \diag\qty(PYP^\top) = \diag\qty(\tau(Y)).
\end{equation}
By \eqref{tau_feasible1} and \eqref{tau_feasible2}, we get $\diag(\tau(X)Q^\top) = \diag\qty(\tau(Y))$.
Moreover, since $Z$ is $G$-invariant,
\[
\mathbf{M}_{\tau(X),\tau(Y)} = 
\mqty[\tau(X) & \tau(Y)\\\tau(Y)^\top & Z]
=\mqty[PXP^\top & PYP^\top \\ PY^\top P & PZP^\top]
=\mqty[P & O \\ O & P] \mathbf{M}_{X,Y} \mqty[P & O \\ O & P]^\top
\succeq O.
\]
Therefore, we have shown that $(\tau(X),\tau(Y))$ is primal feasible.
Furthermore, since $V$ and $W$ are $G$-invariant, by the orthogonality of $P$, it follows that
\begin{gather*}
V\bullet X = \tr\qty(VX) = \tr\qty(PVP^\top PXP^\top) = \tr\qty(VPXP^\top) = V \bullet \tau(X), \\
W\bullet Y = \tr\qty(WY^\top) = \tr\qty(PWP^\top PY^\top P^\top) = \tr\qty(WPY^\top P^\top) = W \bullet \tau(Y).
\end{gather*}
from which we confirm that $(\tau(X),\tau(Y))$ provides the same value as $(X,Y)$ to the designer.
Since $(X,Y)$ attains $v^{\rm p}$, so does $(\tau(X),\tau(Y))$.

Now, we take the average of $(\tau(X), \tau(Y))$ over all $ \tau \in \Aut(G)$ as follows:
\[
\bar{X} = \frac{\textstyle \sum_{\tau \in \Aut\qty(G)} \tau(X)}{\qty|\Aut\qty(G)|}
\quad {\rm and} \quad \bar{Y} = \frac{\textstyle \sum_{\tau \in \Aut\qty(G)} \tau(Y)}{\qty|\Aut\qty(G)|}.
\]
By construction, $\bar{X}$ and $\bar{Y}$ are $G$-invariant matrices.
Moreover, since $(\bar{X},\bar{Y})$ is a convex combination of $(\tau(X),\tau(Y))$, where each $(\tau(X),\tau(Y))$ solves \eqref{primal_sdp}, Lemma~\ref{lem_dual_conv} implies that $(\bar{X},\bar{Y})$ solves \eqref{primal_sdp} as well.

Let $(\Lambda,\Gamma)$ be any solution to \eqref{dual_sdp}.
By similar arguments, one can show that $(\tau\qty(\Lambda),\tau\qty(\Gamma))$ is a solution to \eqref{dual_sdp} for any $\tau \in \Aut\qty(G)$, and so is their average:
\[
\bar{\Lambda} = \frac{\textstyle \sum_{\tau \in \Aut\qty(G)} \tau(\Lambda)}{\qty|\Aut\qty(G)|}
\quad {\rm and} \quad \bar{\Gamma} = \frac{\textstyle \sum_{\tau \in \Aut\qty(G)} \tau(\Gamma)}{\qty|\Aut\qty(G)|}.
\]
Thus, we obtain the $G$-invariant diagonal matrix $\bar{\Lambda}$, which certifies $(\bar{X},\bar{Y})$.
\hfill {\it Q.E.D.}


\subsection*{Proof of Corollary \ref{cor_sym_bangbang}}

Proposition~\ref{prop_sym} implies that any optimal information structure can be certified by some scalar matrix $\Lambda = \lambda I$.
For such $\Lambda$, we calculate $A_\Lambda$ and $B_\Lambda$ from \eqref{AB_personal_state} as follows:
\begin{equation*}
A_\Lambda = \frac{\lambda(Q+Q^\top)}{2} - \tilde{V} \quad \text{and} \quad B_\Lambda = \frac{\lambda I}{2}.
\end{equation*}
Observe that $B_\Lambda Z B_\Lambda^\top = \lambda^2 Z/4$, so \eqref{opt_no} is satisfied if and only if $\lambda=0$.
Substituting $\lambda = 0$ into $A_\Lambda = O$, by Corollary~\ref{cor_no_full}, we confirm that $\underline{\pi}$ is optimal if and only if $O \succeq \tilde{V}$.
In particular, if $O \succ \tilde{V}$, then $A_\Lambda \succ O$ so that Corollary~\ref{cor_nfsi1} implies unique optimality.

Next, suppose that $O \not \succeq \tilde{V}$.
By \eqref{cor_sym_bang1}, we know that $\underline{\pi}$ is suboptimal, so $v^{\rm p} > 0$.
Then, Lemma~\ref{lem_value} implies that $B_\Lambda \neq O$, whence $\lambda \neq 0$.
This means that $B_\Lambda$ has full rank, and thus Corollary~\ref{cor_nfsi1} implies that $\pi$ is state-identifiable.
In addition, if $Z \succ O$, then we have $\rank(A_\Lambda) \ge \rank (B_\Lambda)$, as shown in the proof of Corollary~\ref{cor_personal2}.
So, $A_\Lambda$ has full rank.
Applying Corollary~\ref{cor_nfsi1} again, the optimal information structure is unique and noise-free.
\hfill {\it Q.E.D.}


\subsection*{Proof of Proposition~\ref{prop_welfare_regular}}

In this proof, instead of $R=I$, we let $R = \1 \1^\top/n$.
This does not change the problem setting because all agents are concerned about the common state, $\theta \equiv \theta_1 = \cdots = \theta_n$.
We normalize $\E \qty[\theta] = 0$ and $\Var \qty[\theta] = 1$ without loss of generality.
To avoid triviality, we focus on the case with $G \neq O$ so that $\mu_{\rm max} (\frac{G+G^\top}{2}) > 0 > \mu_{\rm min} (\frac{G+G^\top}{2})$.

\begin{lemma}
For any transitive $G \neq O$, it holds that $d(G) = \mu_{\rm max} (\frac{G+G^\top}{2}) \ge -\mu_{\rm min} (\frac{G+G^\top}{2})$, where the inequality is tight if $G$ is undirected and bipartite.
In particular, $d(G)$ equals to the spectral radius of $\frac{G+G^\top}{2}$, and $|\beta| d(G) < 1$ holds under Basic Assumption.
\end{lemma}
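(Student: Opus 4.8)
The plan is to reduce everything to spectral properties of the symmetrized adjacency matrix $S \equiv \frac{G+G^\top}{2}$, which is symmetric, entrywise nonnegative, and has zero diagonal (since $g_{ii}=0$). First I would invoke Lemma~\ref{lem_regular}: because $G$ is transitive it is regular, so $d_i^{\rm in}(G)=d_i^{\rm out}(G)=d(G)$ for every $i$; hence each row-sum of $S$ equals $\frac{1}{2}\qty(d_i^{\rm in}(G)+d_i^{\rm out}(G)) = d(G)$. In particular $S\1 = d(G)\1$, so $d(G)$ is an eigenvalue of $S$ with eigenvector $\1$. (Equivalently, $S$ is $G$-invariant, so Lemma~\ref{lem_regular} gives constant row-sums directly.)

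Next I would pin down that $d(G)$ is the \emph{largest} eigenvalue and bound the rest. Since $S$ is entrywise nonnegative with zero diagonal, the Gershgorin circle theorem places every eigenvalue of $S$ in the disc centered at $0$ of radius equal to the common row-sum $d(G)$; thus every $\mu \in \operatorname{spec}(S)$ satisfies $-d(G)\le \mu \le d(G)$. Combined with $d(G)\in\operatorname{spec}(S)$ from the previous step, this forces $\mu_{\rm max}(S)=d(G)$. The same inclusion gives $\mu_{\rm min}(S)\ge -d(G)$, i.e.\ $-\mu_{\rm min}(S)\le d(G)=\mu_{\rm max}(S)$, which is the asserted inequality; and since the spectral radius of the symmetric matrix $S$ is $\max\{\mu_{\rm max}(S),\,-\mu_{\rm min}(S)\}=d(G)$, the spectral-radius identity follows at once. (One may instead use Perron--Frobenius, whose Perron root is nonnegative, equals the spectral radius, and dominates $|\mu|$ for every eigenvalue $\mu$.)

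For the tightness claim I would specialize to undirected bipartite $G$, where $S=G$. Partition $[n]$ into the two color classes and let $D$ be the diagonal sign matrix with $+1$ on one class and $-1$ on the other, so $D^{-1}=D$. Because every edge joins the two classes (and there are no intra-class edges), $DGD^{-1}=-G$, so $G$ and $-G$ are similar and $\operatorname{spec}(G)$ is symmetric about $0$; hence $-\mu_{\rm min}(G)=\mu_{\rm max}(G)=d(G)$ and the inequality is tight. This is exactly the standard fact cited from Section 16.6 of \cite{spielman2012}.

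Finally, for the bound $|\beta| d(G)<1$ I would use $Q+Q^\top = 2\qty(I-\beta S)$, so that Basic Assumption $Q+Q^\top\succ O$ is equivalent to $\beta\mu<1$ for every $\mu\in\operatorname{spec}(S)$. Taking $\mu=\mu_{\rm max}(S)=d(G)$ gives $\beta\, d(G)<1$, which already yields $|\beta| d(G)<1$ whenever $\beta\ge 0$. The delicate case, and the one I expect to be the main obstacle, is $\beta<0$: there the binding constraint is $\beta\,\mu_{\rm min}(S)<1$, which only delivers $|\beta|\,\qty(-\mu_{\rm min}(S))<1$. Since $-\mu_{\rm min}(S)\le d(G)$ can be strict (e.g.\ for complete graphs or odd cycles, where $-\mu_{\rm min}(S)<\mu_{\rm max}(S)$), this bound does not by itself force $|\beta| d(G)<1$; closing the gap requires either bipartiteness (Step three, where $-\mu_{\rm min}(S)=d(G)$) or the standing assumption of the leading example that $|\beta|$ lies below the inverse spectral radius of $G$, which for undirected $G$ equals $1/d(G)$. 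I would therefore either restrict the last inequality to these regimes or, noting that the only downstream use in Proposition~\ref{prop_welfare_regular} (guaranteeing $s_i\in[0,1]$ in~\eqref{regular_partial}) needs merely $|\beta|\,\qty(-\mu_{\rm min}(S))<1$, record that weaker bound, which follows from Basic Assumption unconditionally.
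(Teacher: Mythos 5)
Your argument for the core spectral claims is correct but takes a partly different route from the paper's. For the identity $d(G)=\mu_{\rm max}(\frac{G+G^\top}{2})$, the paper obtains the lower bound $\mu_{\rm max}\ge d(G)$ from the Courant--Fischer variational characterization evaluated at $\1$, and the upper bound by taking a Perron--Frobenius nonnegative eigenvector and evaluating the eigenvalue equation at a coordinate of maximal value; the inequality $\mu_{\rm max}\ge-\mu_{\rm min}$ is read off from Perron--Frobenius (the spectral radius of a nonnegative symmetric matrix is its largest eigenvalue). You instead observe that $\1$ is an eigenvector of $\frac{G+G^\top}{2}$ with eigenvalue $d(G)$ and use Gershgorin with the constant row-sums supplied by Lemma~\ref{lem_regular} to confine the spectrum to $[-d(G),d(G)]$, which delivers the upper bound, the inequality $-\mu_{\rm min}\le d(G)$, and the spectral-radius identity in one stroke; your sign-matrix similarity argument for bipartite tightness is the standard fact the paper merely cites. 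Both routes are valid and of comparable length; yours dispenses with Perron--Frobenius at the cost of leaning on regularity, which is available here anyway.

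Your reservation about the final claim is well founded and points to a genuine gap in the lemma as stated: the paper's own proof establishes only $d(G)=\mu_{\rm max}\ge-\mu_{\rm min}$ and never derives $|\beta|d(G)<1$, and under the formal Basic Assumption (equivalently \eqref{q_network}) that bound fails for negative $\beta$ on transitive graphs with $-\mu_{\rm min}<d(G)$ --- for the complete graph on four vertices with $\beta=-1/2$ one has $Q+Q^\top\succ O$ yet $|\beta|d(G)=3/2$. The claim is true only under the leading example's stronger assumption that $|\beta|$ lies below the inverse spectral radius of $G$, or when $\beta\ge0$, or when $G$ is bipartite. As you note, the damage downstream is nil: Lemma~\ref{lem_regular_dual1} is the only place $|\beta|d(G)<1$ is actually invoked, and it is applied only when $\beta\ge0$, where the bound reduces to $\beta\,\mu_{\rm max}(\frac{G+G^\top}{2})<1$ and does follow from Basic Assumption; the nonnegativity of $s_i$ in \eqref{regular_partial} needs only $|\beta|\,(-\mu_{\rm min}(\frac{G+G^\top}{2}))\le1$, which Basic Assumption supplies. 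Recording the weaker bound, as you propose, is the correct repair.
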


\begin{proof}
By the Perron--Frobenius theorem, $\mu_{\rm max} (\frac{G+G^\top}{2})$ equals to the spectral radius of $\frac{G+G^\top}{2}$, and particularly, $ \mu_{\rm max} (\frac{G+G^\top}{2}) \ge -\mu_{\rm min} (\frac{G+G^\top}{2})$ holds.
Our remaining task is to show that $d(G)$ equals to the maximal eigenvalue.

Write $\mu = \mu_{\rm max} (\frac{G+G^\top}{2})$ for simplicity.
By the Courant--Fisher theorem, we have
\[
\mu
= \max_{\bm x \neq \bm 0} \frac{\bm x^\top \qty(\frac{G+G^\top}{2}) \bm x}{\bm x^\top \bm x}
\ge \frac{\bm 1^\top \qty(\frac{G+G^\top}{2}) \bm 1}{\bm 1^\top \bm 1}
= \frac{\bm 1^\top G \bm 1}{n}
= \frac{1}{n} \sum_{i \in[n]} \sum_{j \neq i} g_{ji}
= d(G).
\]
For the converse direction, the Perron--Frobenius theorem implies there exists an eigenvector $\bm x$ with nonnegative coordinates that corresponds to $\mu$.
Choose any $i \in [n]$ such that $x_i \ge x_j$ for all $j \in [n]$.
Since $(\mu, \bm x)$ is an eigenpair of $\frac{G+G^\top}{2}$, we have $(\frac{G+G^\top}{2}) \bm x =  \mu \bm x$.
Moreover, noticing that $x_i > 0$ since $\bm{x} \neq \bm{0}$, the $i$-th coordinate of this equation implies that
\[
\mu = \sum_{j \neq i} \frac{g_{ij}+g_{ji}}{2} \cdot \frac{x_j}{x_i}
\le \sum_{j \neq i} \frac{g_{ij}+g_{ji}}{2} = d(G),
\]
where the last equality holds by Lemma~\ref{lem_regular}.
\end{proof}

With $\Lambda = \lambda I$, we calculate
\[
A_\Lambda = (\lambda-1) I - \lambda \beta \cdot \frac{G+G^\top}{2} \quad {\rm  and} \quad B_{\Lambda} = \frac{\lambda}{2n} \1 \1^\top.
\]
Since the diagonal entries of $A_\Lambda$ are all $\lambda - 1$, a necessary condition for dual feasibility is $\lambda \ge 1$.
The next two lemmas refine this condition.

\begin{lemma} \label{lem_regular_dual1}
The scalar matrix $\lambda I$ is dual feasible if
\begin{equation} \label{dual_regular1}
\lambda > \frac{1}{1- |\beta| d(G)}.
\end{equation}
\end{lemma}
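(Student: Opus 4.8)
The plan is to reduce dual feasibility of $\lambda I$ to a single positive-definiteness requirement on $A_\Lambda$, and then to verify that requirement by an eigenvalue estimate resting on the spectral-radius identity $d(G) = \mu_{\rm max}(\frac{G+G^\top}{2})$ established in the preceding lemma. The reduction is the conceptual content; the eigenvalue bound is then essentially arithmetic.

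First I would unpack what dual feasibility means in \eqref{dual_sdp2}: with $C_\Lambda = B_\Lambda^\top A_\Lambda^+ B_\Lambda$ held fixed, it asks that $\mathbf{M}_\Lambda \succeq O$. Applying Lemma~\ref{lem_block} to $\mathbf{M}_\Lambda$, whose off-diagonal block is $-B_\Lambda$, the Schur-complement condition reads $C_\Lambda \succeq B_\Lambda^\top A_\Lambda^+ B_\Lambda$, which holds with equality by the very definition of $C_\Lambda$. Hence feasibility collapses to the two conditions $A_\Lambda \succeq O$ and the range inclusion $\ran(B_\Lambda) \subseteq \ran(A_\Lambda)$. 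The key simplification I would exploit is that if $A_\Lambda$ turns out to be \emph{strictly} positive definite, then it is invertible, the range inclusion is automatic, and dual feasibility follows immediately. So it suffices to prove $A_\Lambda \succ O$ under \eqref{dual_regular1}.

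The heart of the argument is this eigenvalue estimate. Since $A_\Lambda = (\lambda-1)I - \lambda\beta\frac{G+G^\top}{2}$, its eigenvalues are exactly $(\lambda-1) - \lambda\beta\mu$ as $\mu$ ranges over the eigenvalues of $\frac{G+G^\top}{2}$. By the preceding lemma, $d(G)$ is the spectral radius of $\frac{G+G^\top}{2}$, so $|\mu| \le d(G)$ for every such $\mu$. Because the right-hand side of \eqref{dual_regular1} is positive (as $1 - |\beta| d(G) > 0$ under Basic Assumption), the hypothesis forces $\lambda > 0$, and I may therefore bound $\lambda\beta\mu \le \lambda|\beta|\,|\mu| \le \lambda|\beta|d(G)$. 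Consequently every eigenvalue of $A_\Lambda$ is at least $(\lambda-1) - \lambda|\beta|d(G) = \lambda\bigl(1-|\beta|d(G)\bigr) - 1$, and \eqref{dual_regular1} says precisely that $\lambda\bigl(1-|\beta|d(G)\bigr) > 1$, so all eigenvalues are strictly positive and $A_\Lambda \succ O$.

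I do not anticipate a genuine obstacle; the one point requiring care is the sign bookkeeping in the eigenvalue bound. Because $\beta$ may be negative and $\mu$ ranges over both signs, I must take absolute values to control the worst case, which is legitimate only after recording $\lambda > 0$. Once $A_\Lambda \succ O$ is in hand, I close the argument by invoking the first-step reduction: invertibility of $A_\Lambda$ renders the range inclusion vacuous, the Schur condition holds by construction of $C_\Lambda$, and so $\mathbf{M}_\Lambda \succeq O$, i.e.\ $\lambda I$ is dual feasible.
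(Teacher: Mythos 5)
Your proof is correct and follows essentially the same route as the paper: reduce dual feasibility to $A_\Lambda \succ O$ via Lemma~\ref{lem_block}, then bound every eigenvalue $(\lambda-1)-\lambda\beta\mu$ below by $\lambda\bigl(1-|\beta|d(G)\bigr)-1$ using the spectral-radius identity $d(G)=\mu_{\rm max}\bigl(\tfrac{G+G^\top}{2}\bigr)$. Your explicit remark that \eqref{dual_regular1} forces $\lambda>0$, which legitimizes the absolute-value bound, is a point the paper leaves implicit.
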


\begin{proof}
By Lemma~\ref{lem_block}, a sufficient condition for dual feasibility is that $A_\Lambda \succ O$.
By the construction of $A_\Lambda$, this holds if $\lambda - 1 - \lambda \beta \mu > 0$ for all eigenvalues $\mu$ of $\frac{G+G^\top}{2}$.
In particular, for any such $\mu$,
\[
\lambda (1-\beta \mu) - 1
\ge \lambda \qty(1 - |\beta| d(G)) - 1.
\]
Since $|\beta| d(G) < 1$, the above right-hand side---and thus, every eigenvalue of $A_\Lambda$---is strictly positive if 
\eqref{dual_regular1} holds.
\end{proof}

\begin{lemma} \label{lem_regular_dual2}
Suppose that $\beta < 0$.
Then, the scalar matrix $\lambda I$ is dual feasible if and only if
\begin{equation} \label{dual_regular2}
\lambda \ge \frac{1}{1-\beta \mu_{\rm min} (\frac{G+G^\top}{2})}.
\end{equation}
\end{lemma}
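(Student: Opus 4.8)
The plan is to characterize dual feasibility of the scalar matrix $\lambda I$ directly through the block-matrix criterion of Lemma~\ref{lem_block}, and then reduce everything to a spectral condition on $A_\Lambda$. Recall that $\lambda I$ is dual feasible exactly when $\mathbf{M}_\Lambda \succeq O$ with $C_\Lambda = B_\Lambda^\top A_\Lambda^+ B_\Lambda$. Applying Lemma~\ref{lem_block} with the blocks $A = A_\Lambda$, $B = -B_\Lambda$, and $C = C_\Lambda$, the third condition $C_\Lambda \succeq B_\Lambda^\top A_\Lambda^+ B_\Lambda$ holds automatically with equality by the definition of $C_\Lambda$, so dual feasibility collapses to the two conditions $A_\Lambda \succeq O$ and $B_\Lambda = A_\Lambda A_\Lambda^+ B_\Lambda$, the latter asserting that the columns of $B_\Lambda$ lie in $\mathrm{range}(A_\Lambda)$. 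The proof then splits into handling these two conditions.

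For the first condition I would diagonalize. Since $A_\Lambda = (\lambda-1)I - \lambda\beta\cdot\frac{G+G^\top}{2}$ is affine in $\frac{G+G^\top}{2}$, the two matrices share an eigenbasis, and each eigenvalue $\mu$ of $\frac{G+G^\top}{2}$ contributes the eigenvalue $\lambda(1-\beta\mu)-1$ to $A_\Lambda$. The preceding lemma supplies $|\beta|d(G)<1$ together with $d(G) = \mu_{\rm max}(\frac{G+G^\top}{2}) \ge -\mu_{\rm min}(\frac{G+G^\top}{2})$, which forces $1-\beta\mu > 0$ for every eigenvalue $\mu$ (the tightest case being $\mu=\mu_{\rm min}$, where $1-\beta\mu_{\rm min} = 1-|\beta\mu_{\rm min}| > 0$). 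Consequently $A_\Lambda\succeq O$ is equivalent to $\lambda(1-\beta\mu)\ge 1$ for all eigenvalues $\mu$, i.e. to $\lambda \ge \max_\mu \tfrac{1}{1-\beta\mu} = \tfrac{1}{1-\beta\mu_{\rm min}}$ (the maximum is at $\mu_{\rm min}$ because $1-\beta\mu$ is increasing in $\mu$). This is precisely \eqref{dual_regular2}, and it also forces $\lambda>0$.

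The main work is to show the range condition $B_\Lambda = A_\Lambda A_\Lambda^+ B_\Lambda$ is automatic once $A_\Lambda\succeq O$, so it imposes no extra restriction. Since $B_\Lambda = \frac{\lambda}{2n}\mathbf{1}\mathbf{1}^\top$ has columns proportional to $\mathbf{1}$, the condition reduces to $\mathbf{1}\in\mathrm{range}(A_\Lambda)$. Because any transitive network is regular (Lemma~\ref{lem_regular}), $\frac{G+G^\top}{2}\mathbf{1} = d(G)\mathbf{1}$, so $\mathbf{1}$ is an eigenvector of $A_\Lambda$ with eigenvalue $\lambda(1-\beta d(G))-1$, associated with $\mu = d(G) = \mu_{\rm max}$. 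Since $\mu_{\rm max} > \mu_{\rm min}$ (strict as $G\neq O$) and $-\beta>0$, we have $1-\beta\mu_{\rm max} > 1-\beta\mu_{\rm min}$; combined with $\lambda>0$ and $\lambda(1-\beta\mu_{\rm min})\ge 1$, this gives $\lambda(1-\beta\mu_{\rm max}) > 1$, so the eigenvalue of $A_\Lambda$ along $\mathbf{1}$ is strictly positive. As $A_\Lambda$ is symmetric, $\mathbf{1}$ thus lies in the span of eigenvectors with nonzero eigenvalues, i.e. in $\mathrm{range}(A_\Lambda)$. This establishes the range condition and completes the equivalence: dual feasibility holds if and only if $A_\Lambda\succeq O$, hence if and only if \eqref{dual_regular2}.

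I expect the range condition to be the crux, as it is the step where the sign of $\beta$ genuinely matters. For $\beta<0$ the Perron direction $\mathbf{1}$ is aligned with the \emph{largest}, strictly positive eigenvalue of $A_\Lambda$, so it never enters the kernel; were $\beta>0$, the Perron direction would instead carry the \emph{smallest} eigenvalue and could coincide with the kernel at the feasibility boundary, so the range condition would bind and the clean equivalence would fail. Verifying that $\mathbf{1}$ sits in $\mathrm{range}(A_\Lambda)$ — not merely that $A_\Lambda\succeq O$ — is therefore the essential, non-routine part of the argument, and it is exactly why the lemma is stated only for $\beta<0$.
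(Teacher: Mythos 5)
Your proof is correct and follows essentially the same route as the paper's: both reduce dual feasibility via Lemma~\ref{lem_block} to $A_\Lambda \succeq O$ plus the range condition $B_\Lambda = A_\Lambda A_\Lambda^+ B_\Lambda$, identify the binding eigenvalue of $A_\Lambda$ at $\mu_{\rm min}$ using $1-\beta\mu>0$, and then use regularity of the transitive network to show $\mathbf{1}$ is an $A_\Lambda$-eigenvector with strictly positive eigenvalue $\lambda(1-\beta d(G))-1$, hence in $\mathrm{range}(A_\Lambda)$. The only cosmetic difference is that the paper exhibits an explicit preimage $\bm{x}=\mathbf{1}/(\lambda(1-\beta d(G))-1)$ rather than invoking the spectral decomposition of the range.
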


\begin{proof}
By Lemma~\ref{lem_block}, a necessary and sufficient condition for dual feasibility is that $A_\Lambda \succeq O$ and $B_\Lambda = A_\Lambda A_\Lambda^+ B_\Lambda$.
Note that $\lambda < 1$ leads to the violation of $A_\Lambda \succeq O$.
So, henceforth, we focus on $\lambda \ge 1$,
Then, since $\beta < 0$, the minimum eigenvalue of $A_\Lambda$ is given as $\lambda - 1 - \lambda \beta \mu_{\rm min} (\frac{G+G^\top}{2})$.
Hence, since $1-\beta \mu_{\rm min} (\frac{G+G^\top}{2}) > 0$, we have $A_\Lambda \succeq O$ if and only if \eqref{dual_regular2}.

Next, we show that $B_\Lambda = A_\Lambda A_\Lambda^+ B_\Lambda$ is automatically satisfied under \eqref{dual_regular2}.
Since $A_\Lambda A_\Lambda^+$ is the projector onto the range of $A_\Lambda$, and since $B_\Lambda$ is a nonzero constant matrix, $B_\Lambda = A_\Lambda A_\Lambda^+ B_\Lambda$ is satisfied if $A_\Lambda \bm{x} = \1$ for some $\bm{x} \in \R^n$.
Since $G$ is regular, $G\1 = G^\top \1 = d(G) \1$.
This implies
\[
A_\Lambda \1 = (\lambda - 1) \1 - \lambda \beta \cdot \frac{G\1 + G^\top \1}{2} = \qty(\lambda (1 - \beta d(G)) - 1) \1.
\]
Also, noticing that $\beta < 0$ and $d(G) > 0 > - \mu_{\rm min}(\frac{G+G^\top}{2})$, \eqref{dual_regular2} implies
\[
\lambda (1-\beta d(G)) - 1
\ge \frac{1-\beta d(G)}{1-\beta \mu_{\rm min}(\frac{G+G^\top}{2})} - 1
> 0.
\]
Hence, we have $A_\Lambda \bm{x} = \1$ by taking $\bm{x} = \1/(\lambda (1-\beta d(G)) - 1)$.
\end{proof}

\begin{lemma} \label{lem_not_nf}
If an information structure $\pi$ is Gaussian, $G$-invariant, state-identifiable, and noise-free, then $\pi$ must be full disclosure $\overline{\pi}$.
\end{lemma}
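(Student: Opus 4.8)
The plan is to exploit the fact that, in the common-value specification adopted in this proof ($R=\1\1^\top/n$ and $\theta\equiv\theta_1=\cdots=\theta_n$), the state collapses to the scalar $\theta$, since $\bm{\theta}=\theta\1$. First I would combine noise-freeness with joint normality: because $\Var[\bm{\sigma}^\pi\mid\bm{\theta}]=O$ and $(\bm{\sigma}^\pi,\bm{\theta})$ is Gaussian, Lemma~\ref{lem_cond_normal} implies that $\bm{\sigma}^\pi$ is a.s.\ an affine function of $\bm{\theta}$. After centering and substituting $\bm{\theta}=\theta\1$, this reduces to $\bm{\sigma}^\pi=\bm{c}\,\theta$ a.s.\ for some coefficient vector $\bm{c}\in\R^n$. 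Consequently $X\equiv\Var[\bm{\sigma}^\pi]=\bm{c}\bm{c}^\top$ and $Y\equiv\Cov[\bm{\sigma}^\pi,\bm{\theta}]=\bm{c}\1^\top$, using the normalization $\Var[\theta]=1$.

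Next I would pin down the direction of $\bm{c}$ via $G$-invariance and transitivity (which is in force throughout this proof). Since $P_\tau\1=\1$ for every $\tau\in\Aut(G)$, $G$-invariance of $Y=\bm{c}\1^\top$ reduces to $P_\tau\bm{c}=\bm{c}$ for all $\tau\in\Aut(G)$; transitivity then forces $\bm{c}=c\1$ for a scalar $c$. State-identifiability rules out $c=0$, since $c=0$ makes $\bm{\sigma}^\pi$ a.s.\ constant, giving $\Var[\bm{\theta}\mid\bm{\sigma}^\pi]=Z\neq O$, a contradiction. Hence $\bm{\sigma}^\pi=c\,\theta\1$ with $c\neq 0$.

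To determine $c$, I would invoke the feasibility (obedience) condition $\diag(XQ^\top)=\diag(YR^\top)$ from Lemma~\ref{lem_moment}. Using $Q\1=(1-\beta d(G))\1$ (valid because the transitive $G$ is regular by Lemma~\ref{lem_regular}) and $R^\top\1=\1$, a direct computation shows that each diagonal entry of $XQ^\top$ equals $c^2(1-\beta d(G))$ while each diagonal entry of $YR^\top$ equals $c$, whence $c=1/(1-\beta d(G))$. Finally, I would observe that the full-disclosure equilibrium is $\bm{\sigma}^{\overline{\pi}}=Q^{-1}R\bm{\theta}=\tfrac{1}{1-\beta d(G)}\,\theta\1$, which yields exactly the same blocks $X=\tfrac{1}{(1-\beta d(G))^2}\1\1^\top$, $Y=\tfrac{1}{1-\beta d(G)}\1\1^\top$, and $Z=\1\1^\top$. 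Therefore $\mathbf{M}^\pi=\mathbf{M}^{\overline{\pi}}$, and by the outcome-equivalence identification of information structures, $\pi=\overline{\pi}$.

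The main obstacle is the opening reduction: correctly translating the three abstract hypotheses---Gaussianity, noise-freeness, and state-identifiability---into the concrete linear representation $\bm{\sigma}^\pi=c\,\theta\1$, and in particular justifying that $G$-invariance together with transitivity collapses the coefficient vector to a scalar multiple of $\1$. Once this representation is secured, the determination of $c$ and the identification with $\overline{\pi}$ are short algebraic verifications.
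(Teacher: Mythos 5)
Your proof is correct, but it distributes the work across the hypotheses differently from the paper. The paper's proof never writes down the linear representation $\bm{\sigma}^\pi = \bm{c}\,\theta$: it only extracts the scalar moment identities $x = y^2$ (from noise-freeness applied coordinatewise) and $x\,\1^\top X^+ \1 = 1$ (from state-identifiability), using $G$-invariance merely to make the diagonal of $X$ and the entries of $\bm{y}$ constant; the crux is then a Cauchy--Schwarz argument in the inner product $\langle \bm{a},\bm{b}\rangle = \bm{a}^\top X^+\bm{b}$, which shows $x\,\1^\top X^+\1 \ge n^2 x / \sum_{i,j} x_{ij} \ge 1$ with equality forcing $X = x\1\1^\top$. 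You instead use $G$-invariance of the full matrix $\Cov[\bm{\sigma}^\pi,\bm{\theta}] = \bm{c}\1^\top$ to collapse the coefficient vector to $c\1$ outright, which delivers $X = c^2\1\1^\top$ with no inequality argument; state-identifiability is demoted to ruling out $c=0$, and you add an explicit obedience step to pin $c = 1/(1-\beta d(G))$ and match $\mathbf{M}^{\overline{\pi}}$ exactly (the paper leaves this last identification implicit). Your route is more direct and makes the final outcome-equivalence airtight; the paper's route is slightly more robust in that it only needs the diagonal symmetry of $X$ and would survive a weaker, entrywise notion of noise-freeness. Both arguments correctly rely on the conventions in force inside the proof of Proposition~\ref{prop_welfare_regular} ($R=\1\1^\top/n$, $\Var[\theta]=1$, and regularity of transitive $G$ from Lemma~\ref{lem_regular}).
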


\begin{proof}
Let $X = \Var \qty[\bm{\sigma}^\pi]$ and $\bm{y} = \Var \qty[\bm{\sigma^\pi}, \theta]$.
Since $\pi$ is $G$-invariant and $\Aut(G)$ is transitive, all diagonal entries of $X$ are constant, $x \equiv x_{11} = \cdots = x_{nn}$, as well as all entries of $\bm {y}$ are constant, $y \equiv y_1 = \cdots = y_n$.
Then, since $\pi$ is Gaussian and noise-free, Lemma~\ref{lem_cond_normal} implies
\[
\Var \qty[\sigma^\pi_i \mid \theta] = x - y^2= 0.
\]
Similarly, since $\pi$ is state-identifying,
\[
\Var \qty[\theta \mid \bm{\sigma}^\pi] = 1 - \bm{y}^\top X^+ \bm{y} = 1 - y^2 \1^\top X^+ \1 = 0.
\]
Combining these, we obtain
\begin{equation} \label{regular_sinf_full}
x\1^\top X^+ \1 = 1.
\end{equation}
This equation is satisfied when $X = x\1\1^\top$, corresponding the case where $\pi = \overline{\pi}$.

We now show that \eqref{regular_sinf_full} is satisfied by positive semidefinite $X$ only when $X = x\1\1^\top$.
Fix any $X \succeq O$.
Since $X^+ \succeq O$, $\langle \bm{a}, \bm{b} \rangle = \bm{a} X^+ \bm{b}$ defines an inner product operator on $\R^n$.
Taking $\bm{a} = \1$ and $\bm{b} = X\1$, we have
\[
\langle \bm{a}, \bm{a} \rangle  = \1^\top X^+ \1 \quad {\rm and} \quad
\langle \bm{b}, \bm{b} \rangle  = \1^\top XX^+X \1 = \1^\top X \1.
\]
Moreover, since $\1$ belongs to the range of $X$, as $X$ is $G$-invariant, we have
\[
\langle \bm{a}, \bm{b} \rangle = \1^\top X^+ X \1 = \1^\top \1 = n.
\]
By the Cauchy--Schwartz inequality,
\[
n^2 = |\langle \bm{a}, \bm{b} \rangle|^2 \le \langle \bm{a}, \bm{a} \rangle \cdot \langle \bm{b}, \bm{b} \rangle = \qty(\1^\top X^+ \1) \cdot \qty(\1^\top X \1).
\]
Then, it follows that
\[
x \1^\top X^+ \1 \ge \frac{n^2 x}{\1^\top X \1} = \frac{n^2 x}{\sum_{i,j \in [n]} x_{ij}} \ge \frac{n^2 x}{n^2 x} = 1,
\]
where the last inequality holds because $x \ge x_{ij}$ for all $i,j \in [n]$ by the positive semidefiniteness of $X$.
In particular, the inequality holds with equality if and only if $x = x_{ij}$ for all $i,j \in [n]$, i.e., $X = x\1\1^\top$.
\end{proof}

By Theorem~\ref{thm_main}, an information structure $\pi$ is optimal if and only if there exists a dual feasible choice of $\lambda$ such that
\[
A_\Lambda \mqty[\sigma^\pi_1 \\ \vdots \\ \sigma^\pi_n] = \mqty[\lambda \theta/2 \\ \vdots \\ \lambda \theta/2].
\]
By the construction of $A_\Lambda$, the $i$-th entry of this equation is written as follows:
\begin{equation} \label{regular_welfare_opt}
(\lambda-1) \sigma^\pi_i - \lambda \beta \sum_{j \neq i} \frac{g_{ij} + g_{ji}}{2} \sigma^\pi_j = \frac{\lambda \theta}{2}.
\end{equation}
Importantly, this equation indicates that $\theta$ is expressed as a linear function of the strategies of agent $i$ and her neighbors, i.e., those agents $j$ such that either $g_{ij}$ or $g_{ji}$ is nonzero.
In other words, at the optimum, any agent and her neighbors are collectively informed of the state, as the profile of their strategies identifies the realization of $\theta$.

Now, we examine the optimality of $\overline{\pi}$.
Under $\overline{\pi}$, all agents adopt the same equilibrium strategy given by $\sigma^\pi_i = \theta / (1-\beta d(G))$.
Plugging this into \eqref{regular_welfare_opt}, Lemma~\ref{lem_regular} yields
\[
\quad \qty(\frac{\lambda}{2} - \frac{1}{1-\beta d(G)}) \theta = 0.
\]
Since $\Var \qty[\theta] > 0$, this equation pins down the value of $\lambda$ as follows:
\[
\overline{\lambda} = \frac{2}{1-\beta d(G)}.
\]
Hence, $\overline{\pi}$ is optimal if and only if $\overline{\lambda} I$ is dual feasible.
When $\beta \ge 0$ so that $\beta = |\beta|$, Lemma~\ref{lem_regular_dual1} implies that $\overline{\lambda} I$ is dual feasible.
On the other hand, when $\beta < 0$, Lemma~\ref{lem_regular_dual1} implies $\overline{\lambda} I$ is dual feasible if and only if
\[
\frac{2}{1-\beta d(G)} \ge \frac{1}{1-\beta \mu_{\rm min} (\frac{G+G^\top}{2})}.
\]
This condition is equivalently expressed as \eqref{regular_full_condition}.

Next, we consider the case where the condition \eqref{regular_full_condition} is violated.
In this case, $\overline{\pi}$ is no longer optimal, but there is still an optimal information structure $\pi$ that is Gaussian, $G$-invariant and state-identifiable.
By Lemma~\ref{lem_not_nf}, however, such $\pi$ cannot be noise-free.
Hence, by Corollary~\ref{cor_nfsi1}, the associated matrix $A_\Lambda$ must be singular.
Moreover, since the violation of \eqref{regular_full_condition} implies $\beta < 0$, Lemma~\ref{lem_regular_dual2} pins down the corresponding value of $\lambda$ as follows:
\[
\hat{\lambda} = \frac{1}{1-\beta \mu_{\rm \min} (\frac{G+G^\top}{2})}.
\]

\begin{lemma} \label{lem_regular_xy}
Consider any $G$-invariant optimal information structure $\pi$, certified by $\lambda I$.
Then, $x = \Var[\sigma^\pi_i]$ and $y = \Cov[\sigma^\pi_i, \theta]$ are given as follows:
\[
x = \frac{\lambda y}{2} \quad {\rm and} \quad y = \frac{\lambda}{2\lambda (1 -\beta d(G)) - 2}.
\]
\end{lemma}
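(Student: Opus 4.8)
The plan is to exploit the $G$-invariance of $\pi$ together with the transitivity of $G$ to collapse everything onto the two scalars $x$ and $y$, and then read each one off from a single scalar identity. First I would record the symmetry consequences. Since $\pi$ is $G$-invariant, $X \equiv \Var[\bm{\sigma}^\pi]$ and $Y \equiv \Cov[\bm{\sigma}^\pi, \bm{\theta}]$ are $G$-invariant, and transitivity forces their structure to be constant along orbits: every diagonal entry of $X$ equals a common value $x$, while—because each $\theta_k$ coincides with the common state $\theta$ (normalized to $\E[\theta]=0$, $\Var[\theta]=1$)—every entry of $Y$ equals a common $y$, so $Y = y\mathbf{1}\mathbf{1}^\top$ and $Z = \mathbf{1}\mathbf{1}^\top$. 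Regularity of $G$ (Lemma~\ref{lem_regular}) gives $G\mathbf{1}=G^\top\mathbf{1}=d(G)\mathbf{1}$, so with $A_\Lambda=(\lambda-1)I-\lambda\beta\frac{G+G^\top}{2}$ and $B_\Lambda=\frac{\lambda}{2n}\mathbf{1}\mathbf{1}^\top$ already computed in this proof, I obtain the eigen-relation $A_\Lambda\mathbf{1}=[\lambda(1-\beta d(G))-1]\mathbf{1}$.

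To pin down $y$, I would use the second optimality equation $A_\Lambda Y = B_\Lambda Z$ from \eqref{cs} (equivalently, multiply \eqref{regular_welfare_opt} by $\theta$ and take expectations). The right-hand side is $B_\Lambda\mathbf{1}\mathbf{1}^\top=\frac{\lambda}{2}\mathbf{1}\mathbf{1}^\top$, while the left-hand side equals $y[\lambda(1-\beta d(G))-1]\mathbf{1}\mathbf{1}^\top$ by the eigen-relation. Matching coefficients yields $y[\lambda(1-\beta d(G))-1]=\frac{\lambda}{2}$, i.e. the claimed $y=\lambda/(2\lambda(1-\beta d(G))-2)$.

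To pin down $x$, I would compute the optimal value $v^{\rm p}$ two ways and equate. Since welfare maximization is encoded by $V=I$, $W=O$, the SDP objective is $v^{\rm p}=\mathbf{V}\bullet\mathbf{M}_{X,Y}=I\bullet X=\tr X=nx$. On the other hand, $\pi$ is optimal and certified by $\lambda I$, so Lemma~\ref{lem_value} gives $v^{\rm p}=B_\Lambda\bullet Y=\frac{\lambda}{2n}\mathbf{1}\mathbf{1}^\top\bullet y\mathbf{1}\mathbf{1}^\top=\frac{n\lambda y}{2}$. Equating $nx=\frac{n\lambda y}{2}$ delivers $x=\lambda y/2$.

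The step to handle carefully is the bookkeeping of the common-value reduction and the normalization of the objective: the SDP objective with $V=I$ equals $\tr X=nx$, which is exactly twice the utilitarian welfare $\tfrac12\sum_i\Var[\sigma_i]$ of Lemma~\ref{lem_obj_welfare}, so substituting the welfare value in place of the SDP value would inject a spurious factor of two and break the identity $x=\lambda y/2$. An alternative route to $x$ that avoids the value computation is to multiply \eqref{regular_welfare_opt} by $\sigma_i^\pi$, take expectations, and then use the obedience condition $x-\beta\sum_{j\neq i}g_{ij}\Cov[\sigma_i^\pi,\sigma_j^\pi]=y$ to cancel the neighbor-covariance term; this cancellation is immediate for undirected $G$, but for directed $G$ it additionally requires observing that $GX$ and $G^\top X$ are $G$-invariant with equal traces, so their constant diagonals coincide. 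I would present the two-value computation as the main argument, since it is shorter and sidesteps this directed-network subtlety entirely.
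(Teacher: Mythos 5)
Your proof is correct, and for the first identity ($y$) it coincides with the paper's argument: both extract $y\,[\lambda(1-\beta d(G))-1]=\lambda/2$ from the covariance of \eqref{regular_welfare_opt} with $\theta$ (equivalently, from $A_\Lambda Y=B_\Lambda Z$ with $Y=y\1\1^\top$, $Z=\1\1^\top$, using $A_\Lambda\1=[\lambda(1-\beta d(G))-1]\1$). For the second identity ($x=\lambda y/2$) you take a genuinely different route. The paper covaries \eqref{regular_welfare_opt} with $\sigma^\pi_i$, averages the resulting equation over $i\in[n]$, and then uses the obedience condition $\beta\sum_{j\neq i}g_{ij}x_{ij}=x-y$ together with the symmetry $x_{ij}=x_{ji}$ to show that the $G$- and $G^\top$-weighted covariance sums both aggregate to $n(x-y)$ — exactly the directed-network bookkeeping you flag in your closing remark. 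You instead equate two representations of the optimal value: $v^{\rm p}=\mathbf{V}\bullet\mathbf{M}_{X,Y}=\tr X=nx$ (since $V=I$, $W=O$ in the welfare normalization) and $v^{\rm p}=B_\Lambda\bullet Y=\tfrac{n\lambda y}{2}$ from Lemma~\ref{lem_value}, which is shorter and sidesteps the averaging argument entirely; the cost is that it leans on the certification machinery (Lemma~\ref{lem_value}) and on the specific objective $V=I$, $W=O$, whereas the paper's computation works directly from \eqref{regular_welfare_opt} and primal feasibility. Your caution about the factor of two between the SDP objective $\tr X$ and the utilitarian welfare $\tfrac12\sum_i\Var[\sigma^\pi_i]$ is well placed and correctly resolved — the identity $nx=\tfrac{n\lambda y}{2}$ must be computed with the same normalization on both sides, as you do.
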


\begin{proof}
By taking the covariance between each side of \eqref{regular_welfare_opt} and $\theta$, we have
\[
(\lambda - 1)y - \lambda \beta \sum_{j \neq i} \frac{g_{ij} + g_{ji}}{2} y = \frac{\lambda}{2}.
\]
Noticing that $G$ is regular, arranging the above equation yields $y$ as in the lemma.
Also, by taking the covariance between each side of \eqref{regular_welfare_opt} and $\sigma^\pi_i$, we have
\[
(\lambda-1)x - \lambda \beta \sum_{j \neq i} \frac{g_{ij} + g_{ji}}{2} x_{ij} = \frac{\lambda y}{2}.
\]
Taking the average of this equation across all $ i \in [n]$, we get
\begin{equation} \label{eq_regular_x}
(\lambda-1) x - \frac{\lambda \beta}{2n} \sum_{i \in [n]} \sum_{j \neq i} g_{ij} x_{ij} - \frac{\lambda \beta}{2n} \sum_{i \in [n]} \sum_{j \neq i} g_{ji} x_{ij} = \frac{\lambda y}{2}.
\end{equation}
Notice that $\beta \sum_{ij} g_{ij} x_{ij} = x-y$ holds by primal feasibility.
Moreover, since $x_{ij} = x_{ji}$ for all $i \neq j$,
\[
\beta \sum_{i \in [n]} \sum_{j \neq i} g_{ji} x_{ij}
= \beta \sum_{i \in [n]} \sum_{j \neq i} g_{ji} x_{ji}
= \beta \sum_{j \in [n]} \sum_{i \neq j} g_{ji} x_{ji}
= n(x-y).
\]
Substituting this into \eqref{eq_regular_x}, it follows that $(\lambda - 1) x - \lambda (x-y) = \lambda y/2$, whence $x = \lambda y /2$.
\end{proof}

For the Gaussian optimal information structure $\pi$, Lemma~\ref{lem_regular_xy} implies that
\[
s_i = \frac{y^2}{x} = \frac{2y}{\lambda} = \frac{1}{\lambda (1-\beta d(G)) - 1}.
\]
Writing $d = d(G)$ and $\mu = \mu_{\rm min}(\frac{G+G^\top}{2})$ for simplicity, evaluating the above $s_i$ at $\lambda = \hat{\lambda}$ yields
\[
s_i = \frac{1}{\frac{1-\beta d}{1-\beta \mu} - 1}
= \frac{1-\beta \mu}{(1-\beta d) - (1-\beta \mu)}
= \frac{1-\beta \mu}{- \beta (d - \mu)} 
= \frac{1+|\beta| \mu}{|\beta| (d - \mu)}
=\frac{1/|\beta| + \mu}{d - \mu},
\]
as desired.
\hfill {\it Q.E.D.}


\renewcommand{\thesection}{C}
\renewcommand{\theequation}{C\arabic{equation}}
\renewcommand{\thelemma}{C\arabic{lemma}}
\renewcommand{\thecorollary}{C\arabic{corollary}}
\renewcommand{\theproposition}{C\arabic{proposition}}

\renewcommand{\theHlemma}{C\arabic{lemma}}
\renewcommand{\theHproposition}{C\arabic{proposition}}
\renewcommand{\theHequation}{C\arabic{equation}}

\setcounter{section}{2}
\setcounter{equation}{0}
\setcounter{lemma}{0}
\setcounter{proposition}{0}

\section{Optimal information structures in complete networks}
\label{app_comp}

This Appendix~\ref{app_comp} derives explicit solutions for optimal information structures under a general designer objective, assuming that the underlying network is complete: $K = \1\1^\top-I$.
We assume $n \ge 2$ throughout to avoid triviality.
For notational simplicity, we denote by $\bm{\theta} = (\theta_1,\ldots,\theta_n)$ the vector of agents' personal states.

The variance matrix of agents' personal states is given by $Z = [z_{ij}]_{n \times n}$, where $z_{ii} = 1$ and $z_{ij} = \rho$ for $i \neq j$, with $-1/(n-1) < \rho \le 1$.
This parametric restriction on $\rho$ ensures $Z \succeq O$.
Also, to align with Basic Assumption (see \eqref{q_network}), we focus on the following range of $\beta$:
\[
-1 < \beta < \frac{1}{n-1}.
\]
The designer’s $K$-invariant objective can be represented in a parametric form with parameters $v, w, c \in \R$:
\[
V = \begin{bmatrix}
v & c & \cdots & c \\
c & v & \cdots & c \\
\vdots & \vdots & \ddots & \vdots \\
c & c & \cdots & v
\end{bmatrix} \quad {\rm and} \quad
\quad W = wI.
\]
By Lemma~\ref{lem_personal_W0}, we can without loss of generality assume $w=0$---i.e., the designer's objective is state-independent---since otherwise, letting $\tilde{V} = V-w(I-\beta K)$ and $W = O$ yields an equivalent problem.
The setting described as above is referred to as the \emph{$K$-invariant network environment}.

Applying Corollary~\ref{cor_sym_bangbang}, we see that an optimal information structure either reveals nothing or fully reveals the state to the entire population, depending on the sign of the matrix $V$.
Specifically, $V$ is negative semidefinite if and only if
\begin{equation} \label{no_K}
v \le c \le -\frac{v}{n-1}.
\end{equation}
Thus, $\underline{\pi}$ is optimal if and only if condition \eqref{no_K} holds.
This requires $v \le 0$, reflecting the designer’s preference for suppressing the volatility of individual agents’ actions.
In addition, $c$ is required to have a small absolute value relative to $v$, meaning that the designer receives a small marginal value from the correlation between different agents' actions.
In contrast, if \eqref{no_K} is violated, then any optimal information structure must be state-identifiable, while the precise amount of information revealed to each agent remains ambiguous.

Proposition~\ref{prop_sym} implies that the covariance matrices $\Var[\bm{\sigma}^\pi]$ and $\Cov[\bm{\sigma}^\pi, \bm{\theta}]$ can be taken to be $K$-invariant at the optimum.
This leads us to consider a primal solution $(X,Y)$ that takes the following parametric form, for scalars $(x, y, \rho_x, \rho_y)$:
\begin{equation} \label{cov X Y}
X = \begin{bmatrix}
x & \rho_x x & \cdots & \rho_x x \\
\rho_x x & x & \cdots & \rho_x x \\
\vdots & \vdots & \ddots & \vdots \\
\rho_x x & \rho_x x & \cdots & x
\end{bmatrix} \quad {\rm and} \quad
Y = \begin{bmatrix}
y & \rho_y y & \cdots & \rho_y y \\
\rho_y y & y & \cdots & \rho_y y \\
\vdots & \vdots & \ddots & \vdots \\
\rho_y y & \rho_y y & \cdots & y
\end{bmatrix}.
\end{equation}
To ensure that $(X, Y)$ is primal feasible, these parameters must satisfy the linear constraint
\begin{equation} \label{feasible_K}
x = y + (n-1)\beta \cdot \rho_x x.
\end{equation}
Moreover, Proposition~\ref{prop_sym} implies that certifying Lagrange multipliers can be taken to be uniform across agents: $\lambda \equiv \lambda_1 = \cdots = \lambda_n$.

The optimal information structure is then characterized by solving for the five variables $(x, y, \rho_x, \rho_y, \lambda)$ using the complementary slackness condition, together with the primal feasibility constraint \eqref{feasible_K}.
In what follows, we perfom the detailed calculations, distinguishing between two cases depending on the distribution of agents’ personal payoff states.

\subsection{Common-value case}

We first consider the common-value case, i.e., $\rho = 1$, where $\theta \equiv \theta_1 = \cdots = \theta_n$ occurs with probability $1$.
In this case, we necessarily have $\rho_y = 1$.
By solving for the remaining variables, the next result offers a closed-form expression for optimal information structures.

\begin{proposition} \label{prop_common_K}
Suppose that $\rho = 1$.
\begin{enumerate}[\rm i).]
\item \label{prop_common_K1}
No disclosure $\underline{\pi}$ is optimal if and only if \eqref{no_K} holds.
\item \label{prop_common_K2}
Full disclosure $\overline{\pi}$ is optimal if and only if
\begin{equation} \label{full_common_K}
c \ge - \frac{v}{n-1} \quad {\rm and} \quad c \ge f(\beta) \cdot v, \quad {\rm where} \quad f(\beta) \equiv -\frac{1+(n+1)\beta}{2n-1+(n-1)\beta}.
\end{equation}
Here, $f(\beta)$ satisfies $-1/(n-1) < f(\beta) < 1$.
\item \label{prop_common_K3}
Otherwise, an optimal information structure $\pi$ is partial disclosure, characterized by
\begin{equation*}
x = \frac{(c-v)(\beta v+(2+\beta)c)}{4n(1+\beta)(c+\beta v)^2}, \quad
y = \frac{c-v}{2n(c+\beta v)}, \quad
\rho_x = - \frac{(1+2\beta)v+c}{(n-1)(\beta v+(2+\beta)c)}.
\end{equation*}
Moreover, under such $\pi$, each agent's action recommendation reduces the common state variance as follows:
\begin{equation*}
1-\Var\qty[\theta \mid \sigma^\pi_i] = \frac{(1+\beta)(c-v)}{n(\beta v + (2+\beta)c)} \equiv s(\beta, v,c).
\end{equation*}
\end{enumerate}
\end{proposition}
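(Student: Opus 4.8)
The plan is to exploit the full symmetry of the complete network together with the complementary-slackness characterization of Theorem~\ref{thm_main}, reducing the matrix problem to a handful of scalar equations. First I would invoke Proposition~\ref{prop_sym}: since $K=\1\1^\top-I$ is transitive with $\Aut(K)$ the full symmetric group, there is an optimal information structure whose covariance blocks $X=\Var[\bm{\sigma}^\pi]$ and $Y=\Cov[\bm{\sigma}^\pi,\bm{\theta}]$ are $K$-invariant, hence of the parametric form \eqref{cov X Y}, and which is certified by a uniform multiplier $\Lambda=\lambda I$. Following the device in the proof of Proposition~\ref{prop_welfare_regular}, I would encode the common value by taking $R=\1\1^\top/n$ and normalizing $\Var[\theta]=1$, so that $Z=\1\1^\top$; because every $\theta_j$ equals the scalar $\theta$, the block $Y$ is automatically of the form $y\1\1^\top$, i.e. $\rho_y=1$. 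A direct computation then gives $A_\Lambda=\lambda(I-\beta K)-V$ and $B_\Lambda=\tfrac{\lambda}{2n}\1\1^\top$.

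The key simplification is that $V,Q,X,Y,Z,A_\Lambda,B_\Lambda$ all lie in the commutative algebra spanned by $I$ and $\1\1^\top$, so they are simultaneously diagonal over the two invariant subspaces $\Span\{\1\}$ and $\1^\perp$. I would record the two eigenvalues of $A_\Lambda$, namely $a_1=\lambda(1-\beta(n-1))-(v+(n-1)c)$ on $\1$ and $a_\perp=\lambda(1+\beta)-(v-c)$ on $\1^\perp$, and rewrite the complementary-slackness system \eqref{cs} and the feasibility constraint \eqref{feasible_K} as scalar equations. The decisive one is the $\1^\perp$-component of $A_\Lambda X=B_\Lambda Y^\top$, which collapses to $a_\perp\,x(1-\rho_x)=0$ and forces the trichotomy $x=0$, $\rho_x=1$, or $a_\perp=0$, corresponding respectively to no disclosure, full disclosure, and partial disclosure. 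Dual feasibility of $\lambda I$ would be read off from Lemma~\ref{lem_block}: since $\ran(B_\Lambda)=\Span\{\1\}$, the range condition $B_\Lambda=A_\Lambda A_\Lambda^+B_\Lambda$ only demands $\1\in\ran(A_\Lambda)$, so $\lambda I$ is dual feasible precisely when $A_\Lambda\succeq O$ and $a_1>0$ (or $\lambda=0$ with $-V\succeq O$).

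For part~\eqref{prop_common_K1} I would apply Corollary~\ref{cor_no_full}(i): $B_\Lambda ZB_\Lambda^\top=\tfrac{\lambda^2}{4}\1\1^\top$ vanishes only at $\lambda=0$, where $A_\Lambda=-V$, so $\underline{\pi}$ is optimal iff $V\preceq O$, which is exactly \eqref{no_K}. For part~\eqref{prop_common_K2} I would use Corollary~\ref{cor_no_full}(ii): the optimality equation reduces to $(A_\Lambda Q^{-1}R-B_\Lambda)\1=\bm{0}$, which pins down $\lambda=2(v+(n-1)c)/(1-\beta(n-1))$ and yields $a_1=v+(n-1)c$; dual feasibility $a_1\ge0$ and $a_\perp\ge0$ then simplify, after clearing the positive denominator $2n-1+(n-1)\beta$, to the two inequalities in \eqref{full_common_K}, and the bounds on $f(\beta)$ follow from $-1<\beta<1/(n-1)$. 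For part~\eqref{prop_common_K3}, in the complementary region I would set $a_\perp=0$, giving $\lambda=(v-c)/(1+\beta)$ and $a_1=-n(c+\beta v)/(1+\beta)$; the two $\1$-component equations of \eqref{cs} together with \eqref{feasible_K} then form a determined rational system whose solution produces the stated $x$, $y$, $\rho_x$, and the variance reduction $s=y^2/x$.

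The main obstacle will be handling the rank deficiency of $Z=\1\1^\top$ correctly. The naive reduction \eqref{dual_sdp2} to $\Gamma=C_\Lambda$ would spuriously require $A_\Lambda$ to be invertible and would exclude the genuinely optimal partial solution, where $A_\Lambda$ is singular on $\1^\perp$ (that is, $a_\perp=0$); the resolution is precisely that taking $R=\1\1^\top/n$ keeps $\ran(B_\Lambda)=\Span\{\1\}$, so only $\1\in\ran(A_\Lambda)$ is needed and $a_\perp=0$ becomes admissible. Beyond this, I would still have to check that the three parameter regions partition the admissible set and are internally consistent: that in the partial region $\underline{\pi}$ is suboptimal (so $v^{\rm p}>0$ by Lemma~\ref{lem_value}, forcing $\lambda\neq0$ and hence $a_1>0$ via Corollary~\ref{cor_nfsi1}), that the full-disclosure multiplier is then infeasible, and that the solved $X$ is positive semidefinite, so that $\lambda I$ indeed certifies the claimed optimum.
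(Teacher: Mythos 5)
Your proposal follows essentially the same route as the paper's proof: symmetrize via Proposition~\ref{prop_sym} to a uniform multiplier $\lambda I$ and the parametric $(x,y,\rho_x,\rho_y)$ form, encode the common value by $R=\1\1^\top/n$ so that $A_\Lambda$ diagonalizes over $\Span\{\1\}$ and $\1^\perp$, extract the factored equation $a_\perp(1-\rho_x)x=0$ to obtain the no/full/partial trichotomy with the same three multipliers $\lambda=0$, $\overline{\lambda}=2(v+(n-1)c)/(1-(n-1)\beta)$, $\hat{\lambda}=(v-c)/(1+\beta)$, and read dual feasibility off the range condition $\1\in\ran(A_\Lambda)$ exactly as in the paper's Lemma~\ref{lem_K_full1}. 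The only substantive loose ends are ones you already flag as checks to perform — the boundary case $v+(n-1)c=0$ in part~ii) (which the paper resolves by a zero-value comparison with $\underline{\pi}$) and the verification that $a_1=-n(c+\beta v)/(1+\beta)>0$ in the partial region (the paper's Lemma~\ref{lem_K_full2}, which in your framework also follows from Corollary~\ref{cor_certify} since the symmetric certifier must be dual feasible); your parenthetical appeal to Corollary~\ref{cor_nfsi1} for this last point is misplaced, but your own dual-feasibility characterization supplies the correct justification.
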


\begin{figure}[t]
\centering
\includegraphics[width=4in]{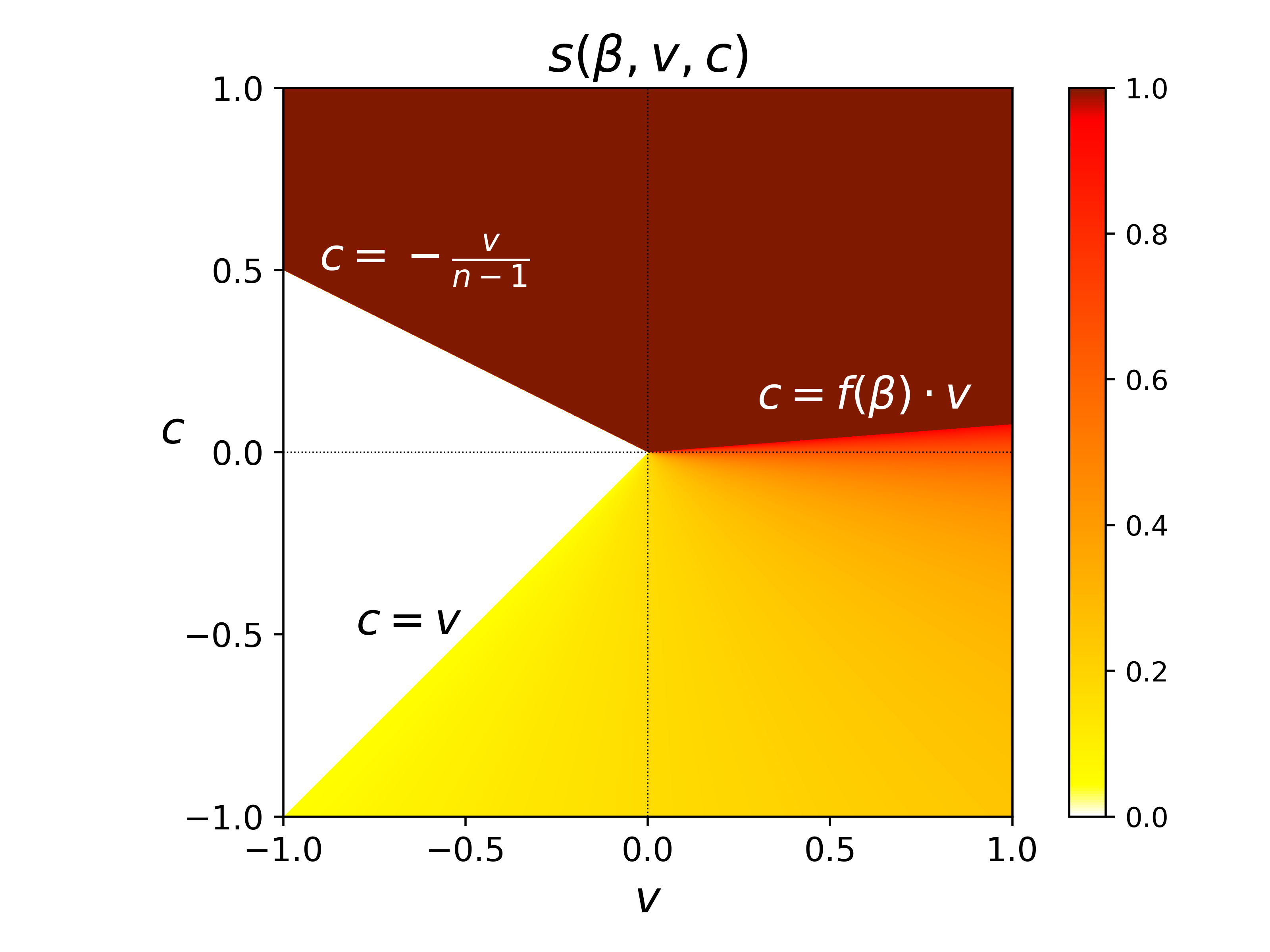} 	
\caption{In the common-value case ($\rho = 1$), an optimal information structure is either no disclosure (white), full disclosure (dark red), or partial disclosure (yellow-red gradient), depending on the values of $v$, $c$, and $\beta$. In the gradient region, the color range expresses the value of $s(\beta,v,c)$. The diagram is obtained by setting $n=3$ and $\beta=-1/3$.}
\label{fig_common_opt}
\end{figure}

The results of Proposition~\ref{prop_common_K} are illustrated in Figure~\ref{fig_common_opt}, which presents a color map with $v$ and $c$ on the horizontal and vertical axes, respectively.
The diagram is divided into three regions according to the linear boundaries identified in the proposition.
The white region, defined by condition \eqref{no_K}, corresponds to values of $(v, c)$ for which $\underline{\pi}$ is optimal.
In contrast, the dark red region, determined by \eqref{full_common_K}, corresponds to the values under which $\overline{\pi}$ is optimal.
The remaining region---shaded in a gradient from yellow to red---indicates where partial disclosure is optimal.

When partial disclosure is optimal, the information structure is state-identifiable, though the informativeness of each agent's signal varies with parameters.
The color intensity in the figure reflects the value of $s(\beta, v, c)$ for a given value of $\beta$, which measures the fraction of the variance in $\theta$ eliminated by conditioning on $\sigma^\pi_i$.
Darker shades correspond to higher values.
The isoquants of this measure appear as rays from the origin, capturing the fact that $s(\beta, v, c)$ is determined by the ratio between $v$ and $c$ when $\beta$ is fixed.
As $c/v$ approaches $f(\beta)$ from $-1$, the informativeness of individual signals increases.

Notably, the optimal information structure is not noise-free in the partial disclosure region.
In this case, a direct calculation shows that the sign of $\Var[\sigma^\pi_i \mid \theta] = x - y^2$ is determined by $(c - v)(c - f(v))$, and it is positive whenever both \eqref{no_K} and \eqref{full_common_K} are violated.
This implies that in the common-value case, the designer’s optimal policy involves adding noise to each agent’s signal individually, in a way that these noise components cancel out in aggregate, thereby ensuring state-identifiability.


Several papers in the literature have studied information design in symmetric common-value LQG games that are similar to the model of this section.
\citet{angeletospavan2007} and \citet{uiyoshizawa2015} consider symmetric Gaussian information structures in which each player receives both conditionally i.i.d.\ private signal and public signal and characterize the optimal combination of them.
\citet{bergemannmorris2013} examine a homogeneous-product Cournot game and derive a symmetric Gaussian information structure that maximizes producer surplus.
However, whether these symmetric Gaussian structures remain optimal when the designer can choose among ``all'' information structures, including asymmetric or non-Gaussian information ones, has not been established.
Our analysis addresses this gap by showing that the symmetric Gaussian structure is robustly optimal: it cannot be outperformed, even when the designer is free to adopt more general disclosure policies.

\subsection{Private-value case}

Next, we consider the private-value case, i.e., $\rho_z < 1$, under which each agent's personal state contains some idiosyncratic component.
The next result obtains as an immediate consequence from Corollaries~\ref{cor_personal2}~and~\ref{cor_sym_bangbang}, and thus the proof is omitted.

\begin{proposition} \label{prop_private_K}
Suppose that $\rho_z < 1$.
\begin{enumerate}[\rm i).]
\item No disclosure $\underline{\pi}$ is optimal if and only if \eqref{no_K} holds.
\item Otherwise, any optimal information structure is state-identifiable and noise-free, while full disclosure is optimal if and only if $v \ge 0$ and $c = -\beta v$.
\end{enumerate}
Moreover, if $(v,c)$ does not lie on the boundary of the region defined by \eqref{no_K}---i.e., if either \begin{inparaenum}[\rm (a)]
\item $v > 0$ or 
\item $c \neq v$ and $c \neq -v/(n-1)$
\end{inparaenum}
---then the optimal information structure is unique.
\end{proposition}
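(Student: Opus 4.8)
The plan is to treat the $K$-invariant private-value environment as a special case of the transitive personal-state network environment and to read the statement off Corollaries~\ref{cor_sym_bangbang} and~\ref{cor_personal2}, translating their matrix conditions into the scalars $(v,c,\beta,\rho)$. Since the objective has been normalized to $W = O$, we have $\tilde V = V = (v-c)I + c\1\1^\top$, whose spectrum consists of $v-c$ on $\1^\perp$ and $v+(n-1)c$ on $\Span\{\1\}$; hence $V \preceq O$ is equivalent to $v-c \le 0$ and $v+(n-1)c \le 0$, i.e.\ to condition~\eqref{no_K}. Likewise $Z = (1-\rho)I + \rho\1\1^\top$ has eigenvalues $1-\rho$ and $1+(n-1)\rho$, so the private-value hypothesis $\rho < 1$ together with the standing restriction $\rho > -1/(n-1)$ gives $Z \succ O$. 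With these two dictionaries, part~(i) is exactly Corollary~\ref{cor_sym_bangbang}(i), and the state-identifiability, noise-freeness, and uniqueness asserted in part~(ii) follow from Corollary~\ref{cor_sym_bangbang}(ii) once $Z \succ O$ is in hand.

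For the full-disclosure characterization I would invoke Corollary~\ref{cor_personal2}. Here $q_{ii}=1$, so $D_Q = I$ and $D_Q^{-1}Q = Q$ has off-diagonal entries $-\beta$, while $D_{\tilde V} = vI$ makes $D_{\tilde V}^{-1}\tilde V = V/v$ with off-diagonal $c/v$. Thus the alignment condition $D_{\tilde V}^{-1}\tilde V = D_Q^{-1}Q$ collapses to $c/v = -\beta$, i.e.\ $V = vQ$, and the accompanying requirement $\tilde V \succeq O$ then reduces to $v \ge 0$ because $Q \succ O$. For $v > 0$ these are precisely the hypotheses of Corollary~\ref{cor_personal2}, which yields the equivalence: $\overline{\pi}$ is optimal if and only if $v \ge 0$ and $c = -\beta v$.

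The one point requiring care beyond citation — and the main obstacle — is the boundary value $v = 0$ (and more generally $v \le 0$), where Corollary~\ref{cor_personal2}'s hypothesis $\tilde v_{ii} > 0$ fails. I would dispatch the forward direction directly through Corollary~\ref{cor_no_full}: by the symmetry reduction of Proposition~\ref{prop_sym} a certifier may be taken scalar, $\Lambda = \lambda I$, so $A_\Lambda = \lambda Q - V$ and $B_\Lambda = (\lambda/2)I$; since $Z \succ O$, the full-disclosure condition simplifies to $A_\Lambda Q^{-1} = B_\Lambda$, equivalently $V = (\lambda/2)Q$, which forces $c = -\beta v$ and $\lambda = 2v$, whence $A_\Lambda = vQ$. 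Dual feasibility demands $A_\Lambda \succeq O$, i.e.\ $v \ge 0$; and $v = 0$ forces $V = O \preceq O$, placing the environment in region~\eqref{no_K} and so outside the ``otherwise'' hypothesis of part~(ii). This establishes the forward direction uniformly over the otherwise region and simultaneously shows that the degenerate $v \le 0$ cases cannot produce an optimal $\overline{\pi}$ there.

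Finally, for the uniqueness addendum I would note that the stated condition — $v > 0$, or $c \neq v$ and $c \neq -v/(n-1)$ — is exactly the complement of the topological boundary $\{c = v,\ v \le 0\} \cup \{c = -v/(n-1),\ v \le 0\}$ of region~\eqref{no_K}. An off-boundary point lies either in the interior of~\eqref{no_K}, where $V \prec O$ and Corollary~\ref{cor_sym_bangbang}(i) gives a unique (no-disclosure) optimum, or in the exterior, where $V \not\preceq O$ and Corollary~\ref{cor_sym_bangbang}(ii) with $Z \succ O$ gives a unique optimum. Uniqueness can fail only on the boundary itself — most transparently at $V = O$, where every information structure is optimal — which is precisely what the hypothesis rules out.
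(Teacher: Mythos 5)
Your proof is correct and takes essentially the same route as the paper, which presents this result as an immediate consequence of Corollaries~\ref{cor_sym_bangbang} and~\ref{cor_personal2}, with the full-disclosure condition obtained by exactly your scalar-multiplier computation ($A_\Lambda Q^{-1}=B_\Lambda$ collapsing to $V=(\lambda/2)Q$, hence $\lambda=2v$ and $c=-\beta v$, with dual feasibility $A_\Lambda = vQ\succeq O$ giving $v\ge 0$). Your spectral translation of $V\preceq O$ and $Z\succ O$ into the scalar conditions, and your explicit handling of the degenerate cases $v\le 0$ and the boundary of \eqref{no_K}, correctly fill in details the paper leaves implicit.
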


The optimality of $\underline{\pi}$ is supported on the same region defined by \eqref{no_K}, as observed in the common-value case.
On the boundary of this region, there may exist a plethora of optimal information structures other than $\underline{\pi}$, but this is the only situation where such multiplicity can arise.
Outside of the region, a unique optimal information structure exists, and it is both state-identifiable and noise-free.
However, each individual agent typically remains only partially informed, as $\overline{\pi}$ is optimal only when $(v, c)$ lies on the line defined by $c = -\beta v$, with $v \ge 0$.
Put differently, $\overline{\pi}$ is generically suboptimal when agents' personal states contain some idiosyncratic components, regardless of how small these are.

Beyond these general properties, our duality methodology allows us to explicitly compute the variables $(x, y, \rho_x, \rho_y)$ that characterize the optimal information structure.
While the actual computation involves solving a system of polynomial equations and is therefore algebraically messy, we outline the essential steps below, which can be carried out systematically.

First, in the current setting, letting $\Lambda = \lambda I$, the key matrices for optimality are calculated as $A_\Lambda = \lambda(I - \beta G) - V$ and $B_\Lambda = \lambda I/2$.
Substituting these into the optimality condition \eqref{opt} in Theorem~\ref{thm_main}, we obtain
\begin{equation} \label{opt_private_K}
\sigma^\pi_i = \bar{\sigma}^\pi_i + \frac{\lambda}{2} \qty(\alpha^{\rm own} \cdot \qty(\theta_i - \bar{\theta}_i) + \alpha^{\rm opp} \cdot \textstyle \sum_{j \neq i} \qty(\theta_j - \bar{\theta}_j)),
\end{equation}
where $\alpha^{\rm own}$ and $\alpha^{\rm opp}$ denote the diagonal and off-diagonal entries of the inverse matrix $A_\Lambda^{-1}$, respectively.\footnote{Specifically, since $A_\Lambda = (\lambda - v) I - (\lambda \beta + c) K$, the eigenvalues of $A_\Lambda$ are given by $\alpha_1 = (1+\beta)\lambda - (v - c)$ (with multiplicity $n - 1$) and $\alpha_2 = (1 - (n - 1)\beta)\lambda - (v + (n - 1)c)$ (with multiplicity $1$). Since $A_\Lambda \succ 0$ holds at the optimum, both $\alpha_1$ and $\alpha_2$ are positive. It then follows that $A_\Lambda^{-1}$ has a constant diagonal entry $\alpha^{\rm own} = \frac{\alpha_1 + (n - 1)\alpha_2}{n \alpha_1 \alpha_2}$ and off-diagonal entry $\alpha^{\rm off} = \frac{\alpha_1 - \alpha_2}{n \alpha_1 \alpha_2}$.}

Next, by computing the covariance of both sides of equation~\eqref{opt_private_K} with $\sigma^\pi_i$, $\sigma^\pi_j$, $\theta_i$, and $\theta_j$ (for $i \neq j$), we obtain four equations that must be satisfied by the variables $(x, y, \rho_x, \rho_y)$ along with the dual variable $\lambda$.
Together with the obedience condition \eqref{feasible_K}, this yields a system of five equations in five unknowns, which can be solved in a systematic manner.

Crucially, \eqref{opt_private_K} shows that each agent’s equilibrium strategy is given as a linear function of $\bm{\theta}$.
As long as the optimal information structure differs from $\underline{\pi}$---i.e., $\lambda \neq 0$---the signal each agent receives typically mixes information about her own personal state and those of her opponents.
Nonetheless, the overall signal profile is constructed to eliminate fundamental uncertainty in the aggregate.
This method of implementing aggregative full disclosure is qualitatively different from the common-value case, where action recommendations are generated by adding noise terms to the common state, which offset one another in aggregate.

\begin{figure}[t]
\centering
\includegraphics[width=5in]{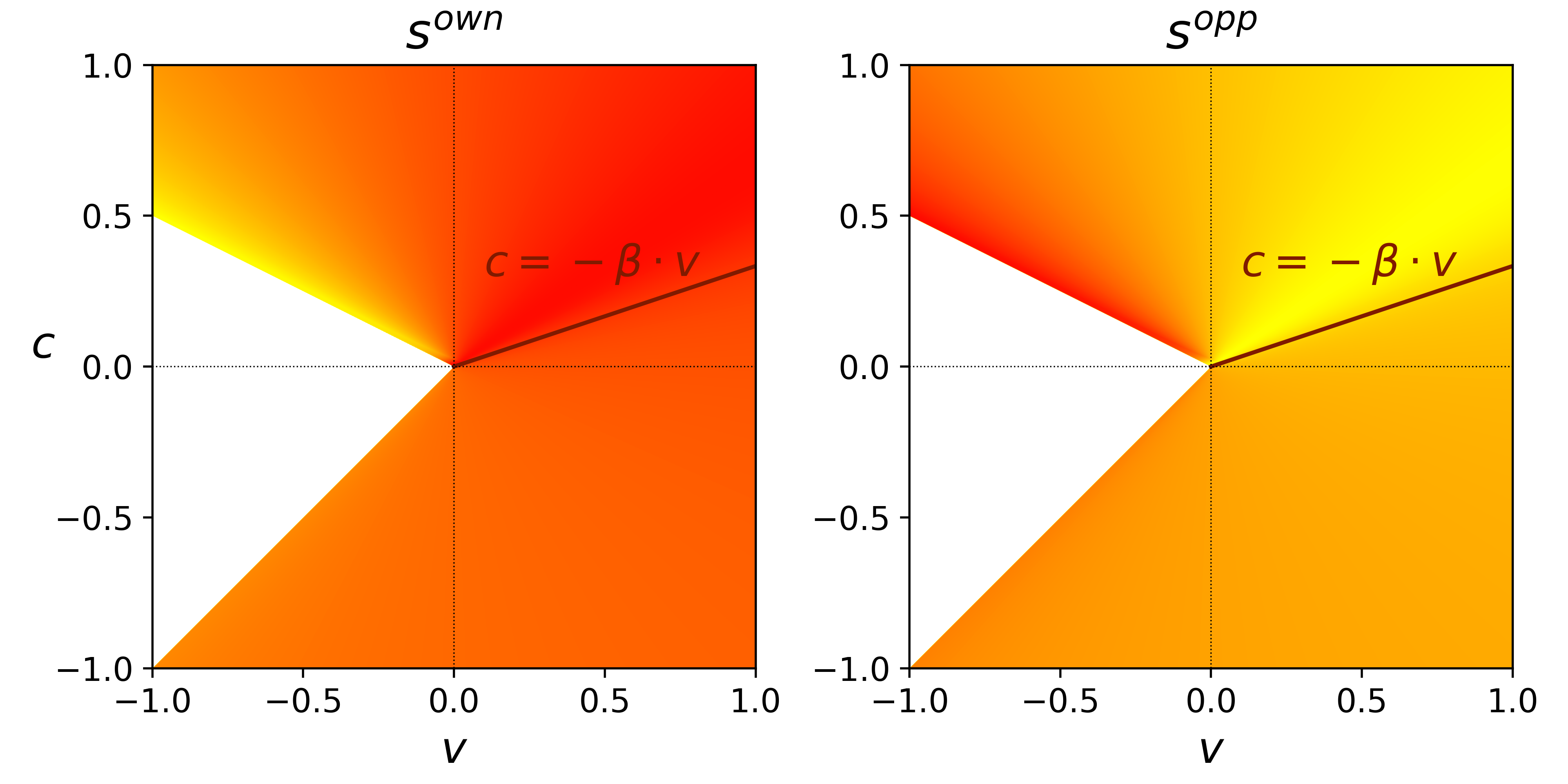} 	
\caption{In the private-value case ($\rho_z < 1$), optimal information structures are visualized as a function of $(v,c)$. The diagrams are obtained by setting $n=3$, $\beta=-1/3$, and $\rho = 0$.}
\label{fig_private_opt}
\end{figure}

In Figure~\ref{fig_private_opt}, we graphically describe optimal $\pi$ in the pure private-value case with $\rho=0$ by computing the reduction of agent's own state variance, as well as that of their opponent's state variance, defined as follows:
\begin{equation*}
s^{\rm own} \equiv 1-\Var\qty[\theta_i \mid \sigma^\pi_i]
\quad {\rm and} \quad
s^{\rm opp} \equiv 1-\Var\qty[\theta_j \mid \sigma^\pi_i],
\quad i \neq j.
\end{equation*}
The left and right panels respectively depict the values of $s^{\rm own}$ and $s^{\rm opp}$, where the color scale is interpreted in the same manner as in Figure~\ref{fig_common_opt}.
We note that, in general, $s^{\rm own}$ and $s^{\rm opp}$ do not attain the maximal value of $1$ (even when $\overline{\pi}$ is optimal) in the private-value case, because $\theta_i$ and $\theta_j$ are not perfectly correlated, and thus $\sigma^\pi_i$ cannot be perfectly correlated with both of them simultaneously.


\subsection*{Proof of Proposition~\ref{prop_common_K}}

\eqref{prop_common_K1} follows directly from Corollary~\ref{cor_sym_bangbang}.
As in the proof of Proposition~\ref{prop_welfare_regular}, we let $R = \1\1^\top/n$ instead of $R = I$ for the remainder of this proof, without altering the underlying game structure.

\begin{lemma} \label{lem_K_full1}
The scalar matrix $\lambda I$ is dual feasible if $\lambda \ge \frac{v-c}{1+\beta}$ and $\lambda > \frac{v+(n-1)c}{1-(n-1)\beta}$.
Conversely, it is not dual feasible if $\lambda < \frac{v-c}{1+\beta}$ or $\lambda < \frac{v+(n-1)c}{1-(n-1)\beta}$.
\end{lemma}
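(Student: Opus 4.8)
The plan is to exploit the fact that every matrix in play is built from the identity $I$ and the complete-network matrix $K=\1\1^\top-I$, hence all are simultaneously diagonalizable. First I would record the dual matrices under the scalar multiplier $\Lambda=\lambda I$. Since this proof works with $R=\1\1^\top/n$ and the normalization $W=O$, the general definitions $A_\Lambda=\tfrac{\Lambda Q+Q^\top\Lambda}{2}-V$ and $B_\Lambda=\tfrac{\Lambda R+W}{2}$ give
\[
A_\Lambda = \lambda Q - V = (\lambda-v)I-(\lambda\beta+c)K
\quad\text{and}\quad
B_\Lambda = \frac{\lambda}{2}R = \frac{\lambda}{2n}\,\1\1^\top .
\]
The eigenstructure of $K$ — eigenvalue $n-1$ on $\Span(\1)$ and eigenvalue $-1$ on $\1^\perp$ — then diagonalizes $A_\Lambda$, yielding $\alpha_1=(1+\beta)\lambda-(v-c)$ on $\1^\perp$ (multiplicity $n-1$) and $\alpha_2=(1-(n-1)\beta)\lambda-(v+(n-1)c)$ on $\Span(\1)$ (multiplicity one), exactly as in the footnote. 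Since the maintained restriction $-1<\beta<1/(n-1)$ forces $1+\beta>0$ and $1-(n-1)\beta>0$, the stated bounds $\lambda\ge(v-c)/(1+\beta)$ and $\lambda>(v+(n-1)c)/(1-(n-1)\beta)$ are precisely $\alpha_1\ge 0$ and $\alpha_2>0$.

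Next I would reduce dual feasibility to a spectral condition via Lemma~\ref{lem_block}. Because $C_\Lambda=B_\Lambda^\top A_\Lambda^+ B_\Lambda$ by construction, the Schur-complement inequality holds with equality, so $\mathbf{M}_\Lambda\succeq O$ is equivalent to the two remaining conditions: $A_\Lambda\succeq O$ and the range condition $B_\Lambda=A_\Lambda A_\Lambda^+ B_\Lambda$. The first is immediate, as $A_\Lambda\succeq O$ holds iff $\alpha_1\ge 0$ and $\alpha_2\ge 0$. The key observation for the second is that $B_\Lambda$ has rank at most one, with $\ran(B_\Lambda)=\Span(\1)$ whenever $\lambda\neq 0$ and $B_\Lambda=O$ when $\lambda=0$.

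The crux is to read off the range condition, and here lies the only subtlety — the asymmetric appearance of $\ge$ versus $>$ in the statement. Because $\1$ is exactly the eigenvector of $A_\Lambda$ attached to $\alpha_2$, one has $\Span(\1)\subseteq\ran(A_\Lambda)$ if and only if $\alpha_2\neq 0$; the value of $\alpha_1$ (the action of $A_\Lambda$ on $\1^\perp$) is irrelevant to whether $\1$ lies in the range. Hence, for $\lambda\neq 0$ the range condition is equivalent to $\alpha_2\neq 0$, and for $\lambda=0$ it is vacuous. For sufficiency I would combine this with positive semidefiniteness: under $\alpha_1\ge 0$ and $\alpha_2>0$ we get $A_\Lambda\succeq O$, while $\1\in\ran(A_\Lambda)$ holds since $\alpha_2>0$ (and trivially if $\lambda=0$), so $\mathbf{M}_\Lambda\succeq O$. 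For the converse, if $\alpha_1<0$ or $\alpha_2<0$ then $A_\Lambda$ has a strictly negative eigenvalue, so $A_\Lambda\not\succeq O$ and Lemma~\ref{lem_block} already precludes $\mathbf{M}_\Lambda\succeq O$. I expect the main point to be conceptual rather than computational: recognizing that the rank-one alignment of $B_\Lambda$ with the $\alpha_2$-eigenspace is what permits $A_\Lambda$ to be merely singular (via $\alpha_1=0$) while the pair stays dual feasible, which is exactly why the first bound is non-strict and the second must be strict.
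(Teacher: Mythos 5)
Your proposal is correct and follows essentially the same route as the paper: diagonalize $A_\Lambda$ via the spectrum of the complete network (eigenvalues $\alpha_1$ on $\1^\perp$ and $\alpha_2$ on $\Span(\1)$), use Lemma~\ref{lem_block} to reduce dual feasibility to $A_\Lambda\succeq O$ plus the range condition $B_\Lambda=A_\Lambda A_\Lambda^+B_\Lambda$, and observe that the latter only requires $\1\in\ran(A_\Lambda)$, i.e.\ $\alpha_2\neq 0$. Your explicit remark that the rank-one alignment of $B_\Lambda$ with the $\alpha_2$-eigenspace is what makes the first bound non-strict is a slightly more transparent packaging of the paper's verification $A_\Lambda(\1/\alpha_2)=\1$, but it is the same argument.
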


\begin{proof}
With $\Lambda = \lambda I$, we can compute $A_\Lambda$ as
\[
A_\Lambda = \lambda (I-\beta K) - V
= \qty(\lambda (1+\beta) -v+c) I - \qty(\lambda \beta + c) \1\1^\top.
\]
Since the eigenvalues of $\1\1^\top$ are $0$ and $n$, the matrix $A_\Lambda$ has two eigenvalues:
\[
\alpha_1 = \lambda (1+\beta) -v+c \quad {\rm and} \quad
\alpha_2 = (1-(n-1)\beta)\lambda - (v+(n-1)c)
\]
Thus, we have $A_\Lambda \not\succeq O$ if $\alpha_1 < 0$ or $\alpha_2 < 0$, i.e., if $\lambda < \frac{v-c}{1+\beta}$ or $\lambda < \frac{v+(n-1)c}{1-(n-1)\beta}$.
In this case, $\Lambda$ is not dual feasible.

Now suppose $\lambda \ge \frac{v-c}{1+\beta}$ and $\lambda > \frac{v+(n-1)c}{1-(n-1)\beta}$. Then $A_\Lambda \succeq O$, and Lemma~\ref{lem_block} implies that $\Lambda$ is dual feasible if $B_\Lambda = A_\Lambda A_\Lambda^+ B_\Lambda$.
Since $B_\Lambda = \lambda \1\1^\top/n$ is constant, it suffices to verify that $\1$ lies in the range of $A_\Lambda$.
The $i$-th coordinate of $A_\Lambda \1$ is $\alpha_2$ for all $i$, which is strictly positive by assumption. Thus, taking $\bm{x} = \1/\alpha_2$ yields $A_\Lambda \bm{x} = \1$.
\end{proof}

Let us prove \eqref{prop_common_K2}.
Under $\overline{\pi}$, we have $x > 0$ and $\rho_x = 1$, so the designer's value is given as $V \bullet X = n\qty(v + (n-1)c) x$.
On the other hand, by Lemma~\ref{lem_value}, the designer's optimal value must be represented as $A_\Lambda \bullet X$.
In particular, since we can take $\Lambda = \lambda I$ by Proposition~\ref{prop_sym}, $A_\Lambda \bullet X = n \qty(\lambda - v - (n-1)(\lambda \beta + c)) x$.
Equating this with $V \bullet X$ pins down the value of $\lambda$ as
\begin{equation*}
\overline{\lambda} = \frac{2\qty(v+(n-1)c)}{1-(n-1)\beta}.
\end{equation*}
Hence, $\overline{\pi}$ is optimal if and only if $\overline{\lambda} I$ is dual feasible.
Observe that
\begin{equation} \label{pf_K_full1}
\overline{\lambda} \ge \frac{v+(n-1)c}{1-(n-1)\beta} \quad \iff \quad c \ge -\frac{v}{n-1}.
\end{equation}
Also, by simple calculations, we can see that
\begin{equation} \label{pf_K_full2}
\overline{\lambda} \ge \frac{v-c}{1+\beta} \quad \iff \quad c \ge - \qty(\frac{1+(n+1)\beta}{2n-1+(n-1)\beta}) \cdot v = f(\beta) v.
\end{equation}
Lemma~\ref{lem_K_full1} implies that $\overline{\pi}$ is not optimal if either inequality fails.
It also implies that if both are satisfied and \eqref{pf_K_full1} holds strictly, then $\overline{\pi}$ is optimal.
It remains to consider the case where \eqref{pf_K_full1} holds with equality and \eqref{pf_K_full2} weakly.
In this case, we have $v + (n-1)c = 0$ and $v \le 0$, so \eqref{prop_common_K1} implies that $\underline{\pi}$ is optimal with value zero.
Since $v + (n-1)c = 0$ also implies that the value of $\overline{\pi}$ is zero, both $\underline{\pi}$ and $\overline{\pi}$ are optimal.

Lastly, to prove \eqref{prop_common_K3}, suppose that neither $\underline{\pi}$ nor $\overline{\pi}$ is optimal.
By \eqref{prop_common_K1} and \eqref{prop_common_K2}, this case occurs when $c < v$ and $c < f(\beta)v$.

\begin{lemma} \label{lem_K_full2}
If $c < v$ and $c < f(\beta)v$, then $c < - \beta v$.
\end{lemma}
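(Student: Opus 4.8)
The plan is to exploit the fact that all three inequalities are linear in the pair $(v,c)$, so that the claim amounts to a containment of half-planes through the origin. Concretely, I would try to exhibit $c+\beta v$ as a nonnegative combination of the two quantities $c-v$ and $c-f(\beta)v$ that are assumed negative. If I can write $c + \beta v = \alpha(c-v) + \gamma(c-f(\beta)v)$ with $\alpha,\gamma \ge 0$, then the hypotheses $c - v < 0$ and $c - f(\beta) v < 0$ immediately force $c + \beta v < 0$, which is exactly $c < -\beta v$.

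Matching the coefficients of $c$ and of $v$ pins the weights down uniquely. Equating the coefficient of $c$ gives $\alpha + \gamma = 1$, so the combination is automatically convex once the weights are nonnegative; equating the coefficient of $v$ gives $-\alpha - \gamma f(\beta) = \beta$. Solving the two relations yields $\gamma = \frac{1+\beta}{1 - f(\beta)}$ and $\alpha = 1 - \gamma = \frac{-(f(\beta)+\beta)}{1-f(\beta)}$. Since $\beta > -1$ under Basic Assumption and $f(\beta) < 1$ by the bound recorded in Proposition~\ref{prop_common_K}, both the numerator $1+\beta$ and the denominator $1 - f(\beta)$ are positive, so $\gamma > 0$ is immediate.

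The crux is therefore to verify $\alpha > 0$, equivalently $f(\beta) + \beta < 0$, on the admissible range $-1 < \beta < 1/(n-1)$. Here I would substitute the explicit form $f(\beta) = -\frac{1+(n+1)\beta}{2n-1+(n-1)\beta}$ and combine over a common denominator, obtaining $f(\beta)+\beta = \frac{(n-1)\beta^2 + (n-2)\beta - 1}{2n-1+(n-1)\beta}$. The denominator is affine in $\beta$ and positive at both endpoints, taking value $n$ at $\beta = -1$ and $2n$ at $\beta = 1/(n-1)$, hence positive throughout. The numerator factors as $((n-1)\beta - 1)(\beta + 1)$, whose two factors have opposite signs precisely on $(-1, 1/(n-1))$, so the numerator is negative. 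Thus $f(\beta)+\beta < 0$, giving $\alpha > 0$ and completing the argument.

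I expect the only genuine obstacle to be this sign verification of $\alpha$: the convex-combination representation is the organizing idea, but its validity rests entirely on the inequality $f(\beta) < -\beta$, which is where the specific algebraic form of $f$ and the range of $\beta$ dictated by $Q+Q^\top \succ O$ enter. The factorization of the quadratic numerator is the step that makes the sign transparent; once it is in hand, the remainder is bookkeeping.
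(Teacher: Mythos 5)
Your proof is correct and follows essentially the same route as the paper: both express $-\beta$ as a convex combination $(1-\gamma)\cdot 1+\gamma f(\beta)$ of the two slopes (your $\gamma=\tfrac{1+\beta}{1-f(\beta)}$ is exactly the paper's weight on $f(\beta)$) and then verify that both weights are nonnegative on $-1<\beta<1/(n-1)$. Your check of the remaining weight's sign, reducing it to $f(\beta)+\beta<0$ and factoring the numerator as $((n-1)\beta-1)(\beta+1)$, is in fact cleaner than the paper's own bookkeeping for the same quantity.
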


\begin{proof}
Solving the equation $(1-\alpha) + \alpha f(\beta) = -\beta$ for $\alpha$, we obtain
\[
\alpha = \frac{(1+\beta)(2n-1+(n-1)\beta)}{2(n+(n-1)\beta)}.
\]
If $\alpha \in (0,1)$, then $-\beta$ is represented as a convex combination of $1$ and $f(\beta)$, and hence $c < v$ and $c < f(\beta)v$ together imply $c < - \beta v$.
Let us show that $\alpha \in (0,1)$.
By $\beta > -1$, we can easily see that $\alpha > 0$.
To verify $\alpha < 1$, observe that
\[
2(n+(n-1)\beta) - (1+\beta)(2n-1+(n-1)\beta) = 1- n\beta - (n-1)\beta^2.
\]
The minimum value of the right-hand side is $1+\frac{n^2}{4(n-1)}$, attained at $\beta = - \frac{n}{2(n-1)}$.
\end{proof}

Consider a candidate profile $(x,y,\rho_x,\rho_y)$ that generates the symmetric primal solution $(X,Y)$, where $\rho_y = 1$.
By Theorem~\ref{thm_main}, at the optimum, these parameters must satisfy the conditions $A_\Lambda X = B_\Lambda Y$ and $A_\Lambda Y = B_\Lambda Z$ for some dual feasible scalar matrix $\Lambda = \lambda I$, where we recall that $A_\Lambda = \lambda (I-\beta K)-V$ and $B_\Lambda = \lambda\1\1^\top/2n$.

Since $\rho_y = 1$ implies $Y = y\1\1^\top$, expanding the condition $A_\Lambda X = B_\Lambda Y$ yields
\begin{equation} \label{partial_common_K1}
(\lambda - v) x - (n-1)(\lambda \beta + c) \rho_x x = \frac{\lambda y}{2} = - (\lambda \beta + c) x + (\lambda - v) \rho_x x - (n-2)(\lambda \beta + c) \rho_x x.
\end{equation}
Also, substituting $Y = y\1\1^\top$ and $Z = \1\1^\top$ into the condition $A_\Lambda Y = B_\Lambda Z$ yields
\begin{equation} \label{partial_common_K2}
(\lambda - v) y - (n-1)(\lambda \beta + c) y = \frac{\lambda}{2}.
\end{equation}
Arranging \eqref{partial_common_K1}, we get
\[
\underbrace{\qty(\lambda (1+\beta) -v+c)}_{(*)}(1-\rho_x) x = 0.
\]
Since $\underline{\pi}$ is suboptimal, we must have $x > 0$; and since $\overline{\pi}$ is suboptimal, we must have $\rho_x < 1$.
Together, these imply that $(*)$ must be zero, which pins down the value of $\lambda$ as
\begin{equation*}
\hat{\lambda} = \frac{v-c}{1+\beta}.
\end{equation*}
Lemma~\ref{lem_K_full1} implies that $\hat{\lambda} I$ is dual feasible if $\frac{v-c}{1+\beta} > \frac{v+(n-1)c}{1-(n-1)\beta}$.
A direct calculation shows that this inequality is equivalent to $c < - \beta v$, which, by Lemma~\ref{lem_K_full2}, holds true for the parameter region under consideration.

Given $\hat{\lambda}$, the remaining variables $(x,y,\rho_x)$ can now be derived systematically:
First, substituting $\hat{\lambda}$ into \eqref{partial_common_K2} yields the value of $y$.
Then, substituting this value of $y$ into the obedience condition \eqref{feasible_K} and \eqref{partial_common_K1}, we can solve for $x$ and $\rho_x$.
After carrying out the algebra, the resulting values coincide with those stated in the proposition.
Finally, the variance reduction can be computed using the formula $\Var[\theta \mid \sigma^\pi_i] = 1-y^2/x$.
\hfill {\it Q.E.D.}

\renewcommand{\thesection}{D}
\renewcommand{\theequation}{D\arabic{equation}}
\renewcommand{\thelemma}{D\arabic{lemma}}
\renewcommand{\thecorollary}{D\arabic{corollary}}
\renewcommand{\theproposition}{D\arabic{proposition}}

\renewcommand{\theHlemma}{D\arabic{lemma}}
\renewcommand{\theHproposition}{D\arabic{proposition}}
\renewcommand{\theHequation}{D\arabic{equation}}

\setcounter{section}{3}
\setcounter{equation}{0}
\setcounter{lemma}{0}
\setcounter{proposition}{0}

\section{Public LQG information design}
\label{app_pub}

In this Appendix~\ref{app_pub}, we study information design in a general environment $(Q,R,V,W,Z)$ but by restricting attention to the class of public information structures.
Formally, an information structure $\pi$ is \emph{public} if $\eta \equiv \eta_1 = \cdots = \eta_n$ occurs with probability one, where $\eta$ is called a \emph{public signal}.
Let $\Pi^{\rm pb}$ denote the subset of $\Pi$, consisting of all public information structures.
Then, the optimization problem of our interest is given as follows:
\begin{equation*}
v^{\rm pb} \equiv \max_{\pi\in \Pi^{\rm pb}} \E [v(\bm{\sigma}^\pi, \bm{\theta})],
\end{equation*}
Henceforth, we assume that $Z \succ O$.

As in the case of private information structures, our first task is to derive the SDP formulation of public LQG information design.
A crucial aspect of public information is that agents do not face any strategic uncertainty about the opponents' actions, which simplify the characterization of a BNE and its induced moments.
Specifically, under a public information structure, the first-order condition \eqref{FOC} is
\[
\sum_{j=1}^n q_{ij} \sigma^\pi_j(\eta) = \sum_{k=1}^m r_{ik} \E \qty[\theta_k \mid \eta], \quad \forall i \in N,
\]
involving no conditional expectations in the left-hand side.
From this system of equations, the unique BNE is obtained as ${\bm \sigma}^\pi(\eta) = Q^{-1}R \E \qty[\bm{\theta} \mid \eta]$.

Calculating the equilibrium action-state covariance from this closed-form expression for BNE, we can express $\mathbf{M}^{\pi}$ as a function of the state variance reduction, $\Var \qty[E \qty[\bm{\theta} \mid \eta]]$, measuring the amount of the state variance that can be reduced by observing the public signal $\eta$.
This matrix necessitates the statistical requirement of $Z \succeq \Var \qty[E \qty[\bm{\theta} \mid \eta]] \succeq O$.
Conversely, in the next lemma, we show that any symmetric matrix that satisfies these matrix inequalities can be induced as an equilibrium action-state covariance matrix under some public Gaussian information structure.
Moreover, it shows that the state variance reduction serves as a sufficient statistic for determining the designer's expected value of each public information structure.

\begin{lemma} \label{lem_moment_pub}
For any $\pi \in \Pi^{\rm pb}$, let $S=\Var[\E\qty[\bm{\theta} \mid \eta]]$.
Then, it holds that $Z \succeq S \succeq O$, and $\mathbf{M}^\pi$ coincides with
\begin{equation*}
\mathbf{M}_S \equiv \mqty[Q^{-1}RS(Q^{-1}R)^\top & Q^{-1}RS \\ S(Q^{-1}R)^\top & Z] \in \calS_+^{n+m}.
\end{equation*}
Conversely, for any $S \in \calS^m$ such that $Z \succeq S \succeq O$, there exists a public Gaussian information structure $\pi \in \Pi^{\rm pb} \cap \Pi^{\rm g}$ such that $\mathbf{M}^{\pi} = \mathbf{M}_S$.
Moreover, the designer's expected objective is expressed as
\[
\E \qty[v(\bm{\sigma}^\pi, \bm{\theta})] = \overline{C} \bullet S,
\]
where $\overline{C} \in \calS^m$ is the matrix defined in Example~\ref{ex_full}.
\end{lemma}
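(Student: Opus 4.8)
The plan is to establish the two directions separately and then read off the objective value directly from Lemma~\ref{lem_obj}. Write $P \equiv Q^{-1}R$ throughout and set $\hat{\bm{\theta}} \equiv \E[\bm{\theta}\mid\eta]$, $S = \Var[\hat{\bm{\theta}}]$.

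For the forward direction, I would start from the public first-order condition already derived above, which in matrix form reads $Q\bm{\sigma}^\pi(\eta) = R\,\hat{\bm{\theta}}$; since $Q+Q^\top\succ O$ forces $Q$ to be invertible, the unique BNE is $\bm{\sigma}^\pi = P\,\hat{\bm{\theta}}$. The three nontrivial blocks of $\mathbf{M}^\pi$ then compute as $\Var[\bm{\sigma}^\pi] = PSP^\top$ and $\Cov[\bm{\sigma}^\pi,\bm{\theta}] = P\,\Cov[\hat{\bm{\theta}},\bm{\theta}]$. The one identity worth spelling out is $\Cov[\hat{\bm{\theta}},\bm{\theta}] = S$, which holds because the residual $\bm{\theta}-\hat{\bm{\theta}}$ is orthogonal to every $\eta$-measurable statistic, so $\Cov[\hat{\bm{\theta}},\bm{\theta}] = \Cov[\hat{\bm{\theta}},\hat{\bm{\theta}}] = S$. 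This yields $\mathbf{M}^\pi = \mathbf{M}_S$, which lies in $\calS^{n+m}_+$ since it is a genuine covariance matrix. The bounds $Z\succeq S\succeq O$ follow from the law of total variance: $S\succeq O$ as a variance, and $Z - S = \E[\Var[\bm{\theta}\mid\eta]]\succeq O$ because each conditional variance is positive semidefinite.

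For the converse I would \emph{not} garble the state but instead build the joint law by an additive decomposition. Given $S$ with $Z\succeq S\succeq O$, take independent Gaussians $\eta\sim N(\bar{\bm{\theta}},S)$ and $\bm{\nu}\sim N(\bm{0},Z-S)$ and define $\bm{\theta} \equiv \eta+\bm{\nu}$. Then $\bm{\theta}\sim N(\bar{\bm{\theta}},Z)$ carries the prescribed marginal, and because $\bm{\nu}$ is mean-zero and independent of $\eta$ we get $\E[\bm{\theta}\mid\eta]=\eta$, hence $\Var[\E[\bm{\theta}\mid\eta]] = S$. Feeding this public Gaussian structure into the forward direction gives $\mathbf{M}^\pi=\mathbf{M}_S$. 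This is the step I expect to be the main (though ultimately only apparent) obstacle: the naive construction $\eta=\bm{\theta}+\bm{\epsilon}$ requires solving $Z(Z+N)^{-1}Z = S$ for a noise covariance $N\succeq O$, which forces $S$ to be invertible and breaks on the boundary where $S$ is singular; the decomposition $\bm{\theta}=\eta+\bm{\nu}$ sidesteps inversion entirely and handles every $S$ in the range $Z\succeq S\succeq O$ uniformly.

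Finally, for the objective value I would invoke Lemma~\ref{lem_obj} with the normalization $(*)=0$, so that the value equals $\mathbf{V}\bullet\mathbf{M}^\pi = \mathbf{V}\bullet\mathbf{M}_S$. Expanding the block Frobenius product against $\mathbf{V} = \left[\begin{smallmatrix} V & W/2 \\ W^\top/2 & O\end{smallmatrix}\right]$ gives
\[
\mathbf{V}\bullet\mathbf{M}_S = V\bullet(PSP^\top) + \tfrac{1}{2}W\bullet(PS) + \tfrac{1}{2}W^\top\bullet(SP^\top),
\]
with the $(2,2)$ block contributing nothing. Using trace cyclicity together with the symmetry of $V$ and $S$, the first term is $\tr(P^\top V P S)$ and the two off-diagonal terms combine to $\tfrac{1}{2}\tr\bigl((P^\top W + W^\top P)S\bigr)$, so the total equals $\bigl(P^\top V P + \tfrac{P^\top W + W^\top P}{2}\bigr)\bullet S = \overline{C}\bullet S$ for $\overline{C}$ as in Example~\ref{ex_full}, completing the proof.
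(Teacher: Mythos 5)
Your proof is correct and follows essentially the same route as the paper's: the forward direction via the public first-order condition and the law of total variance, and the objective computation via Lemma~\ref{lem_obj} and trace cyclicity, are identical to the paper's argument. The only difference is in the converse, where you realize the required joint law $\Var[\bm{\eta}]=\Cov[\bm{\eta},\bm{\theta}]=S$ explicitly through the additive decomposition $\bm{\theta}=\bm{\eta}+\bm{\nu}$ with independent $\bm{\nu}\sim N(\bm{0},Z-S)$, which makes $\E[\bm{\theta}\mid\bm{\eta}]=\bm{\eta}$ immediate, whereas the paper posits the same joint covariance matrix directly, verifies its positive semidefiniteness via the generalized Schur complement $Z-SS^{+}S=Z-S\succeq O$, and computes the conditional expectation with the pseudoinverse formula of Lemma~\ref{lem_cond_normal} --- both constructions yield the identical Gaussian distribution, so this is a cosmetic (if slightly cleaner) variation rather than a different argument.
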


As similarly to the private information case, these lemmas deliver pivotal implications for simplifying the public LQG information design.
First, Lemma~\ref{lem_moment_pub} implies that the set of inducible covariance matrices under all public information structures coincides with that under all public Gaussian ones, since 
\[
\qty{\mathbf{M}^\pi: \pi\in \Pi^{\rm pb}}
= \qty{\mathbf{M}_S: Z \succeq S \succeq O}
= \qty{\mathbf{M}^\pi: \pi \in \Pi^{\rm pb} \cap \Pi^{\rm g}}.
\]
In addition, since the designer's objective depends on the choice of public information structures only through the state variance reduction, it is without loss of optimality to focus on public Gaussian information structures among all public ones.

Second, the public LQG information design problem can be restated as the maximization of the linear objective function $\overline{C} \bullet S$ subject to the two-sided positive semidefiniteness constraints of the form $Z \succeq S \succeq I$.
Moreover, by rescaling, this SDP problem turns out to be equivalent to
\begin{equation}
\max_{S \in \calS^m} \quad \widehat{C} \bullet S \quad \text{subject to} \quad I \succeq S \succeq O, \quad \text{where} \quad \widehat{C} \equiv Z^{\frac{1}{2}}\overline{C}Z^{\frac{1}{2}}. \label{pb_sdp}
\end{equation}
To see the equivalence, let $\widehat{C}$ be given as above, and let $\widehat{S} = Z^{-1/2}SZ^{-1/2}$.
Then, it should be easily confirmed that $Z \succeq S \succeq O$ if and only if $I \succeq \widehat{S} \succeq O$, and that $\overline{C} \bullet S = \widehat{C} \bullet \widehat{S}$, from which we can justify \eqref{pb_sdp} as the equivalent restatement of the original problem.

From \eqref{pb_sdp}, we readily see that if $\widehat{C}$ is negative semidefinite, then $\widehat{C} \bullet S \le 0$ holds for all feasible $S$, thereby no disclosure being optimal.
Conversely, if $\widehat{C}$ is positive semidefinite, then the objective is non-decreasing in $S$ with respect to the matrix ordering $\succeq$, whence it is optimal to choose $S=I$, which corresponds to full disclosure.
In particular, when $m=1$, the matrix $\widehat{C}$ reduces to a scalar variable, and its sign is unambiguously determined.
As a result, either of the two extreme choices, $\underline{\pi}$ and $\overline{\pi}$, must be optimal in $\Pi^{\rm pb}$.

In general, when the state is multi-dimensional, $\widehat{C}$ may be neither positive nor negative semidefinite so that partial disclosure is optimal.
Below, following the methodology developed by \cite{tamura2018}, we characterize the optimal public information structure and quantify the informativeness of the optimal public signal based on the eigendecomposition of $\widehat{C}$.

Allowing for geometric multiplicity, let $\gamma_1,\ldots,\gamma_m$ denote the $m$ real eigenvalues of $\widehat{C}$ and by $\bm{u}_1,\ldots,\bm{u}_m$ the corresponding eigenvectors, which  constitute an orthonormal system in $\R^m$.
We align the eigenvalues in a non-increasing order, and let $k^*$ be the last index such that the corresponding eigenvalue exceeds $0$, i.e.,
\begin{equation*}
\gamma_1 \ge \gamma_2 \ge \cdots \ge \gamma_{k^*} > 0 \ge \gamma_{k^*+1} \ge \cdots \ge \gamma_m,
\end{equation*}
where the convention $k^* = 0$ is adopted when $\widehat{C}$ is negative semidefinite.
By the spectral theorem, $\widehat{C}$ can be factorized as follows:
\[
\widehat{C}  = U \Diag(\gamma_1,\ldots,\gamma_m) U^\top,
\]
where $U \equiv [\bm{u}_1 \cdots \bm{u}_m]$ is an $m \times m$ orthogonal matrix, which has each $\bm{u}_k$ as the $k$-th column vector.
In addition, let $U_{k^*} \equiv [\bm{u}_1 \cdots \bm{u}_{k^*}] \in \calM^{m,k^*}$.

\begin{proposition}\label{prop_pub}
Let $\pi$ be a public information structure, which sends a multi-dimensional public signal $\bm{\eta}=U_{k^*}Z^{-1/2}\bm{\theta}$.
Then $\pi$ is optimal in $\Pi^{\rm pb}$, and the associated state variance reduction is given by
\begin{equation} \label{opt_pub_var}
\Var \qty[\E \qty[\bm{\theta} \mid \bm{\eta}]] = Z^{\frac{1}{2}} U_{k^*} U_{k^*}^\top Z^{\frac{1}{2}}.
\end{equation}
Moreover, $v^{\rm pb}$ is equal to the sum of strictly positive eigenvalues of $\widehat{C}$, i.e., $\gamma_1+\cdots+\gamma_{k^*}$.
Finally, $k^*$ satisfies $p(\mathbf{V}) - n \le k^* \le p(\mathbf{V})$, where $p(\mathbf{V})$ denotes the number of strictly positive eigenvalues of $\mathbf{V}$.
\end{proposition}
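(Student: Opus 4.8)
The plan is to solve the rescaled semidefinite program \eqref{pb_sdp} by eigendecomposition and then exhibit a signal realizing the optimum. Writing $\widehat C=U\Diag(\gamma_1,\dots,\gamma_m)U^\top$ and setting $\tilde S=U^\top\widehat S U$, the constraint $I\succeq\widehat S\succeq O$ is equivalent to $I\succeq\tilde S\succeq O$ (an orthogonal congruence), while the objective becomes $\widehat C\bullet\widehat S=\sum_{k}\gamma_k\tilde s_{kk}$. Since every diagonal entry of a matrix sandwiched between $O$ and $I$ lies in $[0,1]$, this linear functional is maximized by taking $\tilde s_{kk}=1$ exactly when $\gamma_k>0$; that is, $\tilde S^\ast=\Diag(1,\dots,1,0,\dots,0)$ with $k^\ast$ leading ones. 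This $\tilde S^\ast$ is feasible (it is an orthogonal projection), so $\widehat S^\ast=U_{k^\ast}U_{k^\ast}^\top$ attains the maximum with value $\gamma_1+\cdots+\gamma_{k^\ast}$.

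Undoing the rescaling via $S=Z^{1/2}\widehat S Z^{1/2}$ then yields the optimal state-variance reduction $S^\ast=Z^{1/2}U_{k^\ast}U_{k^\ast}^\top Z^{1/2}$ and, through Lemma~\ref{lem_moment_pub}, the optimal value $v^{\rm pb}=\overline C\bullet S^\ast=\widehat C\bullet\widehat S^\ast=\gamma_1+\cdots+\gamma_{k^\ast}$. To show that the proposed signal achieves this, I would compute the moments induced by $\bm\eta=U_{k^\ast}^\top Z^{-1/2}\bm\theta$: one finds $\Var[\bm\eta]=U_{k^\ast}^\top U_{k^\ast}=I$ and $\Cov[\bm\theta,\bm\eta]=Z^{1/2}U_{k^\ast}$, so Lemma~\ref{lem_cond_normal} gives $\E[\bm\theta\mid\bm\eta]=\bar{\bm\theta}+Z^{1/2}U_{k^\ast}(\bm\eta-\bar{\bm\eta})$ and hence $\Var[\E[\bm\theta\mid\bm\eta]]=Z^{1/2}U_{k^\ast}U_{k^\ast}^\top Z^{1/2}=S^\ast$, which is \eqref{opt_pub_var}. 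Optimality of $\bm\eta$ and the value formula then follow from Lemma~\ref{lem_moment_pub}.

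For the bounds on $k^\ast$, I would first reduce them to an inertia statement about $\overline C$. Because $Z^{1/2}$ is symmetric and nonsingular, $\widehat C=Z^{1/2}\overline C Z^{1/2}$ is congruent to $\overline C$, so Sylvester's law of inertia identifies $k^\ast$ with the number of strictly positive eigenvalues of $\overline C$. Writing $T=Q^{-1}R$, a direct expansion verifies the identity
\[
\overline C=\mqty[T\\ I_m]^\top\,\mathbf V\,\mqty[T\\ I_m],
\]
which exhibits $\overline C$ as the compression of the $(n+m)$-dimensional form $\mathbf V$ to the $m$-dimensional column space of $\mqty[T\\ I_m]$; since this block matrix has full column rank (thanks to $I_m$), replacing it by an orthonormal basis of the same subspace alters $\overline C$ only by a further congruence, again preserving inertia.

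The main obstacle is the final inertia count, which I would settle with Cauchy's interlacing theorem. Restricting $\mathbf V$ to a subspace of codimension $n$ produces eigenvalues $\mu_1\ge\cdots\ge\mu_m$ interlacing those $\lambda_1\ge\cdots\ge\lambda_{n+m}$ of $\mathbf V$ as $\lambda_j\ge\mu_j\ge\lambda_{j+n}$. The upper inequality gives $n_+(\overline C)\le p(\mathbf V)$, while the lower one shows that each of the first $p(\mathbf V)-n$ compressed eigenvalues remains positive, giving $n_+(\overline C)\ge p(\mathbf V)-n$. Combining with $k^\ast=n_+(\overline C)$ yields $p(\mathbf V)-n\le k^\ast\le p(\mathbf V)$. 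The delicate points are the careful index bookkeeping in the interlacing inequalities and confirming that the non-orthonormal compression matrix $\mqty[T\\ I_m]$ can be swapped for an orthonormal one without changing inertia.
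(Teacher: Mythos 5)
Your proposal is correct, and for the core of the proposition it follows the same route as the paper: diagonalize $\widehat{C}=U\Diag(\gamma_1,\ldots,\gamma_m)U^\top$, observe that every feasible $\widehat S$ has $\diag(U^\top \widehat S U)\in[0,1]^m$ so the objective is bounded by $\gamma_1+\cdots+\gamma_{k^*}$, and verify via Lemma~\ref{lem_cond_normal} and Lemma~\ref{lem_moment_pub} that the signal $\bm{\eta}=U_{k^*}^\top Z^{-1/2}\bm{\theta}$ attains this bound with state variance reduction $Z^{1/2}U_{k^*}U_{k^*}^\top Z^{1/2}$. The only genuine divergence is the inertia count $p(\mathbf V)-n\le k^*\le p(\mathbf V)$: you first pass from the full-column-rank compression $\overline C=\mathbf Q^\top\mathbf V\mathbf Q$ to an orthonormal compression (a further congruence, so inertia is preserved) and then invoke Cauchy/Poincar\'e interlacing, $\lambda_j\ge\mu_j\ge\lambda_{j+n}$, to read off both bounds at once; the paper instead works directly with the non-orthonormal $\mathbf Q$ via Courant--Fischer, using $\ker(\mathbf Q)=\{\mathbf 0\}$ and a positive scaling constant $\delta=\max_{\bm u\ne\bm 0}\|\mathbf Q\bm u\|^2/\|\bm u\|^2$ to show that positivity of Rayleigh quotients is transferred, and then repeats the argument for $-\overline C$ and $-\mathbf V$ to get the lower bound. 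The two arguments are equivalent in substance (interlacing is itself a Courant--Fischer corollary), but yours is the more standard packaging and delivers both inequalities from a single application; the one point to execute carefully, as you note, is the index bookkeeping and the QR-type factorization $\mathbf Q=\mathbf U\mathbf R$ justifying that the orthonormal replacement only changes $\overline C$ by a congruence.
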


Proposition~\ref{prop_pub} characterizes the optimal form of the public signal via the spectrum of the matrix $\widehat{C}$ that is defined through primitives.
For example, if $\widehat{C}$ is positive definite---i.e., all its eigenvalues are positive---then full disclosure is optimal, as \eqref{opt_pub_var} reduces to $Z$ when $k^* = m$.
In contrast, if $\widehat{C}$ is negative semidefinite---i.e., all its eigenvalues are nonpositive---then no disclosure is optimal.
In general, $\widehat{C}$ is neither positive nor negative semidefinite.
Yet, the optimal public signal is identified as a multi-dimensional statistic, with its dimension equal to the number of positive eigenvalues of $\widehat{C}$.
Specifically, there is a monotonic relationship between the number of positive eigenvalues of $\widehat{C}$ and the informativeness of the optimal public signal, as the associated state variance reduction \eqref{opt_pub_var} increases in $k^*$ with respect to the matrix ordering.
In other words, as $\widehat{C}$ approaches a positive definite matrix, the optimal signal discloses more information about $\bm{\theta}$.

Implementing an optimal public information structure requires the designer to have detailed knowledge of the underlying game, since the key matrix $\widehat{C}$ depends on the entire payoff structure.
In practice, however, $Q$ and $R$ may be private information to the agents, as they are components of their payoff functions, and the designer may be agnostic about these parameters.
Still, in Proposition~\ref{prop_pub}, we derive relationships between the number of positive eigenvalues of $\widehat{C}$ and those of $\mathbf{V}$.
This enables the designer to at least partially identify the optimal degree of public information disclosure based solely on her objective function.

\citet{tamura2018} studies an information design problem with a single agent, whose best response is given as the best estimate about the state $\E[\bm{\theta} \mid \bm{\eta}]$ upon receiving signal $\bm{\eta}$.
By assuming a quadratic objective, he then shows that the designer's objective maximization reduces to the following SDP problem:
\[
\max_{S \in \calS^m} \quad W\bullet S \quad \text{subject to} \quad \Var \qty[\bm{\theta}] \succeq S \succeq O,
\]
where $W \in \calS^m$ is a constant matrix, and obtains a closed form solution to it.
He also demonstrates that this SDP formulation is useful in studying an LQG network game with $n$ agents and obtains an optimal public information structure in a special case of an LQG network game when $m=n$. 
In this regard, Proposition \ref{prop_pub} generalizes his result by allowing for an arbitrary $m$, while our proof relies heavily on his solution method.

To conclude, we address a natural question of what conditions are needed for the optimal public information structure to be ``globally'' optimal among all information structures, including non-public ones.

\begin{corollary} \label{cor_pb_sub}
The optimal public information structure in Proposition \ref{prop_pub} attains $v^{\rm p}$ if and only if there exists $\Lambda \in \calS^n_{\rm diag}$ with $A_\Lambda \succeq O$ such that
\begin{gather}
\qty(A_\Lambda Q^{-1}R - B_\Lambda) Z^\frac{1}{2} \bm{u}_k = \mathbf{0}, \quad \forall k=1,\ldots,k^*; \quad {\rm and} \label{opt_pb1} \\
B_\Lambda Z^\frac{1}{2} \bm{u}_k = \mathbf{0}, \quad \forall k=k^*+1,\ldots,m. \label{opt_pb2}
\end{gather}
\end{corollary}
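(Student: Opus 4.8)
The plan is to apply Theorem~\ref{thm_main} to the specific public structure $\pi$ of Proposition~\ref{prop_pub}. That structure is Gaussian: its signal is a deterministic linear transform of the Gaussian state $\bm{\theta}$, so by Lemma~\ref{lem_BNE} the pair $(\bm{\sigma}^\pi,\bm{\theta})$ is jointly normal. Hence the Gaussian part of Theorem~\ref{thm_main} says that $\pi$ attains $v^{\rm p}$ if and only if there is a dual feasible diagonal $\Lambda$ satisfying \eqref{opt}, i.e.\ $A_\Lambda(\bm{\sigma}^\pi-\bar{\bm{a}})=B_\Lambda(\bm{\theta}-\bar{\bm{\theta}})$ almost surely. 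The entire argument then consists in (a) rewriting \eqref{opt} for this particular $\pi$ as the pair \eqref{opt_pb1}--\eqref{opt_pb2}, and (b) replacing the dual feasibility requirement $\mathbf{M}_\Lambda\succeq O$ by the stated $A_\Lambda\succeq O$.

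First I would make \eqref{opt} explicit. By Lemma~\ref{lem_moment_pub} the public equilibrium is $\bm{\sigma}^\pi=Q^{-1}R\,\E[\bm{\theta}\mid\bm{\eta}]$, and for the signal of Proposition~\ref{prop_pub} the conditional-normal formula of Lemma~\ref{lem_cond_normal}, together with $U_{k^*}^\top U_{k^*}=I$, gives $\E[\bm{\theta}\mid\bm{\eta}]=\bar{\bm{\theta}}+Z^{\frac12}U_{k^*}U_{k^*}^\top Z^{-\frac12}(\bm{\theta}-\bar{\bm{\theta}})$ (whose variance recovers \eqref{opt_pub_var}). Since $\bar{\bm{a}}=Q^{-1}R\bar{\bm{\theta}}$, substitution turns \eqref{opt} into $A_\Lambda Q^{-1}R\,Z^{\frac12}U_{k^*}U_{k^*}^\top Z^{-\frac12}(\bm{\theta}-\bar{\bm{\theta}})=B_\Lambda(\bm{\theta}-\bar{\bm{\theta}})$ almost surely. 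Because $Z\succ O$, the centered state $\bm{\theta}-\bar{\bm{\theta}}$ has a density on all of $\R^m$, so this almost-sure identity is equivalent to equality of the two coefficient matrices, namely $A_\Lambda Q^{-1}R\,Z^{\frac12}U_{k^*}U_{k^*}^\top Z^{-\frac12}=B_\Lambda$.

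Next I would diagonalize. As $Z^{\frac12}$ is invertible and $\{\bm{u}_1,\dots,\bm{u}_m\}$ is orthonormal, the vectors $\{Z^{\frac12}\bm{u}_k\}_{k=1}^m$ form a basis of $\R^m$, so the matrix identity above holds if and only if it holds after right-multiplication by each $Z^{\frac12}\bm{u}_k$. The projector $U_{k^*}U_{k^*}^\top$ fixes $\bm{u}_k$ for $k\le k^*$ and annihilates it for $k>k^*$; testing against $Z^{\frac12}\bm{u}_k$ therefore yields exactly $(A_\Lambda Q^{-1}R-B_\Lambda)Z^{\frac12}\bm{u}_k=\mathbf{0}$ for $k\le k^*$, which is \eqref{opt_pb1}, and $B_\Lambda Z^{\frac12}\bm{u}_k=\mathbf{0}$ for $k>k^*$, which is \eqref{opt_pb2}. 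Conversely these two families reconstruct the full matrix identity, so \eqref{opt} for this $\pi$ is equivalent to \eqref{opt_pb1}--\eqref{opt_pb2}.

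Finally I would reduce dual feasibility to $A_\Lambda\succeq O$. Exactly as in the proof of Theorem~\ref{thm_main}, multiplying \eqref{opt} by $(\bm{\sigma}^\pi-\bar{\bm{a}})^\top$ and by $(\bm{\theta}-\bar{\bm{\theta}})^\top$ and taking expectations shows that \eqref{opt} forces $\Lambda$ to satisfy \eqref{cs} with the primal feasible pair $(\Var[\bm{\sigma}^\pi],\Cov[\bm{\sigma}^\pi,\bm{\theta}])$ induced by $\pi$ (this pair is primal feasible by Lemma~\ref{lem_moment}). Since $Z\succ O$, Lemma~\ref{lem_Zfull} then gives $\mathbf{M}_\Lambda\succeq O\iff A_\Lambda\succeq O$, which lets me toggle between dual feasibility and $A_\Lambda\succeq O$ in both directions of the equivalence. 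Chaining the three equivalences proves the corollary. I expect the main obstacle to be the bookkeeping in the diagonalization step—keeping the orientations and dimensions of $U_{k^*}$ and $Z^{\pm 1/2}$ straight and confirming that a single basis $\{Z^{\frac12}\bm{u}_k\}$ cleanly separates the two index ranges—together with the two supporting subtleties, namely the passage from the almost-sure identity to the matrix identity via full support of $\bm{\theta}$, and the feasibility reduction through Lemma~\ref{lem_Zfull}.
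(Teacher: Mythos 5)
Your proof is correct and follows essentially the same route as the paper: both reduce the problem to the single matrix identity $A_\Lambda Q^{-1}RZ^{1/2}U_{k^*}U_{k^*}^\top = B_\Lambda Z^{1/2}$, decompose it against the basis $\{\bm{u}_k\}$ to obtain \eqref{opt_pb1}--\eqref{opt_pb2}, and invoke Lemma~\ref{lem_Zfull} to replace dual feasibility with $A_\Lambda \succeq O$. The only cosmetic difference is that you derive the identity at the level of realizations via \eqref{opt} (using full support of $\bm{\theta}$), whereas the paper works at the covariance level by noting that $\pi$ is noise-free and applying Corollary~\ref{cor_nfsi2} to $Y=\Cov[\bm{\sigma}^\pi,\bm{\theta}]$; the two are equivalent for this Gaussian, noise-free $\pi$.
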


This corollary stems from the observation that the optimal public information structure in Proposition~\ref{prop_pub} is noise-free by construction, allowing us to apply Corollary~\ref{cor_nfsi2}.
Incidentally, the optimality condition involves both matrix equations in Corollary \ref{cor_no_full}, while each equation only needs to be \emph{partially} satisfied on the linear subspace spanned by $\{Z^{1/2}\bm{u}_1,\ldots,Z^{1/2}\bm{u}_{k^*}\}$ and $\{Z^{1/2}\bm{u}_{k^*+1},\ldots,Z^{1/2}\bm{u}_{m}\}$, respectively.
Nevertheless, since \eqref{opt_pb1} and \eqref{opt_pb2} collectively impose $nm$ number of linear equations on $n$ Lagrangian multipliers, the optimal public information structure is not globally optimal for generic payoff structures.
This indicates that the designer typically benefits from manipulating each agent's higher-order belief about others' private information, aligning with the findings of \cite{arielibabichenko2019} and \cite{hoshino2022}.


\subsection*{Proof of Lemma \ref{lem_moment_pub}}

As discussed in the main text, a unique BNE is characterized as $\bm{\sigma}^\pi = Q^{-1}R\E \qty[\bm{\theta} \mid \eta]$ for any public information structure $\pi \in \Pi^{\rm pb}$, from which $\mathbf{M}^\pi = \mathbf{M}_{\Var \qty[\E \qty[\bm{\theta} \mid \eta]]}$ is confirmed.
Moreover, $Z \succeq \Var \qty[\E \qty[\bm{\theta} \mid \eta]] \succeq O$ holds by the law of total variance.

Conversely, take any $S \in \calS^m$ with $Z \succeq S \succeq O$.
Let $\pi \in \Pi^{\rm pb} \cap \Pi^{\rm g}$ be a public Gaussian information structure, sending an $m$-dimensional public signal $\bm{\eta}$, such that $\bm{\eta}$ and $\bm{\theta}$ are jointly normally distributed with $\Var \qty[\bm{\eta}] = \Cov\qty[\bm{\eta}, \bm{\theta}] = S$.
Such a random vector exists because
\[
\Var \mqty[S & S \\ S & Z] \succeq O,
\]
which is confirmed by $Z - SS^+S = Z-S \succeq O$.
Moreover, by Lemma~\ref{lem_cond_normal}, the unique BNE under $\pi$ is computed as
\[
\bm{\sigma}^\pi \qty(\bm{\eta}) = Q^{-1}R \E \qty[\bm{\theta} \mid \bm{\eta}]
= Q^{-1}R SS^+ \bm{\eta} - \underbrace{Q^{-1}R \qty(\bar{\bm{\theta}} + SS^+ \E \qty[\bm{\eta}])}_{\rm constant}.
\]
Hence, it follows that
\begin{gather*}
\Var \qty[\bm{\sigma}^\pi(\bm{\eta})] = Q^{-1}RSS^+ \Var \qty[\bm{\eta}] S^+S\qty(Q^{-1}R)^\top = Q^{-1}RS\qty(Q^{-1}R)^\top, \\
\Cov \qty[\bm{\sigma}^\pi(\bm{\eta}), \bm{\theta}] = Q^{-1}RSS^+ \Cov\qty[\bm{\eta}, \bm{\theta}] = Q^{-1}RSS^+.
\end{gather*}
Thus we have $\mathbf{M}^\pi = \mathbf{M}_S$.

Now, substituting $\mathbf{M}^\pi = \mathbf{M}_S$ into the expected objective function in Lemma~\ref{lem_obj}, we have
\begin{align*}
\E \qty[v(\bm{\sigma}^\pi, \bm{\theta})] &= \mathbf{V} \bullet \mathbf{M}_S \\
&= V \bullet Q^{-1}RS(Q^{-1}R)^\top + \frac{1}{2} W \bullet Q^{-1}RS + \frac{1}{2} W^\top \bullet S(Q^{-1}R)^\top \\
&= \tr \qty(VQ^{-1}RS(Q^{-1}R)^\top) + \frac{1}{2} \tr \qty(W^\top Q^{-1}RS) + \frac{1}{2} \tr \qty(WS(Q^{-1}R)^\top) \\
&= \tr \qty( \qty((Q^{-1}R)^\top VQ^{-1}R + \frac{W^\top Q^{-1}R + (Q^{-1}R)^\top W}{2}) S) \\
&= \overline{C} \bullet S,
\end{align*}
which concludes the proof of Lemma~\ref{lem_moment_pub}.
\hfill {\it Q.E.D.}


\subsection*{Proof of Proposition \ref{prop_pub}}

Consider the eigendecomposition $\widehat{C} = UDU^\top$, where $D = \Diag(\gamma_1,\ldots,\gamma_m)$.
We first show that the sum of positive eigenvalues of $\widehat{C}$ serves as an upper bound on the value of any public information strcutrue.

\begin{lemma} \label{lem_pub_bound}
For any matrix $S \in \calS^m$ with $I \succeq S \succeq O$, all diagonal entries of $U^\top SU$ are confined in $[0,1]$.
Consequently,
\[
\widehat{C} \bullet S \le \sum_{k=1}^{k^*} \gamma_k.
\]
\end{lemma}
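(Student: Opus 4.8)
The plan is to diagonalize $\widehat{C}$ and reduce the Frobenius inner product $\widehat{C} \bullet S$ to a weighted sum of the diagonal entries of the conjugated matrix $T \equiv U^\top S U$, and then bound each summand using the confinement of the $T_{kk}$ together with the known signs of the eigenvalues $\gamma_k$. The two assertions of the lemma correspond exactly to these two ingredients: the diagonal bound is the first claim, and the inequality is its consequence.

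First I would establish the diagonal bound. Writing $\bm{u}_k$ for the $k$-th column of the orthogonal matrix $U$, the $(k,k)$ entry of $T$ is $T_{kk} = \bm{u}_k^\top S \bm{u}_k$. Since $S \succeq O$ gives $\bm{u}_k^\top S \bm{u}_k \ge 0$, while $I \succeq S$ gives $\bm{u}_k^\top S \bm{u}_k \le \bm{u}_k^\top \bm{u}_k = \|\bm{u}_k\|^2 = 1$ (the columns of an orthogonal matrix being unit vectors), we obtain $T_{kk} \in [0,1]$ for every $k$. Equivalently, orthogonal conjugation preserves the Loewner order, so that $I \succeq T \succeq O$, whence the diagonal entries of both $T$ and $I - T$ are nonnegative.

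Second, using $\widehat{C} = U \Diag(\gamma_1,\ldots,\gamma_m) U^\top$, the symmetry of $\widehat{C}$, and the cyclic invariance of the trace, I would compute
\[
\widehat{C} \bullet S = \tr(\widehat{C} S) = \tr\big(U \Diag(\gamma_1,\ldots,\gamma_m) U^\top S\big) = \tr\big(\Diag(\gamma_1,\ldots,\gamma_m)\, T\big) = \sum_{k=1}^m \gamma_k T_{kk}.
\]
Finally I would invoke the sign structure of the eigenvalues: for $k \le k^*$ we have $\gamma_k > 0$ and $T_{kk} \le 1$, so $\gamma_k T_{kk} \le \gamma_k$; for $k > k^*$ we have $\gamma_k \le 0$ and $T_{kk} \ge 0$, so $\gamma_k T_{kk} \le 0$. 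Summing these term-by-term bounds yields $\widehat{C} \bullet S \le \sum_{k=1}^{k^*} \gamma_k$, as claimed.

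This lemma is elementary, so there is no genuine obstacle; the only points requiring care are the bookkeeping of signs in the last step---correctly pairing the positive eigenvalues with the upper bound $T_{kk} \le 1$ while discarding the nonpositive-eigenvalue terms via $T_{kk} \ge 0$---and checking that the convention $k^* = 0$ (i.e.\ $\widehat{C}$ negative semidefinite) is handled, in which case every summand is nonpositive and the bound reads $\widehat{C} \bullet S \le 0$ with an empty right-hand sum.
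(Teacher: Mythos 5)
Your proof is correct and follows essentially the same route as the paper: bound the diagonal entries of $U^\top S U$ in $[0,1]$ using $O \preceq S \preceq I$ and the orthogonality of $U$, rewrite $\widehat{C}\bullet S$ as $\sum_k \gamma_k (U^\top S U)_{kk}$ via the eigendecomposition and cyclicity of the trace, and then bound term by term according to the sign of $\gamma_k$. The only cosmetic difference is that you obtain the diagonal bound directly from the quadratic forms $\bm{u}_k^\top S \bm{u}_k$ while the paper phrases it via the nonnegativity of the diagonals of the positive semidefinite matrices $U^\top S U$ and $U^\top(I-S)U$, which is the same argument.
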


\begin{proof}
Denote by $s_1,\ldots,s_m$ the diagonal entries of $U^\top SU$.
Since $S \succeq O$, we have $U^\top SU \succeq O$, thereby all of $s_1,\ldots,s_m$ being nonnegative.
Moreover, since $I \succeq S$, and since $U$ is orthogonal, $I - U^\top SU = U^\top (I-S) U \succeq O$.
This implies that all diagonal entries of $I-U^\top SU$ are nonnegative, and thus all $s_1,\ldots,s_m$ are weakly less than $1$.
Hence, it follows that
\[
\widehat{C} \bullet S
= \tr \qty(UDU^\top S)
= \tr \qty(DU^\top SU)
= D \bullet U^\top SU
= \sum_{k=1}^m \gamma_k s_k
\le \sum_{k=1}^{k^*} \gamma_k,
\]
where the inequality holds since $\gamma_k > 0$ if and only if $k \le k^*$ and $s_k \in [0,1]$ for all $k$.
\end{proof}

We show that the designer can achieve the upper bound derived in Lemma~\ref{lem_pub_bound} by sending the public signal $\bm{\eta} = U_{k^*}^\top Z^{-1/2} \bm{\theta}$.
Notice that $\Var \qty[\bm{\eta}] = U_{k^*}^\top U_{k^*}$ and $\Cov \qty[\bm{\eta},\bm{\theta}] = U_{k^*}^\top Z^{1/2}$ hold.
Then, by Lemma~\ref{lem_cond_normal}, we have
\[
\Var \qty[\E \qty[\bm{\theta} \mid \bm{\eta}]]
= Z^{\frac{1}{2}}U_{k^*} \qty(U_{k^*}^\top U_{k^*})^{-1} U_{k^*}^\top Z^{\frac{1}{2}}
= Z^{\frac{1}{2}}U_{k^*}U_{k^*}^\top Z^{\frac{1}{2}},
\]
where the second equality holds since the column vectors of $U_{k^*}$ constitutes an orthonormal system, thereby $U_{k^*}^\top U_{k^*}$ being an identify matrix.
Hence, by Lemma \ref{lem_moment_pub}, the designer's expected value can be calculated as follows:
\begin{align*}
\overline{C} \bullet \Var\qty[\E\qty[\bm{\theta} \mid \bm{\eta}]]
&= \tr \bigl(\overline{C} Z^{\frac{1}{2}}U_{k^*}U_{k^*}^\top Z^{\frac{1}{2}} \bigr)
= \tr \bigl(Z^{\frac{1}{2}} \overline{C} Z^{\frac{1}{2}}U_{k^*} U_{k^*}^\top \bigr) \\
= \tr \bigl(\widehat{C}U_{k^*}U_{k^*}^\top\bigr)
&= \tr \bigl(UDU^\top U_{k^*}U_{k^*}^\top \bigr)
= \tr \bigl(DU^\top U_{k^*}U_{k^*}^\top U \bigr)
= \sum_{k=1}^{k^*} \gamma_k.
\end{align*}
Note that the last equality holds because $U^\top U_{k^*}=[\delta_{ij}]_{m\times k^*}$, where $\delta_{ij}$ is the Kronecker delta.

Let us show that $p(\mathbf{V})-n \le p(\widehat{C}) \le p(\mathbf{V})$.
Since $\widehat{C}$ and $\overline{C}$ are congruent, it follows that $p(\widehat{C}) = p(\overline{C})$ by Sylvester's law of inertia.
Moreover, we can express $\overline{C}$ as follows:
\[
\overline{C} = \mathbf{Q}^\top \mathbf{V} \mathbf{Q}, \quad {\rm where} \quad \mathbf{Q} \equiv \mqty[Q^{-1}R \\ I] \in \calM^{n+m,m}.
\]
Note that $\rank\qty(\mathbf{Q}) = m$ so that $\ker\qty(\mathbf{Q}) = \{\mathbf{0}\}$.

Now, suppose that the $k$-th largest eigenvalue of $\overline{C}$ is strictly positive.
By the Courant--Fischer theorem, there exist $\epsilon > 0$ and a $k$-dimensional linear subspace $\calU \subseteq \R^m$ such that
\[
\min_{\bm{u} \in \calU \setminus \{\mathbf{0}\}} \frac{\bm{u}^\top \overline{C}\bm{u}}{\bm{u}^\top \bm{u}} \ge \epsilon.
\]
Since $\ker\qty(\mathbf{Q}) = \{\mathbf{0}\}$, it follows that
\[
\frac{\qty(\mathbf{Q}\bm{u})^\top \mathbf{V} \qty(\mathbf{Q}\bm{u})}{\qty(\mathbf{Q}\bm{u})^\top \qty(\mathbf{Q}\bm{u})}
\cdot \frac{\qty(\mathbf{Q}\bm{u})^\top \qty(\mathbf{Q}\bm{u})}{\bm{u}^\top \bm{u}} \ge \epsilon, \quad \forall \bm{u} \in \calU \setminus \{\mathbf{0}\},
\]
from which we have
\[
\min_{\bm{u} \in \calU\setminus \{\mathbf{0}\}}\frac{\qty(\mathbf{Q}\bm{u})^\top \mathbf{V} \qty(\mathbf{Q}\bm{u})}{\qty(\mathbf{Q}\bm{u})^\top \qty(\mathbf{Q}\bm{u})}
\ge \frac{\epsilon}{\delta}, \quad {\rm where} \quad \delta = \max\qty{\frac{\qty(\mathbf{Q}\bm{u})^\top \mathbf{Q}\bm{u}}{\bm{u}^\top \bm{u}}: \bm{u} \in \R^m \setminus \{\mathbf{0}\}}.
\]
Note that $\delta > 0$ since $\mathbf{Q}^\top \mathbf{Q}$ is positive definite.
Then, since $\{\mathbf{Q}\bm{u} \in \R^{n+m}: \bm{u} \in \calU\}$ is a $k$-dimensional linear subspace of $\R^{n+m}$, again by the Courant--Fischer theorem, it follows that the $k$-th largest eigenvalue of $\mathbf{V}$ is strictly positive.
Thus $p(\overline{C}) \le p(\mathbf{V})$.

Let $p'(\cdot)$ denote the number of non-negative eigenvalues of a given square matrix.
Then, we can show that $p'(\overline{C}) \le p'(\mathbf{V})$ by slightly changing the preceding proof.
Moreover, applying the same result to $-\overline{C}$ and $-\mathbf{V}$, it follows that $p'(-\overline{C}) \le p'(-\mathbf{V})$.
Then, noticing that $p'(-\overline{C}) = m-p(\overline{C})$ and $p'(-\mathbf{V}) = (n+m)-p(\mathbf{V})$, we obtain
\[
m-p(\overline{C}) \le (n+m)-p(\mathbf{V}),
\]
from which $p(\mathbf{V})-n \le p(\overline{C})$.
\hfill {\it Q.E.D.}


\subsection*{Proof of Corollary \ref{cor_pb_sub}}

Let $\pi$ be the optimal public information structure as in Proposition~\ref{prop_pub}.
By construction, $\pi$ is noise-free, and
\[
\Cov\qty[\bm{\sigma}^\pi, \bm{\theta}] = Q^{-1}RZ^{\frac{1}{2}} U_{k^*} U_{k^*}^\top Z^{\frac{1}{2}}.
\]
Hence, by Corollary~\ref{cor_nfsi2} and Lemma~\ref{lem_Zfull}, $\pi$ is optimal in $\Pi$ if and only if there exists $\Lambda \in \calS^n_{\rm diag}$ with $A_\Lambda \succeq O$ such that
\[
A_\Lambda Q^{-1}RZ^{\frac{1}{2}} U_{k^*} U_{k^*}^\top Z^{\frac{1}{2}} = B_\Lambda Z.
\]
By construction, $U_{k^*}U_{k^*}^\top = \sum_{k=1}^{k^*} \bm{u}_k \bm{u}_k^\top$ and $I=UU^\top = \sum_{k=1}^{m} \bm{u}_k \bm{u}_k^\top$.
Hence, the above condition can be equivalently rewritten as follows:
\begin{align}
&A_\Lambda Q^{-1}RZ^{\frac{1}{2}} U_{k^*} U_{k^*}^\top = B_\Lambda Z^{\frac{1}{2}} \notag \\
&\quad \iff \quad A_\Lambda Q^{-1}RZ^{\frac{1}{2}} \qty(\sum_{k=1}^{k^*} \bm{u}_k \bm{u}_k^\top) = B_\Lambda Z^{\frac{1}{2}} \qty(\sum_{k=1}^{m} \bm{u}_k \bm{u}_k^\top) \notag \\
&\quad \iff \quad \qty(A_\Lambda Q^{-1}RZ^{\frac{1}{2}} - B_\Lambda) Z^{\frac{1}{2}} \sum_{k=1}^{k^*} \bm{u}_k \bm{u}_k^\top - B_\Lambda Z^{\frac{1}{2}} \sum_{k=k^*+1}^{m} \bm{u}_k \bm{u}_k^\top = O. \label{opt_pb0}
\end{align}
It is evident that \eqref{opt_pb0} holds if both \eqref{opt_pb1} and \eqref{opt_pb2} are satisfied.

Conversely, suppose that \eqref{opt_pb0} is satisfied.
Then, since $\{\bm{u}_1,\ldots,\bm{u}_m\}$ constitutes an orthonormal basis in $\R^m$, multiplying both sides of \eqref{opt_pb0} by $\bm{u}_k$ from right, we obtain \eqref{opt_pb1} when $k \le k^*$, and \eqref{opt_pb2} when $k > k^*$.
\hfill {\it Q.E.D.}

\end{document}